\documentclass[12pt,letter]{article}
\usepackage{natbib}
\setcitestyle{authoryear,open={(},close={)},citesep={,}}
\usepackage[]{hyperref}
\usepackage{xcolor}
\hypersetup{colorlinks = true, citecolor = blue, linkcolor=blue, linkbordercolor = {white}, urlcolor=blue}
\usepackage{amsmath}
\usepackage{mathrsfs}
\usepackage{amsthm}
\usepackage{amssymb}
\usepackage{mathtools}
\usepackage{bm}
\usepackage{graphicx}
\usepackage{caption}
\usepackage{subcaption}
\usepackage{multirow}
\usepackage{dsfont}
\usepackage{cleveref}
\usepackage[normalem]{ulem}

\usepackage
[
        a4paper,
        left=1.75cm,
        right=1.75cm,
        top=2.5cm,
        bottom=2.5cm,
]
{geometry}

\newtheorem{theorem}{Theorem}[section]
\newtheorem{lemma}{Lemma}[section]

\newtheorem{example}{Example}[section]
\newtheorem{remark}{Remark}[section]
\newtheorem{condition}{Condition}[section]

\newenvironment{customcon}[1]
{\renewcommand\thecondition{#1}\condition}
{\endcondition}

\newtheorem{innercustomgeneric}{\customgenericname}
\providecommand{\customgenericname}{}
\newcommand{\newcustomtheorem}[2]{%
  \newenvironment{#1}[1]
  {%
   \renewcommand\customgenericname{#2}%
   \renewcommand\theinnercustomgeneric{##1}%
   \innercustomgeneric
  }
  {\endinnercustomgeneric}
}
\newcustomtheorem{customlemma}{Lemma}
\newcustomtheorem{customproposition}{Proposition}
\newcustomtheorem{customtheorem}{Theorem}
\newcustomtheorem{customdefinition}{Definition}
\newcustomtheorem{customexample}{Example}
\newcustomtheorem{customㆍㆍㆍremark}{Remark}

\newcommand{\me}{\mathbf{e}}

\newcommand{\mh}{\mathbf{h}}

\newcommand{\mt}{\mathbf{t}}
\newcommand{\mv}{\mathbf{v}}

\newcommand{\mx}{\mathbf{x}}

\newcommand{\mz}{\mathbf{z}}
\newcommand{\bu}{\mathbf{u}}

\newcommand{\mD}{\mathbf{D}}

\newcommand{\mH}{\mathbf{H}}

\newcommand{\mX}{\mathbf{X}}

\def\baa#1\eaa{\begin{align}#1\end{align}}
\def\ba#1\ea{\begin{align*}#1\end{align*}}
\def\bs#1\es{\begin{split}#1\end{split}}

\DeclareMathOperator*{\argmin}{\arg\min}

\newcommand{\tran}{^{\mathstrut\scriptscriptstyle{\top}}}

\newcommand{\E}{{\rm E}}

\newcommand{\Leb}{{\rm Leb}}

\begin{document}

\title{Local Fr\'echet regression with toroidal predictors}

\author{Chang Jun Im$^1$ and Jeong Min Jeon$^2$}

\date{$^1$The Institute for Data Innovation in Science, \\Seoul National University, South Korea\\
$^2$Department of Statistics and School of Transdisciplinary Innovations, \\Seoul National University, South Korea\\}

\maketitle

\begin{abstract}
We provide the first regression framework that simultaneously accommodates responses taking values in a general metric space and predictors lying on a general torus. We propose intrinsic local constant and local linear estimators that respect the underlying geometries of both the response and predictor spaces. Our local linear estimator is novel even in the case of scalar responses. We further establish their asymptotic properties, including consistency and convergence rates. Simulation studies, together with an application to real data, illustrate the superior performance of the proposed methodology.
\end{abstract}

\small
\begin{quotation}
\noindent{Keywords: Fr\'echet regression, local linear regression, metric space, toroidal data}
\end{quotation}
\normalsize

\section{Introduction}

Modern data often reside in spaces that lack a vector space structure or possess geometric features, such as dimensions, distances and angles, that are fundamentally different from those of Euclidean spaces. These characteristics pose substantial challenges for statistical analysis and motivate the development of intrinsic approaches that respect the underlying data geometry. For an introduction to non-Euclidean data analysis, we refer to \cite{Sanborn et al. (2024)}.

We study nonparametric regression for toroidal predictors taking values in the $d$-dimensional torus $\mathbb{T}^d := \mathbb{S}^1 \times \cdots \times \mathbb{S}^1$, which is the Cartesian product of $d\geq1$ unit circles $\mathbb{S}^1:=\{(\cos\theta,\sin\theta)\tran\in\mathbb{R}^2:\theta\in[-\pi,\pi)\}$. Data exhibiting natural periodicity can be viewed as lying on $\mathbb{S}^1$, and data lying on $\mathbb{T}^d$ naturally arise in many scientific disciplines when observations exhibit multiple periodic components. For instance, in Earth science, analyzing multiple wind and wave directions is an important task, and such joint directional measurements give rise to data lying on a torus (e.g., \cite{Xu and Wang (2023)}; \cite{Biswas and Banerjee (2025)}). In molecular biology, protein backbone conformations are commonly characterized by paired dihedral angles that lie on a torus (e.g., \cite{Di Marzio et al. (2011)}; \cite{Eltzner et al. (2018)}; \cite{Jung et al. (2021)}; \cite{Xu and Wang (2023)}). In many real-world datasets, events of interest depend simultaneously on the time of day and the day of the year, which together define observations on $\mathbb{T}^2$. Despite the abundance of such data, statistical methodologies for toroidal predictors remain limited. For example, \cite{Di Marzio et al. (2009)} studied local linear regression for
toroidal predictors with scalar responses, and \cite{Biswas and Banerjee (2025)} investigated semi-parametric regression for predictors and responses both taking values in \(\mathbb{T}^2\). More generally, regression methods that accommodate manifold-valued predictors, including toroidal predictors, have been developed in \cite{Pelletier (2006)}, \cite{Jeon et al. (2021)} and \cite{Jeon et al. (2022)}, among others. However, the latter works focus on scalar or Hilbert-space-valued responses and thus exclude manifold-valued responses or more generally, metric-space-valued responses.

The Fr\'echet regression framework, introduced by \cite{Petersen and Muller (2019)}, provides a powerful tool for modeling responses that take values in a general metric space, including Wasserstein spaces, network spaces and manifolds. Subsequent studies, such as \cite{Zhou and Muller (2022)}, \cite{Bhattacharjee and Muller (2023)}, \cite{Tucker et al. (2023)} and \cite{Steyer et al. (2025)}, have extended this framework in various directions. However, most existing Fr\'echet regression methods focus on Euclidean predictors. Recently, \cite{Tucker and Wu (2025)} considered predictors taking values in a general metric space, but their approach is based on local constant estimation and relies on a H\"older continuity assumption on the regression function, which yields suboptimal convergence rates when the true regression function is sufficiently smooth. More recently, \cite{Im et al. (2025)} developed local constant and local linear Fr\'echet regression for spherical predictors taking values in $\mathbb{S}^p := \{\mv \in \mathbb{R}^{p+1} : \|\mv\|_2 = 1\}$.

In this work, we develop both local constant and local linear regression methods for toroidal predictors with metric-space-valued responses, fully respecting the underlying data geometry. To the best of our knowledge, local linear estimation in this setting has not been studied. Our local linear estimator is novel even in the case of scalar responses, as it differs from that of \cite{Di Marzio et al. (2009)}. Both the proposed local constant and local linear estimators achieve the optimal error rate. We demonstrate the practical utility of the proposed methods through an analysis of New York taxi network data, where the response is a graph Laplacian that is neither vector-space-valued nor manifold-valued.

The rest of this paper is organized as follows. Section~\ref{section:2} introduces the problem setting and the proposed estimation methods. Section~\ref{section:3} presents the asymptotic theory, including consistency and convergence rates. Sections~\ref{section:4} and~\ref{section:5} evaluate the finite-sample performance of the proposed methods through simulation studies and a real data application, respectively. All technical proofs are provided in the Supplementary Material.

\section{Problem setting and estimators} \label{section:2}
\setcounter{equation}{0}
\subsection{Problem setting}

Let $(\Omega,\mathcal{F},\mathbb{P})$ be an underyling probability space and $\mathbb{M}$ be a general totally bounded metric space equipped with a metric $d_{\mathbb{M}}$. Also, let $Y:\Omega\rightarrow\mathbb{M}$ be the response variable and $\mathbf{X}:\Omega\rightarrow\mathbb{T}^{d}$ be the predictor. For any $\mx \in \mathbb{T}^{d}$, we write $\mx = \big( \mx_1 \tran, \dots, \mx_d \tran \big) \tran$, where $\mx_\ell \in \mathbb{S}^1$.
We define a function $M_{\oplus} : \mathbb{T}^{d} \times \mathbb{M} \rightarrow \mathbb{R}$ as
\begin{align*}
    M_{\oplus}(\mathbf{x}, y) := \mathbb{E} \big[ d_{\mathbb{M}}^2(Y, y) \big| \mathbf{X} = \mathbf{x} \big].
\end{align*}
In this paper, we aim to estimate a function $m_{\oplus}:\mathbb{T}^{d}\rightarrow\mathbb{M}$, defined as
\begin{align*}
    m_{\oplus}(\mathbf{x}):=\mathop{\mathrm{argmin}}_{y \in \mathbb{M}} M_{\oplus}(\mathbf{x}, y). \label{eq:3.3}
\end{align*}
When $Y\in\mathbb{R}$, the usual regression function is characterized by $\argmin_{y\in\mathbb{R}}\E[(Y-y)^2|\mX=\mx]$. Hence, the function $m_{\oplus}$ can be understood as a natural extension of the usual regression function and is often called the Fr\'echet regression function. We propose two estimators for $m_{\oplus}$: a local constant estimator and a local linear estimator. Before we introduce the estimators, we present some examples of $\mathbb{M}$.

\begin{example}\label{example1}\leavevmode
For $[a,b]\subset\mathbb{R}$ with $-\infty<a<b<\infty$, let $\mathcal{W}([a,b])$ denote the space of all distribution functions $F:\mathbb{R}\rightarrow[0,1]$ such that $F^{-1}(t)=\inf\{x\in\mathbb{R}:F(x)\geq t\}\in [a,b]$ for all $t\in(0,1]$. Also, let $d_W$ denote the 2-Wasserstein distance, defined as
    \begin{align*}
    d_W(F,G):=\bigg(\int_0^1(F^{-1}(t)-G^{-1}(t))^2 \mathrm{d}t\bigg)^{1/2}, \quad F, G\in \mathcal{W}([a,b]).
    \end{align*}
    Then, $\mathcal{W}([a,b])$ equipped with $d_W$, called the 2-Wasserstein space on $[a,b]$ is totally bounded; see Corollary 2.2.5 and Proposition 2.2.8 of \cite{Panaretos and Zemel (2020)}.
\end{example}

\begin{example}\label{example4} \leavevmode
Let $G = (V, E)$ be a graph network with nodes
$V = \{v_1, \dots, v_k\}$ and edge weights
$E = \{w_{ij} : 1 \le i, j \le k\}$ satisfying
$0 \le w_{ij} \le C_w$ for some constant $C_w > 0$,
where $w_{ij} = 0$ indicates that nodes $v_i$ and $v_j$ are not connected.
We assume that $G$ has no self-loops or multi-edges and is undirected, that is,
$w_{ij} = w_{ji}$ for all $1 \le i, j \le k$.
Such a graph network can be uniquely represented by its graph Laplacian
$\mathbf{L} = (\mathbf{L}_{ij}) \in \mathbb{R}^{k \times k}$, whose entries are defined as
\[
\mathbf{L}_{ij} :=
\begin{cases}
-w_{ij}, & i \neq j, \\
\sum_{k \neq i} w_{ik}, & i = j.
\end{cases}
\]
We consider the space of graph Laplacians
\[
\mathcal{L}_k := \left\{ \mathbf{L} : \mathbf{L} = \mathbf{L}\tran, \, \mathbf{L} \mathbf{1}_k = \mathbf{0}_k, \, -C_w \leq \mathbf{L}_{ij} \leq 0 \right\},
\]
where $\mathbf{1}_k$ and $\mathbf{0}_k$ denote the $k$-dimensional vectors of ones and zeros,
respectively.
We equip $\mathcal{L}_k$ with the Frobenius distance
\[
d_F(\mathbf{L}, \mathbf{M})
:= \left( \sum_{i=1}^k \sum_{j=1}^k
(\mathbf{L}_{ij} - \mathbf{M}_{ij})^2 \right)^{1/2}.
\]
Then, the metric space $(\mathcal{L}_k, d_F)$ is totally bounded;
see Proposition~1 of \cite{Zhou and Muller (2022)}.
\end{example}

\begin{example}\label{example2}\leavevmode
Any finite-dimensional compact Riemannian manifold without boundary, equipped with the geodesic distance, is totally bounded. Such Riemannian manifolds include the spheres, toruses, planar shape spaces (\cite{Le and Kendall (1993)}) and special orthogonal groups ${\rm SO}(k)=\{A\in\mathbb{R}^{k\times k}:A\text{~is orthogonal and~}\det(A)=1\}$.
\end{example}

\subsection{Local constant estimator}

We introduce our local constant estimation. For a vector $\mh =  ( h_1, \dots, h_d ) \tran$ of bandwidths $h_\ell>0$, we define the toroidal product kernel $\mathcal{L}_{\mx, \mh} : \mathbb{T}^{d} \to [0, \infty)$ as
\begin{align}\label{Lx def}
    \mathcal{L}_{\mx, \mh} \left( \mz\right) := \prod_{\ell=1}^{d} L \left( \frac{1 - \mz_\ell \tran \mx_\ell}{h_\ell^2} \right), \quad  \mz = \big( \mz_1 \tran, \dots, \mz_d \tran \big)\tran \in \mathbb{T}^{d},
\end{align}
where $L : [0,\infty) \rightarrow [0,\infty)$ is a decreasing function. Note that the angle $\vartheta_\ell\in[0,\pi]$ between $\mz_\ell$ and $\mx_\ell$ equals $\arccos(\mz_\ell\tran\mx_\ell)$, which implies that $1-\mz_\ell\tran\mx_\ell=2\sin^2(\vartheta_\ell/2)$.
Hence, $1-\mz_\ell\tran\mx_\ell$ is larger as $\mz_\ell$ is further away from $\mx_\ell$. Let $\{ (\mathbf{X}^{(i)}, Y^{(i)}) \}_{i=1}^n$ be a random sample of $(\mX,Y)$. We estimate $M_{\oplus}(\mx,y)$ by the local constant estimator
\begin{gather*}
    \hat{M}_{\mh, 0}(\mx, y) := \dfrac{n^{-1} \sum_{i=1}^{n} \mathcal{L}_{\mx, \mh} \left( \mathbf{X}^{(i)} \right) d_{\mathbb{M}}^2(Y^{(i)}, y)}{n^{-1} \sum_{i=1}^{n} \mathcal{L}_{\mx, \mh} \left( \mathbf{X}^{(i)} \right)}
\end{gather*}
and estimate $m_{\oplus}(\mx)$ by
\begin{align*}
    \hat{m}_{\mh, 0}(\mathbf{x}) := \mathop{\mathrm{argmin}}_{y \in \mathbb{M}} \hat{M}_{\mh, 0}(\mathbf{x},y). 
\end{align*}
We call $\hat{m}_{\mh, 0}(\mx)$ a local constant Fr\'echet regression estimator at $\mx$.

\subsection{Local linear estimator}\label{local linear section}

We introduce our local linear estimation. We first consider the case where $Y$ is scalar and then extend the construction to the general metric space. Our estimation method is new even in this setting. Let $g:\mathbb{T}^d \to \mathbb{R}$ be a sufficiently smooth function. We begin by introducing a notion of the gradient of $g$, which will be used in the construction of the local linear estimator. We define
\begin{align*}
    \mathcal{N}_\ell := \big\{ \big( \mz_1 \tran, \dots, \mz_d \tran \big)\tran \in \mathbb{R}^{2d} : \mz_\ell \tran = \bm{0}_2 \big\}, \quad \mathcal{N} := \bigcup_{\ell=1}^d \mathcal{N}_\ell,
\end{align*}
where $\bm{0}_p$ denotes the zero vector in $\mathbb{R}^p$ for $p\in\mathbb{N}$. We also define the homogeneous extension $\bar g:\mathbb{R}^{2d}\setminus\mathcal{N}\to\mathbb{R}$ of $g$ as
\begin{align}\label{homo ext}
     \bar{g} (\mz) := g \left( \left( \frac{\mz_1 \tran}{\| \mz_1 \|_2} , \dots, \frac{\mz_d \tran}{\| \mz_d \|_2} \right) \tran \right), \quad \mz = \big( \mz_1 \tran, \dots, \mz_d \tran \big)\tran \in \mathbb{R}^{2d} \setminus\mathcal{N}.
\end{align}
We define the block-diagonal matrix $\mathbf{D}_{\mz}\in\mathbb{R}^{2d\times d}$ as
\begin{align*}
    \mathbf{D}_{\mz} := \begin{bmatrix}
        \mz_1 & \cdots & \bm{0}_{2} \\
        \vdots & \ddots & \vdots \\
        \bm{0}_{2} & \cdots & \mz_d
    \end{bmatrix}.
\end{align*}
Note that $\bar{g}$ is homogeneous of degree $0$, that is, 
\[
\bar g(\mathbf{D}_{\mz}\mt) = \bar g
(t_1 \mz_1\tran, \dots, t_d \mz_d\tran) = \bar g(\mz)
\]
for all $\mt=(t_1,\ldots,t_d)\tran\in(0,\infty)^d$. Since the function $\mt \mapsto \bar g(\mathbf{D}_{\mz}\mt)$
is constant with respect to $\mt$, the chain rule implies that
\begin{align}\label{homo eq}
0 = \frac{\partial}{\partial t_\ell} \bar g(\mathbf{D}_{\mz}\mt) = \nabla \bar g(\mathbf{D}_{\mz}\mt)\tran \mathbf{D}_{\mz} \me_\ell, \quad \ell = 1,\dots,d,
\end{align}
where $\nabla \bar g(\mathbf{D}_{\mz}\mt)\in\mathbb{R}^{2d}$ denotes the gradient of $\bar{g}$ at $\mathbf{D}_{\mz}\mt$, and $\me_\ell=(0,\ldots,0,1,0,\ldots,0)\tran$ denotes the $\ell$-th standard basis vector in $\mathbb{R}^d$. Evaluating \eqref{homo eq} at $\mt=(1,\ldots,1)\tran\in(0,\infty)^d$ yields
\begin{align}\label{constraint}
\mathbf{D}_{\mz} \tran\nabla \bar{g}(\mz)=\bm{0}_d.
\end{align}
This imposes a natural constraint on the gradient.

Now, we introduce a first-order approximation of $g$. For $\mx \in \mathbb{T}^{d}$, we define a function $\bm{\Phi}_{\mx,\ell}:[-\pi,\pi) \to \mathbb{S}^1$ by
\[
\bm{\Phi}_{\mx,\ell}(\theta_\ell)
:= \mx_\ell \cos \theta_\ell + (\mathbf{R}\mx_\ell)\sin \theta_\ell, \quad \theta_\ell\in[-\pi,\pi),
\]
where
\[
   \mathbf{R} =
   \begin{bmatrix}
   0 & -1 \\ 1 & 0
   \end{bmatrix}
\]
denotes the counterclockwise rotation matrix by $\pi/2$ in $\mathbb{R}^2$. This function $\bm{\Phi}_{\mx}$ is a bijection due to the tangent-normal decomposition (e.g., Equation (9.1.2) of \cite{Mardia and Jupp (2000)}). We also define $\bm{\Phi}_{\mx} : [-\pi,\pi)^d \to \mathbb{T}^d$ as
\begin{gather}
    \bm{\Phi}_{\mx} ( \bm{\theta} ) := \big( \bm{\Phi}_{\mx, 1} \left( \theta_1 \right)\tran , \dots , \bm{\Phi}_{\mx, d} \left( \theta_d \right) \tran \big) \tran,\quad \bm{\theta} = \left( \theta_1, \dots, \theta_d \right) \tran \in [-\pi, \pi)^{d}, \label{tangent-normal}
\end{gather}
which is also a bijection. We define the block-diagonal matrix $\mathbf{R}_{\mx}\in\mathbb{R}^{2d\times d}$ as 
\begin{align*}
    \mathbf{R}_{\mx} := \begin{bmatrix}
        \mathbf{R} \mx_1 & \cdots & \bm{0}_{2} \\
        \vdots & \ddots & \vdots \\
        \bm{0}_{2} & \cdots & \mathbf{R} \mx_d
    \end{bmatrix}.
\end{align*}
If $\mz\in\mathbb{T}^d$ and $\mx$ are close enough, then \Cref{lemma:2.1} in the Appendix implies that the following first-order approximation holds:
\begin{align}
g(\mz)
\approx
g(\mx)
+
\left(\mathbf{R}_{\mx}\bm{\theta}_\mz\right)\tran
\nabla \bar g(\mx).
\label{1st taylor}
\end{align}
Here, $\bm{\theta}_\mz\in[-\pi,\pi)^d$ is the vector satisfying $\mz=\bm{\Phi}_{\mx}(\bm{\theta}_\mz)$.

Now, we establish the local linear estimator. For each observation $\mathbf{X}^{(i)}$, let
$\bm{\theta}^{(i)}\in[-\pi,\pi)^d$ denote the vector satisfying
$\mathbf{X}^{(i)}=\bm{\Phi}_{\mx}(\bm{\theta}^{(i)})$.
We define
\begin{align}
\begin{gathered}
(\hat\alpha_{\mh}(\mx),\hat{\bm\beta}_{\mh}(\mx))
:=
\argmin_{\alpha\in\mathbb{R},\,\bm\beta\in\mathbb{R}^{2d}}
n^{-1}\sum_{i=1}^n
\mathcal{L}_{\mx,\mh}(\mathbf{X}^{(i)})
\Big(
Y^{(i)}-\alpha-(\mathbf{R}_{\mx}\bm{\theta}^{(i)})\tran\bm\beta
\Big)^2 \\
\text{subject to }\;
\mathbf{D}_{\mx}\tran\bm\beta=\bm{0}_d.
\end{gathered}
\label{eq:main5.2}
\end{align}
Motivated by \eqref{constraint} and \eqref{1st taylor}, we may take
$\hat\alpha_{\mh}(\mx)$ as an estimator of $\mathbb{E}(Y\mid\mX=\mx)$. To remove the constraint in \eqref{eq:main5.2}, suppose that $\bm\beta= \big( \bm\beta_1 \tran, \dots, \bm\beta_d \tran \big)\tran$ satisfies $\mathbf{D}_{\mx}\tran\bm\beta=\bm{0}_d$. Since $\mx_\ell\tran\bm\beta_\ell=0$ and $\mx_\ell\tran\mathbf{R}\mx_\ell=0$ for each $\ell$, we can write $\bm\beta_\ell = \gamma_\ell \mathbf{R}\mx_\ell$ for some $\gamma_\ell \in \mathbb{R}$. This implies that $\bm\beta=\mathbf{R}_{\mx}\bm\gamma$ for some $\bm\gamma = (\gamma_1, \dots, \gamma_d)\tran \in \mathbb{R}^{d}$. Let $\mathbf{I}_p$ denote the $p \times p$ identity matrix for $p\in\mathbb{N}$. Since
\begin{align*}
    \mathbf{R}_{\mx}\tran \mathbf{R}_{\mx} = \begin{bmatrix}
        \left( \mathbf{R} \mx_1 \right)\tran & \cdots & \bm{0}_{2}\tran  \\
        \vdots & \ddots & \vdots \\
        \bm{0}_{2} \tran & \cdots & \left( \mathbf{R} \mx_d \right)\tran
    \end{bmatrix} \begin{bmatrix}
        \mathbf{R} \mx_1 & \cdots & \bm{0}_{2} \\
        \vdots & \ddots & \vdots \\
        \bm{0}_{2} & \cdots & \mathbf{R} \mx_d
    \end{bmatrix} = \mathbf{I}_d,
\end{align*}
\eqref{eq:main5.2} reduces to the unconstrained optimization problem
\begin{align}
\begin{gathered}
(\hat\alpha_{\mh}(\mx),\hat{\bm\gamma}_{\mh}(\mx))
=
\argmin_{\alpha\in\mathbb{R},\,\bm\gamma\in\mathbb{R}^d}
n^{-1}\sum_{i=1}^n
\mathcal{L}_{\mx,\mh}(\mathbf{X}^{(i)})
\Big(
Y^{(i)}-\alpha-\bm\gamma\tran\bm{\theta}^{(i)}
\Big)^2.
\end{gathered}
\label{eq:main5.5}
\end{align}

To get an explicit expression for $\hat{\alpha}_{\mh}(\mathbf{x})$, we define
\begin{align*}
\hat{\mu}_{\mh, 0}(\mathbf{x}) &:= n^{-1} \sum_{i=1}^{n} \mathcal{L}_{\mx, \mh} \left( \mathbf{X}^{(i)} \right),\\ \bm{\hat{\mu}}_{\mh, 1}(\mathbf{x}) &:= n^{-1} \sum_{i=1}^{n} \mathcal{L}_{\mx, \mh} \left( \mathbf{X}^{(i)} \right) \bm{\theta}^{(i)},\\
\bm{\hat{\mu}}_{\mh, 2}(\mathbf{x}) &:= n^{-1} \sum_{i=1}^{n} \mathcal{L}_{\mx, \mh} \left( \mathbf{X}^{(i)} \right) \bm{\theta}^{(i)} \left( \bm{\theta}^{(i)} \right)\tran,\\
\hat{\sigma}_{\mh}(\mathbf{x}) &:= \hat{\mu}_{\mh, 0}(\mathbf{x}) - \bm{\hat{\mu}}_{\mh, 1}(\mathbf{x})\tran  \bm{\hat{\mu}}_{\mh, 2}(\mathbf{x})^{-1} \bm{\hat{\mu}}_{\mh, 1}(\mathbf{x}),\\
\hat{W}_{\mathbf{x}, \mh} (\mX) &:= \frac{\mathcal{L}_{\mx, \mh} \left( \mathbf{X} \right)}{\hat{\sigma}_{\mh}(\mathbf{x})} \left( 1 - \bm{\hat{\mu}}_{\mh, 1}(\mathbf{x})\tran \bm{\hat{\mu}}_{\mh, 2}(\mathbf{x})^{-1} \bm{\Phi}_{\mx}^{-1} (\mX) \right),
\end{align*}
given that $\bm{\hat{\mu}}_{\mh, 2}(\mathbf{x})$ is invertible and that $\hat{\sigma}_{\mh}(\mathbf{x})\neq0$. 
According to \Cref{lemma:temp4.6} and \Cref{lemma:temp4.7} in the Appendix, $\bm{\hat{\mu}}_{\mh, 2}(\mathbf{x})$ is invertible and $\hat{\sigma}_{\mh}(\mathbf{x})>0$ with probability tending to one, under mild conditions.
By solving \eqref{eq:main5.5}, we get
\begin{gather*}
\hat{\alpha}_{\mh}(\mathbf{x}) = n^{-1} \sum_{i=1}^n \hat{W}_{\mathbf{x}, \mh} (\mathbf{X}^{(i)} ) Y^{(i)}.
\end{gather*}
In other words, we get
\[
\hat\alpha_{\mh}(\mx)
=
\argmin_{y\in\mathbb{R}}
n^{-1}\sum_{i=1}^n
\hat W_{\mx,\mh}(\mathbf{X}^{(i)})
\big(Y^{(i)}-y\big)^2.
\]

We extend this formulation to the general metric space by replacing the Euclidean distance with $d_{\mathbb{M}}$.
Specifically, we define
\begin{align*}
\hat M_{\mh,1}(\mx,y)
&:=
n^{-1}\sum_{i=1}^n
\hat W_{\mx,\mh}(\mathbf{X}^{(i)})
d_{\mathbb{M}}^2(Y^{(i)},y),\\
\hat m_{\mh,1}(\mx)
&:=
\argmin_{y\in\mathbb{M}}
\hat M_{\mh,1}(\mx,y).
\end{align*}
We call $\hat m_{\mh,1}(\mx)$ a local linear Fr\'echet regression estimator at $\mx$.

\begin{remark}
In the case of scalar responses, the local linear method of \cite{Di Marzio et al. (2009)} is based on locally approximating the target regression function $m:[-\pi, \pi)^d\rightarrow\mathbb{R}$ at a given vector $\bm{\theta} = (\theta_1, \dots, \theta_d)\tran \in [-\pi, \pi)^d$ by
\begin{align*}
m(\bm{\theta}) \approx \beta_0 + \sum_{l=1}^d \beta_l \sin(\theta_l - \psi_l),
\end{align*}
where $\beta_0, \beta_\ell \in \mathbb{R}$ and  $\bm{\theta}\approx\bm{\psi} = (\psi_1 , \dots, \psi_d)\tran \in [-\pi, \pi)^d$. This approach is fundamentally different from ours, which relies on the tangent-normal decomposition in the embedded submanifold.
\end{remark}

\section{Asymptotic theory} \label{section:3}

\setcounter{equation}{0}

\subsection{Consistency} \label{section:consistency}

In this section, we derive the consistency of our estimators at a given point $\mx\in\mathbb{T}^d$. We first introduce a condition on the function $L$.

\begin{customcon}{L} \label{con:L1}
The function $L$ satisfies that
\begin{align*}
0 < \int_0^{\infty} L^{k}(r^2) r^{j} \mathrm{d}r < \infty
\end{align*}
for all $k\in\{1,2\}$ and nonnegative integers $j$.
\end{customcon}

Similar conditions were adopted in \cite{Hall et al. (1987)}, \cite{Bai et al. (1988)} and \cite{Garcia-Portugues et al. (2013)}. Examples of $L$ satisfying the condition \ref{con:L1} include the von Mises kernel $L(r)=e^{-r}$, the exponential kernel $L(r)=e^{-\sqrt{r}}$ and the uniform kernel $L(r)=\mathds{1}_{[0, 1]} (r)$, where $\mathds{1}_{[0, 1]}$ denotes the indicator function of $[0,1]$. We also make a typical condition on the bandwidth vector $\mh$.

\begin{customcon}{B} \label{con:B1}
The bandwidth vector $\mh$ satisfies that $\lim_{n \rightarrow \infty} \left\| \mh \right\|_2= 0$ and $\lim_{n \rightarrow \infty} \left(n \prod_{\ell=1}^{d} h_\ell\right) = \infty$.
\end{customcon}

Note that \cite{Ruppert and Wand (1994)} assumed that the ratio of the largest to the smallest bandwidth components remains bounded as the sample size grows. We do not impose this restriction, allowing for more flexible bandwidth vectors. Now, we make some conditions on the distributions of $\mX$ and $Y$. Let $\mathcal{B}(\mathbb{S}^1)$ and $\mathcal{B}(\mathbb{T}^d)$ denote the Borel $\sigma$-fields of $\mathbb{S}^1$ and $\mathbb{T}^{d}$, respectively. Note that $\mathcal{B}(\mathbb{T}^d)$ equals the product $\sigma$-field of $d$ Borel $\sigma$-fields $\mathcal{B}(\mathbb{S}^1)$. Also, let $\omega_1^d$ denote the scaled toroidal measure on $\mathbb{T}^{d}$, defined as the product measure of $d$ circular measures on $\mathbb{S}^1$. Note that 
\begin{align*} 
\omega_1^d (A_1 \times \cdots \times A_d)=2^d \prod_{\ell=1}^{d}\Leb_{2}(\{t\cdot\mx:t\in[0,1],\mx\in A_\ell\}),\quad A_1, \cdots, A_d\in\mathcal{B}(\mathbb{S}^1),
\end{align*}
where $\Leb_{2}$ denotes the Lebesgue measure on $\mathbb{R}^{2}$. This measure satisfies that $\omega_1^d(\mathbb{T}^{d})=(2 \pi)^{d}$. Let $f: \mathbb{T}^{d} \to [0,\infty)$ denote the density of $\mathbf{X}$ with respect to $\omega_1^d$.

\begin{customcon}{D1} \label{con:D1}
The density $f$ satisfies that $f(\mx)>0$.
\end{customcon}

\begin{customcon}{D2} \label{con:D2}
The density $f$ is continuous at $\mx$ and is bounded on $\mathbb{T}^{d}$.
\end{customcon}


Most parametric distributions on $\mathbb{T}^{d}$, including the uniform distribution on $\mathbb{T}^{d}$ and the von Mises-Fisher distributions on $\mathbb{T}^{d}$, satisfy the conditions \ref{con:D1} and \ref{con:D2}. Let $P_{(\mathbf{X}, Y)}$ denote the joint distribution of $\left( \mathbf{X}, Y\right)$, and let $P_{\mathbf{X}}$ and $P_Y$ denote the distributions of $\mathbf{X}$ and $Y$, respectively. Also, let $P_{Y|\mathbf{X}=\mx}$ denote the conditional distribution of $Y$ given $\mathbf{X}=\mx$. We assume that $P_{Y|\mathbf{X}=\mx}$ is absolutely continuous with respect to $P_Y$ for each $\mx\in\mathbb{T}^d$, so that there exists the Radon-Nikodym derivative $d P_{Y|\mathbf{X}=\mx}/d P_Y$ of $P_{Y|\mathbf{X}=\mathbf{x}}$ with respect to $P_Y$. For each $y \in \mathbb{M}$, we define $g_y : \mathbb{T}^{d} \rightarrow [0,\infty)$ as
\begin{align}\label{g_y def}
        g_y(\mx) := \frac{d P_{Y|\mathbf{X}=\mx}}{d P_Y } (y).
\end{align}

\begin{customcon}{D3} \label{con:D3}
    The family $\{ g_y : y \in \mathbb{M} \}$ of functions is equicontinuous at $\mx$ and it holds that $\sup_{y\in\mathbb{M}}\sup_{\mz\in\mathbb{T}^d}g_y(\mz)<\infty$.
\end{customcon}

Similar conditions were used in \cite{Di Marzio et al. (2014)} and \cite{Petersen and Muller (2019)}. Now, we introduce a condition on $\mathbb{M}$. Recall the definitions of $\hat{\mu}_{\mh, 0}(\mathbf{x})$, $\bm{\hat{\mu}}_{\mh, 1}(\mathbf{x})$, $\bm{\hat{\mu}}_{\mh, 2}(\mathbf{x})$, $\hat{\sigma}_{\mh}(\mathbf{x})$ and $\hat{W}_{\mathbf{x}, \mh} (\mX)$ given in Section \ref{local linear section}. We define their population versions as
\begin{align*} 
    \tilde{\mu}_{\mh, 0}(\mathbf{x}) &:= \mathbb{E} \left[ \mathcal{L}_{\mx, \mh} \left( \mathbf{X} \right)  \right],\\ \bm{\tilde{\mu}}_{\mh, 1}(\mathbf{x}) &:= \mathbb{E} \left[ \mathcal{L}_{\mx, \mh} \left( \mathbf{X} \right) \bm{\Phi}_{\mx}^{-1} (\mathbf{X})  \right], \\
    \bm{\tilde{\mu}}_{\mh, 2}(\mathbf{x}) &:= \mathbb{E} \left[ \mathcal{L}_{\mx, \mh} \left( \mathbf{X} \right) \bm{\Phi}_{\mx}^{-1} (\mathbf{X}) \bm{\Phi}_{\mx}^{-1} (\mathbf{X})\tran \right], \\
    \tilde{\sigma}_{\mh}(\mathbf{x}) &:= \tilde{\mu}_{\mh, 0}(\mathbf{x}) - \bm{\tilde{\mu}}_{\mh, 1}(\mathbf{x})\tran  \bm{\tilde{\mu}}_{\mh, 2}(\mathbf{x})^{-1} \bm{\tilde{\mu}}_{\mh, 1}(\mathbf{x}), \\
    \tilde{W}_{\mathbf{x}, \mh} (\mX ) & := \frac{\mathcal{L}_{\mx, \mh} \left( \mathbf{X} \right)}{\tilde{\sigma}_{\mh}(\mathbf{x})} \left( 1 - \bm{\tilde{\mu}}_{\mh, 1}(\mathbf{x})\tran \bm{\tilde{\mu}}_{\mh, 2}(\mathbf{x})^{-1} \bm{\Phi}_{\mx}^{-1} (\mX) \right).
\end{align*}
We also define the population versions of $\hat{M}_{\mh, 0}(\mathbf{x}, y)$, $\hat{m}_{\mh, 0}(\mathbf{x})$, $\hat{M}_{\mh, 1}(\mathbf{x}, y)$ and $\hat{m}_{\mh, 1}(\mathbf{x})$ as
\begin{align*}
    \tilde{M}_{\mh, 0}(\mathbf{x}, y) & := \frac{\mathbb{E} \left[ \mathcal{L}_{\mx, \mh} \left( \mathbf{X} \right) d_{\mathbb{M}}^2(Y, y) \right] }{\tilde{\mu}_{\mh, 0}(\mathbf{x})}, \\
    \tilde{m}_{\mh, 0}(\mathbf{x}) & := \mathop{\mathrm{argmin}}_{y \in \mathbb{M}} \tilde{M}_{\mh, 0}(\mathbf{x}, y),\\
    \tilde{M}_{\mh,1} (\mathbf{x}, y) & := \mathbb{E} \left[ \tilde{W}_{\mathbf{x}, \mh} (\mathbf{X} ) d_{\mathbb{M}}^2(Y, y) \right],  \\
    \tilde{m}_{\mh,1} (\mathbf{x}) & := \mathop{\mathrm{argmin}}_{y \in \mathbb{M}} \tilde{M}_{\mh,1} (\mathbf{x}, y).
\end{align*}

\begin{customcon}{M1} \label{con:M1}
    For each $s\in\{0,1\}$, (i) $m_{\oplus}(\mathbf{x})$, $\tilde{m}_{\mh, s}(\mathbf{x})$ and $\hat{m}_{\mh, s}(\mathbf{x})$ uniquely exist, the latter almost surely; (ii) for any $\epsilon >0$, it holds that
    \begin{gather*}
            \liminf_{n} \inf_{y\in\mathbb{M}:\,d_{\mathbb{M}}(y, \tilde{m}_{\mh, s}(\mathbf{x})) > \epsilon}\left[ \tilde{M}_{\mh, s} (\mathbf{x}, y) - \tilde{M}_{\mh, s} (\mathbf{x}, \tilde{m}_{\mh, s}(\mathbf{x})) \right] > 0, \\
        \inf_{y\in\mathbb{M}:\,d_{\mathbb{M}}(y, m_{\oplus}(\mathbf{x})) > \epsilon} \left[ M_{\oplus} (\mathbf{x}, y) - M_{\oplus} (\mathbf{x}, m_{\oplus}(\mathbf{x})) \right] > 0.
    \end{gather*}
\end{customcon}

The condition \ref{con:M1} is satisfied for various metric spaces. For example, the spaces in Examples \ref{example1} and \ref{example4} satisfy this condition by Proposition 1 of \cite{Petersen and Muller (2019)} and Section B.3 of \cite{Zhou and Muller (2022)}, respectively. For the space in Example \ref{example2}, the condition \ref{con:M1}-(i) is satisfied under various manifold conditions; see \cite{Afsari (2011)} and \cite{Charlier (2013)}, for example. Now, we present the consistency.

\begin{theorem} \label{thm:consistency}
    Assume that the conditions \ref{con:L1}, \ref{con:B1}, \ref{con:D1}--\ref{con:D3} and \ref{con:M1} hold. For each $s\in\{0,1\}$, it holds that
    \begin{align*}
        d_{\mathbb{M}}(\hat{m}_{\mh, s}(\mathbf{x}), m_{\oplus}(\mathbf{x})) = o_{\mathbb{P}}(1).
    \end{align*}
\end{theorem}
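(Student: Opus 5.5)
The plan follows the usual two-part decomposition of Petersen and M\"uller (2019): bias plus stochastic error,
\[
d_{\mathbb{M}}(\hat m_{\mh,s}(\mx),m_\oplus(\mx))\le d_{\mathbb{M}}(\hat m_{\mh,s}(\mx),\tilde m_{\mh,s}(\mx))+d_{\mathbb{M}}(\tilde m_{\mh,s}(\mx),m_\oplus(\mx)).
\]
Each term is handled by an argmin-continuity argument. That argument uses the separation conditions in \ref{con:M1}-(ii) together with uniform convergence of the criterion functions over $y\in\mathbb{M}$.

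\textbf{Bias term.} I will show that $\sup_{y\in\mathbb{M}}|\tilde M_{\mh,s}(\mx,y)-M_\oplus(\mx,y)|\to0$. Write $\mathbb{E}[\mathcal{L}_{\mx,\mh}(\mX)d_{\mathbb{M}}^2(Y,y)]=\int \mathcal{L}_{\mx,\mh}(\mz)f(\mz)\int d_{\mathbb{M}}^2(u,y)g_u(\mz)\,dP_Y(u)\,d\omega_1^d(\mz)$, and change variables through $\bm\Phi_{\mx}$ and then $\theta_\ell=h_\ell t_\ell$. Since $1-\cos(h_\ell t_\ell)=h_\ell^2t_\ell^2/2+O(h_\ell^4t_\ell^4)$, the kernel behaves like $\prod_\ell L(t_\ell^2/2)$. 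Condition \ref{con:L1} supplies integrability and dominated convergence, and conditions \ref{con:D2}--\ref{con:D3} supply continuity. Together these give moment expansions such as $\tilde\mu_{\mh,0}(\mx)=\prod_\ell h_\ell\,(f(\mx)c_0+o(1))$, with the analogous expansions for $\bm{\tilde\mu}_{\mh,1}$ and $\bm{\tilde\mu}_{\mh,2}$ (scaled componentwise by $h_\ell$ and $h_\ell h_k$). By symmetry of the kernel the first moment is $o(h_\ell)$ after normalisation, while the second-moment matrix is asymptotically $\mathrm{diag}(h_\ell^2)\prod h_\ell\,(f(\mx)c_2\mathbf{I}_d+o(1))$. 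Here $c_0,c_2>0$ by \ref{con:D1} and \ref{con:L1}. Consequently $\tilde\sigma_{\mh}(\mx)/\tilde\mu_{\mh,0}(\mx)\to1$ and $\mathbb{E}|\tilde W_{\mx,\mh}(\mX)-\mathcal{L}_{\mx,\mh}(\mX)/\tilde\mu_{\mh,0}(\mx)|\to0$. Because $d_{\mathbb{M}}$ is bounded (total boundedness), and equicontinuity in \ref{con:D3} gives uniformity over the density ratio, both $\tilde M_{\mh,0}$ and $\tilde M_{\mh,1}$ converge to $M_\oplus$ uniformly in $y$. The second inequality in \ref{con:M1}-(ii) then gives $d_{\mathbb{M}}(\tilde m_{\mh,s}(\mx),m_\oplus(\mx))\to0$ by the standard argument: if the distance stayed above $\epsilon$, then $M_\oplus(\mx,\tilde m_{\mh,s}(\mx))-M_\oplus(\mx,m_\oplus(\mx))$ would stay bounded away from zero, contradicting uniform convergence.

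\textbf{Stochastic term.} I will show $\sup_y|\hat M_{\mh,s}(\mx,y)-\tilde M_{\mh,s}(\mx,y)|=o_{\mathbb{P}}(1)$. For $s=0$, after normalising by $\prod_\ell h_\ell$, the numerator and denominator are averages whose variances are $O((n\prod h_\ell)^{-1})$, using $L^2$ integrability from \ref{con:L1} and boundedness of $f$; this is $o(1)$ by \ref{con:B1}. The same bound, with the appropriate $h$-scalings, gives $\hat\mu_{\mh,j}-\tilde\mu_{\mh,j}=o_{\mathbb{P}}$ of the corresponding scale. This yields the invertibility of $\bm{\hat\mu}_{\mh,2}(\mx)$ and positivity of $\hat\sigma_{\mh}(\mx)$, as in Lemmas \ref{lemma:temp4.6}--\ref{lemma:temp4.7}. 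Pointwise convergence in $y$ follows. To make it uniform, I use the expansion
\[
\hat M_{\mh,s}(\mx,y)=\sum_i a_i\, d^2_{\mathbb{M}}(Y^{(i)},y),
\]
where the coefficients $a_i$ do not depend on $y$ and satisfy $\sum_i|a_i|=O_{\mathbb{P}}(1)$. Then $|d^2_{\mathbb{M}}(u,y)-d^2_{\mathbb{M}}(u,y')|\le 2\,\mathrm{diam}(\mathbb{M})\,d_{\mathbb{M}}(y,y')$, so $\hat M_{\mh,s}(\mx,\cdot)$ is asymptotically equicontinuous in probability. Combining this with a finite $\delta$-net of the totally bounded $\mathbb{M}$ upgrades pointwise convergence to uniform convergence. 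The first inequality in \ref{con:M1}-(ii) then gives $d_{\mathbb{M}}(\hat m_{\mh,s}(\mx),\tilde m_{\mh,s}(\mx))=o_{\mathbb{P}}(1)$, by the same argmin argument applied along $n$.

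\textbf{Main obstacle.} I expect the hardest step to be the local linear weights. Because the bandwidths may go to zero at different rates, the moment matrices must be handled after rescaling by $\mathbf{H}=\mathrm{diag}(h_\ell)$. The key fact to establish is $\bm{\hat\mu}_{\mh,1}\tran\bm{\hat\mu}_{\mh,2}^{-1}\bm{\Phi}_{\mx}^{-1}(\mX)=o_{\mathbb{P}}(1)$ uniformly over the effective kernel support. To prove it, I will write this quantity as $(\mathbf{H}^{-1}\bm{\hat\mu}_{\mh,1})\tran(\mathbf{H}^{-1}\bm{\hat\mu}_{\mh,2}\mathbf{H}^{-1})^{-1}\mathbf{H}^{-1}\bm\theta$. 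The first factor is $o_{\mathbb{P}}(\prod h_\ell)$ and the middle inverse is $O_{\mathbb{P}}((\prod h_\ell)^{-1})$. For the last factor, $\mathbf{H}^{-1}\bm\theta$ is weighted by the kernel, so its contribution is controlled through the moments in \ref{con:L1}.
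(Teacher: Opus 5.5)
Your proposal is correct and follows the paper's architecture. The common steps are: the triangle inequality through $\tilde m_{\mh,s}(\mx)$; uniform convergence of $\tilde M_{\mh,s}(\mx,\cdot)$ to $M_\oplus(\mx,\cdot)$ for the bias; and pointwise convergence of $\hat M_{\mh,s}-\tilde M_{\mh,s}$, made uniform by a Lipschitz-in-$y$ bound with an $O_{\mathbb{P}}(1)$ constant plus a finite net. Your $\sum_i|a_i|$ plays the role of the paper's $\hat R_{\mh}(\mx)/\hat\sigma_{\mh}(\mx)$. For the net argument you also need the deterministic analogue $\mathbb{E}|\tilde W_{\mx,\mh}(\mX)|=O(1)$; your bias computation already gives it. Both arguments finish with the two separation inequalities in condition \ref{con:M1}-(ii).

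The one step where you genuinely depart is the local linear bias. The paper (\Cref{lemma:new4.8}) introduces $\tilde\tau_{\mh,0}(\mx,y)$ and $\bm{\tilde\tau}_{\mh,1}(\mx,y)$ and shows directly that the local-linear-weighted average of $g_y$ converges to $g_y(\mx)$ uniformly in $y$. You instead show $\mathbb{E}|\tilde W_{\mx,\mh}(\mX)-\mathcal{L}_{\mx,\mh}(\mX)/\tilde\mu_{\mh,0}(\mx)|\to0$. You then bound $\sup_y|\tilde M_{\mh,1}(\mx,y)-\tilde M_{\mh,0}(\mx,y)|$ by $\mathrm{diam}(\mathbb{M})^2$ times this quantity, which reduces everything to the local constant case.

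Your route is shorter and fully adequate for consistency; the same reduction also works on the sample side. Its cost is that it discards the symmetry cancellation. Under \ref{con:D4}, the $L^1$ distance between the two weight functions is only $O(\|\mh\|_2)$, so this route would yield a bias of order $\|\mh\|_2^{1/(\beta_\oplus-1)}$ rather than the $\|\mh\|_2^{2/(\beta_\oplus-1)}$ in \Cref{thm:main4.3}. The paper's decomposition is chosen so that it carries over unchanged to the rate proof.

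One claim in your last paragraph should be corrected. As a uniform statement, the ``key fact'' that $\bm{\hat\mu}_{\mh,1}(\mx)\tran\bm{\hat\mu}_{\mh,2}(\mx)^{-1}\bm{\Phi}_{\mx}^{-1}(\mX)=o_{\mathbb{P}}(1)$ over the kernel support fails for the unbounded-support kernels that condition \ref{con:L1} allows (von Mises, exponential). On the torus, $\|\mathbf{\Lambda}_{\mh}^{-1}\bm{\theta}\|_2$ reaches order $1/\min_\ell h_\ell$. Moreover, for smooth $f$ the population counterpart is approximately $(\mathbf{R}_{\mx}\bm{\theta})\tran\nabla\bar f(\mx)/f(\mx)$, which is of order one when $\|\bm{\theta}\|_2\asymp1$.

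Nothing in your argument actually needs uniformity. Combine your two factor bounds with the kernel-weighted average bound $n^{-1}\sum_i\mathcal{L}_{\mx,\mh}(\mX^{(i)})\|\mathbf{\Lambda}_{\mh}^{-1}\bm{\theta}^{(i)}\|_2=O_{\mathbb{P}}(\prod_\ell h_\ell)$. This already gives both $\sum_i|a_i|=O_{\mathbb{P}}(1)$ and $n^{-1}\sum_i|\hat W_{\mx,\mh}(\mX^{(i)})-\mathcal{L}_{\mx,\mh}(\mX^{(i)})/\hat\mu_{\mh,0}(\mx)|=o_{\mathbb{P}}(1)$. This averaged bound is what your final sentence gestures at, and what the paper uses through $\hat R_{\mh}(\mx)$, so state the key fact in that form.
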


\subsection{Rate of convergence}\label{section:rate of convergence}

In this section, we establish the convergence rates of the proposed estimators.
To this end, we introduce additional conditions.
Let $\bar{f} : \mathbb{R}^{2d} \setminus \mathcal{N} \to [0, \infty)$ denote an extension of the density $f$, defined as
    \begin{align*}
        \bar{f} (\mz) := f \left( \frac{\mz_1 \tran}{\| \mz_1 \|_2} , \dots, \frac{\mz_d \tran}{\| \mz_d \|_2} \right), \quad \mz = \big( \mz_1 \tran, \dots, \mz_d \tran \big)\tran \in \mathbb{R}^{2d} \setminus\mathcal{N}.
    \end{align*}
Additionally, let $\nabla^2 \bar{f}$ denote the Hessian of $\bar{f}$.
With a slight abuse of notation, we again use $\|\cdot\|_2$ to denote the matrix operator norm.

\begin{customcon}{D4} \label{con:D4}
    The homogeneous extension $\bar{f}$ is twice differentiable on $\mathbb{R}^{2d} \setminus\mathcal{N}$ and it holds that $\sup_{\mz\in\mathbb{T}^d}\|\nabla^2 \bar{f}(\mz)\|_2<\infty$.
\end{customcon}

The condition \ref{con:D4} implies the condition \ref{con:D2}. For $g_y$ defined in \eqref{g_y def}, let $\bar{g}_y : \mathbb{R}^{2d} \setminus\mathcal{N} \to [0,\infty)$ denote the homogeneous extension of $g_y$, defined as
\begin{align*}
    \bar{g}_y (\mz) := g_y \left( \left( \frac{\mz_1 \tran}{\| \mz_1 \|_2} , \dots, \frac{\mz_d \tran}{\| \mz_d \|_2} \right) \tran \right), \quad \mz = \big( \mz_1 \tran, \dots, \mz_d \tran \big)\tran \in \mathbb{R}^{2d} \setminus\mathcal{N}.
\end{align*}
Additionally, let $\nabla \bar{g}_y$ and $\nabla^2 \bar{g}_y$ denote the gradient and Hessian of $\bar{g}_y$, respectively.

\begin{customcon}{D5} \label{con:D5}
    The homogeneous extension $\bar{g}_y$ is twice differentiable on $\mathbb{R}^{2d} \setminus\mathcal{N}$ and it holds that
        $\sup_{y \in \mathbb{M}} \sup_{\mz\in\mathbb{T}^d} g_y (\mz) < \infty$, $\sup_{y \in \mathbb{M}} \sup_{\mz\in\mathbb{T}^d} \| \nabla \bar{g}_y (\mz) \|_{2} < \infty$ and $\sup_{y \in \mathbb{M}} \sup_{\mz\in\mathbb{T}^d} \| \nabla^2 \bar{g}_y (\mz) \|_{2} < \infty$.
\end{customcon}

Conditions of this type were employed in earlier works, including
\cite{Di Marzio et al. (2014)} and \cite{Petersen and Muller (2019)}. The condition~\ref{con:D5} implies
the condition~\ref{con:D3}. Now, we introduce two additional conditions on the metric space $\mathbb{M}$.

\begin{customcon}{M2} \label{con:M2}
    For each $s\in\{0,1\}$, there exist constants $h_{\oplus} >0$, $\eta_{\oplus} >0$, $C_{\oplus} >0$ and $\beta_{\oplus} \in (1, \infty)$ such that
    \begin{align*}
         \tilde{M}_{\mh, s} (\mathbf{x}, y) - \tilde{M}_{\mh, s} (\mathbf{x}, \tilde{m}_{\mh, s} (\mathbf{x}) )  \geq C_{\oplus}\cdot d_{\mathbb{M}}(y, \tilde{m}_{\mh, s} (\mathbf{x}) )^{\beta_{\oplus}}
    \end{align*}
    whenever $\left\| \mh \right\|_2 < h_{\oplus}$ and $d_{\mathbb{M}}(y, \tilde{m}_{\mh, s}(\mathbf{x})) < \eta_{\oplus}$, and that
    \begin{align*}
        M_{\oplus} (\mathbf{x}, y) - M_{\oplus} (\mathbf{x}, m_{\oplus}(\mathbf{x}))  \geq C_{\oplus} \cdot d_{\mathbb{M}}(y, m_{\oplus}(\mathbf{x}))^{\beta_{\oplus}}
    \end{align*}
    whenever $d_{\mathbb{M}}(y, m_{\oplus}(\mathbf{x})) < \eta_{\oplus}$.
\end{customcon}

Conditions analogous to the condition \ref{con:M2} appeared in earlier works (e.g., \cite{Petersen and Muller (2019)}). When $\eta_\oplus$ is sufficiently large, the condition~\ref{con:M2} implies the condition~\ref{con:M1}-(ii). The condition~\ref{con:M2} with $\beta_\oplus = 2$ holds for the spaces in Examples~\ref{example1} and~\ref{example4} by Proposition~1 of \cite{Petersen and Muller (2019)} and Section B.3 of \cite{Zhou and Muller (2022)}, respectively. An analogous result holds with $\beta_\oplus = 2$ for the space in Example~\ref{example2}, under the additional assumptions in Proposition~3 of \cite{Petersen and Muller (2019)}. To introduce the next condition, let $B_{\mathbb{M}}(y, \delta)$ denote the open ball in $\mathbb{M}$ centered at $y \in \mathbb{M}$ with radius $\delta > 0$ and let $N(r, B_{\mathbb{M}}(y, \delta), d_{\mathbb{M}})$ denote the $r$-covering number of $B_{\mathbb{M}}(y, \delta)$ with respect to the metric $d_{\mathbb{M}}$.

\begin{customcon}{M3} \label{con:M3}
There exist constants $r_{\mathbb{M}}>0$ and $\alpha_{\mathbb{M}} \in (0, 1]$ such that
\begin{gather*}
    \sup_{y \in \mathbb{M}:d_{\mathbb{M}}(y,m_{\oplus}(\mx))<r_{\mathbb{M}}} \int_{0}^{\frac{1}{2}} \sqrt{1+\log N(\delta \epsilon, B_{\mathbb{M}}(y, \delta), d_{\mathbb{M}})} \, \mathrm{d} \epsilon =O(\delta^{\alpha_{\mathbb{M}}-1})  ~ \mathrm{as} ~ \delta \rightarrow 0.
\end{gather*}
\end{customcon}

The metric entropy condition~\ref{con:M3} extends the corresponding assumption in \cite{Petersen and Muller (2019)}, which appears as the special case $\alpha_{\mathbb{M}} = 1$. This extension accommodates a wider class of metric spaces and is satisfied by the spaces in Examples~\ref{example1}--\ref{example2}; see Propositions 1 and 2 of \cite{Im et al. (2025)} and Section B.3 of \cite{Zhou and Muller (2022)}, for example. Now, we derive the convergence rates of the proposed estimators.

\begin{theorem} \label{thm:main4.3}
    Assume that the conditions \ref{con:L1}, \ref{con:B1}, \ref{con:D1}, \ref{con:D4}, \ref{con:D5} and \ref{con:M1}--\ref{con:M3} hold. For each $s\in\{0,1\}$ and for the constants $\beta_{\oplus} \in (1, \infty)$ and $\alpha_{\mathbb{M}} \in (0, 1]$ defined in the conditions \ref{con:M2} and \ref{con:M3}, respectively, it holds that
    \begin{align*}
        d_{\mathbb{M}}(\hat{m}_{\mh, s}(\mathbf{x}), m_{\oplus}(\mathbf{x})) = 
        O \left( \left\| \mh \right\|_2^{\frac{2}{\beta_{\oplus} - 1}} \right) +  O_{\mathbb{P}} \left( \left( n \prod_{\ell=1}^{d} h_\ell \right) ^{\frac{-1}{2 \left( \beta_{\oplus} - \alpha_{\mathbb{M}} \right)}} \right).
    \end{align*}
\end{theorem}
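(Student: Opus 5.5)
We split the error through the intermediate minimizer,
\[
d_{\mathbb{M}}(\hat m_{\mh,s}(\mx),m_\oplus(\mx))\le d_{\mathbb{M}}(\tilde m_{\mh,s}(\mx),m_\oplus(\mx))+d_{\mathbb{M}}(\hat m_{\mh,s}(\mx),\tilde m_{\mh,s}(\mx)),
\]
and bound the two pieces separately: a deterministic bias term and a stochastic term.

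\textbf{Bias term.} Write $\mathbb{E}[\mathcal{L}_{\mx,\mh}(\mX)\,\cdot\,d^2_{\mathbb{M}}(Y,y)]=\int\int \mathcal{L}_{\mx,\mh}(\mz)\,d^2_{\mathbb{M}}(u,y)\,g_u(\mz)f(\mz)\,\mathrm{d}P_Y(u)\,\omega_1^d(\mathrm{d}\mz)$. Change variables $\mz=\bm{\Phi}_{\mx}(\bm\theta)$ and then $\theta_\ell=h_\ell t_\ell$.

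Expand $g_uf$ to second order in $\bm\theta$, using \Cref{lemma:2.1} together with \ref{con:D4}--\ref{con:D5}. The remainder is controlled uniformly in $u$ by the sup-bounds on the gradients and Hessians, and $d_{\mathbb{M}}$ is bounded because $\mathbb{M}$ is totally bounded. The moment integrals are finite by \ref{con:L1}.

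For $s=0$ the first-order term cancels by symmetry of the kernel in each $\theta_\ell$. For $s=1$ the weights $\tilde W$ satisfy $\mathbb{E}[\tilde W]=1$ and $\mathbb{E}[\tilde W\bm{\Phi}_{\mx}^{-1}(\mX)]=\bm 0_d$ exactly, so the linear term vanishes.

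Either way we aim for
\[
\sup_y\big|\tilde M_{\mh,s}(\mx,y)-M_\oplus(\mx,y)\big|=O(\|\mh\|_2^2).
\]
Combining this with \ref{con:M2} gives
\[
C_\oplus\, d^{\beta_\oplus}\le M_\oplus(\tilde m)-M_\oplus(m_\oplus)\le 2\sup_y\big|\tilde M-M_\oplus\big|.
\]
This requires $\tilde m\to m_\oplus$ first, which we get from the uniform convergence plus \ref{con:M1}(ii). The result is $d\le O(\|\mh\|_2^{2/\beta_\oplus})$.

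To reach the stated exponent $2/(\beta_\oplus-1)$ we refine this. We use that the difference $(\tilde M-M_\oplus)(y)-(\tilde M-M_\oplus)(m_\oplus)$ is $O(\|\mh\|_2^2\,d_{\mathbb{M}}(y,m_\oplus))$. This follows because $d^2(u,y)-d^2(u,m_\oplus)$ is bounded by a constant times $d(y,m_\oplus)$, since $\mathbb{M}$ is bounded. Then $C d^{\beta}\le O(\|\mh\|_2^2)\,d$, which yields the exponent $2/(\beta_\oplus-1)$.

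\textbf{Stochastic term.} We use the standard M-estimation rate argument (Theorem 3.2.5 of van der Vaart and Wellner), as in Petersen and M\"uller (2019).

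First replace $\hat W$ by $\tilde W$. By \Cref{lemma:temp4.6}, \Cref{lemma:temp4.7} and moment computations, $\hat\mu_{\mh,j}$ equals $\tilde\mu_{\mh,j}$ up to relative errors of order $O_{\mathbb P}((n\prod_\ell h_\ell)^{-1/2})$ after scaling by $\mathrm{diag}(\mh)$. Again use $|d^2(u,y)-d^2(u,\tilde m)|\lesssim d(y,\tilde m)$.

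Next define $D_n(y)=\hat M_{\mh,s}(\mx,y)-\tilde M_{\mh,s}(\mx,y)$ and bound the localized modulus
\[
\mathbb{E}\sup_{d(y,\tilde m)<\delta}\big|D_n(y)-D_n(\tilde m)\big|\lesssim \frac{\delta\int_0^{1/2}\sqrt{1+\log N(\delta\epsilon,B_{\mathbb{M}}(\tilde m,\delta),d_{\mathbb{M}})}\,\mathrm{d}\epsilon}{\sqrt{n\prod_\ell h_\ell}}.
\]
This comes from a chaining bound on the empirical process indexed by $y$. Its envelope is the kernel weight times $\delta$, with second moment of order $\prod_\ell h_\ell$ after normalizing by $\tilde\mu_0$. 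By \ref{con:M3} the modulus is $\lesssim\delta^{\alpha_{\mathbb{M}}}/\sqrt{n\prod_\ell h_\ell}$. The supremum in \ref{con:M3} is over centers near $m_\oplus$, which covers $\tilde m$ once the bias step gives $\tilde m\to m_\oplus$.

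Consistency of $\hat m$ for $\tilde m$ is supplied by the proof of \Cref{thm:consistency}. Combining the curvature from \ref{con:M2} with a peeling argument gives a rate $r_n$ satisfying $r_n^{\beta_\oplus}\,\phi(1/r_n)\le\sqrt{n\prod_\ell h_\ell}$ with $\phi(\delta)=\delta^{\alpha_{\mathbb{M}}}$. Solving gives $d(\hat m,\tilde m)=O_{\mathbb P}((n\prod_\ell h_\ell)^{-1/(2(\beta_\oplus-\alpha_{\mathbb{M}}))})$.

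\textbf{Main obstacle.} The hardest step is the local linear case $s=1$. The weights $\hat W$ can be negative and involve random normalizations. Controlling $\sup_y$ of the centered increments then requires showing that $\hat\mu_{\mh,1}^\top\hat\mu_{\mh,2}^{-1}\bm\Phi_{\mx}^{-1}(\mX)$ stays $O_{\mathbb P}(1)$ on the kernel support, with unequal bandwidths $h_\ell$. I would handle this by rescaling by $\mathrm{diag}(\mh)^{-1}$ so that all moment matrices converge to fixed nonsingular limits, and then treat the random coefficients as finitely many scalars multiplying separate empirical processes.
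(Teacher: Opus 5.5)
Your proposal is correct and follows the paper's proof. The split through $\tilde m_{\mh,s}(\mx)$ matches, as does the bias bound: the representation $\tilde M_{\mh,s}(\mx,y)-M_{\oplus}(\mx,y)=\int d_{\mathbb{M}}^2(u,y)\,[\,\cdot\,]_u\,\mathrm{d}P_Y(u)$ with a uniform-in-$u$ $O(\|\mh\|_2^2)$ estimate, and the Lipschitz factor $d_{\mathbb{M}}(y,m_{\oplus}(\mx))$ that upgrades $2/\beta_{\oplus}$ to $2/(\beta_{\oplus}-1)$. The stochastic bound is also the paper's: replace $\hat W$ by $\tilde W$ at cost $d_{\mathbb{M}}(y,\tilde m_{\mh,s}(\mx))\,n^{-1}\sum_i|\hat W_{\mx,\mh}(\mX^{(i)})-\tilde W_{\mx,\mh}(\mX^{(i)})|$, then apply a bracketing bound with envelope proportional to $\delta$ times the kernel weight, use \ref{con:M3}, and peel. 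These steps are exactly Lemmas~\ref{thm:3.2}, \ref{thm:3.4}, \ref{thm:4.2} and~\ref{thm:4.4}.

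The differences are cosmetic. For $s=1$ you kill the linear term via the exact identities $\mathbb{E}[\tilde W_{\mx,\mh}(\mX)]=1$ and $\mathbb{E}[\tilde W_{\mx,\mh}(\mX)\bm{\Phi}_{\mx}^{-1}(\mX)]=\bm{0}_d$. This gives exact cancellation only when you expand $g_u$ alone against $P_{\mX}$, not $g_u f$. The paper instead bounds $\|\mathbf{\Lambda}_{\mh}^{-1}\bm{\tilde{\mu}}_{\mh,1}(\mx)\|_2$ and $\sup_y\|\mathbf{\Lambda}_{\mh}^{-1}\bm{\tilde{\tau}}_{\mh,1}(\mx,y)\|_2$ by $O(\rho(\mh)\|\mh\|_2)$. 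You also cite Theorem~3.2.5 of van der Vaart and Wellner where the paper peels by hand. For $s=1$ the paper restricts to a high-probability event $F_{n,\epsilon}$, because the random normalizations in $\hat W$ only admit $O_{\mathbb{P}}$ control; the expectation bound you feed into that theorem must likewise be taken on such an event.
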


When $\alpha_{\mathbb{M}} = 1$, consider bandwidths of the form
$\mh = \left( c\cdot n^{-\gamma}, \dots, c\cdot n^{-\gamma} \right)\tran$ for some constant $c>0$.
The error rate in \Cref{thm:main4.3} is then optimized by choosing $\gamma = 1/(d+4)$, which
yields
\begin{align*}
d_{\mathbb{M}}\!\left(\hat{m}_{\mh,s}(\mathbf{x}), m_{\oplus}(\mathbf{x})\right)
=
O_{\mathbb{P}}\!\left(
n^{-\frac{2}{(d+4)(\beta_{\oplus}-1)}}
\right).
\end{align*}
When $\beta_{\oplus} = 2$, which is achieved by various $(\mathbb{M},d_{\mathbb{M}})$, this rate simplifies to
\begin{align*}
d_{\mathbb{M}}\!\left(\hat{m}_{\mh,s}(\mathbf{x}), m_{\oplus}(\mathbf{x})\right)
=
O_{\mathbb{P}}\!\left(n^{-2/(d+4)}\right),
\end{align*}
which coincides with the optimal error rate in nonparametric regression with a $d$-dimensional predictor.

It is worth noting that the local constant estimator \(\hat{m}_{\mh,0}(\mathbf{x})\) attains the same error rate as the local linear estimator \(\hat{m}_{\mh,1}(\mathbf{x})\) for any \(\mathbf{x} \in \mathbb{T}^d\). This contrasts with classical nonparametric smoothing results, in which boundary effects typically lead to slower error rates for local constant estimators. Since the torus \(\mathbb{T}^d\) has no boundary, such effects do not arise in the present setting. Nevertheless, the numerical studies presented in the following sections reveal that the local linear estimator exhibits superior finite-sample performance compared with the local constant estimator.

\section{Simulation study} \label{section:4}

We conducted a simulation study with $d=2$ and a spherical response ($\mathbb{M}=\mathbb{S}^2$). Note that $\mathbb{S}^2$ is a two-dimensional compact Riemannian manifold without boundary, equipped with the geodesic distance $$d_{\mathbb{S}^2}(\mv,\bu):=\arccos(\mv\tran \bu).$$ We compared the performance of our estimators $\hat{m}_{\mh,0}$ and $\hat{m}_{\mh,1}$ with the local constant Fr\'echet regression estimator $\hat{m}_{h,0}^{\rm TW}$ of \cite{Tucker and Wu (2025)} and the local linear Fr\'echet regression estimator $\hat{m}_{\mh,1}^{\rm PM}$ of \cite{Petersen and Muller (2019)}. Although $\hat{m}_{\mh,1}^{\rm PM}$ is designed for Euclidean predictors, we applied it by interpreting the angles of the toroidal predictors as Euclidean predictors and using a product kernel.

Bandwidth parameters for these methods were selected using 5-fold cross-validation. For the estimators employing multiple bandwidths 
($\hat{m}_{\mh,0}$, $\hat{m}_{\mh,1}$ and $\hat{m}_{\mh,1}^{\rm PM}$), we adopted a two-stage grid search to reduce the computational burden while maintaining accuracy. In the first stage, we conducted a coarse search over a $10 \times 10$ grid. For our estimators, the initial grid was set to $\{0.1 \times k : k = 1, \ldots, 10\}^2$. For $\hat{m}_{\mh,1}^{\rm PM}$, we considered a wider initial grid, namely $\{0.5 \times k : k = 1, \ldots, 20\}^2$, reflecting differences in the underlying kernel schemes. In the second stage, we performed a finer search over a $5 \times 5$ grid centered at the bandwidth vector $\mh_{\mathrm{opt}}^{(1)}$ selected in the first stage. The grid spacing in this stage was set to one-quarter of that used in the first stage, namely $0.025$ for our estimators and $0.125$ for $\hat{m}_{\mh,1}^{\rm PM}$. For $\hat{m}_{h,0}^{\rm TW}$, we searched its optimal bandwidth over the one-dimensional grid $\{0.01 \times k : k = 1, \ldots, 50\}$. We employed the von Mises kernel for our estimators, and the Epanechnikov kernel for $\hat{m}_{h,0}^{\rm TW}$ and $\hat{m}_{\mh,1}^{\rm PM}$.

We specified the regression function $m_{\oplus} : \mathbb{T}^2 \to \mathbb{S}^2$ as
\begin{align*}
    m_{\oplus}(\mathbf{x}) = \frac{( \cos \psi , \sin \phi , \sin \psi \cos \phi )\tran}{\| ( \cos \psi , \sin \phi , \sin \psi \cos \phi )\tran \|_2}, \quad \mathbf{x} = (\cos \psi ,  \sin \psi,  \cos \phi ,  \sin \phi) \tran \in \mathbb{T}^2,
\end{align*}
where $\psi, \phi  \in [-\pi, \pi)$. For sample sizes $n\in\{50, 100, 200\}$ and noise levels $\sigma\in\{0.1,0.25\}$, we generated i.i.d. random samples $\{ (\mathbf{X}^{(i)}, Y^{(i)}) \}_{i=1}^n$ over $R=100$ Monte Carlo replications according to
\begin{align*}
    \mathbf{X}^{(i)}  \sim \mathcal{U} (\mathbb{T}^2), \quad Y^{(i)} | \mathbf{X}^{(i)}  \sim \mathcal{VMF}(m_{\oplus}(\mathbf{X}^{(i)}), 1 / \sigma ),
\end{align*}
where $\mathcal{U} (\mathbb{T}^2)$ denotes the uniform distribution on $\mathbb{T}^2$, and $\mathcal{VMF}(\mu, \kappa)$ denotes the von Mises-Fisher distribution with mean direction $\mu\in\mathbb{S}^2$ and concentration parameter $\kappa>0$. Note that $\mathcal{VMF}(\mu, \kappa)$ exhibits larger variability for smaller values of $\kappa$. We evaluated the performance using the mean integrated squared error (MISE), defined as
\begin{align*}
    & \frac{1}{R}\sum_{r=1}^R \int_{\mathbb{T}^2} d_{\mathbb{S}^2}^2 \big(\hat{m}_{\oplus}^{[r]}(\mathbf{x}), m_{\oplus}(\mathbf{x}) \big) \mathrm{d} \omega_1^2 (\mathbf{x})\\
    & = \frac{1}{R}\sum_{r=1}^R \int_{-\pi}^{\pi}\int_{-\pi}^{\pi} d_{\mathbb{S}^2}^2 \big(\hat{m}_{\oplus}^{[r]}(\cos \psi ,  \sin \psi,  \cos \phi ,  \sin \phi), m_{\oplus}(\cos \psi ,  \sin \psi,  \cos \phi ,  \sin \phi) \big) \,\mathrm{d} \phi \,\mathrm{d} \psi,
\end{align*}
where $\hat{m}_{\oplus}^{[r]}(\mathbf{x})$ denotes the estimator evaluated at $\mx$ and computed from the $r$-th Monte Carlo replication.

Table~\ref{table:simulation} summarizes the simulation results. 
The results show that our local linear estimator $\hat{m}_{\mh,1}$ attains the smallest MISE across all scenarios. 
In contrast, the estimator $\hat{m}_{\mh,1}^{\rm PM}$ exhibits markedly inferior performance, as it fails to account for the intrinsic geometry of $\mathbb{T}^2$. Additionally, our local constant estimator $\hat{m}_{\mh,0}$ outperforms $\hat{m}_{h,0}^{\rm TW}$ at the lower noise level, whereas the opposite pattern is observed at the higher noise level. Overall, these results indicate that the proposed methods are promising options for Fr\'echet regression with toroidal predictors.

\begin{table}
\centering
\caption{\it MISE comparison in spherical-toroidal regression.}
\begin{tabular}{ccccccc}
    \hline
    $\sigma$ & $n$ & & $\hat{m}_{\mh,0}$ & $\hat{m}_{\mh,1}$ & $\hat{m}_{h,0}^{\rm TW}$ & $\hat{m}_{\mh,1}^{\rm PM}$ \\ \hline \hline
    \multirow{3}*{0.1} & 50 & & 4.672 & \textbf{3.026}  & 4.861 & 30.321 \\ 
    & 100 & & 2.057 & \textbf{1.273} & 2.237 & 15.662 \\ 
    & 200 & & 1.034 & \textbf{0.635} & 1.148 & 6.949 \\  \hline \hline
    \multirow{3}*{0.25} & 50 & & 9.523 & \textbf{8.373} & 8.448 & 33.968 \\ 
    & 100 & & 4.065 & \textbf{3.435} & 3.930 & 18.083 \\ 
    & 200 & & 2.056 & \textbf{1.623} & 2.031 & 8.235 \\  \hline
\end{tabular}
\label{table:simulation}
\end{table}

\section{Real data analysis}\label{section:5}
\setcounter{equation}{0}

We analyzed the New York taxi network data obtained from 
\url{https://www.nyc.gov/site/tlc/about/tlc-trip-record-data.page}. Following \cite{Zhou and Muller (2022)}, we focused on yellow taxi trips within Manhattan, which were aggregated into 13 regions. To investigate the temporal dynamics of taxi demand, we constructed a $13 \times 13$ graph Laplacian for each hour on Sundays from January~1,~2021 to December~31,~2024. 
Specifically, for each hour, we formed a $13 \times 13$ adjacency matrix whose $(k,\ell)$-th entry represents the number of yellow taxi trips between the $k$-th and $\ell$-th regions. Each adjacency matrix was then transformed into a graph Laplacian.

We modeled the predictor space as the two-dimensional torus
$\mathbb{T}^2 = \mathbb{S}^1 \times \mathbb{S}^1$ to capture the dual periodic structure induced by daily and annual cycles.
In particular, the hour $i_1 \in \{0,\dots,23\}$ of the day and the $i_2$-th day of the year were encoded as
\begin{align*}
    \mathbf{X} =
    \Bigg(
    \cos\!\left(\frac{2\pi(i_1+0.5)}{24}\right),
    \sin\!\left(\frac{2\pi(i_1+0.5)}{24}\right),
    \cos\!\left(\frac{2\pi(i_2-0.5)}{D}\right),
    \sin\!\left(\frac{2\pi(i_2-0.5)}{D}\right)
    \Bigg)\tran\in\mathbb{T}^2,
\end{align*}
where $D=365$ for non-leap years and $D=366$ for leap years. 
The response variable $Y\in\mathcal{L}_{13}$ corresponds to the associated graph Laplacian. 

Data from 2021 and 2022 were used for training, while data from 2023 were used for bandwidth validation. We compared the predictive performance of the four estimators $\hat{m}_{\mh,0}$, $\hat{m}_{\mh,1}$, $\hat{m}_{h,0}^{\rm TW}$ and $\hat{m}_{\mh,1}^{\rm PM}$, considered in Section \ref{section:4}, on data from 2024. For bandwidth selection of $\hat{m}_{\mh,0}$, $\hat{m}_{\mh,1}$ and $\hat{m}_{\mh,1}^{\rm PM}$, we again employed a two-stage grid search. In the first stage, we used the grid $\{0.02 \times k : k = 1,\dots,10\} \times \{0.01 \times k : k = 1,\dots,10\}$ for our estimators, and
$\{0.02 \times k : k = 1,\dots,20\} \times \{0.01 \times k : k = 1,\dots,20\}$ for $\hat{m}_{\mh,1}^{\rm PM}$. The second stage was performed in the same manner as described in Section~\ref{section:4}. 
For $\hat{m}_{h,0}^{\rm TW}$, we searched its bandwidth over the grid $\{0.004 \times k : k = 1,\dots,50\}$.

The resulting average squared prediction errors, computed using the Frobenius distance, were $8.218 \times 10^{5}$ for $\hat{m}_{\mh,0}$, $\bf{8.100 \times 10^{5}}$ for $\hat{m}_{\mh,1}$, 
$8.937 \times 10^{5}$ for $\hat{m}_{h,0}^{\rm TW}$ 
and $8.393 \times 10^{5}$ for $\hat{m}_{\mh,1}^{\rm PM}$. 
These results confirm that our local linear approach outperforms the local constant approaches and highlight the importance of respecting the underlying toroidal geometry.

\begin{figure}
    \centering
    \includegraphics[width=\linewidth]{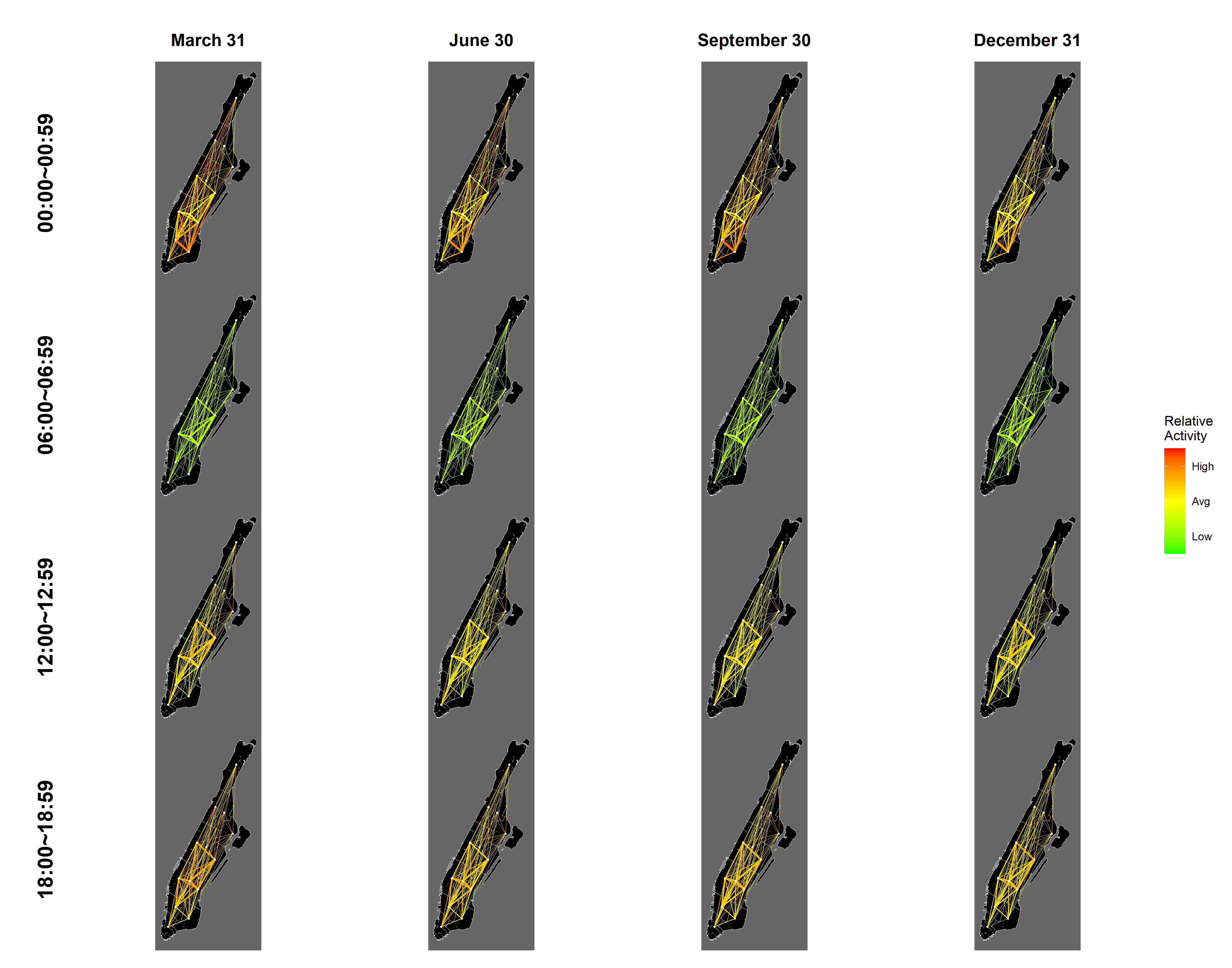}
    \caption{\it Fitted yellow taxi networks in Manhattan at selected hours and calendar dates. Edge thickness is proportional to absolute ridership, while color indicates relative ridership compared with the overall data average.}
    \label{fig:real_data_vis}
\end{figure}

To interpret the effect of $\mX$ on $Y$, we refitted the model to the full dataset using the best-performing estimator $\hat{m}_{\mh,1}$. Figure~\ref{fig:real_data_vis} displays the fitted taxi networks at selected hours and calendar dates when the corresponding day is Sunday. Across all dates, taxi ridership is high during late-night and evening hours and low in the early morning, likely reflecting increased demand during periods of social activity and reduced public transportation service. In addition, compared with other periods, taxi ridership in late December appears to be relatively lower, which may be attributed to colder weather and a greater tendency for people to stay at home during the holiday season. Overall, this figure illustrates that taxi demand is strongly influenced by both the time of day and the calendar date.

\section*{Acknowledgements}

Chang Jun Im was supported by the National Research Foundation of Korea grant funded by the Korea government(MSIT) (No. RS-2025-00515381). Jeong Min Jeon was supported by the National Research Foundation of Korea grant funded by the Korea government(MSIT) (No. RS-2023-00211910).

\vspace{1cm}
\noindent
\textbf{Corresponding Author} \\
Jeong Min Jeon \\
Department of Statistics and School of Transdisciplinary Innovations, \\
Seoul National University, \\
1 Gwanak-ro, Gwanak-gu, Seoul 08826, South Korea \\
\textit{E-mail address:} jeongmin.jeon.stat@gmail.com

\newpage

\begin{appendix}




\centerline{\bf \Large Supplementary Material to}
\smallskip
\centerline{\bf \Large ``Local Fr\'echet regression with toroidal predictors''}
\bigskip
\normalsize
\centerline{\bf Chang Jun Im and Jeong Min Jeon}
\bigskip
\centerline{\bf Seoul National University, South Korea}
\bigskip

The Supplementary Material consists of two parts. In the first part, we provide the proofs of \Cref{thm:consistency} and \Cref{thm:main4.3} for the local constant estimator. In the second part, we prove them for the local linear estimator.

\section{Proofs of \Cref{thm:consistency} and \Cref{thm:main4.3} for  $\hat{m}_{\mh, 0}(\mathbf{x})$} \label{proofs:local constant}

\setcounter{equation}{0}
\setcounter{subsection}{0}

To derive the consistency and error rates for $d_{\mathbb{M}}(\hat{m}_{\mh, 0}(\mathbf{x}), m_{\oplus}(\mathbf{x}))$, we establish the properties of $d_{\mathbb{M}}(\tilde{m}_{\mh, 0}(\mathbf{x}), m_{\oplus}(\mathbf{x}))$ and $ d_{\mathbb{M}}(\hat{m}_{\mh, 0}(\mathbf{x}), \tilde{m}_{\mh, 0}(\mathbf{x}))$. The latter properties are obtained by investigating $\tilde{M}_{\mh, 0}(\mathbf{x}, y) - M_{\oplus}(\mathbf{x}, y)$ and $\hat{M}_{\mh, 0}(\mathbf{x}, y) - \tilde{M}_{\mh, 0}(\mathbf{x}, y)$, respectively.

First, we collect some notations. Let $\mathbb{N}_{\geq 0}$ denote the set of all non-negative integers. We write $\mathbf{j} = (j_1, \dots, j_d)\tran$ for $j_\ell\in\mathbb{N}_{\geq 0}$. For $k\in\{1,2\}$, we define
\begin{align} \begin{split} \label{eq:A.1.4}
    c_{\mh, \mathbf{j}, k}(L) := 2^{d} \left[ \prod_{=1}^{d} \int_{0}^{\pi} L^k \left( \frac{1 - \cos \theta_\ell }{h_\ell^2} \right) 
    \theta_{\ell} ^{j_\ell} \mathrm{d} \theta_\ell \right].
\end{split} \end{align}
To simplify \eqref{eq:A.1.4}, we define a function $\bm{L}_{\mh} :\mathbb{R}^{d} \to [0, \infty)$ as
\begin{align*}
    \bm{L}_{\mh} (\bu) := \prod_{\ell=1}^{d} L \left( \frac{1 - \cos u_\ell }{h_\ell^2} \right),\quad \bu = (u_1, \dots, u_d)\tran.
\end{align*}
Then, we get the expression
\begin{align*}
    c_{\mh, \mathbf{j}, k}(L) = 2^{d} \int_{[0, \pi)^{d}} \bm{L}_{\mh}^k (\bm{\theta})  \bm{\theta}^{\mathbf{j}} \mathrm{d} \bm{\theta}, 
\end{align*}
where $\bu^{\mathbf{j}} = \prod_{\ell=1}^{d} u_\ell^{j_\ell}$. Note that
\begin{align} \begin{split} \label{symmetricproperty}
    \int_{[-\pi, \pi)^{d}} \bm{L}_{\mh}^k (\bm{\theta})  \bm{\theta}^{\mathbf{j}} \mathrm{d} \bm{\theta} = \begin{cases}
            0  &  \text{some~} j_\ell \text{~is an odd number},\\
            c_{\mh, \mathbf{j}, k}(L)  &  \text{otherwise}.
       \end{cases}
\end{split} \end{align}
Also, it holds that
\begin{align}
    \mathcal{L}_{\mx, \mh} \left( \bm{\Phi}_{\mx} \left( \bm{\theta} \right) \right) = \prod_{\ell=1}^{d} L \left( \frac{1 - \left( \mx_\ell \cos \theta_\ell + \left( \mathbf{R} \mx_\ell \right) \sin \theta_\ell \right)\tran \mx_\ell}{h_\ell^2} \right) = \bm{L}_{\mh} (\bm{\theta}) \label{eq:2.7}
\end{align}
for every $\bm{\theta} \in [-\pi, \pi)^{d}$, where $\mathcal{L}_{\mx, \mh}$ and $\bm{\Phi}_{\mx}$ are defined in \eqref{Lx def} and \eqref{tangent-normal}, respectively. We write $\rho(\mh) \coloneqq \prod_{\ell=1}^{d} h_\ell$ and $| \mathbf{j} | \coloneqq \sum_{\ell=1}^{d} j_\ell$. 

\subsection{Lemmas for proof of \Cref{thm:consistency}}

\begin{lemma} \label{lemma:1.2}
    Let $h : \mathbb{T}^{d} \to \mathbb{R}$ be an integrable function. Then, for any $\mx \in \mathbb{T}^{d}$, it holds that
    \begin{align*}
        \int_{\mathbb{T}^{d}} h(\mz) \omega_{1}^{d}(\mathrm{d} \mz) = \int_{[-\pi, \pi)^{d}}  h \left( \bm{\Phi}_{\mx} \left( \bm{\theta} \right) \right) \mathrm{d} \bm{\theta}.
    \end{align*}
\end{lemma}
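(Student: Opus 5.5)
The plan is to reduce the identity to the one-dimensional case and then use Fubini's theorem. Since $\omega_1^d$ is defined as the product of $d$ circular measures $\omega_1$ on $\mathbb{S}^1$, and $\bm{\Phi}_{\mx}$ acts coordinatewise through the maps $\bm{\Phi}_{\mx,\ell}:[-\pi,\pi)\to\mathbb{S}^1$, the statement follows once we show, for each $\ell$, that the pushforward of Lebesgue measure on $[-\pi,\pi)$ under $\bm{\Phi}_{\mx,\ell}$ equals $\omega_1$. The product of these pushforwards is then the pushforward of $\mathrm{d}\bm{\theta}$ on $[-\pi,\pi)^d$ under $\bm{\Phi}_{\mx}$. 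This holds because both measures agree on measurable rectangles $A_1\times\cdots\times A_d$, and a $\pi$-system argument extends the agreement to all of $\mathcal{B}(\mathbb{T}^d)$. The change-of-variables formula for pushforward measures then gives the identity, first for indicators, then for simple and nonnegative functions by monotone convergence, and finally for integrable $h$ by splitting $h=h^+-h^-$.

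For the one-dimensional claim, note that $\bm{\Phi}_{\mx,\ell}(\theta)=\mx_\ell\cos\theta+(\mathbf{R}\mx_\ell)\sin\theta$. Writing $\mx_\ell=(\cos\psi_\ell,\sin\psi_\ell)\tran$, this becomes $(\cos(\psi_\ell+\theta),\sin(\psi_\ell+\theta))\tran$. So $\bm{\Phi}_{\mx,\ell}$ is the standard angle parametrization composed with a rotation by $\psi_\ell$, that is, a shift of angle modulo $2\pi$. We must therefore check that $\omega_1(A)$ equals the Lebesgue measure of $\{\theta\in[-\pi,\pi):(\cos\theta,\sin\theta)\tran\in A\}$, and that this quantity is invariant under rotations. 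Rotation invariance is immediate from the defining formula, because $\Leb_2$ is invariant under orthogonal maps and the cone set $\{t\mv:t\in[0,1],\mv\in A\}$ rotates along with $A$.

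To identify $\omega_1$, it suffices to check arcs $A=\{(\cos\theta,\sin\theta)\tran:\theta\in[a,b)\}$ with $-\pi\le a<b\le\pi$, since arcs form a $\pi$-system generating $\mathcal{B}(\mathbb{S}^1)$. The corresponding cone is a circular sector of area $(b-a)/2$, and $2\cdot(b-a)/2=b-a$, which is exactly the Lebesgue measure of the angle preimage. Both measures are finite, with total mass $2\pi$, so the uniqueness theorem for measures extends the equality to all Borel sets.

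The main obstacle is measure-theoretic bookkeeping rather than computation. One must confirm that the paper's formula for $\omega_1^d$ on rectangles really defines the product of $d$ copies of a measure $\omega_1$ that is countably additive, which follows from identifying $\omega_1$ with the pushforward of Lebesgue measure. One must also handle the measurability of $\bm{\Phi}_{\mx}$, which is a continuous bijection from $[-\pi,\pi)^d$ onto $\mathbb{T}^d$, and treat the half-open endpoint of $[-\pi,\pi)$ under the shift modulo $2\pi$, where the rotated angle wraps around. That wrap-around only splits one arc into two arcs, so it does not affect the measures.
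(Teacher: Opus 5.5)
Your proposal is correct and follows the same route as the paper: establish the one-dimensional identity $\int_{\mathbb{S}^1} h_\ell\,\mathrm{d}\omega_1=\int_{-\pi}^{\pi}h_\ell(\bm{\Phi}_{\mx,\ell}(\theta))\,\mathrm{d}\theta$ and then pass to $\mathbb{T}^d$ through the product structure. The difference is that the paper cites Lemma~2 of Garc\'{\i}a-Portugu\'{e}s et al.\ (2013) for the one-dimensional step and then applies Fubini's theorem, whereas you prove that step directly (writing $\bm{\Phi}_{\mx,\ell}$ as a rotation of the angle parametrization and checking arcs) and phrase the product step as an identification of pushforward measures on rectangles.
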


\begin{proof}
By Lemma 2 of Garc\'{\i}a-Portugu\'{e}s et al. (2013), we get
\begin{align*}
    \int_{\mathbb{S}^{1}} h_{\ell}(\mz_{\ell}) \omega_{1}(\mathrm{d} \mz_\ell) = \int_{-\pi}^{\pi}  h_{\ell}\left( \bm{\Phi}_{\mx, \ell} \left( \theta_{\ell} \right) \right)\mathrm{d} \theta_{\ell}, \quad \ell=1,\ldots,d
\end{align*}
for any integrable function $h_{\ell} : \mathbb{S}^{1} \to \mathbb{R}$. The lemma follows from Fubini's theorem.
\end{proof}

\begin{lemma} \label{lemma:1.4}
    Assume that the condition \ref{con:L1} holds. Then, for any $\mathbf{j} = (j_1, \dots, j_d)$ with $j_\ell\in\mathbb{N}_{\geq 0}$ and $k\in\{1,2\}$, it holds that
    \begin{align*}
        \lim_{\mh \to \bm{0}_d} \frac{c_{\mh, \mathbf{j}, k}(L)}{\mh^{\mathbf{j}} \rho(\mh)} =2^{\frac{3d + \left| \mathbf{j} \right|}{2}} \left[ \prod_{\ell=1}^{d} \int_0^{\infty} L^{k}(r^2) r^{j_\ell} \mathrm{d}r \right].
    \end{align*}
\end{lemma}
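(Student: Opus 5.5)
Since $c_{\mh,\mathbf{j},k}(L)$ is $2^d$ times a product of one-dimensional integrals, and $\mh^{\mathbf{j}}\rho(\mh)=\prod_\ell h_\ell^{j_\ell+1}$, the limit factorizes. It therefore suffices to prove the one-dimensional statement: for each $j\in\mathbb{N}_{\geq0}$ and $k\in\{1,2\}$, as $h\to0$,
\begin{align*}
\frac{1}{h^{j+1}}\int_0^\pi L^k\Big(\frac{1-\cos\theta}{h^2}\Big)\theta^j\,\mathrm{d}\theta \longrightarrow 2^{\frac{1+j}{2}}\int_0^\infty L^k(r^2)r^j\,\mathrm{d}r .
\end{align*}
Multiplying the $d$ factors then gives $2^d\cdot 2^{(d+|\mathbf{j}|)/2}=2^{(3d+|\mathbf{j}|)/2}$, as claimed. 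Note that the limit $\mh\to\bm{0}_d$ is joint, but each factor depends only on its own $h_\ell$. So the product of the factor limits handles this, and no bandwidth-ratio restriction is needed.

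For the one-dimensional integral I substitute $r=\sqrt{1-\cos\theta}/h=\sqrt2\sin(\theta/2)/h$. This map is a monotone bijection from $[0,\pi]$ onto $[0,\sqrt2/h]$, with $L$ evaluated at $r^2$. We have $\theta=2\arcsin(hr/\sqrt2)$ and
\begin{align*}
\mathrm{d}\theta=\frac{\sqrt2\,h}{\sqrt{1-h^2r^2/2}}\,\mathrm{d}r .
\end{align*}
The integral becomes
\begin{align*}
h^{j+1}\int_0^{\sqrt2/h} L^k(r^2)\,\Big(\frac{2\arcsin(hr/\sqrt2)}{h}\Big)^j\frac{\sqrt2}{\sqrt{1-h^2r^2/2}}\,\mathrm{d}r .
\end{align*}
Pointwise in $r$, the integrand (times the indicator of $r\le\sqrt2/h$) converges to $L^k(r^2)(\sqrt2 r)^j\sqrt2=2^{(j+1)/2}L^k(r^2)r^j$. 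This is exactly the claimed limit once we justify passing the limit inside.

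The main obstacle is justifying the limit interchange, because the Jacobian factor $(1-h^2r^2/2)^{-1/2}$ blows up near the upper endpoint $r=\sqrt2/h$. A dominated-convergence bound is therefore not immediate. I would handle it by splitting the $\theta$-range into $[0,\pi/2]$ and $[\pi/2,\pi]$.

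On $[0,\pi/2]$ we have $hr\le1$, so the Jacobian is at most $\sqrt2\cdot\sqrt2=2$. Also $\arcsin u\le \pi u/2$ for $u\in[0,1]$, so $(2\arcsin(hr/\sqrt2)/h)^j\le(\pi r/\sqrt2)^j$. The integrand is then dominated by $C L^k(r^2)r^j$, which is integrable by condition \ref{con:L1}, and dominated convergence applies.

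On $[\pi/2,\pi]$, I would work directly in $\theta$. Since $L$ is decreasing and $1-\cos\theta\ge1$ there, the contribution is at most
\begin{align*}
h^{-(j+1)}\pi^{j+1}L^k(h^{-2}).
\end{align*}
It remains to show this tends to $0$, using only that $L$ is decreasing and $\int_0^\infty L^k(r^2)r^m\,\mathrm{d}r<\infty$ for all $m$. Monotonicity gives
\begin{align*}
L^k(R^2)\int_{R/2}^{R} r^m\,\mathrm{d}r\le\int_{R/2}^R L^k(r^2)r^m\,\mathrm{d}r\to0 ,
\end{align*}
so $L^k(R^2)R^{m+1}\to0$ as $R\to\infty$ for every $m$. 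Taking $R=1/h$ and $m=j$ makes the tail $o(1)$, which completes the argument.
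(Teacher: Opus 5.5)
Your proof is correct and uses the same reduction as the paper. Both factor $c_{\mh,\mathbf{j},k}(L)$ and $\mh^{\mathbf{j}}\rho(\mh)=\prod_{\ell}h_\ell^{j_\ell+1}$ coordinatewise and multiply the one-dimensional limits $2^{(j_\ell+1)/2}\int_0^\infty L^k(r^2)r^{j_\ell}\,\mathrm{d}r$. The only difference is that the paper cites Lemma B.2 of Im et al.\ (2025) for that one-dimensional limit, whereas you prove it directly: the substitution $r=\sqrt{2}\sin(\theta/2)/h$, dominated convergence on $[0,\pi/2]$ where the Jacobian is at most $2$, and the monotonicity bound $L^k(R^2)R^{m+1}\to 0$ on $[\pi/2,\pi]$. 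This makes your argument self-contained and shows that only the monotonicity of $L$ and the finiteness part of the condition \ref{con:L1} are needed for the limit.
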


\begin{proof} [Proof of \Cref{lemma:1.4}]
By Lemma B.2 of Im et al. (2025), we get
\begin{align}
    \lim_{h_\ell \to 0} 2 \left[ \int_{0}^{\pi} L^k \left( \frac{1 - \cos \theta_\ell }{h_\ell^2} \right) 
    \theta_\ell^{j_\ell}  \mathrm{d} \theta_\ell \right] h_\ell^{-(j_\ell+1)} = 2^{\frac{j_\ell+3}{2}} \int_0^{\infty} L^{k}(r^2) r^{j_\ell} \mathrm{d}r, \quad \ell=1,\ldots,d. \label{eq:A.1.6}
\end{align}
By taking $\prod_{\ell=1}^d$ on both sides of \eqref{eq:A.1.6}, we get the desired result.
\end{proof}

\begin{lemma} \label{lemma:new2.2}
    Assume that the conditions \ref{con:L1} and \ref{con:D2} hold and that $\lim_{n\rightarrow\infty} \| \mh \|_2=0$. Then, for any $k\in\{1,2\}$, it holds that
    \begin{align*}
        \mathbb{E}\left[ \mathcal{L}_{\mx, \mh}^k \left( \mathbf{X} \right) \right] -  c_{\mh, \bm{0}_d, k}(L) f (\mx) = o\left(\rho(\mh)\right).
    \end{align*}
\end{lemma}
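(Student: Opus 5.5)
We change variables to angular coordinates centred at $\mx$ and then use continuity of $f$ at $\mx$ together with the kernel scaling of \Cref{lemma:1.4}.

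\textbf{Step 1 (change of variables).} By \Cref{lemma:1.2} applied to $\mz\mapsto \mathcal{L}_{\mx,\mh}^k(\mz)f(\mz)$, and by \eqref{eq:2.7},
\begin{align*}
\mathbb{E}\left[\mathcal{L}_{\mx,\mh}^k(\mathbf{X})\right]=\int_{[-\pi,\pi)^d}\bm{L}_{\mh}^k(\bm{\theta})\,f(\bm{\Phi}_{\mx}(\bm{\theta}))\,\mathrm{d}\bm{\theta}.
\end{align*}
By \eqref{symmetricproperty} with $\mathbf{j}=\bm{0}_d$, we also have $c_{\mh,\bm{0}_d,k}(L)=\int_{[-\pi,\pi)^d}\bm{L}_{\mh}^k(\bm{\theta})\,\mathrm{d}\bm{\theta}$. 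The difference to be bounded is therefore
\begin{align*}
\int_{[-\pi,\pi)^d}\bm{L}_{\mh}^k(\bm{\theta})\big(f(\bm{\Phi}_{\mx}(\bm{\theta}))-f(\mx)\big)\,\mathrm{d}\bm{\theta}.
\end{align*}

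\textbf{Step 2 (split near and far).} Fix $\epsilon>0$. Continuity of $f$ at $\mx$ (condition \ref{con:D2}) gives $\delta\in(0,\pi)$ such that $|f(\bm{\Phi}_{\mx}(\bm{\theta}))-f(\mx)|<\epsilon$ whenever $\|\bm{\theta}\|_\infty<\delta$; here we use that $\bm{\Phi}_{\mx}$ is continuous with $\bm{\Phi}_{\mx}(\bm{0}_d)=\mx$.
\begin{itemize}
\item On $\{\|\bm{\theta}\|_\infty<\delta\}$, the integral is at most $\epsilon\, c_{\mh,\bm{0}_d,k}(L)$. By \Cref{lemma:1.4}, $c_{\mh,\bm{0}_d,k}(L)=O(\rho(\mh))$, so this piece is $O(\epsilon\,\rho(\mh))$.
\item On the complement, boundedness of $f$ bounds the integrand by $2\sup f\cdot\bm{L}_{\mh}^k(\bm{\theta})$. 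It then suffices to show $\int_{\{\|\bm{\theta}\|_\infty\ge\delta\}}\bm{L}_{\mh}^k(\bm{\theta})\,\mathrm{d}\bm{\theta}=o(\rho(\mh))$.
\end{itemize}

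\textbf{Step 3 (the main obstacle: tail of the kernel).} Because $\mh$ is a vector and the components may shrink at different rates, we treat each coordinate separately. The complement is contained in the union over $\ell$ of the sets $\{|\theta_\ell|\ge\delta\}$. On such a set, the integral factorizes into
\begin{itemize}
\item the one-dimensional tail $2\int_\delta^\pi L^k((1-\cos\theta_\ell)/h_\ell^2)\,\mathrm{d}\theta_\ell$, times
\item the factors $\int_{-\pi}^{\pi}L^k((1-\cos\theta_m)/h_m^2)\,\mathrm{d}\theta_m=O(h_m)$ for $m\neq\ell$, by \Cref{lemma:1.4}.
\end{itemize}
It remains to show the one-dimensional tail is $o(h_\ell)$. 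Since $\theta_\ell\ge\delta$ implies $\theta_\ell/\delta\ge1$, the tail is at most $\delta^{-1}\int_0^\pi L^k((1-\cos\theta)/h_\ell^2)\,\theta\,\mathrm{d}\theta$, which is $O(h_\ell^2)$ by \eqref{eq:A.1.6} with $j_\ell=1$. This is $o(h_\ell)$.

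Combining the pieces, the difference is bounded by $C\epsilon\,\rho(\mh)+o(\rho(\mh))$. Since $\epsilon$ is arbitrary, it is $o(\rho(\mh))$.
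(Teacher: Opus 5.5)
Your proof is correct and follows the same route as the paper: change of variables via \Cref{lemma:1.2} and \eqref{eq:2.7}, a near/far split based on continuity of $f$ at $\mx$, and a coordinatewise factorization of the far region into $O(\prod_{m\neq\ell}h_m)$ times a one-dimensional tail that is $o(h_\ell)$. The only (minor) difference is in that tail estimate: you derive it from $\mathds{1}\{\theta\ge\delta\}\le\theta/\delta$ and the $j_\ell=1$ case of \eqref{eq:A.1.6}, which gives $O(h_\ell^2/\delta)$, whereas the paper cites Lemma B.2 of Im et al.\ (2025) directly to get $o(h_\ell)$.
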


\begin{proof} [Proof of Lemma \ref{lemma:new2.2}]
    
    Combining \eqref{eq:2.7} and \Cref{lemma:1.2}, we get
    \begin{align*}
        \mathbb{E}\left[ \mathcal{L}_{\mx, \mh}^k \left( \mathbf{X} \right) \right]  = & \int_{\mathbb{T}^{d}} \mathcal{L}_{\mx, \mh}^k \left( \mz \right) f( \mz ) \omega_d (\mathrm{d} \mz )  \\
        = &\int_{[-\pi, \pi)^{d}}  \mathcal{L}_{\mx, \mh}^k \left( \bm{\Phi}_{\mx} \left( \bm{\theta} \right) \right) f \left( \bm{\Phi}_{\mx} \left( \bm{\theta} \right) \right)  \mathrm{d} \bm{\theta} \\
        = & \int_{[-\pi, \pi)^{d}}  \bm{L}_{\mh}^k (\bm{\theta}) f \left( \bm{\Phi}_{\mx} \left( \bm{\theta} \right) \right)  \mathrm{d} \bm{\theta}. 
    \end{align*}
    Also, \eqref{symmetricproperty} implies that
    \begin{align} \begin{split} \label{eq:new7.35}
          \mathbb{E}\left[ \mathcal{L}_{\mx, \mh}^k \left( \mathbf{X} \right) \right] -  c_{\mh, \bm{0}_d, k}(L) f (\mx) = \int_{[-\pi, \pi)^{d}} \bm{L}_{\mh}^k (\bm{\theta})  \Big( f \left( \bm{\Phi}_{\mx} \left( \bm{\theta} \right) \right) - f(\mx) \Big)  \mathrm{d} \bm{\theta}.
    \end{split} \end{align}
    Let $\epsilon>0$ be any given constant. By the condition \ref{con:D2}, there exists a constant $\delta_{\epsilon} \in (0, \pi)$, depending on $\epsilon$, such that
    \begin{align}
       \left\| \mz - \mx \right\|_2 < \delta_{\epsilon} ~ \Rightarrow ~ | f ( \mz )  - f ( \mx )  | < \epsilon, \quad  \mz \in \mathbb{T}^{d}. \label{eq:new7.36}
    \end{align}
    For any $\bm{\theta} \in [-\pi, \pi)^{d}$, it holds that
    \begin{align}
        \left\| \bm{\Phi}_{\mx} \left( \bm{\theta} \right) - \mx \right\|_2^2 = 2 \sum_{\ell=1}^{d} \left( 1 - \cos \theta_\ell \right) = 4 \sum_{\ell=1}^{d} \sin^2 \left( \frac{\theta_\ell}{2} \right) \leq \| \bm{\theta} \|_2^2. \label{eq:new7.37}
    \end{align}
    We define sets $E_0, E_1, \dots, E_d \subset [-\pi, \pi)^{d}$ as
    \begin{align*}
        E_0 := \left(- \frac{\delta_{\epsilon}}{\sqrt{d}}, \frac{\delta_{\epsilon}}{\sqrt{d}} \right)^{d}, \quad E_m := \left\{ ( z_1, \dots, z_d ) \tran  \in [-\pi, \pi)^{d} : | z_m | \geq \frac{\delta_{\epsilon}}{\sqrt{d}} \right\}, \quad m=1, \dots, d.
    \end{align*}
    Note that $[-\pi, \pi)^{d} = \bigcup_{m=0}^{d} E_m$. By \eqref{eq:new7.35}, we get
    \begin{align} \begin{split} \label{eq:new7.38}
          \left| \mathbb{E}\left[ \mathcal{L}_{\mx, \mh}^k \left( \mathbf{X} \right) \right] -  c_{\mh, \bm{0}_d, k}(L) f (\mx) \right| \le \sum_{m=0}^{d} \int_{E_m} \bm{L}_{\mh}^k (\bm{\theta})  \left| f \left( \bm{\Phi}_{\mx} \left( \bm{\theta} \right) \right) - f(\mx) \right|  \mathrm{d} \bm{\theta}.
    \end{split} \end{align}
    By \eqref{eq:new7.37}, we also get
    \begin{align} \begin{split} \label{eq:new7.39}
          \left\| \bm{\Phi}_{\mx} \left( \bm{\theta} \right) - \mx \right\|_2 \le \| \bm{\theta} \|_2 \le \delta_{\epsilon}
    \end{split} \end{align}
    for any $\bm{\theta} \in E_0$. 
    Combining \eqref{eq:new7.36} and \eqref{eq:new7.39}, we get
    \begin{align} \begin{split} \label{eq:new7.40}
          \left| f \left( \bm{\Phi}_{\mx} \left( \bm{\theta} \right) \right) - f(\mx) \right| \le 
          \begin{cases}
            \epsilon   & \text{if~}\bm{\theta} \in E_0, \\
            2 \sup_{\mz \in \mathbb{T}^{d}} f ( \mz )   &  \text{if~} \bm{\theta} \notin E_0.
           \end{cases}
    \end{split} \end{align}
    Combining \eqref{symmetricproperty}, \eqref{eq:new7.38} and \eqref{eq:new7.40}, we also get
    \begin{align} \begin{split} \label{eq:new7.41}
          \Big| \mathbb{E}\left[ \mathcal{L}_{\mx, \mh}^k \left( \mathbf{X} \right) \right] -  c_{\mh, \bm{0}_d, k}(L) f (\mx) \Big| & \le \epsilon \int_{E_0} \bm{L}_{\mh}^k (\bm{\theta})  \mathrm{d} \bm{\theta} + 2 \sup_{\mz \in \mathbb{T}^{d}} f ( \mz ) \sum_{m=1}^{d} \int_{E_m} \bm{L}_{\mh}^k (\bm{\theta})  \mathrm{d} \bm{\theta} \\
          & \le \epsilon \int_{[-\pi, \pi)^{d}} \bm{L}_{\mh}^k (\bm{\theta})  \mathrm{d} \bm{\theta} + 2 \sup_{\mz \in \mathbb{T}^{d}} f ( \mz ) \sum_{m=1}^{d} \int_{E_m} \bm{L}_{\mh}^k (\bm{\theta})  \mathrm{d} \bm{\theta} \\
          & = \epsilon \cdot c_{\mh, \bm{0}_d, k}(L) + 2 \sup_{\mz \in \mathbb{T}^{d}} f ( \mz ) \sum_{m=1}^{d} \int_{E_m} \bm{L}_{\mh}^k (\bm{\theta})  \mathrm{d} \bm{\theta}.
    \end{split} \end{align}
    We write $\bm{\theta}_{-m} =  \left( \theta_1, \dots, \theta_{m-1}, \theta_{m+1}, \dots \theta_d \right)\tran \in \mathbb{R}^{d-1}$. It holds that
    \begin{align} \begin{split} \label{eq:new7.42}
          &\int_{E_m} \bm{L}_{\mh}^k (\bm{\theta})  \mathrm{d} \bm{\theta} \\
          & = \left( \int_{[-\pi, \pi)^{d-1}} \left[ \prod_{\ell\neq m}^d L^k \left( \frac{1 - \cos \theta_\ell }{h_\ell^2} \right) \right]  \mathrm{d} \bm{\theta}_{-m} \right) \left( \int_{[-\pi, \pi) \setminus \left(-\frac{\delta_{\epsilon}}{\sqrt{d}}, \frac{\delta_{\epsilon}}{\sqrt{d}} \right)} L^k \left( \frac{1 - \cos \theta_m}{h_m^2} \right)  \mathrm{d} \theta_m \right) \\
          & = 2^{d} \left( \int_{[0, \pi)^{d-1}} \left[ \prod_{\ell\neq m}^d L^k \left( \frac{1 - \cos \theta_\ell }{h_\ell^2} \right) \right]  \mathrm{d} \bm{\theta}_{-m} \right) \left( \int_{ \frac{\delta_{\epsilon}}{\sqrt{d}}}^{\pi} L^k \left( \frac{1 - \cos \theta_m}{h_m^2} \right)  \mathrm{d} \theta_m \right).
    \end{split} \end{align}
    By a modification of \Cref{lemma:1.4}, we get
    \begin{align} \label{eq:new7.43}
        2^{d-1} \int_{[0, \pi)^{d-1}} \left[ \prod_{\ell\neq m}^d L^k \left( \frac{1 - \cos \theta_\ell }{h_\ell^2} \right) \right] \mathrm{d} \bm{\theta}_{-m} = O \left( \prod_{1 \leq \ell \neq m \leq d} h_\ell \right).
    \end{align}
    Also, Lemma B.2 of Im et al. (2025) implies that
    \begin{align} \label{eq:new7.44}
        2 \int_{\frac{\delta_{\epsilon}}{\sqrt{d}}}^{\pi} L^k \left( \frac{1 - \cos \theta_m}{h_m^2} \right) \mathrm{d} \theta_m = o \left( h_m \right). 
    \end{align}
    Combining \eqref{eq:new7.42}, \eqref{eq:new7.43} and \eqref{eq:new7.44}, we get
    \begin{align} \label{eq:new7.45}
        \int_{E_m} \bm{L}_{\mh}^k (\bm{\theta})  \mathrm{d} \bm{\theta} = o \left( \rho ( \mh ) \right).
    \end{align}
    Combining \eqref{eq:new7.41}, \eqref{eq:new7.45} and \Cref{lemma:1.4}, we also get
    \begin{align} \label{eq:new7.46}
         \left| \mathbb{E}\left[ \mathcal{L}_{\mx, \mh}^k \left( \mathbf{X} \right) \right] -  c_{\mh, \bm{0}_d, k}(L) f (\mx) \right| & \leq  \epsilon \cdot O \left(\rho(\mh) \right) + 2 \sup_{\mz \in \mathbb{T}^{d}} f ( \mz )  \sum_{m=1}^{d}  o \left( \rho ( \mh ) \right).
    \end{align}
    Since \eqref{eq:new7.46} holds for any $\epsilon \in (0, \infty)$, we get the desired result.
\end{proof}

\begin{lemma} \label{lemma:new3.2}
    Assume that the conditions \ref{con:L1}, \ref{con:D2} and \ref{con:D3} hold and that $\lim_{n\rightarrow\infty} \| \mh \|_2=0$. Then, for any $k\in\{1,2\}$, it holds that
    \begin{align*}
        \sup_{y \in \mathbb{M}} \Big| \mathbb{E}\left[ \mathcal{L}_{\mx, \mh}^k \left( \mathbf{X} \right) g_y \left( \mX \right) \right] -  c_{\mh, \bm{0}_d, k}(L) ( f \cdot g_y ) (\mx)  \Big| = o\left(\rho(\mh)\right).
    \end{align*}
\end{lemma}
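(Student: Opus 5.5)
We follow the proof of \Cref{lemma:new2.2} step by step, with $f$ replaced by the product $f\cdot g_y$, and we keep every bound uniform in $y\in\mathbb{M}$. Using \eqref{eq:2.7}, \Cref{lemma:1.2} and \eqref{symmetricproperty} exactly as before, we write
\begin{align*}
\mathbb{E}\left[ \mathcal{L}_{\mx, \mh}^k ( \mathbf{X} ) g_y ( \mX ) \right] - c_{\mh, \bm{0}_d, k}(L) ( f \cdot g_y ) (\mx) = \int_{[-\pi, \pi)^{d}} \bm{L}_{\mh}^k (\bm{\theta}) \Big( (f g_y) ( \bm{\Phi}_{\mx} ( \bm{\theta} ) ) - (f g_y)(\mx) \Big) \mathrm{d} \bm{\theta}.
\end{align*}
This identity holds for each fixed $y$. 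All subsequent bounds will be free of $y$, so the supremum over $y$ can be taken at the end.

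The key step is to get a modulus of continuity for $f g_y$ at $\mx$ that is uniform in $y$. Write $G:=\sup_{y}\sup_{\mz} g_y(\mz)<\infty$, which is finite by \ref{con:D3}, and $F:=\sup_{\mz} f(\mz)<\infty$, which is finite by \ref{con:D2}. Then
\begin{align*}
|f(\mz)g_y(\mz)-f(\mx)g_y(\mx)| \le G\,|f(\mz)-f(\mx)| + F\,|g_y(\mz)-g_y(\mx)|.
\end{align*}
Given $\epsilon>0$, we use the continuity of $f$ at $\mx$ and the equicontinuity of $\{g_y\}$ at $\mx$ from \ref{con:D3}. Together they give a single $\delta_\epsilon\in(0,\pi)$ such that $\|\mz-\mx\|_2<\delta_\epsilon$ implies $\sup_y|(fg_y)(\mz)-(fg_y)(\mx)|<\epsilon$. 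Away from $\mx$ we use the crude bound $\sup_y|(fg_y)(\mz)-(fg_y)(\mx)|\le 2FG$.

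Next we reuse the decomposition $[-\pi,\pi)^d=\bigcup_{m=0}^d E_m$ and the inclusion \eqref{eq:new7.39}. This yields
\begin{align*}
\sup_{y\in\mathbb{M}}\Big|\cdots\Big| \le \epsilon\, c_{\mh,\bm{0}_d,k}(L) + 2FG\sum_{m=1}^d\int_{E_m}\bm{L}_{\mh}^k(\bm\theta)\,\mathrm{d}\bm\theta .
\end{align*}
By \Cref{lemma:1.4}, $c_{\mh,\bm{0}_d,k}(L)=O(\rho(\mh))$, and by \eqref{eq:new7.45}, each integral over $E_m$ is $o(\rho(\mh))$. Hence $\limsup \rho(\mh)^{-1}\sup_y|\cdots|\le C\epsilon$ for a constant $C$ that does not depend on $\epsilon$. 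Since $\epsilon$ is arbitrary, the claim follows.

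The only real obstacle is the uniformity in $y$. This is exactly why \ref{con:D3} assumes equicontinuity and a uniform bound rather than pointwise continuity. Once $\delta_\epsilon$ is chosen independently of $y$, the rest is a verbatim repetition of the argument for \Cref{lemma:new2.2}.
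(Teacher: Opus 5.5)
Your proposal is correct and takes the same approach as the paper, whose proof of this lemma is just the instruction to argue as in the proof of \Cref{lemma:new2.2}. You carry that out with $f$ replaced by $f\cdot g_y$, using the uniform bound and equicontinuity in condition \ref{con:D3}, together with \ref{con:D2}, to get a $y$-independent $\delta_\epsilon$ and a $y$-independent crude bound $2FG$ off $E_0$; since the estimates from \Cref{lemma:1.4} and \eqref{eq:new7.45} do not involve $y$, the supremum passes through as you say.
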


\begin{proof} [Proof of Lemma \ref{lemma:new3.2}]
    The lemma follows by arguing as in the proof of \Cref{lemma:new2.2}.
\end{proof}

\begin{lemma} \label{lemma:new3.4}
    Assume that the conditions \ref{con:L1} and \ref{con:D1}--\ref{con:D3} hold and that $\lim_{n\rightarrow\infty} \| \mh \|_2=0$. Then, it holds that
    \begin{align*}
        \sup_{y \in \mathbb{M}} \big|  \tilde{M}_{\mh, 0}(\mathbf{x}, y) - M_{\oplus}(\mathbf{x}, y)   \big| = o(1).
    \end{align*}
\end{lemma}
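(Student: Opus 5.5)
The plan is to write both $\tilde{M}_{\mh,0}(\mx,y)$ and $M_{\oplus}(\mx,y)$ as integrals of $d_{\mathbb{M}}^2(\cdot,y)$ against $P_Y$ with weights built from $g_y$, and then apply \Cref{lemma:new2.2} and \Cref{lemma:new3.2} uniformly in $y$.

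\textbf{Representation.} Since $P_{Y|\mX=\mz}$ has density $g_{\cdot}(\mz)$ with respect to $P_Y$, we have
\begin{align*}
M_{\oplus}(\mx,y)=\int_{\mathbb{M}} d_{\mathbb{M}}^2(u,y)\,g_u(\mx)\,P_Y(\mathrm{d}u).
\end{align*}
By conditioning on $\mX$ and applying Fubini (which is justified because $\mathbb{M}$ is totally bounded, so $D:=\sup_{u,y}d_{\mathbb{M}}(u,y)<\infty$), we get
\begin{align*}
\mathbb{E}\big[\mathcal{L}_{\mx,\mh}(\mX)d_{\mathbb{M}}^2(Y,y)\big]=\int_{\mathbb{M}} d_{\mathbb{M}}^2(u,y)\,\mathbb{E}\big[\mathcal{L}_{\mx,\mh}(\mX)g_u(\mX)\big]\,P_Y(\mathrm{d}u).
\end{align*}

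\textbf{Numerator and denominator.} \Cref{lemma:new3.2} with $k=1$ gives
\begin{align*}
\mathbb{E}[\mathcal{L}_{\mx,\mh}(\mX)g_u(\mX)]=c_{\mh,\bm{0}_d,1}(L)f(\mx)g_u(\mx)+r_{\mh}(u),\qquad \sup_u|r_{\mh}(u)|=o(\rho(\mh)).
\end{align*}
Hence the numerator equals $c_{\mh,\bm{0}_d,1}(L)f(\mx)M_{\oplus}(\mx,y)+R_{\mh}(y)$ with $\sup_y|R_{\mh}(y)|\le D^2\sup_u|r_{\mh}(u)|=o(\rho(\mh))$. By \Cref{lemma:new2.2}, the denominator is $\tilde{\mu}_{\mh,0}(\mx)=c_{\mh,\bm{0}_d,1}(L)f(\mx)+o(\rho(\mh))$.

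\textbf{Normalisation.} By \Cref{lemma:1.4} and condition \ref{con:L1}, $c_{\mh,\bm{0}_d,1}(L)/\rho(\mh)$ converges to the positive constant $2^{3d/2}\prod_{\ell=1}^d\int_0^\infty L(r^2)\,\mathrm{d}r$. Together with condition \ref{con:D1}, this shows $\tilde{\mu}_{\mh,0}(\mx)/\rho(\mh)$ is bounded away from $0$ for large $n$. Writing $a_{\mh}=c_{\mh,\bm{0}_d,1}(L)f(\mx)$, we have
\begin{align*}
\tilde{M}_{\mh,0}(\mx,y)-M_{\oplus}(\mx,y)=\frac{R_{\mh}(y)+\big(a_{\mh}-\tilde{\mu}_{\mh,0}(\mx)\big)M_{\oplus}(\mx,y)}{\tilde{\mu}_{\mh,0}(\mx)}.
\end{align*}
Since $0\le M_{\oplus}\le D^2$, the numerator is $o(\rho(\mh))$ uniformly in $y$ and the denominator is of exact order $\rho(\mh)$, which yields the claim.

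\textbf{Main obstacle.} The main difficulty is keeping the remainder uniform in $y$. This is handled by \Cref{lemma:new3.2}, which is uniform in $u$ by the equicontinuity and uniform boundedness in condition \ref{con:D3}, combined with the bound $D$. Measurability of $(u,\mz)\mapsto g_u(\mz)$ for the Fubini step will be taken as implicit in the definition of \eqref{g_y def}.
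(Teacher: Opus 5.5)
Your proposal is correct and is essentially the paper's argument. It uses the same Fubini representation of $\tilde{M}_{\mh,0}(\mx,y)$ and $M_{\oplus}(\mx,y)$ as integrals of $d_{\mathbb{M}}^2(\cdot,y)$ against $P_Y$, the uniform expansions from \Cref{lemma:new2.2} and \Cref{lemma:new3.2}, and \Cref{lemma:1.4} with condition \ref{con:D1} to keep $\tilde{\mu}_{\mh,0}(\mx)$ of exact order $\rho(\mh)$; the only cosmetic difference is that the paper first shows $\sup_{y}\big|\mathbb{E}[\mathcal{L}_{\mx,\mh}(\mX)g_y(\mX)]/\tilde{\mu}_{\mh,0}(\mx)-g_y(\mx)\big|=o(1)$ and then integrates against the bounded $d_{\mathbb{M}}^2$, while you integrate first and divide afterwards.
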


\begin{proof} [Proof of \Cref{lemma:new3.4}]
Combining \Cref{lemma:new2.2} and \Cref{lemma:new3.2}, we get
\begin{align} \begin{split} \label{eq:new7.54}
    \sup_{y \in \mathbb{M}} & \Big| \mathbb{E}\left[ \mathcal{L}_{\mx, \mh} \left( \mathbf{X} \right) g_y \left( \mX \right) \right] - g_y (\mathbf{x}) \mathbb{E}\left[ \mathcal{L}_{\mx, \mh} \left( \mathbf{X} \right)  \right] \Big| = o\left(\rho(\mh)\right).
\end{split} \end{align}
\Cref{lemma:1.4} and \Cref{lemma:new2.2} imply that
\begin{align} \label{eq:new7.55}
\begin{split}
    \left( c_{\mh, \bm{0}_d, 1}(L) f (\mathbf{x}) \right)^{-1} & = O\left(\rho^{-1} \left( \mh \right)\right), \\
    \quad c_{\mh, \bm{0}_d, 1}(L) f (\mathbf{x}) \left( \mathbb{E}\left[ \mathcal{L}_{\mx, \mh} \left( \mathbf{X} \right)\right] \right)^{-1} & = O(1), \\
    \left( \mathbb{E}\left[ \mathcal{L}_{\mx, \mh} \left( \mathbf{X} \right)\right] \right)^{-1} & = O\left(\rho^{-1} \left( \mh \right)\right).
\end{split}
\end{align}
Combining \eqref{eq:new7.54} and \eqref{eq:new7.55}, we get
    \begin{align} \label{lemma:new3.3}
        \sup_{y \in \mathbb{M}} \left| \frac{\mathbb{E}\left[ \mathcal{L}_{\mx, \mh} \left( \mathbf{X} \right) g_y (\mathbf{X}) \right]}{\mathbb{E}\left[ \mathcal{L}_{\mx, \mh} \left( \mathbf{X} \right) \right]} - g_y(\mathbf{x}) \right| = o(1).
    \end{align}
Note that
    \begin{align} \label{eq:3.21}
    \begin{split}
        M_{\oplus}(\mathbf{x}, y) = & \int_{\mathbb{M}} d_{\mathbb{M}}^2(y, w) \mathrm{d}P_{Y|\mathbf{X}=\mathbf{x}}(w) =  \int_{\mathbb{M}} d_{\mathbb{M}}^2(y, w) g_w(\mathbf{x}) \mathrm{d} P_Y (w),  \\
        \tilde{M}_{\mh, 0}(\mathbf{x}, y) = & \frac{\mathbb{E} \left[ \mathcal{L}_{\mx, \mh} \left( \mathbf{X} \right) M_{\oplus}(\mathbf{X}, y) \right] }{\mathbb{E} \left[ \mathcal{L}_{\mx, \mh} \left( \mathbf{X} \right)  \right] } = \int_{\mathbb{M}} d_{\mathbb{M}}^2(y, w) \frac{\mathbb{E}\left[ \mathcal{L}_{\mx, \mh} \left( \mathbf{X} \right) g_w (\mathbf{X}) \right]}{\mathbb{E}\left[ \mathcal{L}_{\mx, \mh} \left( \mathbf{X} \right) \right]} \mathrm{d} P_Y (w),
    \end{split}
    \end{align}
    where the last equality in \eqref{eq:3.21} follows from Fubini's theorem. Since $d_{\mathbb{M}}^2(y, w) \leq \mathrm{diam}(\mathbb{M})^2 < \infty$, the lemma follows from \eqref{lemma:new3.3} and \eqref{eq:3.21}.
\end{proof}

\begin{lemma} \label{thm:3.1}
   Assume that the conditions \ref{con:L1}, \ref{con:D1}--\ref{con:D3} and \ref{con:M1} hold and that $\lim_{n\rightarrow\infty} \| \mh \|_2=0$. Then, it holds that
    \begin{align*}
        d_{\mathbb{M}}(\tilde{m}_{\mh, 0}(\mathbf{x}), m_{\oplus}(\mathbf{x})) = o(1).
    \end{align*}

\end{lemma}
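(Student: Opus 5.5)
This is the standard argmin-continuity argument, driven by the uniform convergence in \Cref{lemma:new3.4} and the well-separation in the second part of condition \ref{con:M1}-(ii). Fix $\epsilon>0$ and set
\[
\eta_\epsilon := \inf_{y\in\mathbb{M}:\,d_{\mathbb{M}}(y,m_{\oplus}(\mx))>\epsilon}\big[M_{\oplus}(\mx,y)-M_{\oplus}(\mx,m_{\oplus}(\mx))\big],
\]
which is strictly positive by \ref{con:M1}-(ii). Write $\Delta_n:=\sup_{y\in\mathbb{M}}|\tilde M_{\mh,0}(\mx,y)-M_{\oplus}(\mx,y)|$. By \Cref{lemma:new3.4}, $\Delta_n\to0$ (deterministically, since these are population quantities), so there is $n_0$ with $\Delta_n<\eta_\epsilon/2$ for $n\ge n_0$.

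For $n\ge n_0$, suppose toward a contradiction that $d_{\mathbb{M}}(\tilde m_{\mh,0}(\mx),m_{\oplus}(\mx))>\epsilon$. Then
\[
M_{\oplus}(\mx,\tilde m_{\mh,0}(\mx))\ge M_{\oplus}(\mx,m_{\oplus}(\mx))+\eta_\epsilon .
\]
On the other hand, since $\tilde m_{\mh,0}(\mx)$ minimizes $\tilde M_{\mh,0}(\mx,\cdot)$ (it exists and is unique by \ref{con:M1}-(i)),
\[
M_{\oplus}(\mx,\tilde m_{\mh,0}(\mx))\le \tilde M_{\mh,0}(\mx,\tilde m_{\mh,0}(\mx))+\Delta_n\le \tilde M_{\mh,0}(\mx,m_{\oplus}(\mx))+\Delta_n\le M_{\oplus}(\mx,m_{\oplus}(\mx))+2\Delta_n .
\]
This gives $\eta_\epsilon\le2\Delta_n<\eta_\epsilon$, a contradiction. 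Hence $d_{\mathbb{M}}(\tilde m_{\mh,0}(\mx),m_{\oplus}(\mx))\le\epsilon$ for all $n\ge n_0$. Since $\epsilon$ was arbitrary, the distance is $o(1)$.

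There is no real obstacle here: all the analytic work (uniform control of the kernel-weighted conditional densities, which uses \ref{con:D1}--\ref{con:D3} and \ref{con:L1}) is already done in \Cref{lemma:new3.4}. The only points to check are that the uniform bound there holds over all of $\mathbb{M}$, so it applies at the data-dependent point $\tilde m_{\mh,0}(\mx)$, and that only the $M_{\oplus}$ part of \ref{con:M1}-(ii) is needed, not the $\tilde M_{\mh,0}$ part.
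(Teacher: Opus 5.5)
Your argument is correct and matches the paper's route: the paper just cites the argmin-continuity argument of Lemma B.7 of Im et al. (2025) together with \Cref{lemma:new3.4}, which is the well-separation plus uniform-convergence contradiction you write out. One small wording point: $\tilde m_{\mh,0}(\mx)$ is deterministic and depends on $n$ only through $\mh$; it is not data-dependent. That does not affect the argument, since your uniform bound over $\mathbb{M}$ covers it.
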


\begin{proof} [Proof of \Cref{thm:3.1}]
    The lemma follows by arguing as in the proof of Lemma B.7 of Im et al. (2025) and using \Cref{lemma:new3.4}.
\end{proof}

\begin{lemma} \label{lemma:3.6}
    Assume that the conditions \ref{con:L1}, \ref{con:B1}, \ref{con:D1} and \ref{con:D2} hold. Then, it holds that
    \begin{align*}
        \sup_{y \in \mathbb{M}} \big| \hat{M}_{\mh, 0}(\mathbf{x}, y) - \tilde{M}_{\mh, 0}(\mathbf{x}, y) \big| = o_{\mathbb{P}}(1)
    \end{align*}
\end{lemma}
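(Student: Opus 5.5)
We prove the lemma by writing both objects as ratios with a common structure and controlling the numerator and denominator separately, uniformly in $y$. Set
\begin{align*}
\hat{\mu}_{\mh,0}(\mx)=n^{-1}\sum_{i=1}^n \mathcal{L}_{\mx,\mh}(\mathbf{X}^{(i)}),\qquad \hat{T}_{\mh}(y):=n^{-1}\sum_{i=1}^n \mathcal{L}_{\mx,\mh}(\mathbf{X}^{(i)})\, d_{\mathbb{M}}^2(Y^{(i)},y),
\end{align*}
with population versions $\tilde{\mu}_{\mh,0}(\mx)$ and $\tilde{T}_{\mh}(y):=\mathbb{E}[\mathcal{L}_{\mx,\mh}(\mathbf{X})d_{\mathbb{M}}^2(Y,y)]$. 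Then $\hat M_{\mh,0}=\hat T_{\mh}/\hat\mu_{\mh,0}$ and $\tilde M_{\mh,0}=\tilde T_{\mh}/\tilde\mu_{\mh,0}$, so
\begin{align*}
\hat M_{\mh,0}(\mx,y)-\tilde M_{\mh,0}(\mx,y)=\frac{\hat T_{\mh}(y)-\tilde T_{\mh}(y)}{\hat\mu_{\mh,0}(\mx)}-\tilde M_{\mh,0}(\mx,y)\,\frac{\hat\mu_{\mh,0}(\mx)-\tilde\mu_{\mh,0}(\mx)}{\hat\mu_{\mh,0}(\mx)}.
\end{align*}
Since $0\le\tilde M_{\mh,0}\le \mathrm{diam}(\mathbb{M})^2<\infty$ by total boundedness, it suffices to show two things: $\rho(\mh)^{-1}(\hat\mu_{\mh,0}-\tilde\mu_{\mh,0})=o_{\mathbb{P}}(1)$ together with $\rho(\mh)^{-1}\tilde\mu_{\mh,0}$ bounded away from zero, and $\rho(\mh)^{-1}\sup_y|\hat T_{\mh}(y)-\tilde T_{\mh}(y)|=o_{\mathbb{P}}(1)$.

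\textbf{Denominator.} By \Cref{lemma:new2.2} with $k=1$ and \Cref{lemma:1.4}, $\tilde\mu_{\mh,0}(\mx)=c_{\mh,\bm 0_d,1}(L)f(\mx)+o(\rho(\mh))$, and $c_{\mh,\bm 0_d,1}(L)\asymp\rho(\mh)$ with a positive limit constant. Condition \ref{con:D1} then gives $\tilde\mu_{\mh,0}\ge c\,\rho(\mh)$ for large $n$. For the fluctuation, use Chebyshev: $\mathrm{Var}(\hat\mu_{\mh,0})\le n^{-1}\mathbb{E}[\mathcal{L}^2_{\mx,\mh}(\mathbf{X})]=O(\rho(\mh)/n)$ by \Cref{lemma:new2.2} with $k=2$. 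Hence $\hat\mu_{\mh,0}-\tilde\mu_{\mh,0}=O_{\mathbb{P}}((\rho(\mh)/n)^{1/2})=o_{\mathbb{P}}(\rho(\mh))$ by \ref{con:B1}, and $\hat\mu_{\mh,0}\ge (c/2)\rho(\mh)$ with probability tending to one.

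\textbf{Numerator, uniformly in $y$.} This is the main obstacle. The pointwise step is the same Chebyshev argument: $d_{\mathbb{M}}^2\le\mathrm{diam}^2$ gives variance $O(\rho(\mh)/n)$ for each fixed $y$. To pass to the supremum we use total boundedness and a Lipschitz bound. Since
\begin{align*}
|d_{\mathbb{M}}^2(w,y)-d_{\mathbb{M}}^2(w,y')|\le 2\,\mathrm{diam}(\mathbb{M})\,d_{\mathbb{M}}(y,y'),
\end{align*}
we have
\begin{align*}
|\hat T_{\mh}(y)-\hat T_{\mh}(y')|\le 2\,\mathrm{diam}(\mathbb{M})\,\hat\mu_{\mh,0}(\mx)\,d_{\mathbb{M}}(y,y'),
\end{align*}
and the same holds with tildes.

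Fix $\varepsilon>0$ and a finite $\varepsilon$-net $\{y_1,\dots,y_N\}$ of $\mathbb{M}$, where $N$ does not depend on $n$. Then
\begin{align*}
\rho(\mh)^{-1}\sup_y|\hat T_{\mh}-\tilde T_{\mh}|\le \rho(\mh)^{-1}\max_{j}|\hat T_{\mh}(y_j)-\tilde T_{\mh}(y_j)|+2\,\mathrm{diam}(\mathbb{M})\,\varepsilon\,\rho(\mh)^{-1}(\hat\mu_{\mh,0}+\tilde\mu_{\mh,0}).
\end{align*}
The maximum over finitely many points is $o_{\mathbb{P}}(1)$ by a union bound. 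The second term is $O_{\mathbb{P}}(\varepsilon)$ by the denominator step. Letting $\varepsilon\downarrow0$ gives the claim, which combined with the decomposition above yields the lemma.

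Note that this argument needs only \ref{con:D2} (through \Cref{lemma:new2.2}), not \ref{con:D3}.
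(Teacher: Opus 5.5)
Your proposal is correct and follows essentially the same route as the paper: the same numerator/denominator split, pointwise $O_{\mathbb{P}}\big(n^{-1/2}\rho^{-1/2}(\mh)\big)$ bounds from \Cref{lemma:1.4} and \Cref{lemma:new2.2}, and then a Lipschitz-in-$y$ bound combined with total boundedness of $\mathbb{M}$ to make the convergence uniform. The paper defers this last step to Lemma B.8 of Im et al.\ (2025), which your explicit $\varepsilon$-net argument reproduces. The only cosmetic difference is that the paper divides by the deterministic $\tilde{\mu}_{\mh,0}(\mx)$ and uses $0\le\hat{M}_{\mh,0}\le\mathrm{diam}(\mathbb{M})^2$, so it does not need your high-probability lower bound on $\hat{\mu}_{\mh,0}(\mx)$.
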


\begin{proof} [Proof of \Cref{lemma:3.6}]
    Note that
    \begin{align} \label{eq:3.31}
    \begin{split}
         & \big| \hat{M}_{\mh, 0}(\mathbf{x}, y) - \tilde{M}_{\mh, 0}(\mathbf{x}, y) \big| \\
         & \leq \dfrac{\mathrm{diam} (\mathbb{M})^2}{ \mathbb{E} \left[\mathcal{L}_{\mx, \mh} \left( \mathbf{X} \right)\right] } \cdot \bigg| n^{-1} \sum_{i=1}^{n} \mathcal{L}_{\mx, \mh} \left( \mathbf{X}^{(i)} \right) - \mathbb{E} \left[ \mathcal{L}_{\mx, \mh} \left( \mathbf{X} \right)  \right] \bigg| \\
         & \quad +  \dfrac{1}{ \mathbb{E} \left[\mathcal{L}_{\mx, \mh} \left( \mathbf{X} \right)\right] } \cdot \bigg| n^{-1} \sum_{i=1}^{n} \mathcal{L}_{\mx, \mh} \left( \mathbf{X}^{(i)} \right) d_{\mathbb{M}}^2(Y^{(i)}, y) - \mathbb{E} \left[\mathcal{L}_{\mx, \mh} \left( \mathbf{X} \right) d_{\mathbb{M}}^2(Y, y) \right] \bigg|.
    \end{split}
    \end{align}
    \Cref{lemma:1.4} and \Cref{lemma:new2.2} imply that
    \begin{align} \label{eq:3.35}
    \begin{split}
        & n^{-1} \sum_{i=1}^{n} \mathcal{L}_{\mx, \mh} \left( \mathbf{X}^{(i)} \right) - \mathbb{E} \left[ \mathcal{L}_{\mx, \mh} \left( \mathbf{X} \right)  \right] \\
        & = O_{\mathbb{P}} \left( \sqrt{ n^{-1} \mathbb{V}\mathrm{ar}\left[  \mathcal{L}_{\mx, \mh} \left( \mathbf{X} \right) \right] } \right) \\
        & = O_{\mathbb{P}}\left(n^{-\frac{1}{2}} \rho^{\frac{1}{2}} \left( \mh \right)\right),  \\
        & n^{-1} \sum_{i=1}^{n} \mathcal{L}_{\mx, \mh} \left( \mathbf{X}^{(i)} \right) d_{\mathbb{M}}^2(Y^{(i)}, y) - \mathbb{E} \left[\mathcal{L}_{\mx, \mh} \left( \mathbf{X} \right) d_{\mathbb{M}}^2(Y, y) \right] \\
        & = O_{\mathbb{P}} \left( \sqrt{ n^{-1} \mathbb{V}\mathrm{ar}\left[  \mathcal{L}_{\mx, \mh} \left( \mathbf{X} \right) d_{\mathbb{M}}^2(Y, y) \right] } \right) \\
        & = O_{\mathbb{P}}\left(n^{-\frac{1}{2}} \rho^{\frac{1}{2}} \left( \mh \right)\right).
    \end{split}
    \end{align}
    Combining \eqref{eq:new7.55}, \eqref{eq:3.31} and \eqref{eq:3.35}, we get
    \begin{align*}
        \big| \hat{M}_{\mh, 0}(\mathbf{x}, y) - \tilde{M}_{\mh, 0}(\mathbf{x}, y) \big| = O_{\mathbb{P}}\left(n^{-\frac{1}{2}} \rho^{-\frac{1}{2}} \left( \mh \right)\right).
    \end{align*}
    The rest of the proof proceeds as in the proof of Lemma B.8 of Im et al. (2025).
\end{proof}

\begin{lemma} \label{thm:3.3}
    Assume that the conditions \ref{con:L1}, \ref{con:B1}, \ref{con:D1}, \ref{con:D2} and \ref{con:M1} hold. Then, it holds that
    \begin{align*}
        d_{\mathbb{M}}(\hat{m}_{\mh, 0}(\mathbf{x}), \tilde{m}_{\mh, 0}(\mathbf{x})) = o_{\mathbb{P}}(1).
    \end{align*}
\end{lemma}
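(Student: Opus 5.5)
This is the standard argmin-continuity argument for M-estimators. It combines the uniform convergence in \Cref{lemma:3.6} with the well-separatedness of $\tilde{m}_{\mh, 0}(\mathbf{x})$ in the condition \ref{con:M1}-(ii).

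Fix $\epsilon>0$. By the first display of \ref{con:M1}-(ii) with $s=0$, there exist $\eta>0$ and $n_0$ such that, for all $n\ge n_0$,
\begin{align*}
\inf_{y\in\mathbb{M}:\,d_{\mathbb{M}}(y,\tilde{m}_{\mh,0}(\mathbf{x}))>\epsilon}\Big[\tilde{M}_{\mh,0}(\mathbf{x},y)-\tilde{M}_{\mh,0}(\mathbf{x},\tilde{m}_{\mh,0}(\mathbf{x}))\Big]\ge \eta .
\end{align*}
Condition \ref{con:M1}-(i) says $\hat{m}_{\mh,0}(\mathbf{x})$ exists and minimizes $\hat{M}_{\mh,0}(\mathbf{x},\cdot)$ almost surely, so $\hat{M}_{\mh,0}(\mathbf{x},\hat{m}_{\mh,0}(\mathbf{x}))\le \hat{M}_{\mh,0}(\mathbf{x},\tilde{m}_{\mh,0}(\mathbf{x}))$.

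Suppose $d_{\mathbb{M}}(\hat{m}_{\mh,0}(\mathbf{x}),\tilde{m}_{\mh,0}(\mathbf{x}))>\epsilon$. Then
\begin{align*}
\eta &\le \tilde{M}_{\mh,0}(\mathbf{x},\hat{m}_{\mh,0}(\mathbf{x}))-\tilde{M}_{\mh,0}(\mathbf{x},\tilde{m}_{\mh,0}(\mathbf{x}))\\
&\le \Big[\tilde{M}_{\mh,0}-\hat{M}_{\mh,0}\Big](\mathbf{x},\hat{m}_{\mh,0}(\mathbf{x}))+\Big[\hat{M}_{\mh,0}-\tilde{M}_{\mh,0}\Big](\mathbf{x},\tilde{m}_{\mh,0}(\mathbf{x}))\\
&\le 2\sup_{y\in\mathbb{M}}\big|\hat{M}_{\mh,0}(\mathbf{x},y)-\tilde{M}_{\mh,0}(\mathbf{x},y)\big|.
\end{align*}
Hence, for $n\ge n_0$,
\begin{align*}
\mathbb{P}\big(d_{\mathbb{M}}(\hat{m}_{\mh,0}(\mathbf{x}),\tilde{m}_{\mh,0}(\mathbf{x}))>\epsilon\big)\le \mathbb{P}\Big(\sup_{y\in\mathbb{M}}\big|\hat{M}_{\mh,0}(\mathbf{x},y)-\tilde{M}_{\mh,0}(\mathbf{x},y)\big|\ge \eta/2\Big).
\end{align*}
The right-hand side tends to zero by \Cref{lemma:3.6}, whose hypotheses \ref{con:L1}, \ref{con:B1}, \ref{con:D1} and \ref{con:D2} are all assumed here. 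Since $\epsilon>0$ was arbitrary, this gives the claim.

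The main obstacle is that the separation constant must not depend on $n$, because the population criterion $\tilde{M}_{\mh,0}$ moves with $\mh$. The $\liminf_n$ in \ref{con:M1}-(ii) handles this: it provides $\eta>0$ and $n_0$ valid for all large $n$. The only other point needing care is the measurability of the event $\{d_{\mathbb{M}}(\hat{m}_{\mh,0}(\mathbf{x}),\tilde{m}_{\mh,0}(\mathbf{x}))>\epsilon\}$. This can be sidestepped by reading $\mathbb{P}$ as outer probability, as in Petersen and Muller (2019), since the event is contained in the event bounded above.
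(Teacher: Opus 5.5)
Your proposal is correct and follows essentially the same route as the paper. The paper defers to the proof of Lemma B.9 of Im et al.\ (2025), which is exactly this argmin argument: uniform convergence from \Cref{lemma:3.6}, combined with the eventual well-separatedness of $\tilde{m}_{\mh,0}(\mathbf{x})$ that the $\liminf_n$ in condition \ref{con:M1}-(ii) provides.
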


\begin{proof} [Proof of \Cref{thm:3.3}]
   The lemma follows by arguing as in the proof of Lemma B.9 of Im et al. (2025).
\end{proof}

\subsection{Proof of \Cref{thm:consistency} for  $\hat{m}_{\mh, 0}(\mathbf{x})$}

The theorem follows from \Cref{thm:3.1} and \Cref{thm:3.3}.

\subsection{Lemmas for proof of \Cref{thm:main4.3}}

Let $g:\mathbb{T}^d\rightarrow\mathbb{R}$ be a sufficiently smooth function and $\bar{g} : \mathbb{R}^{2d} \setminus\mathcal{N} \to \mathbb{R}$ be its homogeneous extension, as defined in \eqref{homo ext}. Differentiating once more in \eqref{homo eq} with respect to $t_k$ gives
\[
\frac{\partial^2}{\partial t_\ell \partial t_k}
\bar g(\mathbf{D}_{\mz}\mt)
=
(\mathbf{D}_{\mz}\me_\ell)\tran
\nabla^2 \bar g(\mathbf{D}_{\mz}\mt)
\mathbf{D}_{\mz}\me_k=0,
\quad k,\ell = 1,\dots,d,
\]
where $\nabla^2 \bar g(\mathbf{D}_{\mz}\mt)$ denotes the Hessian of $\bar{g}$ at $\mathbf{D}_{\mz}\mt$. Let $\mathbf{O}_p$ denote the $p \times p$ zero matrix. Collecting these terms into matrix form yields
\[
\mathbf{D}_{\mz}\tran \nabla^2 \bar g(\mathbf{D}_{\mz}\mt)\, \mathbf{D}_{\mz}
=
\mathbf{O}_d.
\]
By substituting $\mt = (1, \dots, 1) \tran \in (0, \infty)^d$, we get
\begin{align} \label{constraint2}
\mathbf{D}_{\mz} \tran\nabla^2 \bar{g}(\mz)\mathbf{D}_{\mz}=\mathbf{O}_{d}.
\end{align}

From \eqref{constraint} and \eqref{constraint2}, we get
\begin{align} \begin{split} \label{eq:2.2}
    \mathbf{D}_{\mz} \tran\nabla \bar{f}(\mz)=\bm{0}_d 
    &, \quad \mathbf{D}_{\mz} \tran\nabla^2 \bar{f}(\mz)\mathbf{D}_{\mz}=\mathbf{O}_{d},\\
    \mathbf{D}_{\mz} \tran\nabla \left( \bar{f} \cdot \bar{g}_y \right) (\mz)=\bm{0}_d &, \quad \mathbf{D}_{\mz} \tran\nabla^2 \left( \bar{f} \cdot \bar{g}_y \right) (\mz)\mathbf{D}_{\mz}=\mathbf{O}_{d}, 
\end{split} \end{align}
where $\nabla^2 \bar{f}$ and $\nabla^2 \left( \bar{f} \cdot \bar{g}_y \right)$ denote the Hessians of $\bar{f}$ and $\bar{f} \cdot \bar{g}_y$, respectively. 
\Cref{lemma:2.1} below provides a Taylor-like expansion for the density function $f$ of $\mX$.

\begin{lemma} \label{lemma:2.1}
    Assume that $\bar{f}$ is differentiable on $\mathbb{R}^{2d} \setminus\mathcal{N}$. Then, for any $\bm{\theta} \in [-\pi, \pi)^{d}$, it holds that
    \begin{align} \label{taylororder1}
        \left| f \left( \bm{\Phi}_{\mx} \left( \bm{\theta} \right) \right) - f \left( \mx \right) \right| \leq  \| \bm{\theta} \|_2 \sup_{t \in [0, 1]} \left\| \nabla \bar{f}  \left( \bm{\Phi}_{\mx} (t\bm{\theta}) \right) \right\|_2.
    \end{align}
    If $\bar{f}$ is twice differentiable on $\mathbb{R}^{2d} \setminus\mathcal{N}$, then it holds that
    \begin{align} \label{taylororder2}
        \left| f \left( \bm{\Phi}_{\mx} \left( \bm{\theta} \right) \right) - f \left( \mx \right) - \left( \mathbf{R}_{\mx} \bm{\theta} \right) \tran\nabla \bar{f}(\mx) \right| \leq  \frac{1}{2} \| \bm{\theta} \|_2^2 \sup_{t \in [0, 1]} \left\| \nabla^2 \bar{f}  \left( \bm{\Phi}_{\mx} (t\bm{\theta}) \right) \right\|_2. 
    \end{align}
\end{lemma}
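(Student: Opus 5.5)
The plan is to reduce both inequalities to one-variable Taylor expansions along the curve $t \mapsto \bm{\Phi}_{\mx}(t\bm{\theta})$, $t\in[0,1]$, which lies on $\mathbb{T}^d \subset \mathbb{R}^{2d}\setminus\mathcal{N}$ where $\bar f$ is differentiable. Define $\varphi(t) := \bar f(\bm{\Phi}_{\mx}(t\bm{\theta})) = f(\bm{\Phi}_{\mx}(t\bm{\theta}))$. Then $\varphi(0)=f(\mx)$ and $\varphi(1)=f(\bm{\Phi}_{\mx}(\bm{\theta}))$. Componentwise, $\frac{\mathrm{d}}{\mathrm{d}t}\bm{\Phi}_{\mx,\ell}(t\theta_\ell) = \theta_\ell\big(-\mx_\ell\sin(t\theta_\ell) + \mathbf{R}\mx_\ell\cos(t\theta_\ell)\big) = \theta_\ell \mathbf{R}\bm{\Phi}_{\mx,\ell}(t\theta_\ell)$, using $\mathbf{R}^2=-\mathbf{I}_2$. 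Hence
\begin{align*}
\frac{\mathrm{d}}{\mathrm{d}t}\bm{\Phi}_{\mx}(t\bm{\theta}) = \mathbf{R}_{\bm{\Phi}_{\mx}(t\bm{\theta})}\bm{\theta},
\end{align*}
and the chain rule gives $\varphi'(t) = \big(\mathbf{R}_{\bm{\Phi}_{\mx}(t\bm{\theta})}\bm{\theta}\big)\tran \nabla\bar f(\bm{\Phi}_{\mx}(t\bm{\theta}))$.

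For \eqref{taylororder1}, apply the mean value theorem: $|\varphi(1)-\varphi(0)| \le \sup_{t\in[0,1]}|\varphi'(t)|$. Since the columns of $\mathbf{R}_{\mz}$ are orthonormal for $\mz\in\mathbb{T}^d$ (as in the computation $\mathbf{R}_{\mx}\tran\mathbf{R}_{\mx}=\mathbf{I}_d$), we have $\|\mathbf{R}_{\mz}\bm{\theta}\|_2=\|\bm{\theta}\|_2$. Cauchy--Schwarz then yields the bound.

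For \eqref{taylororder2}, use Taylor's theorem with Lagrange remainder (only twice differentiability is needed, not continuity of $\varphi''$): $\varphi(1)=\varphi(0)+\varphi'(0)+\tfrac12\varphi''(\tau)$ for some $\tau\in(0,1)$. Here $\varphi'(0)=(\mathbf{R}_{\mx}\bm{\theta})\tran\nabla\bar f(\mx)$, which is exactly the linear term. Writing $\mathbf{v}(t):=\mathbf{R}_{\bm{\Phi}_{\mx}(t\bm{\theta})}\bm{\theta}$, differentiating again gives
\begin{align*}
\varphi''(t) = \mathbf{v}(t)\tran\nabla^2\bar f(\bm{\Phi}_{\mx}(t\bm{\theta}))\mathbf{v}(t) + \mathbf{v}'(t)\tran\nabla\bar f(\bm{\Phi}_{\mx}(t\bm{\theta})).
\end{align*}
The $\ell$-th block of $\mathbf{v}'(t)$ is $\theta_\ell^2\mathbf{R}^2\bm{\Phi}_{\mx,\ell}(t\theta_\ell) = -\theta_\ell^2\bm{\Phi}_{\mx,\ell}(t\theta_\ell)$, so $\mathbf{v}'(t) = -\mathbf{D}_{\bm{\Phi}_{\mx}(t\bm{\theta})}(\theta_1^2,\dots,\theta_d^2)\tran$.

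The key step is that the second term vanishes. By the homogeneity constraint \eqref{constraint} applied at $\mz=\bm{\Phi}_{\mx}(t\bm{\theta})$, $\mathbf{D}_{\mz}\tran\nabla\bar f(\mz)=\bm{0}_d$, so $\mathbf{v}'(t)\tran\nabla\bar f=0$. What remains is $|\varphi''(t)|\le\|\mathbf{v}(t)\|_2^2\,\|\nabla^2\bar f\|_2 = \|\bm{\theta}\|_2^2\,\|\nabla^2\bar f(\bm{\Phi}_{\mx}(t\bm{\theta}))\|_2$, which gives the factor $\tfrac12\|\bm{\theta}\|_2^2\sup_t\|\nabla^2\bar f\|_2$.

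The main obstacle is precisely this cancellation of the curvature term $\mathbf{v}'\tran\nabla\bar f$. Without it one would pick up an extra $\|\bm{\theta}\|_2^2\|\nabla\bar f\|_2$ that is absent from the stated bound, and it is handled by \eqref{constraint}.
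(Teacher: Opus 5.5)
Your proof is correct and is essentially the paper's argument. The paper applies Taylor's remainder theorem to $\bar f\circ\bm{\Phi}_{\mx}$ using the Jacobian $\mathbf{R}_{\bm{\Phi}_{\mx}(\bm{\theta})}$ and the second-order chain rule, which amounts to your one-variable expansion of $\varphi(t)=\bar f(\bm{\Phi}_{\mx}(t\bm{\theta}))$; it removes the curvature term by the same observation, namely that the second derivatives of $\bm{\Phi}_{\mx}$ are radial and are therefore annihilated by $\mathbf{D}_{\mz}\tran\nabla\bar f(\mz)=\bm{0}_d$, and your explicit formula $\mathbf{v}'(t)=-\mathbf{D}_{\bm{\Phi}_{\mx}(t\bm{\theta})}(\theta_1^2,\dots,\theta_d^2)\tran$ is simply a cleaner rendering of that step.
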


\begin{proof} [Proof of Lemma \ref{lemma:2.1}]
    We first prove \eqref{taylororder1}. Note that the Jacobian matrix of $\bm{\Phi}_{\mx} : [-\pi, \pi)^{d} \to \mathbb{T}^{d}$ is evaluated as
    \begin{align} \label{Jacobian}
        \mathbf{J}_{\bm{\Phi}_{\mx}} (\bm{\theta}) = \begin{bmatrix}
            - \mx_1 \sin \theta_1 + \left( \mathbf{R} \mx_1 \right) \cos \theta_1 & \cdots & \bm{0}_2 \\
            \vdots & \ddots & \vdots \\
            \bm{0}_2 & \cdots & - \mx_d \sin \theta_d + \left( \mathbf{R} \mx_d \right) \cos \theta_d \end{bmatrix} = \mathbf{R}_{\bm{\Phi}_{\mx} (\bm{\theta})}.
    \end{align}
    By \eqref{Jacobian}, the gradient $\nabla \left( \bar{f} \circ \bm{\Phi}_{\mx} \right)$ of $\bar{f} \circ \bm{\Phi}_{\mx}$ is evaluated as
    \begin{align} \label{Gradient}
        \nabla \left( \bar{f} \circ \bm{\Phi}_{\mx} \right) (\bm{\theta}) = \mathbf{J}_{\bm{\Phi}_{\mx}} (\bm{\theta}) \tran \nabla \bar{f} \left( \bm{\Phi}_{\mx} \left( \bm{\theta} \right) \right) = \mathbf{R}_{\bm{\Phi}_{\mx} (\bm{\theta})}\tran \nabla \bar{f} \left( \bm{\Phi}_{\mx} \left( \bm{\theta} \right) \right).
    \end{align}
    Taylor's remainder theorem and \eqref{Gradient} imply that there exists $t_{\bm{\theta}}^{\ast} \in [0, 1]$ such that
    \begin{align} \begin{split} \label{taylororder1_Step1}
        f \left( \bm{\Phi}_{\mx} \left( \bm{\theta} \right) \right) - f \left( \mx \right) = \left( \mathbf{R}_{\bm{\Phi}_{\mx} (t_{\bm{\theta}}^{\ast}\bm{\theta})} \bm{\theta} \right) \tran \nabla \bar{f}  \left( \bm{\Phi}_{\mx} (t_{\bm{\theta}}^{\ast}\bm{\theta}) \right).
    \end{split} \end{align}
    Note that $\left\| \mathbf{R}_{\bm{\Phi}_{\mx} (t_{\bm{\theta}}^{\ast}\bm{\theta})} \bm{\theta}\right\|_2 = \left\| \bm{\theta} \right\|_2$. This and \eqref{taylororder1_Step1} imply that
    \begin{align*} 
        \left| f \left( \bm{\Phi}_{\mx} \left( \bm{\theta} \right) \right) - f \left( \mx \right)  \right| & \leq \left\| \mathbf{R}_{\bm{\Phi}_{\mx} (t_{\bm{\theta}}^{\ast}\bm{\theta})} \bm{\theta}\right\|_2 \left\| \nabla \bar{f}  \left( \bm{\Phi}_{\mx} (t_{\bm{\theta}}^{\ast}\bm{\theta}) \right) \right\|_2 \leq \left\|\bm{\theta}\right\|_{2} \sup_{t \in [0, 1]} \left\| \nabla \bar{f}  \left( \bm{\Phi}_{\mx} (t\bm{\theta}) \right) \right\|_2,
    \end{align*}
    which gives \eqref{taylororder1}.
    
    Now, we prove \eqref{taylororder2}. Since $\bm{\Phi}_{\mx} \left( \bm{0}_d \right) = \mx$, \eqref{Gradient} implies that
    \begin{align*}
        \nabla \left( \bar{f} \circ \bm{\Phi}_{\mx} \right) \left(\bm{0}_d \right) = \mathbf{R}_{\bm{\Phi}_{\mx} (\bm{0}_d)}\tran \nabla \bar{f} \left( \bm{\Phi}_{\mx} \left( \bm{0}_d \right) \right) = \mathbf{R}_{\mx}\tran \nabla \bar{f} \left( \mx \right).
    \end{align*} 
    By the second-order chain rule for vector-valued compositions, the Hessian of $f\circ\bm{\Phi}_{\mx}$ can be written as
\begin{align}
\nabla^{2}(\bar f \circ \bm{\Phi}_{\mx})(\bm{\theta}) =
\mathbf{J}_{\bm{\Phi}_{\mx}}(\bm{\theta})\tran
\nabla^{2}\bar f(\bm{\Phi}_{\mx}(\bm{\theta}))
\mathbf{J}_{\bm{\Phi}_{\mx}}(\bm{\theta}) +
\sum_{k=1}^{2d}
\frac{\partial \bar f}{\partial z_k}(\bm{\Phi}_{\mx}(\bm{\theta}))
\nabla^{2} (\bm{\Phi}_{\mx})_k(\bm{\theta}).
\label{eq:hessian_chain_rule}
\end{align}
We show that the second term in \eqref{eq:hessian_chain_rule} vanishes. Note that
\[
\frac{\partial^2}{\partial \theta_\ell^2}
\bm{\Phi}_{\mx,\ell}(\theta_\ell)
=
-\,\bm{\Phi}_{\mx,\ell}(\theta_\ell),
\quad
\frac{\partial^2}{\partial \theta_\ell\partial \theta_k}
\bm{\Phi}_{\mx}(\bm{\theta})
=
\bm{0}_{2d},\quad \ell\neq k.
\]
Hence, all nonzero second derivatives of $\bm{\Phi}_{\mx}$
lie in the radial directions
$\bm{\Phi}_{\mx,\ell}(\theta_\ell)$. On the other hand, \eqref{eq:2.2} implies that
\[
\mz_\ell^{\tran}\frac{\partial \bar f(\mz)}{\partial \mz_\ell}=0,
\quad \ell=1,\dots,d.
\]
Evaluating at $\mz=\bm{\Phi}_{\mx}(\bm{\theta})$,
each term in the summation of
\eqref{eq:hessian_chain_rule} reduces to a blockwise inner product
between $\bm{\Phi}_{\mx,\ell}(\theta_\ell)$ and
$\partial \bar f/\partial \mz_\ell$,
which vanishes by the above constraint.
Hence, the second term in
\eqref{eq:hessian_chain_rule} is identically zero. Consequently,
\[
\nabla^{2}(\bar f \circ \bm{\Phi}_{\mx})(\bm{\theta})
=
\mathbf{R}_{\bm{\Phi}_{\mx}(\bm{\theta})}\tran
\nabla^{2}\bar f(\bm{\Phi}_{\mx}(\bm{\theta}))
\mathbf{R}_{\bm{\Phi}_{\mx}(\bm{\theta})}.
\]
By Taylor's remainder theorem, there exists $t_{\bm{\theta}}^{\ast \ast} \in [0, 1]$ such that
\begin{align} \begin{split} \label{Taylorexpansion_Step1}
        f \left( \bm{\Phi}_{\mx} \left( \bm{\theta} \right) \right) - f \left( \mx \right) - \left( \mathbf{R}_{\mx} \bm{\theta} \right) \tran\nabla \bar{f}(\mx) = \frac{1}{2} \bm{\theta} \tran \mathbf{R}_{\bm{\Phi}_{\mx} (t_{\bm{\theta}}^{\ast \ast}\bm{\theta})} \tran \nabla^2 \bar{f}  \left( \bm{\Phi}_{\mx} (t_{\bm{\theta}}^{\ast \ast}\bm{\theta}) \right) \mathbf{R}_{\bm{\Phi}_{\mx} (t_{\bm{\theta}}^{\ast \ast}\bm{\theta})} \bm{\theta}.
    \end{split} \end{align}
Note that $\left\| \mathbf{R}_{\bm{\Phi}_{\mx} (t_{\bm{\theta}}^{\ast \ast}\bm{\theta})} \bm{\theta}\right\|_2 = \left\| \bm{\theta} \right\|_2$. This and \eqref{Taylorexpansion_Step1} imply that
    \begin{align*}
        | f \left( \bm{\Phi}_{\mx} \left( \bm{\theta} \right) \right) - f \left( \mx \right) - \left( \mathbf{R}_{\mx} \bm{\theta} \right) \tran\nabla \bar{f}(\mx) | & \leq \frac{1}{2} \left\| \mathbf{R}_{\bm{\Phi}_{\mx} (t_{\bm{\theta}}^{\ast \ast}\bm{\theta})} \bm{\theta}\right\|_2^2 \left\| \nabla^2 \bar{f}  \left( \bm{\Phi}_{\mx} (t_{\bm{\theta}}^{\ast \ast}\bm{\theta}) \right) \right\|_2  \\
        & \leq \frac{1}{2} \left\|\bm{\theta}\right\|_{2}^{2} \sup_{t \in [0, 1]} \left\| \nabla^2 \bar{f}  \left( \bm{\Phi}_{\mx} (t\bm{\theta}) \right) \right\|_2.
    \end{align*}
This completes the proof.
\end{proof}

\begin{lemma} \label{lemma:3.1}
    Assume that $\bar{f}$ and $\bar{g}_y$ are differentiable on $\mathbb{R}^{2d} \setminus\mathcal{N}$. Then, for any $\bm{\theta} \in [-\pi, \pi)^{d}$, it holds that
    \begin{align*} 
        \left| \left( f \cdot g_y \right) \left( \bm{\Phi}_{\mx} \left( \bm{\theta} \right) \right) - \left( f \cdot g_y \right) \left( \mx \right) \right| \leq  \| \bm{\theta} \|_2 \sup_{t \in [0, 1]} \left\| \nabla \left( \bar{f} \cdot \bar{g}_y \right) \left( \bm{\Phi}_{\mx} (t\bm{\theta}) \right) \right\|_2.
    \end{align*}
    If $\bar{f}$ and $\bar{g}_y$ are twice differentiable on $\mathbb{R}^{2d} \setminus\mathcal{N}$, then it holds that
    \begin{align*}
        \left| \left( f \cdot g_y \right) \left( \bm{\Phi}_{\mx} \left( \bm{\theta} \right) \right) - \left( f \cdot g_y \right) \left( \mx \right) - \left( \mathbf{R}_{\mx} \bm{\theta} \right) \tran \nabla  \left( \bar{f} \cdot \bar{g}_y \right) \left( \mx \right)  \right| \nonumber  \leq  \frac{1}{2} \| \bm{\theta} \|_2^2 \sup_{t \in [0, 1]} \left\| \nabla^2 \left( \bar{f} \cdot \bar{g}_y \right)  \left( \bm{\Phi}_{\mx} (t\bm{\theta}) \right) \right\|_2.
    \end{align*}
\end{lemma}

\begin{proof} [Proof of Lemma \ref{lemma:3.1}]
    The lemma follows by arguing as in the proof of \Cref{lemma:2.1}.
\end{proof}

\begin{lemma} \label{lemma:2.2}
    Assume that the conditions \ref{con:L1} and \ref{con:D4} hold and that $\lim_{n\rightarrow\infty} \| \mh \|_2=0$. Then, for any $k\in\{1,2\}$, it holds that
    \begin{align*}
        \mathbb{E}\left[ \mathcal{L}_{\mx, \mh}^k \left( \mathbf{X} \right) \right] -  c_{\mh, \bm{0}_d, k}(L) f (\mx) = O\left(\rho(\mh)\| \mh \|_{2}^{2}\right).
    \end{align*}
\end{lemma}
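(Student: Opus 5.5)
We follow the route of \Cref{lemma:new2.2}, but use the second-order expansion of \Cref{lemma:2.1} instead of continuity. Combining \eqref{eq:2.7}, \Cref{lemma:1.2} and \eqref{symmetricproperty} as before gives
\begin{align*}
\mathbb{E}\left[ \mathcal{L}_{\mx, \mh}^k \left( \mathbf{X} \right) \right] - c_{\mh, \bm{0}_d, k}(L) f (\mx) = \int_{[-\pi, \pi)^{d}} \bm{L}_{\mh}^k (\bm{\theta}) \Big( f \left( \bm{\Phi}_{\mx} \left( \bm{\theta} \right) \right) - f(\mx) \Big) \mathrm{d} \bm{\theta}.
\end{align*}
We write $f(\bm{\Phi}_{\mx}(\bm{\theta})) - f(\mx) = (\mathbf{R}_{\mx}\bm{\theta})\tran \nabla\bar f(\mx) + r(\bm{\theta})$. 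By \eqref{taylororder2} and the condition \ref{con:D4}, $|r(\bm{\theta})| \le C\|\bm{\theta}\|_2^2$ with $C = \frac12\sup_{\mz\in\mathbb{T}^d}\|\nabla^2\bar f(\mz)\|_2 < \infty$. This bound is uniform in $\bm{\theta}$ because $\bm{\Phi}_{\mx}(t\bm{\theta})\in\mathbb{T}^d$ for every $t$. No splitting into a near region and far regions $E_m$ is needed here, since the bound is global.

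The linear term integrates to zero. Indeed, $(\mathbf{R}_{\mx}\bm{\theta})\tran\nabla\bar f(\mx) = \sum_\ell \theta_\ell\, a_\ell$ with constants $a_\ell$, and each $\theta_\ell$ integrated against the symmetric product kernel $\bm{L}_{\mh}^k$ over $[-\pi,\pi)^d$ vanishes by \eqref{symmetricproperty}, since the exponent $j_\ell=1$ is odd.

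It remains to bound
\begin{align*}
C\int_{[-\pi,\pi)^d} \bm{L}_{\mh}^k(\bm{\theta})\|\bm{\theta}\|_2^2\,\mathrm{d}\bm{\theta} = C\sum_{\ell=1}^d c_{\mh, 2\me_\ell, k}(L),
\end{align*}
again using \eqref{symmetricproperty}. By \Cref{lemma:1.4} with $\mathbf{j} = 2\me_\ell$, we have $c_{\mh,2\me_\ell,k}(L) = O(h_\ell^2\rho(\mh))$, and the limiting constant is finite by the condition \ref{con:L1}. Summing over $\ell$ gives $O(\rho(\mh)\|\mh\|_2^2)$, which is the claim.

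The main point to check is that \Cref{lemma:1.4} applies as $\mh\to\bm{0}_d$ jointly, without any bounded-ratio assumption on the bandwidths. Its limit is a product of one-dimensional limits, each of which converges as its own $h_\ell\to0$, so the joint limit holds and the $O$-bound follows. I would also verify that the condition \ref{con:D4} indeed gives a uniform bound on the Hessian along the whole path $\bm{\Phi}_{\mx}(t\bm{\theta})$. This is immediate because the supremum in the condition \ref{con:D4} is taken over all of $\mathbb{T}^d$.
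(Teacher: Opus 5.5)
Your proposal is correct and follows essentially the same route as the paper: the integral representation \eqref{eq:new7.35}, the linear term vanishing by \eqref{symmetricproperty}, the global second-order bound from \Cref{lemma:2.1} with the Hessian supremum from the condition \ref{con:D4}, and $\sum_{\ell} c_{\mh,2\me_\ell,k}(L)=O(\rho(\mh)\|\mh\|_2^2)$ via \Cref{lemma:1.4}. Your side remarks are also sound: the limit in \Cref{lemma:1.4} holds jointly without a bounded-ratio assumption on the bandwidths, and the Hessian bound is uniform along the path $\bm{\Phi}_{\mx}(t\bm{\theta})$.
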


\begin{proof} [Proof of Lemma \ref{lemma:2.2}]
    For any $\mv = \left( v_1, \dots, v_d \right) \tran \in \mathbb{R}^{d}$, \eqref{symmetricproperty} implies that
    \begin{align} \begin{split} \label{eq:2.10}
        \int_{[-\pi, \pi)^{d}}  \bm{L}_{\mh}^k (\bm{\theta}) \bm{\theta} \tran \mv \mathrm{d} \bm{\theta}  = \sum_{\ell=1}^{d} v_{\ell} \int_{[-\pi, \pi)^{d}}  \bm{L}_{\mh}^k (\bm{\theta}) \theta_{\ell}  \mathrm{d} \bm{\theta} = 0.
    \end{split} \end{align}
    Combining \Cref{lemma:1.4} and \eqref{symmetricproperty}, we get
    \begin{align} \begin{split} \label{eq:2.11}
        \int_{[-\pi, \pi)^{d}} \bm{L}_{\mh}^k (\bm{\theta})   \| \bm{\theta} \|_2^2  \mathrm{d} \bm{\theta} & = \sum_{\ell=1}^{d} \int_{[-\pi, \pi)^{d}} \bm{L}_{\mh}^k (\bm{\theta})   \theta_\ell^2 \mathrm{d} \bm{\theta} = \sum_{\ell=1}^{d} c_{\mh, 2\mathbf{e}_{\ell}, k}(L)  = O\left( \rho(\mh) \| \mh \|_{2}^{2} \right).
    \end{split} \end{align}
    Combining \eqref{eq:new7.35}, \eqref{eq:2.10}, \eqref{eq:2.11} and \Cref{lemma:2.1}, we also get
    \begin{align*}
          &\left| \mathbb{E}\left[ \mathcal{L}_{\mx, \mh}^k \left( \mathbf{X} \right) \right] -  c_{\mh, \bm{0}_d, k}(L) f (\mx) \right| \\
          &= \left| \mathbb{E}\left[ \mathcal{L}_{\mx, \mh}^k \left( \mathbf{X} \right) \right] -  c_{\mh, \bm{0}_d, k}(L) f (\mx) - \int_{[-\pi, \pi)^{d}} \bm{L}_{\mh}^k (\bm{\theta})  \bm{\theta} \tran  \mathbf{R}_{\mx} \tran \nabla \bar{f}(\mx)  \mathrm{d} \bm{\theta} \right| \\
          &= \left| \int_{[-\pi, \pi)^{d}} \bm{L}_{\mh}^k (\bm{\theta})  \Big( f \left( \bm{\Phi}_{\mx} \left( \bm{\theta} \right) \right) - f(\mx) - \left( \mathbf{R}_{\mx} \bm{\theta} \right) \tran\nabla \bar{f}(\mx)\Big)  \mathrm{d} \bm{\theta} \right| \\
          &\leq \frac{1}{2} \sup_{\mz \in \mathbb{T}^{d}} \left\| \nabla^2 \bar{f}  \left( \mz \right) \right\|_2 \int_{[-\pi, \pi)^{d}} \bm{L}_{\mh}^k (\bm{\theta})   \| \bm{\theta} \|_2^2  \mathrm{d} \bm{\theta} \\
          &= \frac{1}{2} \sup_{\mz \in \mathbb{T}^{d}} \left\| \nabla^2 \bar{f}  \left( \mz \right) \right\|_2  O\left( \rho(\mh) \| \mh \|_{2}^{2} \right).
    \end{align*}
    This completes the proof.
\end{proof}

\begin{lemma} \label{lemma:3.2}
    Assume that the conditions \ref{con:L1}, \ref{con:D4} and \ref{con:D5} hold and that $\lim_{n\rightarrow\infty} \| \mh \|_2=0$. Then, for any $k\in\{1,2\}$, it holds that
    \begin{align*}
        \sup_{y \in \mathbb{M}} \Big| \mathbb{E}\left[ \mathcal{L}_{\mx, \mh}^k \left( \mathbf{X} \right) g_y \left( \mX \right) \right] -  c_{\mh, \bm{0}_d, k}(L) ( f \cdot g_y ) (\mx)  \Big| = O\left(\rho(\mh)\| \mh \|_{2}^{2}\right).
    \end{align*}
\end{lemma}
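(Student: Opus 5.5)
The argument mirrors the proof of \Cref{lemma:2.2}, with $f$ replaced by the product $f\cdot g_y$ and with all bounds made uniform in $y\in\mathbb{M}$.

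First, combine \eqref{eq:2.7} with \Cref{lemma:1.2}, and use the symmetry property \eqref{symmetricproperty} for $\mathbf{j}=\bm{0}_d$. This gives, for every $y$, the exact representation
\begin{align*}
\mathbb{E}\left[ \mathcal{L}_{\mx, \mh}^k ( \mathbf{X} ) g_y ( \mX ) \right] - c_{\mh, \bm{0}_d, k}(L) ( f \cdot g_y ) (\mx)
= \int_{[-\pi, \pi)^{d}} \bm{L}_{\mh}^k (\bm{\theta}) \Big( (f\cdot g_y) ( \bm{\Phi}_{\mx} ( \bm{\theta} ) ) - (f\cdot g_y)(\mx) \Big) \mathrm{d} \bm{\theta}.
\end{align*}
By \eqref{eq:2.10}, the integral of $\bm{L}_{\mh}^k(\bm{\theta})\,\bm{\theta}\tran\mathbf{R}_{\mx}\tran\nabla(\bar f\cdot\bar g_y)(\mx)$ vanishes. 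We may therefore subtract the linear term $(\mathbf{R}_{\mx}\bm{\theta})\tran\nabla(\bar f\cdot\bar g_y)(\mx)$ inside the integrand at no cost.

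Next, apply the second-order bound of \Cref{lemma:3.1}. The absolute value of the difference is then at most
\begin{align*}
\frac12 \sup_{\mz\in\mathbb{T}^d}\big\|\nabla^2(\bar f\cdot\bar g_y)(\mz)\big\|_2 \int_{[-\pi,\pi)^d}\bm{L}_{\mh}^k(\bm{\theta})\|\bm{\theta}\|_2^2\,\mathrm{d}\bm{\theta},
\end{align*}
and by \eqref{eq:2.11} the integral is $O(\rho(\mh)\|\mh\|_2^2)$. Note that this integral does not depend on $y$.

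The main step is to show that the Hessian factor is bounded uniformly in $y$. By the product rule,
\begin{align*}
\nabla^2(\bar f\bar g_y)=\bar g_y\nabla^2\bar f+\nabla\bar f\,\nabla\bar g_y\tran+\nabla\bar g_y\,\nabla\bar f\tran+\bar f\,\nabla^2\bar g_y .
\end{align*}
Condition \ref{con:D5} bounds $\bar g_y$, $\nabla\bar g_y$ and $\nabla^2\bar g_y$ uniformly in $y$ and $\mz\in\mathbb{T}^d$. Condition \ref{con:D4} bounds $\nabla^2\bar f$, and $\bar f=f$ is bounded on $\mathbb{T}^d$ because it is continuous there. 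The one factor not directly assumed bounded is $\nabla\bar f$ on $\mathbb{T}^d$. Its boundedness follows from twice differentiability of $\bar f$ together with compactness: apply the mean value theorem to $\nabla(\bar f\circ\bm{\Phi}_{\mx})$ and use the bounded Hessian from \eqref{eq:hessian_chain_rule}. The normal components of $\nabla\bar f$ vanish by \eqref{eq:2.2}, so only tangential components need to be controlled. One could instead simply take $\sup_{\mz}\|\nabla\bar f(\mz)\|_2<\infty$ as implied by \ref{con:D4}, in the same way the paper notes that \ref{con:D4} implies \ref{con:D2}.

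Finally, the resulting bound is uniform in $y$, so taking $\sup_{y\in\mathbb{M}}$ gives the stated rate $O(\rho(\mh)\|\mh\|_2^2)$.
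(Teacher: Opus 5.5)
Your proposal is correct and takes the same route as the paper, whose proof is the one-line instruction to argue as in \Cref{lemma:2.2} using the second-order bound of \Cref{lemma:3.1}. Your product-rule check that $\sup_{y\in\mathbb{M}}\sup_{\mz\in\mathbb{T}^d}\|\nabla^2(\bar f\cdot\bar g_y)(\mz)\|_2<\infty$ makes explicit the uniformity in $y$ that the paper leaves implicit; the boundedness of $\nabla\bar f$ on $\mathbb{T}^d$ needs no mean-value argument, since $\nabla\bar f$ is differentiable and hence continuous on the compact set $\mathbb{T}^d$.
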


\begin{proof} [Proof of Lemma \ref{lemma:3.2}]
    The lemma follows by arguing as in the proof of \Cref{lemma:2.2} and using \Cref{lemma:3.1}.
\end{proof}

\begin{lemma} \label{thm:3.2}
    Assume that the conditions \ref{con:L1}, \ref{con:D1}, \ref{con:D4}, \ref{con:D5}, \ref{con:M1} and \ref{con:M2} hold and that $\lim_{n\rightarrow\infty} \| \mh \|_2=0$. Then, it holds that
    \begin{align*}
        d_{\mathbb{M}}(\tilde{m}_{\mh, 0}(\mathbf{x}), m_{\oplus}(\mathbf{x}))^{\beta_{\oplus} - 1} = O\left(\| \mh \|_{2}^{2}\right).
    \end{align*}
\end{lemma}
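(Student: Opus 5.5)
The plan is to first obtain a uniform bias bound
\begin{align*}
\sup_{y\in\mathbb{M}}\big|\tilde M_{\mh,0}(\mx,y)-M_{\oplus}(\mx,y)\big| = O\big(\|\mh\|_2^2\big),
\end{align*}
which sharpens \Cref{lemma:new3.4} under the smoothness conditions \ref{con:D4} and \ref{con:D5}. Then we convert it into a bound on the minimizers using the curvature condition \ref{con:M2}.

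\textbf{Uniform bias bound.} Write $\tilde\mu:=\mathbb{E}[\mathcal{L}_{\mx,\mh}(\mX)]$ and $c:=c_{\mh,\bm{0}_d,1}(L)$. By \Cref{lemma:3.2} with $k=1$,
\begin{align*}
\sup_{y}\big|\mathbb{E}[\mathcal{L}_{\mx,\mh}(\mX)g_y(\mX)]-c\,(f\cdot g_y)(\mx)\big| = O\big(\rho(\mh)\|\mh\|_2^2\big).
\end{align*}
By \Cref{lemma:2.2}, $\tilde\mu-c\,f(\mx)=O(\rho(\mh)\|\mh\|_2^2)$. Since $\sup_y g_y(\mx)<\infty$ by \ref{con:D5}, multiplying the second bound by $g_y(\mx)$ and subtracting gives
\begin{align*}
\sup_y\big|\mathbb{E}[\mathcal{L}_{\mx,\mh}(\mX)g_y(\mX)]-g_y(\mx)\,\tilde\mu\big|=O\big(\rho(\mh)\|\mh\|_2^2\big).
\end{align*}
Under \ref{con:D1} and \Cref{lemma:1.4}, $\tilde\mu^{-1}=O(\rho(\mh)^{-1})$, exactly as in \eqref{eq:new7.55}. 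Dividing by $\tilde\mu$ therefore yields
\begin{align*}
\sup_w\big|\mathbb{E}[\mathcal{L}_{\mx,\mh}(\mX)g_w(\mX)]/\tilde\mu-g_w(\mx)\big|=O(\|\mh\|_2^2).
\end{align*}
Inserting this into the Fubini representation \eqref{eq:3.21} and using $d_{\mathbb{M}}^2\le \mathrm{diam}(\mathbb{M})^2$ gives the uniform bias bound, since $P_Y$ is a probability measure.

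\textbf{From the bias bound to the minimizers.} Write $\tilde m:=\tilde m_{\mh,0}(\mx)$, $m:=m_{\oplus}(\mx)$ and $\delta:=d_{\mathbb{M}}(\tilde m,m)$. By \Cref{thm:3.1}, $\delta\to0$, so eventually $\delta<\eta_\oplus$. Using the second inequality of \ref{con:M2} (centered at $m_\oplus$) and the minimality of $\tilde m$ for $\tilde M_{\mh,0}$,
\begin{align*}
C_\oplus\delta^{\beta_\oplus}&\le M_\oplus(\mx,\tilde m)-M_\oplus(\mx,m)\\
&\le \big[M_\oplus(\mx,\tilde m)-\tilde M_{\mh,0}(\mx,\tilde m)\big]-\big[M_\oplus(\mx,m)-\tilde M_{\mh,0}(\mx,m)\big].
\end{align*}
Bounding the right side by $2\sup_y|\tilde M_{\mh,0}-M_\oplus|$ only gives $\delta^{\beta_\oplus}=O(\|\mh\|_2^2)$, which is weaker than the claimed exponent $\beta_\oplus-1$. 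To gain the extra factor of $\delta$, I would write the difference $\tilde M_{\mh,0}(\mx,y)-M_\oplus(\mx,y)$ as
\begin{align*}
\int d_{\mathbb{M}}^2(y,w)\,\Delta_{\mh}(w)\,\mathrm{d}P_Y(w), \qquad \Delta_{\mh}(w):=\frac{\mathbb{E}[\mathcal{L}_{\mx,\mh}(\mX)g_w(\mX)]}{\tilde\mu}-g_w(\mx).
\end{align*}
Then the bracketed difference equals
\begin{align*}
\int\big(d_{\mathbb{M}}^2(\tilde m,w)-d_{\mathbb{M}}^2(m,w)\big)\Delta_{\mh}(w)\,\mathrm{d}P_Y(w).
\end{align*}
The triangle inequality gives $|d_{\mathbb{M}}^2(\tilde m,w)-d_{\mathbb{M}}^2(m,w)|\le 2\,\mathrm{diam}(\mathbb{M})\,\delta$. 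Combined with $\sup_w|\Delta_{\mh}(w)|=O(\|\mh\|_2^2)$ from the first step, this gives $C_\oplus\delta^{\beta_\oplus}\le O(\delta\|\mh\|_2^2)$. Dividing by $\delta$ (the case $\delta=0$ is trivial) yields $\delta^{\beta_\oplus-1}=O(\|\mh\|_2^2)$.

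\textbf{Main obstacle.} The key step is the uniform-in-$w$ bound on $\Delta_{\mh}$. It relies on the first-order term vanishing, through the symmetry \eqref{symmetricproperty} as used in \Cref{lemma:2.2}, and on the suprema in \ref{con:D5} being uniform in $y$. The remaining care is in the algebra of the ratio: the errors of numerator and denominator must be combined so that no spurious $O(\rho(\mh)^{-1})$ factor survives.
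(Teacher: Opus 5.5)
Your proposal is correct and follows essentially the paper's proof. The paper likewise bounds $M_{\oplus}(\mx,\tilde m_{\mh,0}(\mx))-M_{\oplus}(\mx,m_{\oplus}(\mx))$ by $2\,\mathrm{diam}(\mathbb{M})\,\delta\,\sup_w|\Delta_{\mh}(w)|$ using the minimality of $\tilde m_{\mh,0}(\mx)$ and \eqref{eq:3.21}. It obtains $\sup_w|\Delta_{\mh}(w)|=O(\|\mh\|_2^2)$ from \Cref{lemma:2.2}, \Cref{lemma:3.2} and \eqref{eq:new7.55}, and concludes via \Cref{thm:3.1} and \ref{con:M2}. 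The integrand you display should be $d_{\mathbb{M}}^2(m,w)-d_{\mathbb{M}}^2(\tilde m,w)$ rather than its negative, but this is immaterial since you immediately take absolute values.
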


\begin{proof} [Proof of \Cref{thm:3.2}]
    Combining \eqref{eq:3.21} and the fact that $\tilde{M}_{\mh, 0} (\mathbf{x}, m_{\oplus}(\mathbf{x}) ) \ge \tilde{M}_{\mh, 0} (\mathbf{x}, \tilde{m}_{\mh, 0}(\mathbf{x}))$, we get
    \begin{align} \begin{split} \label{eq:3.27}
        &M_{\oplus} \left( \mx, \tilde{m}_{\mh, 0}(\mx) \right) - M_{\oplus} \left( \mx, m_{\oplus}(\mx) \right) \\ 
        &\leq \left( M_{\oplus} \left( \mx, \tilde{m}_{\mh, 0}(\mx) \right) - M_{\oplus} \left( \mx, m_{\oplus}(\mx) \right) \right)  + \left( \tilde{M}_{\mh, 0} (\mx, m_{\oplus}(\mx) ) - \tilde{M}_{\mh, 0} (\mx, \tilde{m}_{\mh, 0}(\mx)) \right) \\
        &=\int_{\mathbb{M}} \left( d_{\mathbb{M}}^2(m_{\oplus}(\mx), w) - d_{\mathbb{M}}^2(\tilde{m}_{\mh, 0}(\mx), w) \right) \left( \frac{\mathbb{E}\left[ \mathcal{L}_{\mx, \mh} \left( \mathbf{X} \right) g_w (\mathbf{X}) \right]}{\mathbb{E}\left[ \mathcal{L}_{\mx, \mh} \left( \mathbf{X} \right) \right]} - g_w (\mx) \right) \mathrm{d} P_Y (w) \\
        &\leq \int_{\mathbb{M}} \left| d_{\mathbb{M}}^2(m_{\oplus}(\mx), w) - d_{\mathbb{M}}^2(\tilde{m}_{\mh, 0}(\mx), w) \right| \left| \frac{\mathbb{E}\left[ \mathcal{L}_{\mx, \mh} \left( \mathbf{X} \right) g_w (\mathbf{X}) \right]}{\mathbb{E}\left[ \mathcal{L}_{\mx, \mh} \left( \mathbf{X} \right) \right]} - g_w (\mx) \right| \mathrm{d} P_Y (w) \\
        & \leq 2 \mathrm{diam} (\mathbb{M})  d_{\mathbb{M}}(\tilde{m}_{\mh, 0}(\mathbf{x}), m_{\oplus}(\mathbf{x})) \sup_{y \in \mathbb{M}} \left| \frac{\mathbb{E}\left[ \mathcal{L}_{\mx, \mh} \left( \mathbf{X} \right) g_y (\mathbf{X}) \right]}{\mathbb{E}\left[ \mathcal{L}_{\mx, \mh} \left( \mathbf{X} \right) \right]} - g_y (\mathbf{x}) \right|.
    \end{split} \end{align}
    \Cref{lemma:2.2} and \Cref{lemma:3.2} imply that
    \begin{align} \label{tau0bound}
        \sup_{y \in \mathbb{M}} \big| \mathbb{E}\left[\mathcal{L}_{\mx, \mh} \left( \mathbf{X} \right) g_y (\mathbf{X}) \right] - g_y (\mathbf{x}) \mathbb{E}\left[\mathcal{L}_{\mx, \mh} \left( \mathbf{X} \right)\right] \big| = O\left(\rho(\mh)\| \mh \|_{2}^{2}\right).
    \end{align}
    Combining \eqref{eq:new7.55} and \eqref{tau0bound}, we get
    \begin{align} \begin{split} \label{eq:3.28}
        \sup_{y \in \mathbb{M}} \left| \frac{\mathbb{E}\left[\mathcal{L}_{\mx, \mh} \left( \mathbf{X} \right) g_y (\mathbf{X}) \right]}{\mathbb{E}\left[\mathcal{L}_{\mx, \mh} \left( \mathbf{X} \right) \right]} - g_y (\mathbf{x}) \right| = O\left(\| \mh \|_{2}^{2}\right).
    \end{split} \end{align}
    By \Cref{thm:3.1}, there exists a constant $N\in\mathbb{N}$ such that $d_{\mathbb{M}}(\tilde{m}_{\mh, 0}(\mathbf{x}), m_{\oplus}(\mathbf{x})) < \eta_{\oplus}$ whenever $n\geq N$, where $\eta_{\oplus}>0$ is the constant defined in the condition \ref{con:M2}.
    By the condition \ref{con:M2}, we get
    \begin{align}
        C_{\oplus} \cdot d_{\mathbb{M}}(\tilde{m}_{\mh, 0}(\mathbf{x}), m_{\oplus}(\mathbf{x}))^{\beta_{\oplus}} \leq M_{\oplus} (\mathbf{x}, \tilde{m}_{\mh, 0}(\mathbf{x})) - M_{\oplus} (\mathbf{x}, m_{\oplus}(\mathbf{x})) \label{eq:3.30}
    \end{align}
    whenever $n\geq N$, where $C_{\oplus} >0$ is the constant defined in the condition \ref{con:M2}. Combining \eqref{eq:3.27}, \eqref{eq:3.28} and \eqref{eq:3.30}, we get the desired result.
\end{proof}

In the proof of \Cref{thm:3.4} below, for any metric space $(T,d_T)$ and constant $\delta>0$, let $N_{[ ]} (\delta, T, d_T )$ and $N (\delta, T, d_T )$ denote the $\delta$-bracketing number and $\delta$-covering number of $(T, d_T)$, respectively.

\begin{lemma} \label{thm:3.4}
    Assume that the conditions \ref{con:L1}, \ref{con:B1}, \ref{con:D1}--\ref{con:D3} and \ref {con:M1}--\ref{con:M3} hold. Then, it holds that
    \begin{align*}
        d_{\mathbb{M}}(\hat{m}_{\mh, 0}(\mathbf{x}), \tilde{m}_{\mh, 0}(\mathbf{x})) ^ {\beta_{\oplus}-\alpha_{\mathbb{M}}} = O_{\mathbb{P}} \left( n^{-\frac{1}{2}} \rho^{-\frac{1}{2}} \left( \mh \right) \right).
    \end{align*}
\end{lemma}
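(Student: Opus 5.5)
This is an M-estimation rate argument in the style of Theorem 2 of \cite{Petersen and Muller (2019)} (equivalently Lemma B.10 of \cite{Im et al. (2025)}). The plan is to control the localized centred process $\hat M_{\mh,0}(\mx,y)-\tilde M_{\mh,0}(\mx,y)$ over small balls around $\tilde m_{\mh,0}(\mx)$, and then run a peeling argument using the curvature condition \ref{con:M2}.

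\textbf{Step 1: reduction to a ball.} By Lemma~\ref{thm:3.3}, with probability tending to one $d_{\mathbb{M}}(\hat m_{\mh,0}(\mx),\tilde m_{\mh,0}(\mx))<\eta_\oplus$. By Lemma~\ref{thm:3.1}, eventually every $y$ in that ball satisfies $d_{\mathbb{M}}(y,m_\oplus(\mx))<r_{\mathbb{M}}$, so \ref{con:M3} applies uniformly over balls centred at $\tilde m_{\mh,0}(\mx)$.

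\textbf{Step 2: decomposition of the process.} Write $\hat M_{\mh,0}=\hat T/\hat\mu_{\mh,0}$ and $\tilde M_{\mh,0}=\tilde T/\tilde\mu_{\mh,0}$, where $T(y)$ is the kernel-weighted average of $d_{\mathbb{M}}^2(Y,y)$. Set $D(y):=\hat M_{\mh,0}(\mx,y)-\tilde M_{\mh,0}(\mx,y)$. Then $D(y)-D(\tilde m_{\mh,0}(\mx))$ splits into two parts:
\begin{align*}
\tilde\mu_{\mh,0}^{-1}\Big[(\hat T-\tilde T)(y)-(\hat T-\tilde T)(\tilde m_{\mh,0}(\mx))\Big]
\end{align*}
plus a term of the form $(\hat\mu_{\mh,0}^{-1}-\tilde\mu_{\mh,0}^{-1})\big[\hat T(y)-\hat T(\tilde m_{\mh,0}(\mx))\big]$.

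The second part is handled without empirical process theory. Since $|d_{\mathbb{M}}^2(Y,y)-d_{\mathbb{M}}^2(Y,y')|\le 2\,\mathrm{diam}(\mathbb{M})\,d_{\mathbb{M}}(y,y')$, it is bounded by $O_{\mathbb{P}}(n^{-1/2}\rho^{-1/2}(\mh))\,\delta$ uniformly over the $\delta$-ball. This uses \eqref{eq:3.35}, \eqref{eq:new7.55}, and $\hat\mu_{\mh,0}\asymp\rho(\mh)$ in probability.

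\textbf{Step 3: maximal inequality for the first part.} Consider the function class
\begin{align*}
\mathcal{G}_\delta=\big\{\mathcal{L}_{\mx,\mh}(\mz)\,[d_{\mathbb{M}}^2(w,y)-d_{\mathbb{M}}^2(w,\tilde m_{\mh,0}(\mx))]:\ d_{\mathbb{M}}(y,\tilde m_{\mh,0}(\mx))<\delta\big\}.
\end{align*}
It has envelope $2\,\mathrm{diam}(\mathbb{M})\,\delta\,\mathcal{L}_{\mx,\mh}$, and by Lemma~\ref{lemma:new2.2} with $k=2$ this envelope has $L^2(P)$-norm $O(\delta\rho^{1/2}(\mh))$. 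Because the class is Lipschitz in $y$, its bracketing numbers satisfy $N_{[\,]}(2\epsilon\|G\|,\mathcal{G}_\delta,L^2)\le N(\epsilon\delta,B_{\mathbb{M}}(\tilde m_{\mh,0}(\mx),\delta),d_{\mathbb{M}})$. Theorem 2.14.2 of van der Vaart--Wellner combined with \ref{con:M3} then bounds the expected supremum by $O(n^{-1/2}\rho^{1/2}(\mh)\,\delta^{\alpha_{\mathbb{M}}})$. Dividing by $\tilde\mu_{\mh,0}\asymp\rho(\mh)$ gives
\begin{align*}
\mathbb{E}\sup_{d_{\mathbb{M}}(y,\tilde m_{\mh,0}(\mx))<\delta}\big|D(y)-D(\tilde m_{\mh,0}(\mx))\big|\lesssim \delta^{\alpha_{\mathbb{M}}}\,n^{-1/2}\rho^{-1/2}(\mh),
\end{align*}
up to the Step 2 term; since $\delta\le\delta^{\alpha_{\mathbb{M}}}$ for $\delta\le1$, that term is dominated on the relevant event.

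\textbf{Step 4: peeling.} Let $a_n=(n\rho(\mh))^{1/(2(\beta_\oplus-\alpha_{\mathbb{M}}))}$ and define shells $S_j=\{y:2^{j-1}<a_n d_{\mathbb{M}}(y,\tilde m_{\mh,0}(\mx))\le 2^j\}$. On $S_j$, condition \ref{con:M2} gives $\tilde M_{\mh,0}(\mx,y)-\tilde M_{\mh,0}(\mx,\tilde m_{\mh,0}(\mx))\ge C_\oplus(2^{j-1}/a_n)^{\beta_\oplus}$. Since $\hat m_{\mh,0}(\mx)$ minimizes $\hat M_{\mh,0}(\mx,\cdot)$, if it lies in $S_j$ the centred process must exceed this gap. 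Markov's inequality together with Step 3 bounds the probability of that event by a constant times
\begin{align*}
2^{j(\alpha_{\mathbb{M}}-\beta_\oplus)}\,a_n^{\beta_\oplus-\alpha_{\mathbb{M}}}\,n^{-1/2}\rho^{-1/2}(\mh)=O\big(2^{j(\alpha_{\mathbb{M}}-\beta_\oplus)}\big).
\end{align*}
Because $\beta_\oplus>\alpha_{\mathbb{M}}$, this is summable over $j>M$, and the tail vanishes as $M\to\infty$. Hence $a_n d_{\mathbb{M}}(\hat m_{\mh,0}(\mx),\tilde m_{\mh,0}(\mx))=O_{\mathbb{P}}(1)$. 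Raising to the power $\beta_\oplus-\alpha_{\mathbb{M}}$ gives the claim.

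The main obstacle is Step 3: making the bracketing and entropy bound uniform in $n$, given that the centre $\tilde m_{\mh,0}(\mx)$ moves with $\mh$. Condition \ref{con:M3} is stated as a supremum over centres near $m_\oplus(\mx)$, and Lemma~\ref{thm:3.1} keeps $\tilde m_{\mh,0}(\mx)$ in that neighbourhood, so this should handle it.
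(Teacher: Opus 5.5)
Your proposal is correct and follows essentially the paper's argument. It uses the same split of the increment of $\hat M_{\mh,0}(\mx,\cdot)-\tilde M_{\mh,0}(\mx,\cdot)$ into a term driven by $|\hat\mu_{\mh,0}(\mx)-\tilde\mu_{\mh,0}(\mx)|$ and a centred empirical process. It then applies the same bracketing bound via Theorems 2.7.11 and 2.14.2 of van der Vaart and Wellner, with envelope $2\delta\,\mathrm{diam}(\mathbb{M})\,\mathcal{L}_{\mx,\mh}$ and \ref{con:M3} made uniform through Lemma~\ref{thm:3.1}, and finishes with the same peeling argument using \ref{con:M2} and Markov's inequality.

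The only cosmetic difference is in your Step~2. The paper notes that the numerator increment is at most $2\,\mathrm{diam}(\mathbb{M})\,\delta\,\hat\mu_{\mh,0}(\mx)$, so that term collapses deterministically to $2\,\mathrm{diam}(\mathbb{M})\,\delta\,|\hat\mu_{\mh,0}(\mx)-\tilde\mu_{\mh,0}(\mx)|/\tilde\mu_{\mh,0}(\mx)$. It is then controlled directly in expectation, without needing $\hat\mu_{\mh,0}(\mx)\asymp\rho(\mh)$ in probability or an auxiliary high-probability event.
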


\begin{proof} [Proof of Theorem \ref{thm:3.4}]
     We define functions $\hat{T}_{\mh, 0} : \mathbb{T}^{d} \times \mathbb{M} \to \mathbb{R}$, $U_{\mh, 0} : \mathbb{T}^{d} \times \mathbb{M} \times \mathbb{M} \to \mathbb{R}$ and  $\hat{S}_{\mh,0} : \mathbb{T}^{d} \times \mathbb{M} \to \mathbb{R}$ as
    \begin{align*}
        \hat{T}_{\mh, 0}(\mx, y) &:= \hat{M}_{\mh, 0}(\mx, y) - \tilde{M}_{\mh, 0}(\mx, y), \\
        \quad U_{\mh, 0}(\mz, w , y ) &:=  D_{\mh, 0}(\mx, w , y ) \mathcal{L}_{\mx, \mh} \left( \mz \right), \\
        \hat{S}_{\mh,0}(\mx,y) &:= n^{-1} \sum_{i=1}^{n} U_{\mh, 0}(\mathbf{X}^{(i)}, Y^{(i)}, y) - \mathbb{E} \left[ U_{\mh, 0}(\mathbf{X}, Y, y) \right],
    \end{align*}
    where $D_{\mh, 0}(\mx, w , y ) := d_{\mathbb{M}}^2(w , y ) - d_{\mathbb{M}}^2(w ,  \tilde{m}_{\mh, 0}(\mx))$. Note that
    \begin{align} \begin{split} \label{eq:new3.38}
        & \mathbb{E} \left[ \mathcal{L}_{\mx, \mh} \left( \mathbf{X} \right)  \right] \left| \hat{T}_{\mh, 0}(\mathbf{x}, y) - \hat{T}_{\mh, 0}(\mathbf{x}, \tilde{m}_{\mh, 0}(\mathbf{x})) \right| \\
        & = \mathbb{E} \left[ \mathcal{L}_{\mx, \mh} \left( \mathbf{X} \right)  \right] \left| \frac{n^{-1} \sum_{i=1}^{n} U_{\mh, 0}(\mathbf{X}^{(i)}, Y^{(i)}, y) }{n^{-1} \sum_{i=1}^{n} \mathcal{L}_{\mx, \mh} \left( \mathbf{X}^{(i)} \right)} - \frac{\mathbb{E} \left[ U_{\mh, 0}(\mathbf{X}, Y, y) \right]}{\mathbb{E} \left[ \mathcal{L}_{\mx, \mh} \left( \mathbf{X} \right)  \right]} \right|  \\
        & \leq \mathbb{E} \left[ \mathcal{L}_{\mx, \mh} \left( \mathbf{X} \right)  \right] \left| \frac{n^{-1} \sum_{i=1}^{n} U_{\mh, 0}(\mathbf{X}^{(i)}, Y^{(i)}, y) }{n^{-1} \sum_{i=1}^{n} \mathcal{L}_{\mx, \mh} \left( \mathbf{X}^{(i)} \right)} - \frac{n^{-1} \sum_{i=1}^{n} U_{\mh, 0}(\mathbf{X}^{(i)}, Y^{(i)}, y)}{\mathbb{E} \left[ \mathcal{L}_{\mx, \mh} \left( \mathbf{X} \right)  \right]} \right|  \\
        & \quad + \bigg| n^{-1} \sum_{i=1}^{n} U_{\mh, 0}(\mathbf{X}^{(i)}, Y^{(i)}, y) - \mathbb{E} \left[ U_{\mh, 0}(\mathbf{X}, Y, y) \right] \bigg|  \\
        & = \bigg| n^{-1} \sum_{i=1}^{n} U_{\mh, 0}(\mathbf{X}^{(i)}, Y^{(i)}, y) \bigg| \cdot \left| \frac{\mathbb{E} \left[ \mathcal{L}_{\mx, \mh} \left( \mathbf{X} \right)  \right]}{n^{-1} \sum_{i=1}^{n} \mathcal{L}_{\mx, \mh} \left( \mathbf{X}^{(i)} \right)} - 1 \right| + \left| \hat{S}_{\mh,0}(\mx,y) \right| \\
        & \leq \frac{\sum_{i=1}^{n} \left| U_{\mh, 0}(\mathbf{X}^{(i)}, Y^{(i)}, y) \right|}{ \sum_{i=1}^{n} \mathcal{L}_{\mx, \mh} \left( \mathbf{X}^{(i)} \right)} \cdot \bigg| \mathbb{E} \left[ \mathcal{L}_{\mx, \mh} \left( \mathbf{X} \right)  \right] - n^{-1} \sum_{i=1}^{n} \mathcal{L}_{\mx, \mh} \left( \mathbf{X}^{(i)} \right) \bigg| + \left| \hat{S}_{\mh,0}(\mx,y) \right| \\
        &\leq 2 \mathrm{diam}(\mathbb{M}) \cdot d_{\mathbb{M}}(y, \tilde{m}_{\mh, 0}(\mathbf{x}))  \cdot \bigg| \mathbb{E} \left[ \mathcal{L}_{\mx, \mh} \left( \mathbf{X} \right)  \right] - n^{-1} \sum_{i=1}^{n} \mathcal{L}_{\mx, \mh} \left( \mathbf{X}^{(i)} \right) \bigg| + \left| \hat{S}_{\mh,0}(\mx,y) \right|,
    \end{split} \end{align}
    where the last inequality follows from that
    \begin{align*}
        \left| D_{\mh, 0}(\mx, w , y ) \right| &\leq 2 \mathrm{diam}(\mathbb{M}) d_{\mathbb{M}}(y, \tilde{m}_{\mh, 0}(\mathbf{x})), \\
        \left| U_{\mh, 0}(\mz, w, y) \right| 
        &\leq  2 \mathrm{diam}(\mathbb{M}) d_{\mathbb{M}}(y, \tilde{m}_{\mh, 0}(\mathbf{x})) \mathcal{L}_{\mx, \mh} \left( \mz \right).
    \end{align*}
    
    Regarding the first term in \eqref{eq:new3.38}, we get
    \begin{align*}
         \mathbb{E} \left[ \bigg| n^{-1} \sum_{i=1}^{n} \mathcal{L}_{\mx, \mh} \left( \mathbf{X}^{(i)} \right) - \mathbb{E} \left[ \mathcal{L}_{\mx, \mh} \left( \mathbf{X} \right)  \right] \bigg| \right] \leq \sqrt{ \mathbb{V} \mathrm{ar} \left[ n^{-1} \sum_{i=1}^{n} \mathcal{L}_{\mx, \mh} \left( \mathbf{X}^{(i)} \right) \right] } = O\left(n^{-\frac{1}{2}} \rho^{\frac{1}{2}} \left( \mh \right)\right)
    \end{align*}
    by \eqref{eq:3.35}.
    Hence, there exist constants $L_1 > 0$ and $N_1 \in \mathbb{N}$ such that
    \begin{align} \label{eq:3.39}
        \mathbb{E} \left[ \bigg| n^{-1} \sum_{i=1}^{n} \mathcal{L}_{\mx, \mh} \left( \mathbf{X}^{(i)} \right) - \mathbb{E} \left[ \mathcal{L}_{\mx, \mh} \left( \mathbf{X} \right)  \right] \bigg| \right] \leq L_1 n^{-\frac{1}{2}} \rho^{\frac{1}{2}} \left( \mh \right)
    \end{align}
    whenever $n \geq N_1$. 
    
    Regarding the second term in \eqref{eq:new3.38}, we apply Theorems 2.7.11 and 2.14.2 of van der Vaart and Wellner (1996). To do so, we define a set $\mathcal{H}_{\mh, \delta}$ and a function $H_{\mh, \delta} : \mathbb{T}^{d} \times \mathbb{M} \to \mathbb{R}$ as
    \begin{gather*}
        \mathcal{H}_{\mh, \delta} := \{ U_{\mh, 0}(\cdot, \cdot, y) : y \in B_{\mathbb{M}}\left(\tilde{m}_{\mh, 0} (\mathbf{x}), \delta\right) \}, \quad H_{\mh, \delta} (\mz, w) := 2 \delta \mathrm{diam} (\mathbb{M})\mathcal{L}_{\mx, \mh} \left( \mz \right)
    \end{gather*}
    for $\delta > 0$. By Theorem 2.14.2 of van der Vaart and Wellner (1996), there exists a constant $C>0$ such that
    \begin{align} \begin{split} \label{s0hatbound_step1}
        &\mathbb{E} \left[ \sup_{y \in B_{\mathbb{M}}\left(\tilde{m}_{\mh, 0} (\mathbf{x}), \delta\right)} \left| \hat{S}_{\mh,0}(\mx,y) \right| \right]  \\
        & \leq \frac{C}{\sqrt{n}} \cdot \left\| H_{\mh, \delta} \right\|_{L_2 (P)} \cdot \int_{0}^{1} \sqrt{1+\log N_{[ ]} \left( \left\| H_{\mh, \delta} \right\|_{L_2 (P)} \epsilon, \mathcal{H}_{\mh, \delta}, d_{\mathbb{M}} \right)} \, \mathrm{d} \epsilon,
    \end{split} \end{align}
    where $\left\| H_{\mh, \delta} \right\|_{L_2 (P)} := \left( \mathbb{E}\left[ H_{\mh, \delta}^2 (\mathbf{X}, Y) \right] \right)^{1/2} = 2 \delta \mathrm{diam} (\mathbb{M}) \left( \mathbb{E}\left[ \mathcal{L}_{\mx, \mh}^2 \left( \mathbf{X} \right) \right] \right)^{1/2}$.
    Note that
    \begin{align} \label{eq:temp3.48}
        \left|  U_{\mh, 0}(\mz, w, y_1) -  U_{\mh, 0}(\mz, w, y_2) \right|
        \leq d_{\mathbb{M}}(y_1, y_2) \frac{H_{\mh, \delta} (\mz, w)}{\delta}
    \end{align}
    for any $y_1, y_2 \in B_{\mathbb{M}}\left(\tilde{m}_{\mh, 0} (\mathbf{x}), \delta\right)$. Combining Theorem 2.7.11 of van der Vaart and Wellner (1996), \eqref{s0hatbound_step1} and \eqref{eq:temp3.48}, we get
    \begin{align} \begin{split} \label{s0hatbound_step2}
        &\mathbb{E} \left[ \sup_{y \in B_{\mathbb{M}}\left(\tilde{m}_{\mh, 0} (\mathbf{x}), \delta\right)} \left| \hat{S}_{\mh,0}(\mx,y) \right| \right]  \\
        & \leq \frac{C}{\sqrt{n}} \cdot \left\| H_{\mh, \delta} \right\|_{L_2 (P)} \cdot \int_{0}^{1} \sqrt{1+\log N \left( \frac{\epsilon}{2}, B_{\mathbb{M}}\left(\tilde{m}_{\mh, 0} (\mathbf{x}), \delta\right), d_{\mathbb{M}} \right)} \, \mathrm{d} \epsilon \\
        & = \frac{2C}{\sqrt{n}} \cdot \left\| H_{\mh, \delta} \right\|_{L_2 (P)} \cdot \int_{0}^{\frac{1}{2}} \sqrt{1+\log N \left( \epsilon, B_{\mathbb{M}}\left(\tilde{m}_{\mh, 0} (\mathbf{x}), \delta\right), d_{\mathbb{M}} \right)} \, \mathrm{d} \epsilon.
    \end{split} \end{align}
    The condition \ref{con:M3} implies that there exist constants $L_2 >0$ and $\delta_{\mathbb{M}}\in(0,1)$ such that
    \begin{align} \label{eq:3.41}
        \sup_{y \in \mathbb{M}:d_{\mathbb{M}}(y,m_{\oplus}(\mx))<r_{\mathbb{M}}} \int_{0}^{1/2} \sqrt{1+\log N ( \delta \epsilon, B_{\mathbb{M}}(y, \delta), d_{\mathbb{M}} )} \, \mathrm{d} \epsilon \leq L_2  \delta ^{\alpha_{\mathbb{M}}- 1}
    \end{align}
    whenever $\delta \in (0, \delta_{\mathbb{M}}]$, where $r_{\mathbb{M}}>0$ and $\alpha_{\mathbb{M}} \in (0, 1]$ are the constants defined in the condition \ref{con:M3}. Also, \Cref{thm:3.1} implies that there exists a constant $N_2\in\mathbb{N}$ such that
    \begin{align} \label{eq:3.42}
        d_{\mathbb{M}}(\tilde{m}_{\mh, 0} (\mathbf{x}),m_{\oplus}(\mx)) < r_{\mathbb{M}}
    \end{align}
    whenever $n \geq N_2$. Combining \eqref{eq:3.41} and \eqref{eq:3.42}, we get
    \begin{align} \label{entropyboundmh0}
        \int_{0}^{1/2} \sqrt{1+\log N ( \delta \epsilon, B_{\mathbb{M}}\left( \tilde{m}_{\mh, 0} (\mathbf{x}), \delta \right), d_{\mathbb{M}} )} \, \mathrm{d} \epsilon \leq L_2  \delta ^{\alpha_{\mathbb{M}}- 1}
    \end{align}
    whenever $\delta \in (0, \delta_{\mathbb{M}}]$ and $n \geq N_2$. \Cref{lemma:1.4}, \Cref{lemma:new2.2} and \eqref{eq:new7.55} imply that there exist constants $L_3 > 0$ and $N_3 \in \mathbb{N}$ such that
    \begin{align} \label{eq:temp3.47}
        \mathbb{E}\left[ \mathcal{L}_{\mx, \mh}^2 \left( \mathbf{X} \right) \right] \leq  L_3 \rho ( \mh ), \quad \left( \mathbb{E} \left[ \mathcal{L}_{\mx, \mh} \left( \mathbf{X} \right)  \right] \right)^{-1} \leq L_3 \rho^{-1} \left( \mh \right)
    \end{align}
    whenever $n \geq N_3$. Combining \eqref{s0hatbound_step2}, \eqref{entropyboundmh0} and \eqref{eq:temp3.47}, we get
    \begin{align} \begin{split} \label{s0hatbound_step3}
        \mathbb{E} \left[ \sup_{y \in B_{\mathbb{M}}\left(\tilde{m}_{\mh, 0} (\mathbf{x}), \delta\right)} \left| \hat{S}_{\mh,0}(\mx,y) \right| \right] & \leq \frac{2C}{\sqrt{n}} \cdot \left( 2\delta \mathrm{diam} (\mathbb{M}) L_3^{\frac{1}{2}} \rho^{\frac{1}{2}} \left( \mh \right)\right) \cdot \left( L_2 \delta^{\alpha_{\mathbb{M}}- 1} \right) \\
        & =  4C L_2 L_3^{\frac{1}{2}} \mathrm{diam} (\mathbb{M}) \delta^{\alpha_{\mathbb{M}}} n^{-\frac{1}{2}}\rho^{\frac{1}{2}} \left( \mh \right)
    \end{split} \end{align}
    whenever $\delta \in (0, \delta_{\mathbb{M}}]$ and $n \geq \max \{N_2 , N_3\}$. 
    
    Let $N_4 := \max \{ N_1, N_2, N_3 \}$ and $L_4 := 2 L_3 \mathrm{diam} (\mathbb{M})( L_1 + 2C L_2 L_3^{1/2} )$. Combining \eqref{eq:new3.38}, \eqref{eq:3.39}, \eqref{eq:temp3.47} and \eqref{s0hatbound_step3}, we get
    \begin{align} \begin{split} \label{eq:3.44}
        & \mathbb{E} \left[ \sup_{y \in B_{\mathbb{M}}\left(\tilde{m}_{\mh, 0} (\mathbf{x}), \delta\right)} \left| \hat{T}_{\mh, 0}(\mathbf{x}, y) - \hat{T}_{\mh, 0}(\mathbf{x}, \tilde{m}_{\mh, 0}(\mathbf{x})) \right| \right] \\
        & \leq \left( \mathbb{E} \left[ \mathcal{L}_{\mx, \mh} \left( \mathbf{X} \right)  \right] \right)^{-1} \cdot \left( 2 \mathrm{diam}(\mathbb{M}) \right) \cdot \delta \cdot \bigg| \mathbb{E} \left[ \mathcal{L}_{\mx, \mh} \left( \mathbf{X} \right)  \right] - n^{-1} \sum_{i=1}^{n} \mathcal{L}_{\mx, \mh} \left( \mathbf{X}^{(i)} \right) \bigg| \\
        & \quad + \left( \mathbb{E} \left[ \mathcal{L}_{\mx, \mh} \left( \mathbf{X} \right)  \right] \right)^{-1} \cdot \mathbb{E} \left[ \sup_{y \in B_{\mathbb{M}}\left(\tilde{m}_{\mh, 0} (\mathbf{x}), \delta\right)} \left| \hat{S}_{\mh,0}(\mx,y) \right| \right] \\
        & \leq 2 L_1 L_3 \mathrm{diam} (\mathbb{M}) \delta n^{-\frac{1}{2}} \rho^{-\frac{1}{2}} \left( \mh \right) + 4C L_2 L_3^{\frac{3}{2}} \mathrm{diam} (\mathbb{M}) \delta^{\alpha_{\mathbb{M}}}  n^{-\frac{1}{2}}\rho^{-\frac{1}{2}} \left( \mh \right) \\
        & \leq L_4 \delta^{\alpha_{\mathbb{M}}} n^{-\frac{1}{2}}\rho^{-\frac{1}{2}} \left( \mh \right)
    \end{split} \end{align}
    whenever $\delta \in (0, \delta_{\mathbb{M}}]$ and $n \geq N_4$.
    
    Let $\epsilon>0$ be given. Recall the constants $\eta_{\oplus} >0$ and $C_{\oplus} >0$ defined in the condition \ref{con:M2}. Since $\sum_{k=1}^{\infty} 4^{-k(\beta_{\oplus} - \alpha_{\mathbb{M}}) / \beta_{\oplus} } < \infty$, there exists a constant $M_{\epsilon} \in \mathbb{N}$, depending on $\epsilon$, such that
    \begin{align} \label{eq:3.45}
        \frac{L_4}{C_{\oplus}}\sum_{k=M_{\epsilon}+1}^{\infty} 4^{-\frac{k(\beta_{\oplus} - \alpha_{\mathbb{M}})}{\beta_{\oplus}} }  < \frac{\epsilon}{8}.
    \end{align}
    By \Cref{thm:3.3}, there exists a constant $N_{\epsilon} \in \mathbb{N}$, depending on $\epsilon$, such that
    \begin{align} \label{eq:3.46}
        \mathbb{P} \left( d_{\mathbb{M}}(\hat{m}_{\mh, 0}(\mathbf{x}), \tilde{m}_{\mh, 0}(\mathbf{x})) \geq \eta_{\oplus} \right) < \frac{\epsilon}{2}
    \end{align}
    whenever $n \geq N_{\epsilon}$.
    Let $t_n := \left(n\rho(\mh)\right)^{\beta_{\oplus} / (4 \beta_{\oplus} - 4 \alpha_{\mathbb{M}})}$. We define events $E_{n, \epsilon}$, $A_{n, k}$ and $B_n$ as
    \begin{align*}
        E_{n, \epsilon} :=&  \left\{ d_{\mathbb{M}}(\hat{m}_{\mh, 0}(\mathbf{x}), \tilde{m}_{\mh, 0}(\mathbf{x})) > 2^{\frac{2M_{\epsilon}}{\beta_{\oplus}}} \left(n\rho(\mh)\right)^{\frac{-1}{2\beta_{\oplus} - 2\alpha_{\mathbb{M}} }}\right\} \\
        =&  \left\{ t_n d_{\mathbb{M}}(\hat{m}_{\mh, 0}(\mathbf{x}), \tilde{m}_{\mh, 0}(\mathbf{x}))^{\frac{\beta_{\oplus}}{2}} > 2^{M_{\epsilon}} \right\}, \\
        A_{n, k} :=&  \left\{ 2 ^ {k-1} < t_n d_{\mathbb{M}}(\hat{m}_{\mh, 0}(\mathbf{x}), \tilde{m}_{\mh, 0}(\mathbf{x}))^{\frac{\beta_{\oplus}}{2}} \leq 2^{k} \right\}, \\ 
        B_n :=&  \left\{ d_{\mathbb{M}}(\hat{m}_{\mh, 0}(\mathbf{x}), \tilde{m}_{\mh, 0}(\mathbf{x})) < \eta_{\oplus} \right\}.
    \end{align*}
    Since $E_{n, \epsilon} = \cup_{k=M_{\epsilon}+1}^{\infty} A_{n,k}$, it holds that
    \begin{align} \label{eq:3.47}
        \mathbb{P} (E_{n, \epsilon}) \leq
        \sum_{k=M_{\epsilon}+1}^{\infty} \mathbb{P} (A_{n,k} \cap B_n) + \mathbb{P} (B_n^c).
    \end{align}
    Let $r_{n, k} := \min \{ (4^k t_n^{-2})^{1/\beta_{\oplus}}, \delta_{\mathbb{M}} \}$.
    Since $\hat{M}_{\mh, 0} (\mathbf{x}, \hat{m}_{\mh, 0}(\mathbf{x}))\le \hat{M}_{\mh, 0} (\mathbf{x}, \tilde{m}_{\mh, 0} (\mathbf{x}) )$, it holds that
    \begin{align} \begin{split} \label{eq:3.49}
          \sup_{y \in B_{\mathbb{M}}(\tilde{m}_{\mh, 0}(\mathbf{x}), r_{n, k})} \left| \hat{T}_{\mh, 0}(\mathbf{x}, y) - \hat{T}_{\mh, 0}(\mathbf{x}, \tilde{m}_{\mh, 0}(\mathbf{x})) \right| 
          &\ge - \hat{T}_{\mh, 0}(\mathbf{x}, \hat{m}_{\mh, 0}(\mathbf{x})) + \hat{T}_{\mh, 0}(\mathbf{x}, \tilde{m}_{\mh, 0}(\mathbf{x})) \\
          &\ge \tilde{M}_{\mh, 0} (\mathbf{x}, \hat{m}_{\mh, 0}(\mathbf{x})) - \tilde{M}_{\mh, 0} (\mathbf{x}, \tilde{m}_{\mh, 0} (\mathbf{x}) )  \\
          &\ge C_{\oplus} d_{\mathbb{M}}(\hat{m}_{\mh, 0}(\mathbf{x}), \tilde{m}_{\mh, 0} (\mathbf{x}) )^{\beta_{\oplus}} \\
          &\ge C_{\oplus} \frac{4^{k-1}}{t_n^2}
    \end{split} \end{align} 
    on the event $A_{n,k} \cap B_n$. Using a version of \eqref{eq:3.44} with $\delta$ being replaced by $r_{n, k}$ and applying Markov's inequality to the probability of \eqref{eq:3.49}, we get
    \begin{align} \begin{split} \label{eq:3.48}
        \mathbb{P} (A_{n,k} \cap B_n) & \leq  \mathbb{P} \left( \sup_{y \in B_{\mathbb{M}}(\tilde{m}_{\mh, 0}(\mathbf{x}), r_{n, k})} \left| \hat{T}_{\mh, 0}(\mathbf{x}, y) - \hat{T}_{\mh, 0}(\mathbf{x}, \tilde{m}_{\mh, 0}(\mathbf{x})) \right| \geq C_{\oplus} \frac{4^{k-1}}{t_n^2}\right) \\
        & \leq \frac{t_n^2}{C_{\oplus} 4^{k-1}} \mathbb{E} \left[ \sup_{y \in B_{\mathbb{M}}(\tilde{m}_{\mh, 0}(\mathbf{x}), r_{n, k})} \left| \hat{T}_{\mh, 0}(\mathbf{x}, y) - \hat{T}_{\mh, 0}(\mathbf{x}, \tilde{m}_{\mh, 0}(\mathbf{x})) \right| \right] \\
        &\leq \frac{L_4 t_n^2}{C_{\oplus} 4^{k-1}} \left( r_{n, k} \right)^{\alpha_{\mathbb{M}}} n^{-\frac{1}{2}}\rho^{-\frac{1}{2}} \left( \mh \right) \\
        &\le \frac{4L_4}{C_{\oplus}} 4^{-\frac{k(\beta_{\oplus} - \alpha_{\mathbb{M}})}{\beta_{\oplus}}}
    \end{split} \end{align}
    whenever $n \geq N_4$. Combining \eqref{eq:3.45}, \eqref{eq:3.46}, \eqref{eq:3.47} and \eqref{eq:3.48}, we get
    \begin{align*}
        \mathbb{P} ( E_{n, \epsilon} ) \leq \frac{4L_4}{C_{\oplus}} \sum_{k=M_{\epsilon}+1}^{\infty} 4^{-\frac{k(\beta_{\oplus} - \alpha_{\mathbb{M}})}{\beta_{\oplus}} } + \mathbb{P} (B_n^c) < \epsilon
    \end{align*}
    whenever $n \geq \max \{ N_4 , N_{\epsilon} \}$.
    Since $\epsilon>0$ is arbitrary, we get the desired result.
\end{proof}

\subsection{Proof of Theorem \ref{thm:main4.3} for  $\hat{m}_{\mh, 0}(\mathbf{x})$}

The theorem follows from \Cref{thm:3.2} and \Cref{thm:3.4}.

\section{Proofs of \Cref{thm:consistency} and \Cref{thm:main4.3} for  $\hat{m}_{\mh, 1}(\mathbf{x})$} \label{proofs:local linear}

\setcounter{equation}{0}
\setcounter{subsection}{0}

First, we collect some notations. We define
\begin{align*}
     \tilde{\tau}_{\mh, 0}(\mathbf{x}, y) &:= \mathbb{E} \left[ \mathcal{L}_{\mx, \mh} ( \mathbf{X} ) g_y (\mathbf{X}) \right],\\ 
     \bm{\tilde{\tau}}_{\mh, 1}(\mathbf{x}, y) &:= \mathbb{E} \left[ \mathcal{L}_{\mx, \mh} ( \mathbf{X} ) \bm{\Phi}_{\mx}^{-1} (\mathbf{X}) g_y (\mathbf{X}) \right], \\
     \tilde{\nu}_{\mh, 0}(\mathbf{x}, y) &:= \mathbb{E} \left[ \mathcal{L}_{\mx, \mh} ( \mathbf{X} )  d_{\mathbb{M}}^2(Y, y) \right],\\  
     \bm{\tilde{\nu}}_{\mh, 1}(\mathbf{x}, y) &:= \mathbb{E} \left[ \mathcal{L}_{\mx, \mh} ( \mathbf{X} )  \bm{\Phi}_{\mx}^{-1} (\mathbf{X}) d_{\mathbb{M}}^2(Y, y) \right], \\
    \hat{\nu}_{\mh, 0}(\mathbf{x}, y) &:= n^{-1} \sum_{i=1}^{n} \mathcal{L}_{\mx, \mh} ( \mathbf{X}^{(i)} ) d_{\mathbb{M}}^2(Y^{(i)}, y),\\ 
    \bm{\hat{\nu}}_{\mh, 1}(\mathbf{x}, y) &:= n^{-1} \sum_{i=1}^{n} \mathcal{L}_{\mx, \mh} ( \mathbf{X}^{(i)} ) \bm{\Phi}_{\mx}^{-1} (\mathbf{X}^{(i)}) d_{\mathbb{M}}^2(Y^{(i)}, y).
\end{align*}
Then, $\tilde{M}_{\mh,1}$ and $\hat{M}_{\mh,1}$ can be written as
\begin{align} \begin{gathered} \label{eq:4.80}
    \tilde{M}_{\mh,1} (\mathbf{x}, y) = \frac{1}{\tilde{\sigma}_{\mh}(\mathbf{x})} ( \tilde{\nu}_{\mh, 0}(\mathbf{x}, y) - \bm{\tilde{\mu}}_{\mh, 1}(\mathbf{x})\tran\bm{\tilde{\mu}}_{\mh, 2}(\mathbf{x})^{-1} \bm{\tilde{\nu}}_{\mh, 1}(\mathbf{x}, y)  ),  \\
    \hat{M}_{\mh,1} (\mathbf{x}, y) = \frac{1}{\hat{\sigma}_{\mh}(\mathbf{x})} ( \hat{\nu}_{\mh, 0}(\mathbf{x}, y) - \bm{\hat{\mu}}_{\mh, 1}(\mathbf{x})\tran\bm{\hat{\mu}}_{\mh, 2}(\mathbf{x})^{-1} \bm{\hat{\nu}}_{\mh, 1}(\mathbf{x}, y)  ),
\end{gathered} \end{align}
respectively. We write
\begin{align*}
    a_{j, k}(L) := \int_0^{\infty} L^{k}(r^2) r^{j} \mathrm{d}r
\end{align*}
for $j \in \mathbb{N}_{\geq 0}$ and $k\in\{1,2\}$. We define the diagonal matrices
\begin{align*}
    \mathbf{\Lambda}_{\mh} = \begin{bmatrix}
        h_1 & \cdots & 0 \\
        \vdots & \ddots & \vdots \\
        0 & \cdots & h_d
    \end{bmatrix}, \quad \mathbf{C}_{\mh} = \begin{bmatrix}
        c_{\mh, 2\mathbf{e}_{1}, 1}(L) & \cdots & 0 \\
        \vdots & \ddots & \vdots \\
        0 & \cdots & c_{\mh, 2\mathbf{e}_{d}, 1}(L)
    \end{bmatrix}.
\end{align*}

\subsection{Lemmas for proof of \Cref{thm:consistency}}

We first prove that $\bm{\tilde{\mu}}_{\mh, 2}(\mathbf{x})$ is invertible for sufficiently large $n$ and that $\bm{\hat{\mu}}_{\mh, 2}(\mathbf{x})$ is invertible with probability tending to one. 

\begin{lemma} \label{lemma:new4.4}
    Assume that the conditions \ref{con:L1} and \ref{con:D2} hold and that $\lim_{n\rightarrow\infty} \| \mh \|_2=0$. Then, it holds that
    \begin{align} \begin{gathered}\label{eq:new4.38}
          \left\| \mathbf{\Lambda}_{\mh}^{-1} \bm{\tilde{\mu}}_{\mh, 1}(\mathbf{x}) \right\|_2 = o\left(  \rho ( \mh ) \right), \quad \left\| \mathbf{\Lambda}_{\mh}^{-1} \bm{\tilde{\mu}}_{\mh, 2}(\mathbf{x}) \mathbf{\Lambda}_{\mh}^{-1} - \mathbf{\Lambda}_{\mh}^{-2} \mathbf{C}_{\mh} f(\mx) \right\|_2 = o\left(\rho ( \mh )\right).
    \end{gathered} \end{align}
    Moreover, $\mathbf{C}_{\mh}$ is invertible for sufficiently large $n$ and it holds that
    \begin{align} \label{chinvertible}
        \left\| \frac{1}{c_{\mh, \bm{0}_d, 1}(L)}\mathbf{\Lambda}_{\mh}^{-2} \mathbf{C}_{\mh} - \frac{2a_{2, 1}(L)}{a_{0, 1}(L)}\mathbf{I}_d \right\|_2 = o\left(1\right), \quad \left\| c_{\mh, \bm{0}_d, 1}(L)\mathbf{\Lambda}_{\mh}^{2} \mathbf{C}_{\mh}^{-1} - \frac{a_{0, 1}(L)}{2a_{2, 1}(L)}\mathbf{I}_d \right\|_2 = o\left(1\right).
    \end{align}
    If the condition \ref{con:D1} further holds, then $\bm{\tilde{\mu}}_{\mh, 2}(\mathbf{x})$ is invertible for sufficiently large $n$ and it holds that
    \begin{align} \label{eq:new4.39}
    \begin{split}
        \left\| \mathbf{\Lambda}_{\mh} \bm{\tilde{\mu}}_{\mh, 2}(\mathbf{x})^{-1} \mathbf{\Lambda}_{\mh} - \frac{1}{f(\mathbf{x})} \mathbf{\Lambda}_{\mh}^{2} \mathbf{C}_{\mh}^{-1} \right\|_2 &= o\left(  \rho^{-1} \left( \mh \right) \right),  \\
        \left\| c_{\mh, \bm{0}_d, 1}(L)\mathbf{\Lambda}_{\mh} \bm{\tilde{\mu}}_{\mh, 2}(\mathbf{x})^{-1} \mathbf{\Lambda}_{\mh} - \frac{a_{0, 1}(L)}{2a_{2, 1}(L) f(\mx)}\mathbf{I}_d \right\|_2 &= o\left(  1 \right).
    \end{split}
    \end{align}
\end{lemma}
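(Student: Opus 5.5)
Following \Cref{lemma:new2.2}, we change variables via \Cref{lemma:1.2} and \eqref{eq:2.7} and write the population moments as integrals over $[-\pi,\pi)^d$:
\[
\bm{\tilde{\mu}}_{\mh,1}(\mx)=\int_{[-\pi,\pi)^d}\bm{L}_{\mh}(\bm\theta)\bm\theta f(\bm{\Phi}_{\mx}(\bm\theta))\,\mathrm{d}\bm\theta,
\qquad
\bm{\tilde{\mu}}_{\mh,2}(\mx)=\int_{[-\pi,\pi)^d}\bm{L}_{\mh}(\bm\theta)\bm\theta\bm\theta\tran f(\bm{\Phi}_{\mx}(\bm\theta))\,\mathrm{d}\bm\theta .
\]
By \eqref{symmetricproperty}, we have $\int\bm{L}_{\mh}\bm\theta\,\mathrm{d}\bm\theta=\bm 0_d$ and $\int\bm{L}_{\mh}\bm\theta\bm\theta\tran\,\mathrm{d}\bm\theta=\mathbf{C}_{\mh}$, because the off-diagonal entries contain odd powers. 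Hence
\[
\mathbf{\Lambda}_{\mh}^{-1}\bm{\tilde{\mu}}_{\mh,1}=\int\bm{L}_{\mh}\mathbf{\Lambda}_{\mh}^{-1}\bm\theta\,(f(\bm{\Phi}_{\mx}(\bm\theta))-f(\mx))\,\mathrm{d}\bm\theta .
\]
The $(\ell,k)$ entry of $\mathbf{\Lambda}_{\mh}^{-1}\bm{\tilde{\mu}}_{\mh,2}\mathbf{\Lambda}_{\mh}^{-1}-\mathbf{\Lambda}_{\mh}^{-2}\mathbf{C}_{\mh}f(\mx)$ equals $\int\bm{L}_{\mh}(\theta_\ell/h_\ell)(\theta_k/h_k)(f\circ\bm{\Phi}_{\mx}-f(\mx))\,\mathrm{d}\bm\theta$.

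Each entry is then bounded by the $\epsilon$-splitting of \Cref{lemma:new2.2}, using the sets $E_0,\dots,E_d$. On $E_0$ the integrand is at most $\epsilon\,\bm{L}_{\mh}|\theta_\ell\theta_k|/(h_\ell h_k)$. By \Cref{lemma:1.4}, applied with $|\theta|^{j}$ (the product over coordinates is handled the same way; for mixed entries we use Cauchy--Schwarz or $|\theta_\ell\theta_k|\le(\theta_\ell^2+\theta_k^2)/2$), this integrates to $\epsilon\,O(\rho(\mh))$. On $E_m$ we bound $|f\circ\bm{\Phi}_{\mx}-f(\mx)|$ by $2\sup f$ and factor the product integral as in \eqref{eq:new7.42}. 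The $m$-th factor is $\int_{\delta_\epsilon/\sqrt d}^{\pi}L((1-\cos\theta)/h_m^2)\,\theta^{j}\,\mathrm{d}\theta$ with $j\le2$, which is $o(h_m^{j+1})$ by the tail version of Lemma B.2 of Im et al.\ (2025), used as in \eqref{eq:new7.44}. The remaining factors are $O(\prod_{\ell'\ne m}h_{\ell'}^{j_{\ell'}+1})$. After dividing by the $h$'s the contribution is $o(\rho(\mh))$. Since $\epsilon$ is arbitrary, \eqref{eq:new4.38} follows.

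For \eqref{chinvertible}, the matrix $\mathbf{C}_{\mh}$ is diagonal with entries $c_{\mh,2\me_\ell,1}(L)$, so only the scalar ratio needs to be computed. By \Cref{lemma:1.4},
\[
c_{\mh,2\me_\ell,1}(L)\big/\big(h_\ell^2\,c_{\mh,\bm 0_d,1}(L)\big)\to 2\,a_{2,1}(L)/a_{0,1}(L).
\]
Indeed, the two limits differ by the factor $2^{2/2}$ and by $a_{2,1}/a_{0,1}$ in the $\ell$-th coordinate. Condition \ref{con:L1} gives $a_{2,1},a_{0,1}\in(0,\infty)$, so the entries are positive. Therefore $\mathbf{C}_{\mh}$ is invertible, and the inverse statement follows because inversion is continuous at the positive diagonal limit.

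For \eqref{eq:new4.39}, define $\mathbf{A}_{\mh}:=c_{\mh,\bm0_d,1}(L)^{-1}\mathbf{\Lambda}_{\mh}^{-1}\bm{\tilde{\mu}}_{\mh,2}\mathbf{\Lambda}_{\mh}^{-1}$. Since $c_{\mh,\bm0_d,1}(L)\asymp\rho(\mh)$ by \Cref{lemma:1.4}, the previous two steps give $\mathbf{A}_{\mh}\to(2a_{2,1}(L)/a_{0,1}(L))f(\mx)\mathbf{I}_d$. This limit is invertible under \ref{con:D1}, so $\mathbf{A}_{\mh}$ is invertible for large $n$, hence so is $\bm{\tilde{\mu}}_{\mh,2}(\mx)$. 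By continuity of inversion, $\mathbf{A}_{\mh}^{-1}=c_{\mh,\bm0_d,1}\mathbf{\Lambda}_{\mh}\bm{\tilde{\mu}}_{\mh,2}^{-1}\mathbf{\Lambda}_{\mh}\to(a_{0,1}/(2a_{2,1}f(\mx)))\mathbf{I}_d$, which is the second line. For the first line we use the resolvent identity $\mathbf{A}^{-1}-\mathbf{B}^{-1}=\mathbf{A}^{-1}(\mathbf{B}-\mathbf{A})\mathbf{B}^{-1}$ with $\mathbf{B}=f(\mx)\mathbf{\Lambda}_{\mh}^{-2}\mathbf{C}_{\mh}/c_{\mh,\bm0_d,1}$. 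Both inverses are $O(1)$ and $\|\mathbf{A}-\mathbf{B}\|_2=o(1)$, so $\|\mathbf{A}^{-1}-\mathbf{B}^{-1}\|_2=o(1)$. Dividing by $c_{\mh,\bm0_d,1}\asymp\rho(\mh)$ then gives the bound $o(\rho^{-1}(\mh))$.

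The main obstacle is the tail bound on $E_m$ when the weights $\theta_\ell\theta_k/(h_\ell h_k)$ are present and the bandwidth components are allowed to have unbounded ratios. The key point is that the $h$-normalization is applied coordinatewise, so the integral factorizes across coordinates and each factor is controlled by its own bandwidth. This avoids any need for a comparability assumption on the components of $\mh$.
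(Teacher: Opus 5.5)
Your proof is correct. For \eqref{chinvertible} and \eqref{eq:new4.39} it matches the paper up to presentation. The paper lower-bounds the smallest eigenvalue of $\mathbf{\Lambda}_{\mh}^{-1}\bm{\tilde{\mu}}_{\mh,2}(\mx)\mathbf{\Lambda}_{\mh}^{-1}$ by a positive multiple of $\rho(\mh)$ (its \eqref{eq:new7.96}, reused in \Cref{lemma:temp4.6}), where you instead invoke continuity of inversion after normalizing by $c_{\mh,\bm{0}_d,1}(L)$. Both then use the same resolvent identity.

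The genuine difference is in \eqref{eq:new4.38}. The paper never reruns the $\epsilon$-splitting with weights. It applies Cauchy--Schwarz, e.g.
\[
\big\|\mathbf{\Lambda}_{\mh}^{-1}\bm{\tilde{\mu}}_{\mh,1}(\mx)\big\|_2^2\le\Big(\int\bm{L}_{\mh}(\bm\theta)\big|f(\bm{\Phi}_{\mx}(\bm\theta))-f(\mx)\big|\,\mathrm{d}\bm\theta\Big)\cdot 2\sup_{\mz\in\mathbb{T}^d}f(\mz)\cdot\int\bm{L}_{\mh}(\bm\theta)\big\|\mathbf{\Lambda}_{\mh}^{-1}\bm\theta\big\|_2^2\,\mathrm{d}\bm\theta ,
\]
and the analogue with $\|\mathbf{\Lambda}_{\mh}^{-1}\bm\theta\|_2^4$ for the second-moment matrix. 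This buys three things:
\begin{itemize}
\item only the unweighted estimate $\int\bm{L}_{\mh}|f\circ\bm{\Phi}_{\mx}-f(\mx)|=o(\rho(\mh))$ from \Cref{lemma:new2.2} is needed;
\item the weights enter only through the moment bounds \eqref{mu1bound_step3} and \eqref{mu2bound_step2};
\item the operator norm is handled in one stroke rather than entrywise.
\end{itemize}
Your entrywise route instead needs the weighted tail bounds $\int_{\delta}^{\pi}L((1-\cos\theta)/h^2)\theta^j\,\mathrm{d}\theta=o(h^{j+1})$ for $j\le 2$, whereas the paper only uses the case $j=0$. These bounds do hold: use $1-\cos\theta\ge 2\theta^2/\pi^2$ on $[0,\pi]$, the monotonicity of $L$, and the finite moments in condition \ref{con:L1}. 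So your argument is a valid, more hands-on alternative.

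One imprecision: for mixed entries the AM--GM step must be applied to the scaled coordinates, $|\theta_\ell\theta_k|/(h_\ell h_k)\le\frac12(\theta_\ell^2/h_\ell^2+\theta_k^2/h_k^2)$. The unscaled $|\theta_\ell\theta_k|\le(\theta_\ell^2+\theta_k^2)/2$, followed by division by $h_\ell h_k$, only gives $O\big(\rho(\mh)(h_\ell^2+h_k^2)/(h_\ell h_k)\big)$. That is not $O(\rho(\mh))$ without the bounded-ratio assumption you rightly want to avoid. Your Cauchy--Schwarz and direct-factorization alternatives are fine.
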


\begin{proof} [Proof of \Cref{lemma:new4.4}]
    We first prove \eqref{eq:new4.38}. Combining \eqref{symmetricproperty}, \eqref{eq:2.7} and \Cref{lemma:1.2}, we get
    \begin{align} \begin{split} \label{eq:new4.43}
        \int_{\mathbb{T}^{d}} \mathcal{L}_{\mx, \mh} \left( \mz \right) \bm{\Phi}_{\mx}^{-1} (\mz) \omega_{1}^{d}(\mathrm{d}\mz) & = \int_{[-\pi, \pi)^{d}}  \mathcal{L}_{\mx, \mh} \left( \bm{\Phi}_{\mx} \left( \bm{\theta} \right) \right)  \bm{\theta}  \mathrm{d} \bm{\theta} \\
        & = \int_{[-\pi, \pi)^{d}}  \bm{L}_{\mh} (\bm{\theta}) \bm{\theta}  \mathrm{d} \bm{\theta} \\
        & = \bm{0}_{d}, \\
        \int_{\mathbb{T}^{d}} \mathcal{L}_{\mx, \mh} \left( \mz \right) \bm{\Phi}_{\mx}^{-1} (\mz) \bm{\Phi}_{\mx}^{-1} (\mz)\tran \omega_{1}^{d}(\mathrm{d}\mz)  & = \int_{[-\pi, \pi)^{d}}  \mathcal{L}_{\mx, \mh} \left( \bm{\Phi}_{\mx} \left( \bm{\theta} \right) \right)  \bm{\theta} \bm{\theta} \tran \mathrm{d} \bm{\theta} \\
        & = \int_{[-\pi, \pi)^{d}}  \bm{L}_{\mh} (\bm{\theta}) \bm{\theta} \bm{\theta} \tran \mathrm{d} \bm{\theta} \\
        & = \mathbf{C}_{\mh}. \\
    \end{split} \end{align}
    The Cauchy-Schwarz inequality and \eqref{eq:new4.43} imply that
    \begin{align} \begin{split} \label{mu1bound_step1}
        \big\| \mathbf{\Lambda}_{\mh}^{-1} \bm{\tilde{\mu}}_{\mh, 1}(\mathbf{x}) \big\|_{2}^{2} & = \bigg\| \mathbf{\Lambda}_{\mh}^{-1} \int_{\mathbb{T}^{d}} \mathcal{L}_{\mx, \mh} \left( \mz \right) \bm{\Phi}_{\mx}^{-1} (\mz)  \big( f(\mz) - f(\mx) \big) \omega_{1}^{d}(\mathrm{d}\mz)   \bigg\|_{2}^{2}  \\
        & \leq \left( \int_{\mathbb{T}^{d}} \mathcal{L}_{\mx, \mh} \left( \mz \right) \left\| \mathbf{\Lambda}_{\mh}^{-1}\bm{\Phi}_{\mx}^{-1} (\mz) \right\|_2 \big| f(\mz) - f(\mx) \big| \omega_{1}^{d} (\mathrm{d}\mz) \right)^2  \\
        & = \left( \int_{[-\pi, \pi)^{d}}  \bm{L}_{\mh} (\bm{\theta}) \left\| \mathbf{\Lambda}_{\mh}^{-1} \bm{\theta} \right\|_2  \left|  f \left( \bm{\Phi}_{\mx} \left( \bm{\theta} \right) \right) - f(\mx)\right| \mathrm{d} \bm{\theta} \right)^2\\
        & \leq \left( \int_{[-\pi, \pi)^{d}}  \bm{L}_{\mh} (\bm{\theta})    \left|  f \left( \bm{\Phi}_{\mx} \left( \bm{\theta} \right) \right) - f(\mx)\right| \mathrm{d} \bm{\theta} \right) \\
        & \quad \cdot \left( \int_{[-\pi, \pi)^{d}}  \bm{L}_{\mh} (\bm{\theta}) \left\| \mathbf{\Lambda}_{\mh}^{-1} \bm{\theta} \right\|_2^2   \left|  f \left( \bm{\Phi}_{\mx} \left( \bm{\theta} \right) \right) - f(\mx)\right| \mathrm{d} \bm{\theta} \right) \\
        & \leq \left( \int_{[-\pi, \pi)^{d}}  \bm{L}_{\mh} (\bm{\theta})    \left|  f \left( \bm{\Phi}_{\mx} \left( \bm{\theta} \right) \right) - f(\mx)\right| \mathrm{d} \bm{\theta} \right) \cdot 2 \sup_{\mz \in \mathbb{T}^{d}} f ( \mz ) \\
        & \quad \cdot  \left( \int_{[-\pi, \pi)^{d}}  \bm{L}_{\mh} (\bm{\theta}) \left\| \mathbf{\Lambda}_{\mh}^{-1} \bm{\theta} \right\|_2^2  \mathrm{d} \bm{\theta} \right).
    \end{split} \end{align}
    By arguing as in the proof of \Cref{lemma:new2.2}, we get
    \begin{align} \label{mu1bound_step2}
        \int_{[-\pi, \pi)^{d}}  \bm{L}_{\mh} (\bm{\theta})    \left|  f \left( \bm{\Phi}_{\mx} \left( \bm{\theta} \right) \right) - f(\mx)\right| \mathrm{d} \bm{\theta} = o\left(\rho(\mh)\right).
    \end{align}
    Also, \eqref{symmetricproperty} and \Cref{lemma:1.4} imply that
    \begin{align} \begin{split} \label{mu1bound_step3}
        \int_{[-\pi, \pi)^{d}}  \bm{L}_{\mh}^{k} (\bm{\theta}) \| \mathbf{\Lambda}_{\mh}^{-1}\bm{\theta} \|_{2}^{2} \mathrm{d} \bm{\theta} &= \sum_{\ell=1}^{d} \frac{1}{h_\ell^2}\int_{[-\pi, \pi)^{d}} \bm{L}_{\mh}^{k} (\bm{\theta}) \theta_\ell^2 \mathrm{d} \bm{\theta} = \sum_{\ell=1}^{d}\frac{c_{\mh, 2\mathbf{e}_{\ell}, k}(L)}{h_\ell^2} = O\left( \rho(\mh)\right)
    \end{split} \end{align}
    for any $k\in\{1, 2\}$. Combining \eqref{mu1bound_step1}, \eqref{mu1bound_step2} and \eqref{mu1bound_step3}, we get
    \begin{align} \label{mu1bound_step4}
        \big\| \mathbf{\Lambda}_{\mh}^{-1} \bm{\tilde{\mu}}_{\mh, 1}(\mathbf{x}) \big\|_{2}^2 = o\left(\rho^{2} \left( \mh \right) \right).
    \end{align}  
    Similarly, the Cauchy-Schwarz inequality and \eqref{eq:new4.43} imply that
    \begin{align} \begin{split} \label{mu2bound_step1}
        & \left\| \mathbf{\Lambda}_{\mh}^{-1} \bm{\tilde{\mu}}_{\mh, 2}(\mathbf{x}) \mathbf{\Lambda}_{\mh}^{-1} - \mathbf{\Lambda}_{\mh}^{-2} \mathbf{C}_{\mh} f(\mx) \right\|_2^2 \\
        & = \bigg\| \mathbf{\Lambda}_{\mh}^{-1} \left( \int_{\mathbb{T}^{d}} \mathcal{L}_{\mx, \mh} \left( \mz \right) \bm{\Phi}_{\mx}^{-1} (\mz) \bm{\Phi}_{\mx}^{-1} (\mz) \tran  \left( f(\mz) - f(\mathbf{x}) \right) \omega_{1}^{d}(\mathrm{d}\mz)  \right) \mathbf{\Lambda}_{\mh}^{-1} \bigg\|_2^2  \\
        & = \bigg\| \int_{\mathbb{T}^{d}} \mathcal{L}_{\mx, \mh} \left( \mz \right) \mathbf{\Lambda}_{\mh}^{-1}\bm{\Phi}_{\mx}^{-1} (\mz) \bm{\Phi}_{\mx}^{-1} (\mz) \tran \mathbf{\Lambda}_{\mh}^{-1} \left( f(\mz) - f(\mathbf{x}) \right) \omega_{1}^{d}(\mathrm{d}\mz)  \bigg\|_2^2 \\
        & \leq \left( \int_{\mathbb{T}^{d}} \mathcal{L}_{\mx, \mh} \left( \mz \right) \left\| \mathbf{\Lambda}_{\mh}^{-1} \bm{\Phi}_{\mx}^{-1} (\mz) \right\|_2^2 \big| f(\mz) - f(\mx) \big| \omega_{1}^{d} (\mathrm{d}\mz) \right)^2 \\
        & = \left( \int_{[-\pi, \pi)^{d}}  \bm{L}_{\mh} (\bm{\theta}) \left\| \mathbf{\Lambda}_{\mh}^{-1} \bm{\theta} \right\|_2^2   \left|  f \left( \bm{\Phi}_{\mx} \left( \bm{\theta} \right) \right) - f(\mx)\right| \mathrm{d} \bm{\theta} \right)^2\\
        & \leq \left( \int_{[-\pi, \pi)^{d}}  \bm{L}_{\mh} (\bm{\theta})   \left|  f \left( \bm{\Phi}_{\mx} \left( \bm{\theta} \right) \right) - f(\mx)\right| \mathrm{d} \bm{\theta} \right) \\
        & \quad \cdot \left( \int_{[-\pi, \pi)^{d}}  \bm{L}_{\mh} (\bm{\theta}) \left\| \mathbf{\Lambda}_{\mh}^{-1} \bm{\theta} \right\|_2^4   \left|  f \left( \bm{\Phi}_{\mx} \left( \bm{\theta} \right) \right) - f(\mx)\right| \mathrm{d} \bm{\theta} \right) \\
        & \leq \left( \int_{[-\pi, \pi)^{d}}  \bm{L}_{\mh} (\bm{\theta})   \left|  f \left( \bm{\Phi}_{\mx} \left( \bm{\theta} \right) \right) - f(\mx)\right| \mathrm{d} \bm{\theta} \right) \cdot 2 \sup_{\mz \in \mathbb{T}^{d}} f ( \mz ) \\
        & \quad \cdot \left( \int_{[-\pi, \pi)^{d}}  \bm{L}_{\mh} (\bm{\theta}) \left\| \mathbf{\Lambda}_{\mh}^{-1} \bm{\theta} \right\|_2^4    \mathrm{d} \bm{\theta} \right).
    \end{split} \end{align}
    Also, \eqref{symmetricproperty} and \Cref{lemma:1.4} imply that
    \begin{align} \begin{split} \label{mu2bound_step2}
        \int_{[-\pi, \pi)^{d}}  \bm{L}_{\mh}^{k} (\bm{\theta}) \left\| \mathbf{\Lambda}_{\mh}^{-1} \bm{\theta} \right\|_2^4 \mathrm{d} \bm{\theta} & = \sum_{\ell=1}^{d} \sum_{m=1}^{d} \frac{1}{h_\ell^2 h_m^2}\int_{[-\pi, \pi)^{d}} \bm{L}_{\mh}^{k} (\bm{\theta}) \theta_\ell^2\theta_m^2 \mathrm{d} \bm{\theta} \\
        & = \sum_{\ell=1}^{d} \sum_{m=1}^{d} \frac{c_{\mh, 2\left( \mathbf{e}_{\ell} + \mathbf{e}_{m} \right), k}(L)}{h_\ell^2 h_m^2} \\
        & = O\left( \rho(\mh)\right)
    \end{split} \end{align}
    for any $k\in\{1, 2\}$. Combining \eqref{mu1bound_step2}, \eqref{mu2bound_step1} and \eqref{mu2bound_step2}, we get
    \begin{align} \label{mu2bound_step3}
        \left\| \mathbf{\Lambda}_{\mh}^{-1} \bm{\tilde{\mu}}_{\mh, 2}(\mathbf{x}) \mathbf{\Lambda}_{\mh}^{-1} - \mathbf{\Lambda}_{\mh}^{-2} \mathbf{C}_{\mh} f(\mx) \right\|_2^2 = o\left(\rho^{2} \left( \mh \right) \right).
    \end{align}  
    Then, \eqref{eq:new4.38} follows from \eqref{mu1bound_step4} and \eqref{mu2bound_step3}.

    Now, we prove the invertibility of $\mathbf{C}_{\mh}$ and \eqref{chinvertible}. Note that
    \begin{align} \label{chinvertible_step1}
        \frac{1}{c_{\mh, \bm{0}_d, 1}(L)}\mathbf{\Lambda}_{\mh}^{-2} \mathbf{C}_{\mh} = \mathrm{diag} \left( \frac{c_{\mh, 2\mathbf{e}_{1}, 1}(L)}{c_{\mh, \bm{0}_d, 1}(L) h_1^2} ,\dots, \frac{c_{\mh, 2\mathbf{e}_{d}, 1}(L)}{c_{\mh, \bm{0}_d, 1}(L) h_d^2} \right).
    \end{align}
    By \Cref{lemma:1.4}, we get
    \begin{align} \label{chinvertible_step2}
        \lim_{n \to \infty} \frac{c_{\mh, 2\mathbf{e}_{\ell}, 1}(L)}{c_{\mh, \bm{0}_d, 1}(L) h_\ell^2} = \frac{2a_{2, 1}(L)}{a_{0, 1}(L)} > 0,\quad \ell=1, \dots, d.
    \end{align}
    Combining \eqref{chinvertible_step1} and \eqref{chinvertible_step2}, $\mathbf{C}_{\mh}$ is invertible for sufficiently large $n$ and it holds that
    \begin{align} \label{chinvertible_step3}
        \left\| \frac{1}{c_{\mh, \bm{0}_d, 1}(L)}\mathbf{\Lambda}_{\mh}^{-2} \mathbf{C}_{\mh} - \frac{2a_{2, 1}(L)}{a_{0, 1}(L)}\mathbf{I}_d \right\|_2 = o\left(1\right).
    \end{align}
    Combining \eqref{chinvertible_step2} and \eqref{chinvertible_step3}, we get
    \begin{align*} 
        &\left\| c_{\mh, \bm{0}_d, 1}(L)\mathbf{\Lambda}_{\mh}^{2} \mathbf{C}_{\mh}^{-1} - \frac{a_{0, 1}(L)}{2a_{2, 1}(L)}\mathbf{I}_d \right\|_2 \\
        & = \left\| \left( \frac{2a_{2, 1}(L)}{a_{0, 1}(L)}\mathbf{I}_d - \frac{1}{c_{\mh, \bm{0}_d, 1}(L)}\mathbf{\Lambda}_{\mh}^{-2} \mathbf{C}_{\mh} \right) \cdot \left( \frac{a_{0, 1}(L) c_{\mh, \bm{0}_d, 1}(L)}{2a_{2, 1}(L)}\mathbf{\Lambda}_{\mh}^{2} \mathbf{C}_{\mh}^{-1} \right) \right\|_2\\
        & \leq \left\| \frac{2a_{2, 1}(L)}{a_{0, 1}(L)}\mathbf{I}_d - \frac{1}{c_{\mh, \bm{0}_d, 1}(L)}\mathbf{\Lambda}_{\mh}^{-2} \mathbf{C}_{\mh}  \cdot  \right\|_2 \frac{a_{0, 1}(L)}{2a_{2, 1}(L)}  \left\| c_{\mh, \bm{0}_d, 1}(L) \mathbf{\Lambda}_{\mh}^{2} \mathbf{C}_{\mh}^{-1} \right\|_2  \\ 
        & = o(1).
    \end{align*}
    
    Now, we prove the invertibility of $\bm{\tilde{\mu}}_{\mh, 2}(\mathbf{x})$ and \eqref{eq:new4.39}. Note that
    \begin{align} \begin{split} \label{eq:new7.95}
        &\inf_{\bu \in \mathbb{S}^{d}} \bu\tran\mathbf{\Lambda}_{\mh}^{-1} \bm{\tilde{\mu}}_{\mh, 2}(\mathbf{x}) \mathbf{\Lambda}_{\mh}^{-1} \bu \\
        & \geq  f(\mx) \left( \inf_{\bu \in \mathbb{S}^{d}} \bu\tran \mathbf{\Lambda}_{\mh}^{-2} \mathbf{C}_{\mh}  \bu \right) - \sup_{\bu \in \mathbb{S}^{d}} \bu\tran \left( \mathbf{\Lambda}_{\mh}^{-2} \mathbf{C}_{\mh} f(\mx) - \mathbf{\Lambda}_{\mh}^{-1} \bm{\tilde{\mu}}_{\mh, 2}(\mathbf{x}) \mathbf{\Lambda}_{\mh}^{-1} \right) \bu \\
        & = f(\mx) \min \left\{ \frac{c_{\mh, 2\mathbf{e}_{1}, 1}(L)}{h_1^2} ,\dots, \frac{c_{\mh, 2\mathbf{e}_{d}, 1}(L)}{h_d^2} \right\} - \left\| \mathbf{\Lambda}_{\mh}^{-1} \bm{\tilde{\mu}}_{\mh, 2}(\mathbf{x}) \mathbf{\Lambda}_{\mh}^{-1} - \mathbf{\Lambda}_{\mh}^{-2} \mathbf{C}_{\mh} f(\mx) \right\|_2.
    \end{split} \end{align}
    Combining \Cref{lemma:1.4}, \eqref{eq:new4.38} and \eqref{eq:new7.95}, we get
    \begin{align} \begin{split} \label{eq:new7.96}
        &\liminf_{n \to \infty}\rho^{-1} \left( \mh \right)  \left( \inf_{\bu \in \mathbb{S}^{d}} \bu\tran\mathbf{\Lambda}_{\mh}^{-1} \bm{\tilde{\mu}}_{\mh, 2}(\mathbf{x}) \mathbf{\Lambda}_{\mh}^{-1} \bu \right) \\
        &\geq f(\mx) \liminf_{n \to \infty} \left( \min \left\{ \frac{c_{\mh, 2\mathbf{e}_{1}, 1}(L)}{\rho ( \mh ) h_1^2} ,\dots, \frac{c_{\mh, 2\mathbf{e}_{d}, 1}(L)}{\rho ( \mh ) h_d^2} \right\} \right)\\
        &\quad - \limsup_{n \to \infty} \rho^{-1} \left( \mh \right) \left\| \mathbf{\Lambda}_{\mh}^{-1} \bm{\tilde{\mu}}_{\mh, 2}(\mathbf{x}) \mathbf{\Lambda}_{\mh}^{-1} - \mathbf{\Lambda}_{\mh}^{-2} \mathbf{C}_{\mh} f(\mx) \right\|_2 \\ 
        &= 2^{\frac{3d+2}{2}} a_{0, 1}^{d-1}(L) a_{2, 1}(L) f(\mx) \\
        &>0.
    \end{split} \end{align}
    This implies that $\mathbf{\Lambda}_{\mh}^{-1} \bm{\tilde{\mu}}_{\mh, 2}(\mathbf{x}) \mathbf{\Lambda}_{\mh}^{-1}$ is positive-definite for sufficiently large $n$, and thus $\bm{\tilde{\mu}}_{\mh, 2}(\mathbf{x})$ is invertible for sufficiently large $n$. By \eqref{eq:new7.96}, we get
    \begin{align} \begin{split} \label{eq:new7.97}
        & \limsup_{n \to \infty} \rho ( \mh )  \left\| \left( \mathbf{\Lambda}_{\mh}^{-1} \bm{\tilde{\mu}}_{\mh, 2}(\mathbf{x}) \mathbf{\Lambda}_{\mh}^{-1} \right)^{-1} \right\|_2 \\
        & = \limsup_{n \to \infty} \rho ( \mh ) \left( \inf_{\bu \in \mathbb{S}^{d}} \bu\tran\mathbf{\Lambda}_{\mh}^{-1} \bm{\tilde{\mu}}_{\mh, 2}(\mathbf{x}) \mathbf{\Lambda}_{\mh}^{-1} \bu\right)^{-1} \\
        & = \left( \liminf_{n \to \infty} \rho^{-1} \left( \mh \right)  \left( \inf_{\bu \in \mathbb{S}^{d}} \bu\tran\mathbf{\Lambda}_{\mh}^{-1} \bm{\tilde{\mu}}_{\mh, 2}(\mathbf{x}) \mathbf{\Lambda}_{\mh}^{-1} \bu \right)\right)^{-1} \\
        & \leq 2^{-\frac{3d+2}{2}} \frac{1}{ a_{0, 1}^{d-1}(L) a_{2, 1}(L) f(\mx)}.
    \end{split} \end{align}
    Combining \Cref{lemma:1.4}, \eqref{eq:new4.38}, \eqref{chinvertible_step2} and \eqref{eq:new7.97}, we get
    \begin{align*}
        & \left\|\mathbf{\Lambda}_{\mh} \bm{\tilde{\mu}}_{\mh, 2}(\mathbf{x})^{-1} \mathbf{\Lambda}_{\mh} - \frac{1}{f(\mathbf{x})} \mathbf{\Lambda}_{\mh}^{2} \mathbf{C}_{\mh}^{-1} \right\|_2  \\
        & = \left\| \left( \mathbf{\Lambda}_{\mh} \bm{\tilde{\mu}}_{\mh, 2}(\mathbf{x})^{-1} \mathbf{\Lambda}_{\mh} \right)\cdot \left( \mathbf{\Lambda}_{\mh}^{-2} \mathbf{C}_{\mh} f(\mx) - \mathbf{\Lambda}_{\mh}^{-1} \bm{\tilde{\mu}}_{\mh, 2}(\mathbf{x}) \mathbf{\Lambda}_{\mh}^{-1} \right) \cdot \left(  \frac{1}{f(\mathbf{x})} \mathbf{\Lambda}_{\mh}^{2} \mathbf{C}_{\mh}^{-1}\right) \right\|_2 \\
        & \leq \left\| \left( \mathbf{\Lambda}_{\mh}^{-1} \bm{\tilde{\mu}}_{\mh, 2}(\mathbf{x}) \mathbf{\Lambda}_{\mh}^{-1} \right)^{-1} \right\|_2 \cdot \left\| \mathbf{\Lambda}_{\mh}^{-2} \mathbf{C}_{\mh} f(\mx) - \mathbf{\Lambda}_{\mh}^{-1} \bm{\tilde{\mu}}_{\mh, 2}(\mathbf{x}) \mathbf{\Lambda}_{\mh}^{-1} \right\|_2 \cdot \left\| \frac{1}{f(\mathbf{x})} \mathbf{\Lambda}_{\mh}^{2} \mathbf{C}_{\mh}^{-1} \right\|_2 \\
        & = \left( \rho ( \mh )\left\| \left( \mathbf{\Lambda}_{\mh}^{-1} \bm{\tilde{\mu}}_{\mh, 2}(\mathbf{x}) \mathbf{\Lambda}_{\mh}^{-1} \right)^{-1} \right\|_2 \right) \cdot \left\| \mathbf{\Lambda}_{\mh}^{-2} \mathbf{C}_{\mh} f(\mx) - \mathbf{\Lambda}_{\mh}^{-1} \bm{\tilde{\mu}}_{\mh, 2}(\mathbf{x}) \mathbf{\Lambda}_{\mh}^{-1} \right\|_2  \\
        & \quad \cdot \left\| \frac{\rho ( \mh )}{f(\mathbf{x})} \mathbf{\Lambda}_{\mh}^{2} \mathbf{C}_{\mh}^{-1} \right\|_2 \cdot \rho^{-2} \left( \mh \right)\\
        & = \left( \rho ( \mh )\left\| \left( \mathbf{\Lambda}_{\mh}^{-1} \bm{\tilde{\mu}}_{\mh, 2}(\mathbf{x}) \mathbf{\Lambda}_{\mh}^{-1} \right)^{-1} \right\|_2 \right) \cdot \left\| \mathbf{\Lambda}_{\mh}^{-2} \mathbf{C}_{\mh} f(\mx) - \mathbf{\Lambda}_{\mh}^{-1} \bm{\tilde{\mu}}_{\mh, 2}(\mathbf{x}) \mathbf{\Lambda}_{\mh}^{-1} \right\|_2  \\
        & \quad \cdot\max \left\{ \frac{\rho ( \mh ) h_1^2}{f(\mathbf{x}) c_{\mh, 2\mathbf{e}_{1}, 1}(L)} ,\dots, \frac{\rho ( \mh ) h_d^2}{f(\mathbf{x}) c_{\mh, 2\mathbf{e}_{d}, 1}(L)} \right\} \cdot \rho^{-2} \left( \mh \right) \\
        & \leq O(1) \cdot o\left(\rho ( \mh )\right) \cdot O(1) \cdot \rho^{-2} \left( \mh \right) \\
        & = o\left(  \rho^{-1} \left( \mh \right) \right).
    \end{align*}
    This with \Cref{lemma:1.4} and \eqref{chinvertible} gives \eqref{eq:new4.39}.
\end{proof}

\begin{lemma} \label{lemma:temp4.6}
    Assume that the conditions \ref{con:L1}, \ref{con:B1} and \ref{con:D2} hold. Then, it holds that
    \begin{align} \label{eq:temp4.55}
    \begin{split}
        \left\| \mathbf{\Lambda}_{\mh}^{-1} \bm{\hat{\mu}}_{\mh, 1}(\mathbf{x}) - \mathbf{\Lambda}_{\mh}^{-1} \bm{\tilde{\mu}}_{\mh, 1}(\mathbf{x})  \right\|_2 = O_{\mathbb{P}} \left(n^{-\frac{1}{2}}\rho^{\frac{1}{2}} \left( \mh \right) \right),  \\
        \left\| \mathbf{\Lambda}_{\mh}^{-1} \bm{\hat{\mu}}_{\mh, 2}(\mathbf{x}) \mathbf{\Lambda}_{\mh}^{-1} - \mathbf{\Lambda}_{\mh}^{-1} \bm{\tilde{\mu}}_{\mh, 2}(\mathbf{x}) \mathbf{\Lambda}_{\mh}^{-1} \right\|_2 = O_{\mathbb{P}}\left(n^{-\frac{1}{2}}\rho^{\frac{1}{2}} \left( \mh \right)\right).
    \end{split}
    \end{align}
    If the condition \ref{con:D1} further holds, then $\bm{\hat{\mu}}_{\mh, 2}(\mathbf{x})$ is invertible with probability tending to one and it holds that
    \begin{gather}
        \left\| \left(\mathbf{\Lambda}_{\mh}^{-1} \bm{\hat{\mu}}_{\mh, 2}(\mathbf{x}) \mathbf{\Lambda}_{\mh}^{-1}\right)^{-1} - \left(\mathbf{\Lambda}_{\mh}^{-1} \bm{\tilde{\mu}}_{\mh, 2}(\mathbf{x}) \mathbf{\Lambda}_{\mh}^{-1}\right)^{-1} \right\|_2 = O_{\mathbb{P}} \left( n^{-\frac{1}{2}}\rho^{-\frac{3}{2}} \left( \mh \right) \right). \label{eq:temp4.56}
    \end{gather}
\end{lemma}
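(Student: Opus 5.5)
The plan has three stages: variance bounds for the scaled first and second moments, an eigenvalue perturbation argument for invertibility, and a resolvent identity for the difference of inverses.

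\emph{Variance bounds (\eqref{eq:temp4.55}).} Both scaled quantities are averages of i.i.d. terms. Write $\bm{\Theta}:=\bm{\Phi}_{\mx}^{-1}(\mathbf{X})$. Then $\mathbf{\Lambda}_{\mh}^{-1}\bm{\hat{\mu}}_{\mh,1}(\mx)$ is the empirical mean of $\mathcal{L}_{\mx,\mh}(\mathbf{X})\mathbf{\Lambda}_{\mh}^{-1}\bm{\Theta}$, and $\mathbf{\Lambda}_{\mh}^{-1}\bm{\hat{\mu}}_{\mh,2}(\mx)\mathbf{\Lambda}_{\mh}^{-1}$ is the empirical mean of $\mathcal{L}_{\mx,\mh}(\mathbf{X})\mathbf{\Lambda}_{\mh}^{-1}\bm{\Theta}\bm{\Theta}\tran\mathbf{\Lambda}_{\mh}^{-1}$.
\begin{itemize}
\item Apply Chebyshev's inequality entrywise. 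Since the dimension $d$ is fixed, the operator norm is bounded by the Frobenius norm.
\item It then suffices to bound the second moments $\mathbb{E}[\mathcal{L}^2_{\mx,\mh}(\mathbf{X})\|\mathbf{\Lambda}_{\mh}^{-1}\bm{\Theta}\|_2^2]$ and $\mathbb{E}[\mathcal{L}^2_{\mx,\mh}(\mathbf{X})\|\mathbf{\Lambda}_{\mh}^{-1}\bm{\Theta}\|_2^4]$.
\item By \Cref{lemma:1.2} and \eqref{eq:2.7}, each of these is at most $\sup_{\mz}f(\mz)$ times $\int_{[-\pi,\pi)^d}\bm{L}_{\mh}^2(\bm{\theta})\|\mathbf{\Lambda}_{\mh}^{-1}\bm{\theta}\|_2^{2j}\mathrm{d}\bm{\theta}$ for $j=1,2$.
\item These integrals are $O(\rho(\mh))$ by \eqref{mu1bound_step3} and \eqref{mu2bound_step2} with $k=2$.
\end{itemize}
The variance of each mean is therefore $O(n^{-1}\rho(\mh))$, which gives the rates $O_{\mathbb{P}}(n^{-1/2}\rho^{1/2}(\mh))$.

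\emph{Invertibility.} Set $\tilde{\mathbf{A}}:=\mathbf{\Lambda}_{\mh}^{-1}\bm{\tilde{\mu}}_{\mh,2}(\mx)\mathbf{\Lambda}_{\mh}^{-1}$ and $\hat{\mathbf{A}}:=\mathbf{\Lambda}_{\mh}^{-1}\bm{\hat{\mu}}_{\mh,2}(\mx)\mathbf{\Lambda}_{\mh}^{-1}$.
\begin{itemize}
\item By \eqref{eq:new7.96}, the smallest eigenvalue of $\tilde{\mathbf{A}}$ is at least $c\rho(\mh)$ for some $c>0$ and all large $n$.
\item By Weyl's inequality, $\lambda_{\min}(\hat{\mathbf{A}})\ge c\rho(\mh)-\|\hat{\mathbf{A}}-\tilde{\mathbf{A}}\|_2$.
\item The perturbation satisfies $\|\hat{\mathbf{A}}-\tilde{\mathbf{A}}\|_2=O_{\mathbb{P}}(n^{-1/2}\rho^{1/2}(\mh))=\rho(\mh)\,O_{\mathbb{P}}((n\rho(\mh))^{-1/2})=o_{\mathbb{P}}(\rho(\mh))$ by condition \ref{con:B1}.
\end{itemize}
Hence, with probability tending to one, $\lambda_{\min}(\hat{\mathbf{A}})\ge c\rho(\mh)/2$. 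This makes $\hat{\mathbf{A}}$, and thus $\bm{\hat{\mu}}_{\mh,2}(\mx)$, invertible, and gives $\|\hat{\mathbf{A}}^{-1}\|_2\le 2/(c\rho(\mh))$ on that event.

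\emph{Difference of inverses (\eqref{eq:temp4.56}).} Use the identity $\hat{\mathbf{A}}^{-1}-\tilde{\mathbf{A}}^{-1}=\hat{\mathbf{A}}^{-1}(\tilde{\mathbf{A}}-\hat{\mathbf{A}})\tilde{\mathbf{A}}^{-1}$. Bound $\|\hat{\mathbf{A}}^{-1}\|_2$ by the previous step and $\|\tilde{\mathbf{A}}^{-1}\|_2$ by \eqref{eq:new7.97}; both are $O_{\mathbb{P}}(\rho^{-1}(\mh))$. This yields $O_{\mathbb{P}}(\rho^{-2}(\mh))\cdot O_{\mathbb{P}}(n^{-1/2}\rho^{1/2}(\mh))=O_{\mathbb{P}}(n^{-1/2}\rho^{-3/2}(\mh))$.

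The main obstacle is the invertibility step. The $o_{\mathbb{P}}(\rho(\mh))$ perturbation must be compared against the $\rho(\mh)$-scale lower bound on the eigenvalues of $\tilde{\mathbf{A}}$, rather than against an $O(1)$ bound. The argument works only because $n\rho(\mh)\to\infty$ is assumed in condition \ref{con:B1}.
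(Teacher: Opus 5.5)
Your proposal is correct and follows the paper's proof essentially step for step. The paper bounds the same second moments via \Cref{lemma:1.2}, \eqref{mu1bound_step3} and \eqref{mu2bound_step2}, citing Lemma C.2 of Im et al.\ (2025) where you apply Chebyshev entrywise. It then lower-bounds the smallest eigenvalue of $\mathbf{\Lambda}_{\mh}^{-1}\bm{\hat{\mu}}_{\mh,2}(\mx)\mathbf{\Lambda}_{\mh}^{-1}$ by the $\rho(\mh)$-scale bound from \eqref{eq:new7.96} minus an $o_{\mathbb{P}}(\rho(\mh))$ perturbation, using $n\rho(\mh)\to\infty$ exactly as you do, and finishes with the same resolvent identity together with \eqref{eq:new7.97}.
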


\begin{proof} [Proof of \Cref{lemma:temp4.6}]
    We first prove \eqref{eq:temp4.55}. By \Cref{lemma:1.2}, we get
    \begin{align} \begin{split} \label{eq:temp4.58}
        & \mathbb{E} \left[ \mathcal{L}_{\mx, \mh}^2 \left( \mathbf{X} \right) \left( \mathbf{\Lambda}_{\mh}^{-1}  \bm{\Phi}_{\mx}^{-1} \left( \mathbf{X} \right)\right) \tran \left( \mathbf{\Lambda}_{\mh}^{-1}  \bm{\Phi}_{\mx}^{-1} \left( \mathbf{X} \right)\right) \right]  \\
        & = \int_{\mathbb{T}^{d}} \mathcal{L}_{\mx, \mh}^2 \left( \mz \right) \left( \mathbf{\Lambda}_{\mh}^{-1}  \bm{\Phi}_{\mx}^{-1} \left( \mz \right)\right) \tran \left( \mathbf{\Lambda}_{\mh}^{-1}  \bm{\Phi}_{\mx}^{-1} \left( \mz \right)\right) f( \mz ) \omega_d (\mathrm{d} \mz ) \\
        & = \int_{[-\pi, \pi)^{d}}  \bm{L}_{\mh}^2 (\bm{\theta}) \left\| \mathbf{\Lambda}_{\mh}^{-1} \bm{\theta} \right\|_2^2 f \left( \bm{\Phi}_{\mx} \left( \bm{\theta} \right) \right)  \mathrm{d} \bm{\theta}\\ 
        & \leq \sup_{\mz \in \mathbb{T}^{d}} f ( \mz ) \cdot \int_{[-\pi, \pi)^{d}}  \bm{L}_{\mh}^2 (\bm{\theta}) \left\| \mathbf{\Lambda}_{\mh}^{-1} \bm{\theta} \right\|_2^2  \mathrm{d} \bm{\theta}, \\
        & \left\| \mathbb{E} \left[ \mathcal{L}_{\mx, \mh}^2 \left( \mathbf{X} \right) \left( \mathbf{\Lambda}_{\mh}^{-1}  \bm{\Phi}_{\mx}^{-1} \left( \mathbf{X} \right) \bm{\Phi}_{\mx}^{-1} \left( \mathbf{X} \right) \tran \mathbf{\Lambda}_{\mh}^{-1}\right)^{2} \right] \right\|_2 \\
        & = \left\| \int_{\mathbb{T}^{d}} \mathcal{L}_{\mx, \mh}^2 \left( \mz \right)  \left( \mathbf{\Lambda}_{\mh}^{-1}  \bm{\Phi}_{\mx}^{-1} \left( \mz \right) \bm{\Phi}_{\mx}^{-1} \left( \mz \right) \tran \mathbf{\Lambda}_{\mh}^{-1}\right)^{2} f( \mz ) \omega_d (\mathrm{d} \mz ) \right\|_2 \\
        & = \left\| \int_{[-\pi, \pi)^{d}}  \bm{L}_{\mh}^2 (\bm{\theta})  \left( \mathbf{\Lambda}_{\mh}^{-1}  \bm{\theta} \bm{\theta} \tran \mathbf{\Lambda}_{\mh}^{-1}\right)^{2} f \left( \bm{\Phi}_{\mx} \left( \bm{\theta} \right) \right) \mathrm{d} \bm{\theta} \right\|_2 \\
        & \leq \int_{[-\pi, \pi)^{d}}  \bm{L}_{\mh}^2 (\bm{\theta})  \left\| \mathbf{\Lambda}_{\mh}^{-1}  \bm{\theta} \bm{\theta} \tran \mathbf{\Lambda}_{\mh}^{-1} \right\|_2^2 f \left( \bm{\Phi}_{\mx} \left( \bm{\theta} \right) \right) \mathrm{d} \bm{\theta} \\
        & \leq \sup_{\mz \in \mathbb{T}^{d}} f ( \mz ) \cdot \int_{[-\pi, \pi)^{d}}  \bm{L}_{\mh}^2 (\bm{\theta})  \left\| \mathbf{\Lambda}_{\mh}^{-1}  \bm{\theta} \right\|_2^4 \mathrm{d} \bm{\theta}.
    \end{split} \end{align}
    Combining \eqref{mu1bound_step3}, \eqref{mu2bound_step2}, \eqref{eq:temp4.58} and Lemma C.2 of Im et al. (2025), we get \eqref{eq:temp4.55}.

    Now, we prove the invertibility of $\bm{\hat{\mu}}_{\mh, 2}(\mathbf{x})$. By \eqref{eq:temp4.55}, we get
    \begin{align} \label{eq:temp4.63}
        \left\| \mathbf{\Lambda}_{\mh}^{-1} \bm{\hat{\mu}}_{\mh, 2}(\mathbf{x}) \mathbf{\Lambda}_{\mh}^{-1} - \mathbf{\Lambda}_{\mh}^{-1} \bm{\tilde{\mu}}_{\mh, 2}(\mathbf{x}) \mathbf{\Lambda}_{\mh}^{-1} \right\|_2 = o_{\mathbb{P}}\left(\rho ( \mh )\right).
    \end{align}
    We define an event $F_n$ as
    \begin{align*}
        F_n := \bigg\{ \left\| \mathbf{\Lambda}_{\mh}^{-1} \bm{\hat{\mu}}_{\mh, 2}(\mathbf{x}) \mathbf{\Lambda}_{\mh}^{-1} - \mathbf{\Lambda}_{\mh}^{-1} \bm{\tilde{\mu}}_{\mh, 2}(\mathbf{x}) \mathbf{\Lambda}_{\mh}^{-1} \right\|_2 \leq  \rho ( \mh ) 2^{\frac{3d-2}{2}} a_{0, 1}^{d-1}(L) a_{2, 1}(L) f(\mx) \bigg\}.
    \end{align*}
    Let $\epsilon>0$ be any given constant. By \eqref{eq:temp4.63}, we can choose $N_{1,\epsilon} \in \mathbb{N}$ such that $\mathbb{P}( F_n )>1-\epsilon$ whenever $n \ge N_{1,\epsilon}$. By \eqref{eq:new7.96}, we can also choose $N_{2,\epsilon} \in \mathbb{N}$ such that
    \begin{align*}
        \rho^{-1} \left( \mh \right)  \left( \inf_{\bu \in \mathbb{S}^{d}} \bu\tran\mathbf{\Lambda}_{\mh}^{-1} \bm{\tilde{\mu}}_{\mh, 2}(\mathbf{x}) \mathbf{\Lambda}_{\mh}^{-1} \bu \right) \geq 2^{\frac{3d}{2}} a_{0, 1}^{d-1}(L) a_{2, 1}(L) f(\mx)
    \end{align*}
    whenever $n \geq N_{2,\epsilon}$. On the event $F_n$ with $n \geq \max \{N_{1,\epsilon}, N_{2,\epsilon} \}$, it holds that
    \begin{align} \begin{split} \label{eq:temp4.65}
        & \inf_{\bu \in \mathbb{S}^{d}} \bu\tran\mathbf{\Lambda}_{\mh}^{-1} \bm{\hat{\mu}}_{\mh, 2}(\mathbf{x}) \mathbf{\Lambda}_{\mh}^{-1} \bu \\
        & \geq \inf_{\bu \in \mathbb{S}^{d}} \bu\tran\mathbf{\Lambda}_{\mh}^{-1} \bm{\tilde{\mu}}_{\mh, 2}(\mathbf{x}) \mathbf{\Lambda}_{\mh}^{-1} \bu - \sup_{\bu \in \mathbb{S}^{d}} \bu\tran\left( \mathbf{\Lambda}_{\mh}^{-1} \bm{\tilde{\mu}}_{\mh, 2}(\mathbf{x}) \mathbf{\Lambda}_{\mh}^{-1} - \mathbf{\Lambda}_{\mh}^{-1} \bm{\hat{\mu}}_{\mh, 2}(\mathbf{x}) \mathbf{\Lambda}_{\mh}^{-1} \right) \bu \\
        & = \inf_{\bu \in \mathbb{S}^{d}} \bu\tran\mathbf{\Lambda}_{\mh}^{-1} \bm{\tilde{\mu}}_{\mh, 2}(\mathbf{x}) \mathbf{\Lambda}_{\mh}^{-1} \bu - \left\| \mathbf{\Lambda}_{\mh}^{-1} \bm{\hat{\mu}}_{\mh, 2}(\mathbf{x}) \mathbf{\Lambda}_{\mh}^{-1} - \mathbf{\Lambda}_{\mh}^{-1} \bm{\tilde{\mu}}_{\mh, 2}(\mathbf{x}) \mathbf{\Lambda}_{\mh}^{-1} \right\|_2 \\
        & \geq \rho ( \mh ) 2^{\frac{3d}{2}} a_{0, 1}^{d-1}(L) a_{2, 1}(L) f(\mx) - \rho ( \mh ) 2^{\frac{3d-2}{2}} a_{0, 1}^{d-1}(L) a_{2, 1}(L) f(\mx) \\
        & = \rho ( \mh ) 2^{\frac{3d-2}{2}} a_{0, 1}^{d-1}(L) a_{2, 1}(L) f(\mx) \\
        & > 0.
    \end{split} \end{align}
    Hence, for $n \geq \max \{ N_{1,\epsilon}, N_{2,\epsilon}\}$, it holds that
    \begin{align*}
        \mathbb{P}\left(\inf_{\bu \in \mathbb{S}^{d}} \bu\tran\mathbf{\Lambda}_{\mh}^{-1} \bm{\hat{\mu}}_{\mh, 2}(\mathbf{x}) \mathbf{\Lambda}_{\mh}^{-1} \bu > 0\right)\geq \mathbb{P}(F_n) >1 - \epsilon.
    \end{align*}
    Since $\epsilon>0$ is arbitrary, this shows that $\mathbf{\Lambda}_{\mh}^{-1} \bm{\hat{\mu}}_{\mh, 2}(\mathbf{x}) \mathbf{\Lambda}_{\mh}^{-1}$ is positive-definite with probability tending to one. Thus, $\bm{\hat{\mu}}_{\mh, 2}(\mathbf{x})$ is invertible with probability tending to one.
    
    Now, we prove \eqref{eq:temp4.56}. By \eqref{eq:new7.97}, we get
    \begin{align} \label{mu2tildeinverse}
        \left\| \left( \mathbf{\Lambda}_{\mh}^{-1} \bm{\tilde{\mu}}_{\mh, 2}(\mathbf{x}) \mathbf{\Lambda}_{\mh}^{-1} \right)^{-1} \right\|_2 = O \left( \rho^{-1} \left( \mh \right) \right).
    \end{align}
    By \eqref{eq:temp4.65}, we also get
    \begin{align*}
        \rho ( \mh )  \left\| \left( \mathbf{\Lambda}_{\mh}^{-1} \bm{\hat{\mu}}_{\mh, 2}(\mathbf{x}) \mathbf{\Lambda}_{\mh}^{-1} \right)^{-1} \right\|_2
        & = \rho ( \mh ) \left( \inf_{\bu \in \mathbb{S}^{d}} \bu\tran\mathbf{\Lambda}_{\mh}^{-1} \bm{\hat{\mu}}_{\mh, 2}(\mathbf{x}) \mathbf{\Lambda}_{\mh}^{-1} \bu\right)^{-1} \\
        & = \left( \rho^{-1} \left( \mh \right)  \left( \inf_{\bu \in \mathbb{S}^{d}} \bu\tran\mathbf{\Lambda}_{\mh}^{-1} \bm{\hat{\mu}}_{\mh, 2}(\mathbf{x}) \mathbf{\Lambda}_{\mh}^{-1} \bu \right)\right)^{-1} \\
        & \leq 2^{-\frac{3d-2}{2}} \frac{1}{ a_{0, 1}^{d-1}(L) a_{2, 1}(L) f(\mx)}
    \end{align*}
    on the event $F_n$ with $n \geq \max \{N_{1,\epsilon}, N_{2,\epsilon} \}$.
    This implies that
    \begin{align*}
        \liminf_{n \to \infty} \mathbb{P}\left( \rho ( \mh )\left\| \left( \mathbf{\Lambda}_{\mh}^{-1} \bm{\hat{\mu}}_{\mh, 2}(\mathbf{x}) \mathbf{\Lambda}_{\mh}^{-1} \right)^{-1} \right\|_2 \le  2^{-\frac{3d-2}{2}} \frac{1}{ a_{0, 1}^{d-1}(L) a_{2, 1}(L) f(\mx)}\right)\geq \liminf_{n \to \infty} \mathbb{P}(F_n) \ge 1 - \epsilon.
    \end{align*}
    Since $\epsilon>0$ is arbitrary, this shows that
    \begin{align} \label{mu2hatinverse}
        \left\| \left( \mathbf{\Lambda}_{\mh}^{-1} \bm{\hat{\mu}}_{\mh, 2}(\mathbf{x}) \mathbf{\Lambda}_{\mh}^{-1} \right)^{-1} \right\|_2 = O_{\mathbb{P}} \left( \rho^{-1} \left( \mh \right) \right).
    \end{align}
    Combining \eqref{eq:temp4.55}, \eqref{mu2tildeinverse} and \eqref{mu2hatinverse}, we get
    \begin{align*}
        & \left\| \left(\mathbf{\Lambda}_{\mh}^{-1} \bm{\hat{\mu}}_{\mh, 2}(\mathbf{x}) \mathbf{\Lambda}_{\mh}^{-1}\right)^{-1} - \left(\mathbf{\Lambda}_{\mh}^{-1} \bm{\tilde{\mu}}_{\mh, 2}(\mathbf{x}) \mathbf{\Lambda}_{\mh}^{-1}\right)^{-1} \right\|_2  \\
        & \leq  \left\| \left( \mathbf{\Lambda}_{\mh}^{-1} \bm{\tilde{\mu}}_{\mh, 2}(\mathbf{x}) \mathbf{\Lambda}_{\mh}^{-1} \right)^{-1} \right\|_2 \cdot \left\| \mathbf{\Lambda}_{\mh}^{-1} \bm{\hat{\mu}}_{\mh, 2}(\mathbf{x}) \mathbf{\Lambda}_{\mh}^{-1} - \mathbf{\Lambda}_{\mh}^{-1} \bm{\tilde{\mu}}_{\mh, 2}(\mathbf{x}) \mathbf{\Lambda}_{\mh}^{-1} \right\|_2 \cdot \left\| \left( \mathbf{\Lambda}_{\mh}^{-1} \bm{\hat{\mu}}_{\mh, 2}(\mathbf{x}) \mathbf{\Lambda}_{\mh}^{-1} \right)^{-1} \right\|_2  \\
        & \leq O \left( \rho^{-1} \left( \mh \right) \right) \cdot O_{\mathbb{P}}\left(n^{-\frac{1}{2}}\rho^{\frac{1}{2}} \left( \mh \right)\right) \cdot O_{\mathbb{P}} \left( \rho^{-1} \left( \mh \right) \right) \\
        & = O_{\mathbb{P}}\left(n^{-\frac{1}{2}}\rho^{-\frac{3}{2}} \left( \mh \right)\right).
    \end{align*}
    This completes the proof.
\end{proof}

Hereafter, we assume that $\bm{\tilde{\mu}}_{\mh, 2}(\mathbf{x})$ and $\bm{\hat{\mu}}_{\mh, 2}(\mathbf{x})$ are invertible without loss of generality. Below, we prove that $\tilde{\sigma}_{\mh}(\mathbf{x})>0$ for sufficiently large $n$ and that $\hat{\sigma}_{\mh}(\mathbf{x})>0$ with probability tending to one.

\begin{lemma} \label{lemma:temp4.7}
    Assume that the conditions \ref{con:L1}, \ref{con:D1} and \ref{con:D2} hold and that $\lim_{n\rightarrow\infty} \| \mh \|_2=0$. Then, 
    \begin{gather}
        \frac{1}{c_{\mh, \bm{0}_d, 1}(L)} \tilde{\sigma}_{\mh}(\mathbf{x}) - f(\mathbf{x})  = o \left(1 \right). \label{eq:temp4.68}
    \end{gather}
    If the condition $\lim_{n\rightarrow\infty}n\rho ( \mh )=\infty$ further holds, then it holds that
    \begin{gather}
        \frac{1}{c_{\mh, \bm{0}_d, 1}(L)} \left( \hat{\sigma}_{\mh}(\mathbf{x}) - \tilde{\sigma}_{\mh}(\mathbf{x}) \right) = O_{\mathbb{P}} \left(n^{-\frac{1}{2}}\rho^{-\frac{1}{2}} \left( \mh \right) \right). \label{eq:temp4.69}
    \end{gather}
\end{lemma}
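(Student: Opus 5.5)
The plan is to write $\tilde{\sigma}_{\mh}(\mathbf{x})$ in the rescaled form
\[
\tilde{\sigma}_{\mh}(\mathbf{x}) = \tilde{\mu}_{\mh,0}(\mathbf{x}) - \big(\mathbf{\Lambda}_{\mh}^{-1}\bm{\tilde{\mu}}_{\mh,1}(\mathbf{x})\big)\tran \big(\mathbf{\Lambda}_{\mh}^{-1}\bm{\tilde{\mu}}_{\mh,2}(\mathbf{x})\mathbf{\Lambda}_{\mh}^{-1}\big)^{-1}\big(\mathbf{\Lambda}_{\mh}^{-1}\bm{\tilde{\mu}}_{\mh,1}(\mathbf{x})\big),
\]
which is an identity because the $\mathbf{\Lambda}_{\mh}$ factors cancel. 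The same rescaling applies to $\hat{\sigma}_{\mh}(\mathbf{x})$. Both claims then follow from the rescaled bounds in \Cref{lemma:new4.4} and \Cref{lemma:temp4.6}, together with $c_{\mh,\bm{0}_d,1}(L)\asymp\rho(\mh)$ from \Cref{lemma:1.4}.

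For \eqref{eq:temp4.68}, \Cref{lemma:new2.2} with $k=1$ gives $\tilde{\mu}_{\mh,0}(\mathbf{x})/c_{\mh,\bm{0}_d,1}(L) - f(\mathbf{x}) = o(1)$. For the quadratic form, \eqref{eq:new4.38} gives $\|\mathbf{\Lambda}_{\mh}^{-1}\bm{\tilde{\mu}}_{\mh,1}(\mathbf{x})\|_2 = o(\rho(\mh))$, and \eqref{mu2tildeinverse} gives $\|(\mathbf{\Lambda}_{\mh}^{-1}\bm{\tilde{\mu}}_{\mh,2}(\mathbf{x})\mathbf{\Lambda}_{\mh}^{-1})^{-1}\|_2 = O(\rho^{-1}(\mh))$. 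The quadratic form is therefore $o(\rho(\mh)^2)\cdot O(\rho^{-1}(\mh)) = o(\rho(\mh))$. Dividing by $c_{\mh,\bm{0}_d,1}(L)\asymp\rho(\mh)$ yields \eqref{eq:temp4.68}. In particular $\tilde{\sigma}_{\mh}(\mathbf{x})>0$ for large $n$, by \ref{con:D1}.

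For \eqref{eq:temp4.69}, write $\mathbf{a} := \mathbf{\Lambda}_{\mh}^{-1}\bm{\tilde{\mu}}_{\mh,1}(\mathbf{x})$, $\hat{\mathbf{a}} := \mathbf{\Lambda}_{\mh}^{-1}\bm{\hat{\mu}}_{\mh,1}(\mathbf{x})$, and let $\mathbf{B}$, $\hat{\mathbf{B}}$ be the inverses of the rescaled second-moment matrices. Then
\[
\hat{\sigma}_{\mh}(\mathbf{x}) - \tilde{\sigma}_{\mh}(\mathbf{x}) = \big(\hat{\mu}_{\mh,0}(\mathbf{x}) - \tilde{\mu}_{\mh,0}(\mathbf{x})\big) - \big(\hat{\mathbf{a}}\tran\hat{\mathbf{B}}\hat{\mathbf{a}} - \mathbf{a}\tran\mathbf{B}\mathbf{a}\big).
\]
The first difference is $O_{\mathbb{P}}(n^{-1/2}\rho^{1/2}(\mh))$ by \eqref{eq:3.35}. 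For the second, telescope:
\[
\hat{\mathbf{a}}\tran\hat{\mathbf{B}}\hat{\mathbf{a}} - \mathbf{a}\tran\mathbf{B}\mathbf{a} = (\hat{\mathbf{a}}-\mathbf{a})\tran\hat{\mathbf{B}}\hat{\mathbf{a}} + \mathbf{a}\tran(\hat{\mathbf{B}}-\mathbf{B})\hat{\mathbf{a}} + \mathbf{a}\tran\mathbf{B}(\hat{\mathbf{a}}-\mathbf{a}).
\]
The inputs are as follows: $\|\hat{\mathbf{a}}-\mathbf{a}\|_2 = O_{\mathbb{P}}(n^{-1/2}\rho^{1/2}(\mh))$ by \eqref{eq:temp4.55}; $\|\mathbf{a}\|_2 = o(\rho(\mh))$; and hence $\|\hat{\mathbf{a}}\|_2 = o(\rho(\mh)) + O_{\mathbb{P}}(n^{-1/2}\rho^{1/2}(\mh)) = o_{\mathbb{P}}(\rho(\mh))$, using $n\rho(\mh)\to\infty$. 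We also have $\|\hat{\mathbf{B}}\|_2 = O_{\mathbb{P}}(\rho^{-1}(\mh))$ by \eqref{mu2hatinverse}, and $\|\hat{\mathbf{B}}-\mathbf{B}\|_2 = O_{\mathbb{P}}(n^{-1/2}\rho^{-3/2}(\mh))$ by \eqref{eq:temp4.56}.

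With these, the first and third terms are $O_{\mathbb{P}}(n^{-1/2}\rho^{1/2}(\mh))\cdot O_{\mathbb{P}}(\rho^{-1}(\mh))\cdot o_{\mathbb{P}}(\rho(\mh)) = o_{\mathbb{P}}(n^{-1/2}\rho^{1/2}(\mh))$. The middle term is $o(\rho(\mh))\cdot O_{\mathbb{P}}(n^{-1/2}\rho^{-3/2}(\mh))\cdot o_{\mathbb{P}}(\rho(\mh)) = o_{\mathbb{P}}(n^{-1/2}\rho^{1/2}(\mh))$. Dividing everything by $c_{\mh,\bm{0}_d,1}(L)\asymp\rho(\mh)$ gives $O_{\mathbb{P}}(n^{-1/2}\rho^{-1/2}(\mh))$.

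The main obstacle is the exponent bookkeeping in the middle term. The inverse-matrix perturbation carries the large factor $\rho^{-3/2}(\mh)$, and it must be offset by the two small first-moment vectors, each $o(\rho(\mh))$. This is why the $\mathbf{\Lambda}_{\mh}$-rescaled versions must be used throughout rather than the raw moments. One further care point is that \eqref{eq:3.35} was stated for the unscaled $\hat{\mu}_{\mh,0}$, but it applies directly here since it needs only \ref{con:D2}.
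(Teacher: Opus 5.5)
Your proposal is correct and follows essentially the same route as the paper. You rescale by $\mathbf{\Lambda}_{\mh}$ and bound the quadratic form via \eqref{eq:new4.38} and \eqref{mu2tildeinverse} to get \eqref{eq:temp4.68}; for \eqref{eq:temp4.69} you combine \eqref{eq:3.35} with a telescoping of the quadratic form controlled by \Cref{lemma:temp4.6}, \eqref{mu2hatinverse} and $\|\hat{\mathbf{a}}\|_2=o_{\mathbb{P}}(\rho(\mh))$. The only cosmetic difference is that the paper telescopes into two terms, $\hat{\mathbf{a}}\tran(\hat{\mathbf{B}}-\mathbf{B})\hat{\mathbf{a}} + (\hat{\mathbf{a}}-\mathbf{a})\tran\mathbf{B}(\hat{\mathbf{a}}+\mathbf{a})$, rather than your three, which changes nothing in the rate bookkeeping.
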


\begin{proof} [Proof of \Cref{lemma:temp4.7}]
    \Cref{lemma:1.4} and \Cref{lemma:new2.2} imply that
    \begin{align} \label{eq:4.51}
        \frac{1}{c_{\mh, \bm{0}_d, 1}(L)} \tilde{\mu}_{\mh, 0}(\mathbf{x}) - f(\mathbf{x})  = o(1). 
    \end{align}
    Combining \Cref{lemma:1.4}, \Cref{lemma:new4.4} and \eqref{mu2tildeinverse}, we get
    \begin{align} \begin{split} \label{eq:4.53}
        & \left| \frac{1}{c_{\mh, \bm{0}_d, 1}(L)} \bm{\tilde{\mu}}_{\mh, 1}(\mathbf{x})\tran  \bm{\tilde{\mu}}_{\mh, 2}(\mathbf{x})^{-1} \bm{\tilde{\mu}}_{\mh, 1}(\mathbf{x}) \right| \\
        & = \left| \frac{1}{c_{\mh, \bm{0}_d, 1}(L)} \left( \mathbf{\Lambda}_{\mh}^{-1} \bm{\tilde{\mu}}_{\mh, 1}(\mathbf{x}) \right) \tran \left(\mathbf{\Lambda}_{\mh}^{-1} \bm{\tilde{\mu}}_{\mh, 2}(\mathbf{x}) \mathbf{\Lambda}_{\mh}^{-1}\right)^{-1}\left( \mathbf{\Lambda}_{\mh}^{-1} \bm{\tilde{\mu}}_{\mh, 1}(\mathbf{x})\right) \right| \\
        & \leq \frac{1}{c_{\mh, \bm{0}_d, 1}(L)} \cdot \left\| \mathbf{\Lambda}_{\mh}^{-1} \bm{\tilde{\mu}}_{\mh, 1}(\mathbf{x}) \right\|_2^2 \cdot \left\| \left( \mathbf{\Lambda}_{\mh}^{-1} \bm{\tilde{\mu}}_{\mh, 2}(\mathbf{x}) \mathbf{\Lambda}_{\mh}^{-1} \right)^{-1} \right\|_2  \\
        & = O \left( \rho^{-1} \left( \mh \right) \right) \cdot  o\left(  \rho ( \mh ) \right)^2 \cdot O \left( \rho^{-1} \left( \mh \right) \right) \\
        & = o(1).
    \end{split} \end{align}
    Combining \eqref{eq:4.51} and \eqref{eq:4.53}, we get \eqref{eq:temp4.68}.

    Now, we prove \eqref{eq:temp4.69}. Combining \Cref{lemma:1.4} and \eqref{eq:3.35}, we get
    \begin{align} \label{eq:temp4.73}
        \bigg| \frac{1}{c_{\mh, \bm{0}_d, 1}(L)} \left( \hat{\mu}_{\mh, 0}(\mathbf{x}) - \tilde{\mu}_{\mh, 0}(\mathbf{x}) \right) \bigg| = O_{\mathbb{P}}(n^{-\frac{1}{2}}h^{-\frac{d}{2}}). 
    \end{align}
    Note that
    \begin{align} \begin{split} \label{eq:4.54}
        & \bm{\hat{\mu}}_{\mh, 1}(\mathbf{x})\tran  \bm{\hat{\mu}}_{\mh, 2}(\mathbf{x})^{-1} \bm{\hat{\mu}}_{\mh, 1}(\mathbf{x}) - \bm{\tilde{\mu}}_{\mh, 1}(\mathbf{x})\tran  \bm{\tilde{\mu}}_{\mh, 2}(\mathbf{x})^{-1} \bm{\tilde{\mu}}_{\mh, 1}(\mathbf{x}) \\
        & = \left( \mathbf{\Lambda}_{\mh}^{-1} \bm{\hat{\mu}}_{\mh, 1}(\mathbf{x}) \right) \tran \left(\mathbf{\Lambda}_{\mh}^{-1} \bm{\hat{\mu}}_{\mh, 2}(\mathbf{x}) \mathbf{\Lambda}_{\mh}^{-1}\right)^{-1}\left( \mathbf{\Lambda}_{\mh}^{-1} \bm{\hat{\mu}}_{\mh, 1}(\mathbf{x})\right) \\
        & \quad- \left( \mathbf{\Lambda}_{\mh}^{-1} \bm{\tilde{\mu}}_{\mh, 1}(\mathbf{x}) \right) \tran \left(\mathbf{\Lambda}_{\mh}^{-1} \bm{\tilde{\mu}}_{\mh, 2}(\mathbf{x}) \mathbf{\Lambda}_{\mh}^{-1}\right)^{-1}\left( \mathbf{\Lambda}_{\mh}^{-1} \bm{\tilde{\mu}}_{\mh, 1}(\mathbf{x})\right) \\
        & = \left( \mathbf{\Lambda}_{\mh}^{-1} \bm{\hat{\mu}}_{\mh, 1}(\mathbf{x}) \right) \tran \left( \left(\mathbf{\Lambda}_{\mh}^{-1} \bm{\hat{\mu}}_{\mh, 2}(\mathbf{x}) \mathbf{\Lambda}_{\mh}^{-1}\right)^{-1} - \left(\mathbf{\Lambda}_{\mh}^{-1} \bm{\tilde{\mu}}_{\mh, 2}(\mathbf{x}) \mathbf{\Lambda}_{\mh}^{-1}\right)^{-1} \right) \left( \mathbf{\Lambda}_{\mh}^{-1} \bm{\hat{\mu}}_{\mh, 1}(\mathbf{x})\right) \\
        & \quad + \left( \mathbf{\Lambda}_{\mh}^{-1} \bm{\hat{\mu}}_{\mh, 1}(\mathbf{x}) - \mathbf{\Lambda}_{\mh}^{-1} \bm{\tilde{\mu}}_{\mh, 1}(\mathbf{x}) \right) \tran \left(\mathbf{\Lambda}_{\mh}^{-1} \bm{\tilde{\mu}}_{\mh, 2}(\mathbf{x}) \mathbf{\Lambda}_{\mh}^{-1}\right)^{-1}\left( \mathbf{\Lambda}_{\mh}^{-1} \bm{\hat{\mu}}_{\mh, 1}(\mathbf{x}) + \mathbf{\Lambda}_{\mh}^{-1} \bm{\tilde{\mu}}_{\mh, 1}(\mathbf{x})\right).
    \end{split} \end{align}    
    \Cref{lemma:new4.4} and \Cref{lemma:temp4.6} imply that
    \begin{align} \begin{split} \label{mu1hat}
        \left\| \mathbf{\Lambda}_{\mh}^{-1} \bm{\hat{\mu}}_{\mh, 1}(\mathbf{x})  \right\|_2 &\leq \left\| \mathbf{\Lambda}_{\mh}^{-1} \bm{\tilde{\mu}}_{\mh, 1}(\mathbf{x})  \right\|_2 + \left\| \mathbf{\Lambda}_{\mh}^{-1} \bm{\hat{\mu}}_{\mh, 1}(\mathbf{x}) - \mathbf{\Lambda}_{\mh}^{-1} \bm{\tilde{\mu}}_{\mh, 1}(\mathbf{x})  \right\|_2 \\
        & = o\left(\rho ( \mh ) \right) + O_{\mathbb{P}} \left(n^{-\frac{1}{2}}\rho^{\frac{1}{2}} \left( \mh \right) \right) \\
        & = o_{\mathbb{P}} \left( \rho ( \mh ) \right).
    \end{split}\end{align}
    Combining \Cref{lemma:new4.4}, \Cref{lemma:temp4.6}, \eqref{mu2tildeinverse}, \eqref{eq:4.54} and \eqref{mu1hat}, we get
    \begin{align*}
        & \left| \bm{\hat{\mu}}_{\mh, 1}(\mathbf{x})\tran  \bm{\hat{\mu}}_{\mh, 2}(\mathbf{x})^{-1} \bm{\hat{\mu}}_{\mh, 1}(\mathbf{x}) - \bm{\tilde{\mu}}_{\mh, 1}(\mathbf{x})\tran  \bm{\tilde{\mu}}_{\mh, 2}(\mathbf{x})^{-1} \bm{\tilde{\mu}}_{\mh, 1}(\mathbf{x}) \right| \\
        & \leq \left\| \mathbf{\Lambda}_{\mh}^{-1} \bm{\hat{\mu}}_{\mh, 1}(\mathbf{x})  \right\|_2^2 \cdot \left\| \left(\mathbf{\Lambda}_{\mh}^{-1} \bm{\hat{\mu}}_{\mh, 2}(\mathbf{x}) \mathbf{\Lambda}_{\mh}^{-1}\right)^{-1} - \left(\mathbf{\Lambda}_{\mh}^{-1} \bm{\tilde{\mu}}_{\mh, 2}(\mathbf{x}) \mathbf{\Lambda}_{\mh}^{-1}\right)^{-1} \right\|_2 \\
        & \quad + \left\| \mathbf{\Lambda}_{\mh}^{-1} \bm{\hat{\mu}}_{\mh, 1}(\mathbf{x}) - \mathbf{\Lambda}_{\mh}^{-1} \bm{\tilde{\mu}}_{\mh, 1}(\mathbf{x})  \right\|_2 \cdot \left\| \left( \mathbf{\Lambda}_{\mh}^{-1} \bm{\tilde{\mu}}_{\mh, 2}(\mathbf{x}) \mathbf{\Lambda}_{\mh}^{-1} \right)^{-1} \right\|_2 \\ 
        & \qquad \cdot \left( \left\| \mathbf{\Lambda}_{\mh}^{-1} \bm{\hat{\mu}}_{\mh, 1}(\mathbf{x}) \right\|_2 + \left\| \mathbf{\Lambda}_{\mh}^{-1} \bm{\tilde{\mu}}_{\mh, 1}(\mathbf{x})  \right\|_2\right) \\
        & = o_{\mathbb{P}} \left( \rho ( \mh ) \right)^2 \cdot O_{\mathbb{P}} \left( n^{-\frac{1}{2}}\rho^{-\frac{3}{2}} \left( \mh \right) \right) + O_{\mathbb{P}} \left( n^{-\frac{1}{2}}\rho^{\frac{1}{2}} \left( \mh \right) \right) \cdot O \left( \rho^{-1} \left( \mh \right) \right) \cdot \left(  o_{\mathbb{P}} \left( \rho ( \mh ) \right) +  o \left( \rho ( \mh ) \right) \right) \\
        & = o_{\mathbb{P}} \left( n^{-\frac{1}{2}}\rho^{\frac{1}{2}} \left( \mh \right) \right).
    \end{align*}
   This with \Cref{lemma:1.4} and \eqref{eq:temp4.73} gives \eqref{eq:temp4.69}.
\end{proof}

Hereafter, we assume that $\tilde{\sigma}_{\mh}(\mathbf{x})>0$ and $\hat{\sigma}_{\mh}(\mathbf{x})>0$ without loss of generality.

\begin{lemma} \label{lemma:new4.8}
    Assume that the conditions \ref{con:L1} and \ref{con:D1}--\ref{con:D3} hold and that $\lim_{n\rightarrow\infty} \| \mh \|_2=0$. Then, it holds that
    \begin{align*}
        \sup_{y \in \mathbb{M}} \left| \frac{\tilde{\tau}_{\mh, 0}(\mathbf{x}, y) - \bm{\tilde{\mu}}_{\mh, 1}(\mathbf{x})\tran\bm{\tilde{\mu}}_{\mh, 2}(\mathbf{x})^{-1} \bm{\tilde{\tau}}_{\mh, 1}(\mathbf{x}, y)}{\tilde{\sigma}_{\mh}(\mathbf{x})} - g_y(\mathbf{x})  \right| = o(1).
    \end{align*}
\end{lemma}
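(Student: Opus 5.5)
The plan is to write the quantity inside the absolute value as a single ratio and bound its numerator uniformly in $y$. Since $\tilde{\sigma}_{\mh}(\mathbf{x}) = \tilde{\mu}_{\mh, 0}(\mathbf{x}) - \bm{\tilde{\mu}}_{\mh, 1}(\mathbf{x})\tran\bm{\tilde{\mu}}_{\mh, 2}(\mathbf{x})^{-1}\bm{\tilde{\mu}}_{\mh, 1}(\mathbf{x})$, we have
\begin{align*}
&\tilde{\tau}_{\mh, 0}(\mathbf{x}, y) - \bm{\tilde{\mu}}_{\mh, 1}(\mathbf{x})\tran\bm{\tilde{\mu}}_{\mh, 2}(\mathbf{x})^{-1} \bm{\tilde{\tau}}_{\mh, 1}(\mathbf{x}, y) - g_y(\mathbf{x})\tilde{\sigma}_{\mh}(\mathbf{x}) \\
&= \big(\tilde{\tau}_{\mh, 0}(\mathbf{x}, y) - g_y(\mathbf{x})\tilde{\mu}_{\mh, 0}(\mathbf{x})\big) - \bm{\tilde{\mu}}_{\mh, 1}(\mathbf{x})\tran\bm{\tilde{\mu}}_{\mh, 2}(\mathbf{x})^{-1}\big(\bm{\tilde{\tau}}_{\mh, 1}(\mathbf{x}, y) - g_y(\mathbf{x})\bm{\tilde{\mu}}_{\mh, 1}(\mathbf{x})\big).
\end{align*}
By \eqref{eq:temp4.68}, \Cref{lemma:1.4} and \ref{con:D1}, $\tilde{\sigma}_{\mh}(\mathbf{x})^{-1} = O(\rho^{-1}(\mh))$. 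It therefore suffices to show that both terms on the right are $o(\rho(\mh))$ uniformly in $y\in\mathbb{M}$.

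The first term is exactly \eqref{eq:new7.54}, which was derived from \Cref{lemma:new2.2} and \Cref{lemma:new3.2}, so it is $o(\rho(\mh))$ uniformly in $y$.

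For the second term, rewrite it as
\[
\big(\mathbf{\Lambda}_{\mh}^{-1}\bm{\tilde{\mu}}_{\mh, 1}(\mathbf{x})\big)\tran\big(\mathbf{\Lambda}_{\mh}^{-1}\bm{\tilde{\mu}}_{\mh, 2}(\mathbf{x})\mathbf{\Lambda}_{\mh}^{-1}\big)^{-1}\mathbf{\Lambda}_{\mh}^{-1}\big(\bm{\tilde{\tau}}_{\mh, 1}(\mathbf{x}, y) - g_y(\mathbf{x})\bm{\tilde{\mu}}_{\mh, 1}(\mathbf{x})\big).
\]
\Cref{lemma:new4.4} gives $\|\mathbf{\Lambda}_{\mh}^{-1}\bm{\tilde{\mu}}_{\mh, 1}(\mathbf{x})\|_2 = o(\rho(\mh))$, and \eqref{mu2tildeinverse} gives an operator norm of $O(\rho^{-1}(\mh))$ for the middle matrix. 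It remains to show that
\[
\sup_y\big\|\mathbf{\Lambda}_{\mh}^{-1}\big(\bm{\tilde{\tau}}_{\mh, 1}(\mathbf{x}, y) - g_y(\mathbf{x})\bm{\tilde{\mu}}_{\mh, 1}(\mathbf{x})\big)\big\|_2 = O(\rho(\mh)).
\]
By \Cref{lemma:1.2} and \eqref{eq:2.7}, this vector equals
\[
\int_{[-\pi,\pi)^d}\bm{L}_{\mh}(\bm{\theta})\,\mathbf{\Lambda}_{\mh}^{-1}\bm{\theta}\, f(\bm{\Phi}_{\mx}(\bm{\theta}))\big(g_y(\bm{\Phi}_{\mx}(\bm{\theta})) - g_y(\mx)\big)\,\mathrm{d}\bm{\theta}.
\]
Its norm is crudely bounded by $2\sup f\cdot\sup_{y,\mz} g_y(\mz)\int \bm{L}_{\mh}(\bm{\theta})\|\mathbf{\Lambda}_{\mh}^{-1}\bm{\theta}\|_2\,\mathrm{d}\bm{\theta}$. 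The last integral is $O(\rho(\mh))$ by Cauchy--Schwarz, using \eqref{mu1bound_step3} and $c_{\mh,\bm{0}_d,1}(L)=O(\rho(\mh))$ from \Cref{lemma:1.4}. The sup-bounds here come from \ref{con:D2} and \ref{con:D3}. Multiplying the three factors gives $o(\rho(\mh))\cdot O(\rho^{-1}(\mh))\cdot O(\rho(\mh)) = o(\rho(\mh))$, uniformly in $y$.

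The main point of care is that the second term only needs a crude uniform $O(\rho(\mh))$ bound on the scaled $\bm{\tilde{\tau}}$-difference. The smallness comes from $\bm{\tilde{\mu}}_{\mh,1}$ through \Cref{lemma:new4.4}, so no equicontinuity argument is needed for that piece; equicontinuity of $\{g_y\}$ enters only through \eqref{eq:new7.54}. Combining the two terms with the $O(\rho^{-1}(\mh))$ bound on $\tilde{\sigma}_{\mh}(\mathbf{x})^{-1}$ yields the claim.
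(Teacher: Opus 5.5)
Your proposal is correct and follows essentially the paper's argument: the same decomposition of the numerator, with the first term handled by \eqref{eq:new7.54} and the second by \Cref{lemma:new4.4}, \eqref{mu2tildeinverse} and the $O(\rho^{-1}(\mh))$ bound on $\tilde{\sigma}_{\mh}(\mathbf{x})^{-1}$. The only difference is that the paper first proves the sharper bound \eqref{tau1bound}, $\sup_y\|\mathbf{\Lambda}_{\mh}^{-1}\bm{\tilde{\tau}}_{\mh,1}(\mathbf{x},y)\|_2=o(\rho(\mh))$, through an equicontinuity argument; you instead observe, correctly, that a crude uniform $O(\rho(\mh))$ bound on the centered vector suffices, because the smallness already comes from $\|\mathbf{\Lambda}_{\mh}^{-1}\bm{\tilde{\mu}}_{\mh,1}(\mathbf{x})\|_2=o(\rho(\mh))$.
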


\begin{proof} [Proof of Lemma \ref{lemma:new4.8}]
    We first prove that
    \begin{align} \label{tau1bound}
        \sup_{y \in \mathbb{M}} \left\| \mathbf{\Lambda}_{\mh}^{-1} \bm{\tilde{\tau}}_{\mh, 1}(\mathbf{x}, y) \right\|_2 = o\left( \rho ( \mh ) \right).
    \end{align}
    The Cauchy-Schwarz inequality and \eqref{eq:new4.43} imply that
    \begin{align} \begin{split} \label{tau1bound_step1}
        \big\| \mathbf{\Lambda}_{\mh}^{-1} \bm{\tilde{\tau}}_{\mh, 1}(\mathbf{x}, y) \big\|_{2}^{2} & = \bigg\| \mathbf{\Lambda}_{\mh}^{-1} \int_{\mathbb{T}^{d}} \mathcal{L}_{\mx, \mh} \left( \mz \right) \bm{\Phi}_{\mx}^{-1} (\mz)  \big( (f \cdot g_y ) (\mz) - (f \cdot g_y ) (\mx) \big) \omega_{1}^{d}(\mathrm{d}\mz)   \bigg\|_{2}^{2}  \\
        & \leq \left( \int_{[-\pi, \pi)^{d}}  \bm{L}_{\mh} (\bm{\theta})    \left|  (f \cdot g_y ) \left( \bm{\Phi}_{\mx} \left( \bm{\theta} \right) \right) - (f \cdot g_y )(\mx)\right| \mathrm{d} \bm{\theta} \right) \\
        & \quad \cdot 2 \sup_{\mz \in \mathbb{T}^{d}} (f \cdot g_y ) ( \mz ) \cdot  \left( \int_{[-\pi, \pi)^{d}}  \bm{L}_{\mh} (\bm{\theta}) \left\| \mathbf{\Lambda}_{\mh}^{-1} \bm{\theta} \right\|_2  \mathrm{d} \bm{\theta} \right)
    \end{split} \end{align}
    for any $y \in \mathbb{M}$. By arguing as in the proofs of \Cref{lemma:new2.2} and \Cref{lemma:new3.2}, we get
    \begin{align} \label{tau1bound_step2}
        \int_{[-\pi, \pi)^{d}}  \bm{L}_{\mh} (\bm{\theta})    \sup_{y \in \mathbb{M}}   \left|  (f \cdot g_y) \left( \bm{\Phi}_{\mx} \left( \bm{\theta} \right) \right) - (f \cdot g_y) (\mx)\right| \mathrm{d} \bm{\theta} = o\left(\rho(\mh)\right).
    \end{align}
    Combining \eqref{mu1bound_step3}, \eqref{tau1bound_step1} and \eqref{tau1bound_step2}, we get \eqref{tau1bound}. Note that
    \begin{align} \begin{split} \label{eq:new7.124}
        & \tilde{\tau}_{\mh, 0}(\mathbf{x}, y) - \bm{\tilde{\mu}}_{\mh, 1}(\mathbf{x})\tran\bm{\tilde{\mu}}_{\mh, 2}(\mathbf{x})^{-1} \bm{\tilde{\tau}}_{\mh, 1}(\mathbf{x}, y) - g_y(\mathbf{x}) \tilde{\sigma}_{\mh}(\mathbf{x}) \\
        & = \left( \tilde{\tau}_{\mh, 0}(\mathbf{x}, y) - g_y(\mathbf{x}) \tilde{\mu}_{\mh, 0}(\mathbf{x}) \right) - \bm{\tilde{\mu}}_{\mh, 1}(\mathbf{x})\tran\bm{\tilde{\mu}}_{\mh, 2}(\mathbf{x})^{-1} \left( \bm{\tilde{\tau}}_{\mh, 1}(\mathbf{x}, y) - g_y(\mathbf{x}) \bm{\tilde{\mu}}_{\mh, 1}(\mathbf{x}) \right) \\
        & = \left( \tilde{\tau}_{\mh, 0}(\mathbf{x}, y) - g_y(\mathbf{x}) \tilde{\mu}_{\mh, 0}(\mathbf{x}) \right) \\
        & \quad - \left( \mathbf{\Lambda}_{\mh}^{-1} \bm{\tilde{\mu}}_{\mh, 1}(\mathbf{x}) \right) \tran \left(\mathbf{\Lambda}_{\mh}^{-1} \bm{\tilde{\mu}}_{\mh, 2}(\mathbf{x}) \mathbf{\Lambda}_{\mh}^{-1}\right)^{-1}\left( \mathbf{\Lambda}_{\mh}^{-1} \bm{\tilde{\mu}}_{\mh, 1}(\mathbf{x}) - g_y(\mathbf{x}) \mathbf{\Lambda}_{\mh}^{-1} \bm{\tilde{\tau}}_{\mh, 1}(\mathbf{x}, y) \right)
    \end{split} \end{align}  
    for any $y \in \mathbb{M}$. Combining \Cref{lemma:new4.4}, \eqref{eq:new7.54}, \eqref{mu2tildeinverse}, \eqref{tau1bound} and \eqref{eq:new7.124}, we get
    \begin{align} \begin{split} \label{eq:new7.125}
        & \sup_{y \in \mathbb{M}} \left| \tilde{\tau}_{\mh, 0}(\mathbf{x}, y) - \bm{\tilde{\mu}}_{\mh, 1}(\mathbf{x})\tran\bm{\tilde{\mu}}_{\mh, 2}(\mathbf{x})^{-1} \bm{\tilde{\tau}}_{\mh, 1}(\mathbf{x}, y) - g_y(\mathbf{x}) \tilde{\sigma}_{\mh}(\mathbf{x}) \right| \\
        & \leq \sup_{y \in \mathbb{M}} \left| \tilde{\tau}_{\mh, 0}(\mathbf{x}, y) - g_y(\mathbf{x}) \tilde{\mu}_{\mh, 0}(\mathbf{x}) \right| + \left\| \mathbf{\Lambda}_{\mh}^{-1} \bm{\tilde{\mu}}_{\mh, 1}(\mathbf{x}) \right\|_2 \cdot \left\| \left( \mathbf{\Lambda}_{\mh}^{-1} \bm{\tilde{\mu}}_{\mh, 2}(\mathbf{x}) \mathbf{\Lambda}_{\mh}^{-1} \right)^{-1} \right\|_2 \\
        & \quad \cdot \left( \left\| \mathbf{\Lambda}_{\mh}^{-1} \bm{\tilde{\mu}}_{\mh, 1}(\mathbf{x}) \right\|_2 + \sup_{y \in \mathbb{M}}g_y(\mathbf{x}) \cdot \sup_{y \in \mathbb{M}} \left\| \mathbf{\Lambda}_{\mh}^{-1} \bm{\tilde{\tau}}_{\mh, 1}(\mathbf{x}, y) \right\|_{2}\right) \\
        & = o\left( \rho ( \mh ) \right) + o\left( \rho ( \mh ) \right) \cdot O\left( \rho^{-1} \left( \mh \right) \right) \cdot \left( o\left( \rho ( \mh ) \right) + o\left( \rho ( \mh ) \right) \right) \\
        & = o\left( \rho ( \mh ) \right).
    \end{split} \end{align}
    Also, \Cref{lemma:1.4} and \eqref{eq:temp4.68} imply that
    \begin{align} \begin{split} \label{sigmatildeinverse}
        \lim_{n \to \infty} \frac{\rho ( \mh )}{\tilde{\sigma}_{\mh}(\mathbf{x})} = 2^{-\frac{3d}{2}} \frac{1}{ a_{0, 1}^{d}(L) f(\mx)} > 0.
    \end{split} \end{align}
    Combining \eqref{eq:new7.125} and \eqref{sigmatildeinverse}, we get the desired result.
\end{proof}

\begin{lemma} \label{lemma:new4.9}
    Assume that the conditions \ref{con:L1} and \ref{con:D1}--\ref{con:D3} hold and that $\lim_{n\rightarrow\infty} \| \mh \|_2=0$. Then, it holds that
    \begin{align*}
        \sup_{y \in \mathbb{M}} \big|  \tilde{M}_{\mh, 1}(\mathbf{x}, y) - M_{\oplus}(\mathbf{x}, y)   \big| = o(1).
    \end{align*}
\end{lemma}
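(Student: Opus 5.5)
The plan is to mimic the proof of \Cref{lemma:new3.4}, with \Cref{lemma:new4.8} playing the role that \eqref{lemma:new3.3} played there. First I will write $\tilde{M}_{\mh,1}$ as an integral against $P_Y$, using \eqref{eq:4.80} and Fubini's theorem. Conditioning on $\mathbf{X}$ gives
\[
\tilde{\nu}_{\mh,0}(\mathbf{x},y)=\mathbb{E}\left[\mathcal{L}_{\mx,\mh}(\mathbf{X})M_{\oplus}(\mathbf{X},y)\right]=\int_{\mathbb{M}} d_{\mathbb{M}}^2(y,w)\,\tilde{\tau}_{\mh,0}(\mathbf{x},w)\,\mathrm{d}P_Y(w).
\]
The same argument gives
\[
\bm{\tilde{\nu}}_{\mh,1}(\mathbf{x},y)=\int_{\mathbb{M}} d_{\mathbb{M}}^2(y,w)\,\bm{\tilde{\tau}}_{\mh,1}(\mathbf{x},w)\,\mathrm{d}P_Y(w).
\]
Here I use $M_{\oplus}(\mathbf{z},y)=\int d_{\mathbb{M}}^2(y,w)g_w(\mathbf{z})\,\mathrm{d}P_Y(w)$ as in \eqref{eq:3.21}. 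Fubini applies because $d_{\mathbb{M}}^2\le \mathrm{diam}(\mathbb{M})^2$, $g_w$ is uniformly bounded by \ref{con:D3}, and $\mathcal{L}_{\mx,\mh}(\mathbf{X})\|\bm{\Phi}_{\mx}^{-1}(\mathbf{X})\|_2$ is integrable since $\|\bm{\Phi}_{\mx}^{-1}\|_2\le \sqrt{d}\pi$.

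Since $\bm{\tilde{\mu}}_{\mh,1}(\mathbf{x})\tran\bm{\tilde{\mu}}_{\mh,2}(\mathbf{x})^{-1}$ and $\tilde{\sigma}_{\mh}(\mathbf{x})$ do not depend on $w$, they can be moved inside the integral. This yields
\[
\tilde{M}_{\mh,1}(\mathbf{x},y)=\int_{\mathbb{M}} d_{\mathbb{M}}^2(y,w)\,\frac{\tilde{\tau}_{\mh,0}(\mathbf{x},w)-\bm{\tilde{\mu}}_{\mh,1}(\mathbf{x})\tran\bm{\tilde{\mu}}_{\mh,2}(\mathbf{x})^{-1}\bm{\tilde{\tau}}_{\mh,1}(\mathbf{x},w)}{\tilde{\sigma}_{\mh}(\mathbf{x})}\,\mathrm{d}P_Y(w).
\]
Invertibility of $\bm{\tilde{\mu}}_{\mh,2}(\mathbf{x})$ and positivity of $\tilde{\sigma}_{\mh}(\mathbf{x})$ for large $n$ come from \Cref{lemma:new4.4} and \Cref{lemma:temp4.7}.

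Subtracting $M_{\oplus}(\mathbf{x},y)=\int d_{\mathbb{M}}^2(y,w)g_w(\mathbf{x})\,\mathrm{d}P_Y(w)$ then gives
\[
\sup_{y}\big|\tilde{M}_{\mh,1}(\mathbf{x},y)-M_{\oplus}(\mathbf{x},y)\big|\le \mathrm{diam}(\mathbb{M})^2\sup_{w\in\mathbb{M}}\left|\frac{\tilde{\tau}_{\mh,0}-\bm{\tilde{\mu}}_{\mh,1}\tran\bm{\tilde{\mu}}_{\mh,2}^{-1}\bm{\tilde{\tau}}_{\mh,1}}{\tilde{\sigma}_{\mh}}(\mathbf{x},w)-g_w(\mathbf{x})\right|.
\]
The right-hand side is $o(1)$ by \Cref{lemma:new4.8}, which proves the lemma.

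The only delicate point is the Fubini interchange and the measurability of $w\mapsto g_w(\mathbf{z})$ jointly in $(\mathbf{z},w)$. I would treat this exactly as it was treated in \eqref{eq:3.21}, taking a jointly measurable version of the Radon–Nikodym derivative. Everything else is routine given \Cref{lemma:new4.8}.
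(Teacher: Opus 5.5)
Your proposal is correct and is essentially the paper's argument. The paper defers to Lemma C.6 of Im et al.\ (2025), which, like the proof of \Cref{lemma:new3.4} here, writes $\tilde{M}_{\mh,1}(\mathbf{x},y)$ via Fubini as an integral of $d_{\mathbb{M}}^2(y,w)$ against $P_Y$, with weight equal to the local-linear ratio. It then bounds the difference from $M_{\oplus}(\mathbf{x},y)$ by $\mathrm{diam}(\mathbb{M})^2$ times the uniform error controlled in \Cref{lemma:new4.8}.
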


\begin{proof} [Proof of \Cref{lemma:new4.9}]
    The lemma follows by arguing as in the proof of Lemma C.6 of Im et al. (2025).
\end{proof}

\begin{lemma} \label{thm:4.1}
   Assume that the conditions \ref{con:L1}, \ref{con:D1}--\ref{con:D3} and \ref{con:M1} hold and that $\lim_{n\rightarrow\infty} \| \mh \|_2=0$. Then, it holds that
    \begin{align*}
        d_{\mathbb{M}}(\tilde{m}_{\mh, 1}(\mathbf{x}), m_{\oplus}(\mathbf{x})) = o(1).
    \end{align*}
\end{lemma}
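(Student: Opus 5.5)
The argument is the standard argmin-continuity one, with \Cref{lemma:new4.9} supplying uniform convergence $\sup_{y\in\mathbb{M}}|\tilde{M}_{\mh,1}(\mx,y)-M_{\oplus}(\mx,y)|=o(1)$, exactly as \Cref{thm:3.1} follows from \Cref{lemma:new3.4} in the local constant case. We argue by contradiction. Suppose that for some $\epsilon>0$ there is a subsequence along which $d_{\mathbb{M}}(\tilde{m}_{\mh,1}(\mx),m_{\oplus}(\mx))>\epsilon$. By the second part of the condition \ref{con:M1}-(ii), there is a constant $\eta_\epsilon>0$ such that
\begin{align*}
\inf_{y\in\mathbb{M}:\,d_{\mathbb{M}}(y,m_{\oplus}(\mx))>\epsilon}\big[M_{\oplus}(\mx,y)-M_{\oplus}(\mx,m_{\oplus}(\mx))\big]\ge \eta_\epsilon .
\end{align*}
Along the subsequence, this gives $M_{\oplus}(\mx,\tilde{m}_{\mh,1}(\mx))-M_{\oplus}(\mx,m_{\oplus}(\mx))\ge\eta_\epsilon$.

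Next we bound the same difference from above. Write it as the sum of three terms:
\begin{align*}
&\big[M_{\oplus}(\mx,\tilde{m}_{\mh,1}(\mx))-\tilde{M}_{\mh,1}(\mx,\tilde{m}_{\mh,1}(\mx))\big]
+\big[\tilde{M}_{\mh,1}(\mx,\tilde{m}_{\mh,1}(\mx))-\tilde{M}_{\mh,1}(\mx,m_{\oplus}(\mx))\big]\\
&\quad+\big[\tilde{M}_{\mh,1}(\mx,m_{\oplus}(\mx))-M_{\oplus}(\mx,m_{\oplus}(\mx))\big].
\end{align*}
The middle term is $\le0$, since $\tilde{m}_{\mh,1}(\mx)$ minimizes $\tilde{M}_{\mh,1}(\mx,\cdot)$; it exists uniquely by the condition \ref{con:M1}-(i) once $\bm{\tilde{\mu}}_{\mh,2}(\mx)$ is invertible and $\tilde{\sigma}_{\mh}(\mx)>0$, which hold for large $n$ by \Cref{lemma:new4.4} and \Cref{lemma:temp4.7}. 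The first and third terms are each bounded by $\sup_{y}|\tilde{M}_{\mh,1}(\mx,y)-M_{\oplus}(\mx,y)|$, which is $o(1)$ by \Cref{lemma:new4.9}. Hence the difference is at most $o(1)$, which is less than $\eta_\epsilon$ for large $n$. This contradicts the lower bound, so $d_{\mathbb{M}}(\tilde{m}_{\mh,1}(\mx),m_{\oplus}(\mx))=o(1)$.

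The only nontrivial step is the uniform approximation, and it is already provided by \Cref{lemma:new4.9}. The one point needing care is that the local linear weights $\tilde{W}_{\mx,\mh}$ can be negative, so $\tilde{M}_{\mh,1}(\mx,\cdot)$ is not a proper average of squared distances. The contradiction argument above never uses nonnegativity of $\tilde{M}_{\mh,1}$; it uses only the minimizing property of $\tilde{m}_{\mh,1}(\mx)$ and uniform closeness to $M_{\oplus}$. So the sign issue is irrelevant here, and the proof mirrors Lemma B.7 of Im et al. (2025).
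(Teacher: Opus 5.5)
Your proposal is correct and matches the paper's proof. The paper simply combines the uniform approximation in \Cref{lemma:new4.9} with the standard argmin-separation argument of Lemma C.7 of Im et al.\ (2025), using the second part of condition \ref{con:M1}-(ii) --- which is exactly your three-term decomposition and contradiction, and your remark that possible negativity of $\tilde{W}_{\mx,\mh}$ plays no role is accurate.
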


\begin{proof} [Proof of \Cref{thm:4.1}]
    The lemma follows by arguing as in the proof of Lemma C.7 of Im et al. (2025) and using \Cref{lemma:new4.9}.
\end{proof}

\begin{lemma} \label{lemma:4.10}
    Assume that the conditions \ref{con:L1}, \ref{con:B1}, \ref{con:D1} and \ref{con:D2} hold. Then, it holds that
    \begin{align*}
        \sup_{y \in \mathbb{M}} \big| \hat{M}_{\mh, 1}(\mathbf{x}, y) - \tilde{M}_{\mh, 1}(\mathbf{x}, y) \big| = o_{\mathbb{P}}(1).
    \end{align*}
\end{lemma}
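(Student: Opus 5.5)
We prove \Cref{lemma:4.10} through the representation \eqref{eq:4.80}. Write
\[
\hat{M}_{\mh,1}(\mx,y)=\frac{\hat{\nu}_{\mh,0}(\mx,y)-\hat{\mathbf{a}}\tran\mathbf{\Lambda}_{\mh}^{-1}\bm{\hat{\nu}}_{\mh,1}(\mx,y)}{\hat{\sigma}_{\mh}(\mx)},\qquad \hat{\mathbf{a}}:=\left(\mathbf{\Lambda}_{\mh}^{-1}\bm{\hat{\mu}}_{\mh,2}(\mx)\mathbf{\Lambda}_{\mh}^{-1}\right)^{-1}\mathbf{\Lambda}_{\mh}^{-1}\bm{\hat{\mu}}_{\mh,1}(\mx),
\]
and define $\tilde{\mathbf{a}}$ analogously with tilde quantities, so that $\tilde{M}_{\mh,1}$ has the same form. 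We then compare the numerators and denominators separately, after normalizing everything by $c_{\mh,\bm{0}_d,1}(L)\asymp\rho(\mh)$ (\Cref{lemma:1.4}).

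\textbf{Step 1 (weights).} By \eqref{mu2tildeinverse}, \Cref{lemma:new4.4} and \eqref{mu1hat}, we have $\|\tilde{\mathbf{a}}\|_2=O(\rho^{-1})\,o(\rho)=o(1)$. Using \eqref{eq:temp4.55}, \eqref{eq:temp4.56} and \eqref{mu2hatinverse}, we also get
\[
\|\hat{\mathbf{a}}-\tilde{\mathbf{a}}\|_2\le O_{\mathbb{P}}(n^{-1/2}\rho^{-3/2})\,o_{\mathbb{P}}(\rho)+O(\rho^{-1})\,O_{\mathbb{P}}(n^{-1/2}\rho^{1/2})=O_{\mathbb{P}}\big((n\rho)^{-1/2}\big)=o_{\mathbb{P}}(1)
\]
by \ref{con:B1}. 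For the denominators, \Cref{lemma:temp4.7} gives $\tilde{\sigma}_{\mh}/c_{\mh,\bm{0}_d,1}\to f(\mx)>0$ and $(\hat{\sigma}_{\mh}-\tilde{\sigma}_{\mh})/c_{\mh,\bm{0}_d,1}=o_{\mathbb{P}}(1)$. Hence $c_{\mh,\bm{0}_d,1}/\hat{\sigma}_{\mh}=O_{\mathbb{P}}(1)$ and $c_{\mh,\bm{0}_d,1}\,|1/\hat{\sigma}_{\mh}-1/\tilde{\sigma}_{\mh}|=o_{\mathbb{P}}(1)$.

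\textbf{Step 2 (uniform numerator terms).} Since $d_{\mathbb{M}}^2\le\mathrm{diam}(\mathbb{M})^2$, \Cref{lemma:new2.2} gives $\sup_y|\tilde{\nu}_{\mh,0}|=O(\rho)$. Similarly, \eqref{mu1bound_step3} and the Cauchy--Schwarz inequality give $\sup_y\|\mathbf{\Lambda}_{\mh}^{-1}\bm{\tilde{\nu}}_{\mh,1}\|_2=O(\rho)$. These bounds control the terms in which the weights or denominators are perturbed.

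The key probabilistic statement is
\[
\sup_{y}\big|\hat{\nu}_{\mh,0}-\tilde{\nu}_{\mh,0}\big|=o_{\mathbb{P}}(\rho),\qquad \sup_{y}\big\|\mathbf{\Lambda}_{\mh}^{-1}(\bm{\hat{\nu}}_{\mh,1}-\bm{\tilde{\nu}}_{\mh,1})\big\|_2=o_{\mathbb{P}}(\rho).
\]
For a fixed $y$, both follow from variance bounds, as in \eqref{eq:3.35} and \eqref{eq:temp4.58}, which give $O_{\mathbb{P}}(n^{-1/2}\rho^{1/2})$. Uniformity in $y$ is the main obstacle. We handle it as in the proof of Lemma B.8 of Im et al.\ (2025), which is already invoked in \Cref{lemma:3.6}. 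The map $y\mapsto d_{\mathbb{M}}^2(w,y)$ is $2\mathrm{diam}(\mathbb{M})$-Lipschitz, so for $y,y'$ we have
\[
|\hat{\nu}_{\mh,0}(y)-\hat{\nu}_{\mh,0}(y')|\le 2\mathrm{diam}(\mathbb{M})\,d_{\mathbb{M}}(y,y')\,\hat{\mu}_{\mh,0}.
\]
The analogous bound holds for the $\bm{\nu}_1$ terms with $n^{-1}\sum_i\mathcal{L}_{\mx,\mh}(\mathbf{X}^{(i)})\|\mathbf{\Lambda}_{\mh}^{-1}\bm{\theta}^{(i)}\|_2$ in place of $\hat{\mu}_{\mh,0}$, and that quantity is $O_{\mathbb{P}}(\rho)$ by its mean and variance. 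Because $\mathbb{M}$ is totally bounded, a finite $\epsilon$-net plus these Lipschitz bounds reduces the problem to finitely many pointwise statements. Letting $\epsilon\downarrow0$ then gives the $o_{\mathbb{P}}(\rho)$ rate.

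\textbf{Step 3 (assembly).} Decompose $\hat{M}_{\mh,1}-\tilde{M}_{\mh,1}$ as the sum of three pieces:
\begin{enumerate}
\item the change in $1/\sigma$ times the tilde numerator, which is $o_{\mathbb{P}}(\rho^{-1})\cdot O(\rho)$;
\item $1/\hat{\sigma}$ times the $\nu$-differences, which is $O_{\mathbb{P}}(\rho^{-1})\big(o_{\mathbb{P}}(\rho)+\|\hat{\mathbf{a}}\|_2\,o_{\mathbb{P}}(\rho)\big)$;
\item $1/\hat{\sigma}$ times $(\hat{\mathbf{a}}-\tilde{\mathbf{a}})\tran\mathbf{\Lambda}_{\mh}^{-1}\bm{\tilde{\nu}}_{\mh,1}$, which is $O_{\mathbb{P}}(\rho^{-1})\,o_{\mathbb{P}}(1)\,O(\rho)$.
\end{enumerate}
Each piece is $o_{\mathbb{P}}(1)$ uniformly in $y$, which proves the lemma.
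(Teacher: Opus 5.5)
Your proof is correct and is essentially the paper's argument. It uses the same three-term decomposition of $\hat M_{\mh,1}-\tilde M_{\mh,1}$, the same pointwise variance bounds and the same Lipschitz-in-$y$ bounds, combined with a finite net from the total boundedness of $\mathbb{M}$. The difference is bookkeeping: you make the $y$-dependent pieces $\hat{\nu}_{\mh,0}$ and $\mathbf{\Lambda}_{\mh}^{-1}\bm{\hat{\nu}}_{\mh,1}$ uniform first and then assemble with the $y$-free weights, whereas the paper proves Lipschitz bounds for $\hat M_{\mh,1}$ and $\tilde M_{\mh,1}$ directly via $\hat R_{\mh}$ and $\tilde R_{\mh}$. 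Your Cauchy--Schwarz step with respect to $\bm{L}_{\mh}(\bm{\theta})\,\mathrm{d}\bm{\theta}$ is the right way to get $\sup_y\|\mathbf{\Lambda}_{\mh}^{-1}\bm{\tilde{\nu}}_{\mh,1}\|_2=O(\rho(\mh))$; a second-moment bound alone would give only $O(\rho^{1/2}(\mh))$.
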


\begin{proof} [Proof of \Cref{lemma:4.10}]
    We first prove that
    \begin{align} \label{lemma:new3.5}
        \left| \hat{M}_{\mh, 1}(\mathbf{x}, y) - \tilde{M}_{\mh, 1}(\mathbf{x}, y) \right| = o_{\mathbb{P}}(1)
    \end{align}
    for any $y \in \mathbb{M}$. By \eqref{eq:4.80}, we get
    \begin{align} \begin{split} \label{eq:new7.132}
        & \left| \hat{M}_{\mh, 1}(\mathbf{x}, y) - \tilde{M}_{\mh, 1}(\mathbf{x}, y) \right| \\
        & \le \left| \frac{1}{\hat{\sigma}_{\mh}(\mathbf{x})} \right| \cdot \left| \hat{\nu}_{\mh, 0}(\mathbf{x}, y) - \tilde{\nu}_{\mh, 0}(\mathbf{x}, y)\right| \\
        & \quad + \left| \frac{1}{\hat{\sigma}_{\mh}(\mathbf{x})} \right| \cdot \left| \bm{\hat{\mu}}_{\mh, 1}(\mathbf{x})\tran\bm{\hat{\mu}}_{\mh, 2}(\mathbf{x})^{-1} \bm{\hat{\nu}}_{\mh, 1}(\mathbf{x}, y) - \bm{\tilde{\mu}}_{\mh, 1}(\mathbf{x})\tran\bm{\tilde{\mu}}_{\mh, 2}(\mathbf{x})^{-1} \bm{\tilde{\nu}}_{\mh, 1}(\mathbf{x}, y)\right| \\
        & \quad + \left| \frac{1}{\tilde{\sigma}_{\mh}(\mathbf{x})} \right| \cdot \left| \frac{1}{\hat{\sigma}_{\mh}(\mathbf{x})} \right| \cdot \left| \hat{\sigma}_{\mh}(\mathbf{x}) -\tilde{\sigma}_{\mh}(\mathbf{x}) \right|  \cdot \left| \tilde{\nu}_{\mh, 0}(\mathbf{x}, y) - \bm{\tilde{\mu}}_{\mh, 1}(\mathbf{x})\tran\bm{\tilde{\mu}}_{\mh, 2}(\mathbf{x})^{-1} \bm{\tilde{\nu}}_{\mh, 1}(\mathbf{x}, y) \right|.
    \end{split} \end{align}
    By \eqref{eq:3.35}, we also get
    \begin{align}
        \big| \hat{\nu}_{\mh, 0}(\mathbf{x}, y) - \tilde{\nu}_{\mh, 0}(\mathbf{x}, y) \big| = O_{\mathbb{P}}\left(n^{-\frac{1}{2}} \rho^{\frac{1}{2}} \left( \mh \right)\right). \label{eq:4.109}
    \end{align}
    Similarly to \eqref{eq:temp4.58}, we get
    \begin{align} \begin{split} \label{eq:4.110}
         \left\| \mathbf{\Lambda}_{\mh}^{-1} \bm{\tilde{\nu}}_{\mh, 1}(\mathbf{x}, y)  \right\|_2^2 & \leq  \left\| \mathbb{E} \left[ \mathcal{L}_{\mx, \mh} \left( \mathbf{X} \right) \mathbf{\Lambda}_{\mh}^{-1} \bm{\Phi}_{\mx}^{-1} (\mathbf{X}) d_{\mathbb{M}}^2(Y, y) \right] \right\|_2^2 \\
         & \leq  \mathbb{E} \left[ \mathcal{L}_{\mx, \mh}^2 \left( \mathbf{X} \right) \left\| \mathbf{\Lambda}_{\mh}^{-1} \bm{\Phi}_{\mx}^{-1} (\mathbf{X}) \right\|_2^2 d_{\mathbb{M}}^4(Y, y) \right]  \\
        & \leq  \mathrm{diam} ( \mathbb{M} )^4 \cdot \mathbb{E} \left[ \mathcal{L}_{\mx, \mh}^2 \left( \mathbf{X} \right) \left\| \mathbf{\Lambda}_{\mh}^{-1} \bm{\Phi}_{\mx}^{-1} (\mathbf{X}) \right\|_2^2 \right] \\
        & \leq \mathrm{diam} ( \mathbb{M} )^4 \cdot \sup_{\mz \in \mathbb{T}^{d}} f ( \mz ) \cdot \int_{[-\pi, \pi)^{d}}  \bm{L}_{\mh}^2 (\bm{\theta})  \left\| \mathbf{\Lambda}_{\mh}^{-1}  \bm{\theta} \right\|_2^2 \mathrm{d} \bm{\theta}.
    \end{split} \end{align}
    Combining Lemma C.2 of Im et al. (2025), \eqref{mu1bound_step3} and \eqref{eq:4.110}, we get
    \begin{align} \label{eq:4.111}
    \begin{split}
        \left\| \mathbf{\Lambda}_{\mh}^{-1} \bm{\tilde{\nu}}_{\mh, 1}(\mathbf{x}, y)  \right\|_2 &= O\left(\rho ( \mh )\right), \\
        \| \mathbf{\Lambda}_{\mh}^{-1} \bm{\hat{\nu}}_{\mh, 1}(\mathbf{x}, y) - \mathbf{\Lambda}_{\mh}^{-1} \bm{\tilde{\nu}}_{\mh, 1}(\mathbf{x}, y)  \|_2 &= O_{\mathbb{P}}\left(n^{-\frac{1}{2}} \rho^{\frac{1}{2}} \left( \mh \right)\right).     
    \end{split}
    \end{align}
    Note that
    \begin{align} \begin{split} \label{eq:4.112}
        & \bm{\hat{\mu}}_{\mh, 1}(\mathbf{x})\tran\bm{\hat{\mu}}_{\mh, 2}(\mathbf{x})^{-1} \bm{\hat{\nu}}_{\mh, 1}(\mathbf{x}, y) - \bm{\tilde{\mu}}_{\mh, 1}(\mathbf{x})\tran\bm{\tilde{\mu}}_{\mh, 2}(\mathbf{x})^{-1} \bm{\tilde{\nu}}_{\mh, 1}(\mathbf{x}, y)  \\
        & = \left( \mathbf{\Lambda}_{\mh}^{-1} \bm{\hat{\mu}}_{\mh, 1}(\mathbf{x}) \right) \tran \left(\mathbf{\Lambda}_{\mh}^{-1} \bm{\hat{\mu}}_{\mh, 2}(\mathbf{x}) \mathbf{\Lambda}_{\mh}^{-1}\right)^{-1}\left(  \mathbf{\Lambda}_{\mh}^{-1} \bm{\hat{\nu}}_{\mh, 1}(\mathbf{x}, y) \right) \\
        & \quad - \left( \mathbf{\Lambda}_{\mh}^{-1} \bm{\tilde{\mu}}_{\mh, 1}(\mathbf{x}) \right) \tran \left(\mathbf{\Lambda}_{\mh}^{-1} \bm{\tilde{\mu}}_{\mh, 2}(\mathbf{x}) \mathbf{\Lambda}_{\mh}^{-1}\right)^{-1}\left(  \mathbf{\Lambda}_{\mh}^{-1} \bm{\tilde{\nu}}_{\mh, 1}(\mathbf{x}, y) \right)\\
        & = \left( \mathbf{\Lambda}_{\mh}^{-1} \bm{\hat{\mu}}_{\mh, 1}(\mathbf{x}) \right) \tran \left(\mathbf{\Lambda}_{\mh}^{-1} \bm{\hat{\mu}}_{\mh, 2}(\mathbf{x}) \mathbf{\Lambda}_{\mh}^{-1}\right)^{-1}\left(  \mathbf{\Lambda}_{\mh}^{-1} \bm{\hat{\nu}}_{\mh, 1}(\mathbf{x}, y) - \mathbf{\Lambda}_{\mh}^{-1} \bm{\tilde{\nu}}_{\mh, 1}(\mathbf{x}, y) \right) \\
        & \quad + \left( \mathbf{\Lambda}_{\mh}^{-1} \bm{\hat{\mu}}_{\mh, 1}(\mathbf{x}) \right) \tran \left( \left(\mathbf{\Lambda}_{\mh}^{-1} \bm{\hat{\mu}}_{\mh, 2}(\mathbf{x}) \mathbf{\Lambda}_{\mh}^{-1}\right)^{-1} - \left(\mathbf{\Lambda}_{\mh}^{-1} \bm{\tilde{\mu}}_{\mh, 2}(\mathbf{x}) \mathbf{\Lambda}_{\mh}^{-1}\right)^{-1} \right) \left( \mathbf{\Lambda}_{\mh}^{-1} \bm{\tilde{\nu}}_{\mh, 1}(\mathbf{x}, y) \right) \\
        & \quad + \left( \mathbf{\Lambda}_{\mh}^{-1} \bm{\hat{\mu}}_{\mh, 1}(\mathbf{x}) -  \mathbf{\Lambda}_{\mh}^{-1} \bm{\tilde{\mu}}_{\mh, 1}(\mathbf{x}) \right) \tran \left(\mathbf{\Lambda}_{\mh}^{-1} \bm{\tilde{\mu}}_{\mh, 2}(\mathbf{x}) \mathbf{\Lambda}_{\mh}^{-1}\right)^{-1}\left(  \mathbf{\Lambda}_{\mh}^{-1} \bm{\tilde{\nu}}_{\mh, 1}(\mathbf{x}, y) \right).
    \end{split} \end{align}
    Combining \eqref{eq:temp4.55}, \eqref{eq:temp4.56}, \eqref{mu2tildeinverse}, \eqref{mu2hatinverse}, \eqref{mu1hat}, \eqref{eq:4.111} and \eqref{eq:4.112}, we get
    \begin{align} \begin{split} \label{eq:4.113}
        & \left|  \bm{\hat{\mu}}_{\mh, 1}(\mathbf{x})\tran\bm{\hat{\mu}}_{\mh, 2}(\mathbf{x})^{-1} \bm{\hat{\nu}}_{\mh, 1}(\mathbf{x}, y) - \bm{\tilde{\mu}}_{\mh, 1}(\mathbf{x})\tran\bm{\tilde{\mu}}_{\mh, 2}(\mathbf{x})^{-1} \bm{\tilde{\nu}}_{\mh, 1}(\mathbf{x}, y) \right|  \\
        & \leq \left\| \mathbf{\Lambda}_{\mh}^{-1} \bm{\hat{\mu}}_{\mh, 1}(\mathbf{x}) \right\|_2 \cdot \left\| \left(\mathbf{\Lambda}_{\mh}^{-1} \bm{\hat{\mu}}_{\mh, 2}(\mathbf{x}) \mathbf{\Lambda}_{\mh}^{-1}\right)^{-1} \right\|_2 \cdot \left\|  \mathbf{\Lambda}_{\mh}^{-1} \bm{\hat{\nu}}_{\mh, 1}(\mathbf{x}, y) - \mathbf{\Lambda}_{\mh}^{-1} \bm{\tilde{\nu}}_{\mh, 1}(\mathbf{x}, y) \right\|_2 \\
        & \quad + \left\| \mathbf{\Lambda}_{\mh}^{-1} \bm{\hat{\mu}}_{\mh, 1}(\mathbf{x}) \right\|_2 \cdot \left\|  \left(\mathbf{\Lambda}_{\mh}^{-1} \bm{\hat{\mu}}_{\mh, 2}(\mathbf{x}) \mathbf{\Lambda}_{\mh}^{-1}\right)^{-1} - \left(\mathbf{\Lambda}_{\mh}^{-1} \bm{\tilde{\mu}}_{\mh, 2}(\mathbf{x}) \mathbf{\Lambda}_{\mh}^{-1}\right)^{-1}  \right\|_2 \cdot \left\| \mathbf{\Lambda}_{\mh}^{-1} \bm{\tilde{\nu}}_{\mh, 1}(\mathbf{x}, y) \right\|_2 \\
        & \quad + \left\| \mathbf{\Lambda}_{\mh}^{-1} \bm{\hat{\mu}}_{\mh, 1}(\mathbf{x}) -  \mathbf{\Lambda}_{\mh}^{-1} \bm{\tilde{\mu}}_{\mh, 1}(\mathbf{x}) \right\|_2 \cdot \left\| \left(\mathbf{\Lambda}_{\mh}^{-1} \bm{\tilde{\mu}}_{\mh, 2}(\mathbf{x}) \mathbf{\Lambda}_{\mh}^{-1}\right)^{-1} \right\|_2 \cdot \left\|  \mathbf{\Lambda}_{\mh}^{-1} \bm{\tilde{\nu}}_{\mh, 1}(\mathbf{x}, y) \right\|_2 \\
        & = o_{\mathbb{P}}\left(\rho ( \mh )\right) \cdot O_{\mathbb{P}}\left(\rho^{-1} \left( \mh \right)\right) \cdot O_{\mathbb{P}}\left(n^{-\frac{1}{2}} \rho^{\frac{1}{2}} \left( \mh \right)\right) + o_{\mathbb{P}}\left(\rho ( \mh )\right) \cdot O_{\mathbb{P}}\left(n^{-\frac{1}{2}} \rho^{-\frac{3}{2}} \left( \mh \right)\right) \cdot O\left(\rho ( \mh )\right) \\
        & \quad + O_{\mathbb{P}}\left(n^{-\frac{1}{2}} \rho^{\frac{1}{2}} \left( \mh \right)\right) \cdot O \left(\rho^{-1} \left( \mh \right)\right) \cdot O\left(\rho ( \mh )\right) \\
        & = O_{\mathbb{P}}\left(n^{-\frac{1}{2}} \rho^{\frac{1}{2}} \left( \mh \right)\right).
    \end{split} \end{align}
    \Cref{lemma:1.4} and \Cref{lemma:new2.2} imply that
    \begin{align}
        0 \leq \tilde{\nu}_{\mh, 0}(\mathbf{x}, y) \leq \mathrm{diam} ( \mathbb{M} )^2 \mathbb{E}\left[ \mathcal{L}_{\mx, \mh} \left( \mathbf{X} \right) \right] = O\left(\rho(\mh)\right). \label{eq:4.118}
    \end{align}
    Combining \eqref{eq:new4.38}, \eqref{mu2tildeinverse}, \eqref{eq:4.111} and \eqref{eq:4.118}, we get
    \begin{align} \begin{split} \label{eq:4.119}
        & \left| \tilde{\nu}_{\mh, 0}(\mathbf{x}, y) - \bm{\tilde{\mu}}_{\mh, 1}(\mathbf{x})\tran\bm{\tilde{\mu}}_{\mh, 2}(\mathbf{x})^{-1} \bm{\tilde{\nu}}_{\mh, 1}(\mathbf{x}, y) \right| \\
        & = \left| \tilde{\nu}_{\mh, 0}(\mathbf{x}, y) - \left( \mathbf{\Lambda}_{\mh}^{-1} \bm{\tilde{\mu}}_{\mh, 1}(\mathbf{x}) \right) \tran \left(\mathbf{\Lambda}_{\mh}^{-1} \bm{\tilde{\mu}}_{\mh, 2}(\mathbf{x}) \mathbf{\Lambda}_{\mh}^{-1}\right)^{-1}\left(  \mathbf{\Lambda}_{\mh}^{-1} \bm{\tilde{\nu}}_{\mh, 1}(\mathbf{x}, y) \right) \right|  \\
        & \leq |\tilde{\nu}_{\mh, 0}(\mathbf{x}, y)| + \left\|  \mathbf{\Lambda}_{\mh}^{-1} \bm{\tilde{\mu}}_{\mh, 1}(\mathbf{x}) \right\|_2 \cdot \left\| \left(\mathbf{\Lambda}_{\mh}^{-1} \bm{\tilde{\mu}}_{\mh, 2}(\mathbf{x}) \mathbf{\Lambda}_{\mh}^{-1}\right)^{-1} \right\|_2 \cdot \left\|  \mathbf{\Lambda}_{\mh}^{-1} \bm{\tilde{\nu}}_{\mh, 1}(\mathbf{x}, y) \right\|_2  \\
        & = O\left(\rho(\mh)\right) + o\left(\rho(\mh)\right) \cdot O\left(\rho^{-1}(\mh)\right) \cdot O\left(\rho(\mh)\right) \\
        & = O\left(\rho(\mh)\right).
    \end{split} \end{align}
    \Cref{lemma:temp4.7} implies that
    \begin{align} \begin{split} \label{eq:4.115}
        \bigg| \frac{1}{c_{\mh, \bm{0}_d, 1}(L)} \hat{\sigma}_{\mh}(\mathbf{x}) - f(\mathbf{x}) \bigg| & \leq \bigg| \frac{1}{c_{\mh, \bm{0}_d, 1}(L)} \tilde{\sigma}_{\mh}(\mathbf{x}) - f(\mathbf{x}) \bigg| + \bigg| \frac{1}{c_{\mh, \bm{0}_d, 1}(L)} \left( \hat{\sigma}_{\mh}(\mathbf{x}) -\tilde{\sigma}_{\mh}(\mathbf{x}) \right) \bigg|  \\
        & = o(1) +  O_{\mathbb{P}}\left(n^{-\frac{1}{2}} \rho^{-\frac{1}{2}} \left( \mh \right)\right) \\
        & = o_{\mathbb{P}}(1).
    \end{split} \end{align}
    Combining \Cref{lemma:1.4} and \eqref{eq:4.115}, we get
    \begin{align} \begin{gathered} \label{sigmahatinverse}
        \left| \frac{1}{\hat{\sigma}_{\mh}(\mathbf{x})} \right| = \frac{1}{c_{\mh, \bm{0}_d, 1}(L)} \bigg| \frac{1}{c_{\mh, \bm{0}_d, 1}(L)} \hat{\sigma}_{\mh}(\mathbf{x})\bigg|^{-1} =  O\left( \rho^{-1} \left( \mh \right) \right) O_{\mathbb{P}}(1) = O_{\mathbb{P}} \left( \rho^{-1} \left( \mh \right) \right).
    \end{gathered} \end{align}
    Combining \Cref{lemma:1.4}, \Cref{lemma:temp4.7}, \eqref{sigmatildeinverse}, \eqref{eq:new7.132}, \eqref{eq:4.109}, \eqref{eq:4.113}, \eqref{eq:4.119} and \eqref{sigmahatinverse}, we also get
    \begin{align*}
        & \left| \hat{M}_{\mh, 1}(\mathbf{x}, y) - \tilde{M}_{\mh, 1}(\mathbf{x}, y) \right| \\
        & = O_{\mathbb{P}} \left( \rho^{-1} \left( \mh \right) \right) \cdot O_{\mathbb{P}}\left(n^{-\frac{1}{2}} \rho^{\frac{1}{2}} \left( \mh \right)\right) + O_{\mathbb{P}} \left( \rho^{-1} \left( \mh \right) \right) \cdot O_{\mathbb{P}}\left(n^{-\frac{1}{2}} \rho^{\frac{1}{2}} \left( \mh \right)\right) \\
        & \quad + O \left( \rho^{-1} \left( \mh \right) \right) \cdot O_{\mathbb{P}} \left( \rho^{-1} \left( \mh \right) \right) \cdot O_{\mathbb{P}}\left(n^{-\frac{1}{2}} \rho^{\frac{1}{2}} \left( \mh \right)\right) \cdot O \left( \rho ( \mh ) \right) \\
        & = O_{\mathbb{P}}\left(n^{-\frac{1}{2}} \rho^{-\frac{1}{2}} \left( \mh \right)\right),
    \end{align*}
    which gives \eqref{lemma:new3.5}.

    Now, we prove the uniform version of \eqref{lemma:new3.5} over $y\in\mathbb{M}$. First, we define functions $\tilde{R}_{\mh}, \hat{R}_{\mh} : \mathbb{T}^{d} \to \mathbb{R}$ as
    \begin{align*}
        \tilde{R}_{\mh}(\mx) &:= \tilde{\mu}_{\mh, 0}(\mathbf{x}) + \left\| \mathbf{\Lambda}_{\mh}^{-1} \bm{\tilde{\mu}}_{\mh, 1}(\mathbf{x}) \right\|_2 \cdot
        \left\| \left(\mathbf{\Lambda}_{\mh}^{-1}\bm{\tilde{\mu}}_{\mh, 2}(\mathbf{x}) \mathbf{\Lambda}_{\mh}^{-1}\right)^{-1} \right\|_2 \cdot \mathbb{E} \left[ \mathcal{L}_{\mx, \mh} \left( \mathbf{X} \right)  \left\| \mathbf{\Lambda}_{\mh}^{-1} \bm{\Phi}_{\mx}^{-1} (\mathbf{X}) \right\|_2  \right], \\
        \hat{R}_{\mh}(\mx) &:= \hat{\mu}_{\mh, 0}(\mathbf{x}) + \left\| \mathbf{\Lambda}_{\mh}^{-1} \bm{\hat{\mu}}_{\mh, 1}(\mathbf{x}) \right\|_2 \cdot \left\| (\mathbf{\Lambda}_{\mh}^{-1}\bm{\hat{\mu}}_{\mh, 2}(\mathbf{x}) \mathbf{\Lambda}_{\mh}^{-1})^{-1} \right\|_2
        \cdot n^{-1} \sum_{i=1}^n \mathcal{L}_{\mx, \mh} \left(\mathbf{X}^{(i)}\right)  \left\| \mathbf{\Lambda}_{\mh}^{-1} \bm{\theta}^{(i)} \right\|_2.
    \end{align*}
    Note that
    \begin{align} \begin{split} \label{nutildebound}
        \left| \tilde{\nu}_{\mh, 0}(\mathbf{x}, y_1) - \tilde{\nu}_{\mh, 0}(\mathbf{x}, y_2) \right| & = \left| \mathbb{E} \left[ \mathcal{L}_{\mx, \mh} \left( \mathbf{X} \right)  \left( d_{\mathbb{M}}^2(Y, y_1) - d_{\mathbb{M}}^2(Y, y_2) \right) \right] \right| \\
        & \leq \mathbb{E} \left[ \mathcal{L}_{\mx, \mh} \left( \mathbf{X} \right)  \left| d_{\mathbb{M}}^2(Y, y_1) - d_{\mathbb{M}}^2(Y, y_2) \right| \right] \\
        & \leq 2 \mathrm{diam} (\mathbb{M}) d_{\mathbb{M}}(y_1, y_2) \tilde{\mu}_{\mh, 0}(\mathbf{x}), \\
        \left\| \mathbf{\Lambda}_{\mh}^{-1} \left( \tilde{\nu}_{\mh, 1}(\mathbf{x}, y_1) - \tilde{\nu}_{\mh, 1}(\mathbf{x}, y_2) \right) \right\|_2 & = \left\| \mathbb{E} \left[ \mathcal{L}_{\mx, \mh} \left( \mathbf{X} \right)  \mathbf{\Lambda}_{\mh}^{-1} \bm{\Phi}_{\mx}^{-1} (\mathbf{X}) \left( d_{\mathbb{M}}^2(Y, y_1) - d_{\mathbb{M}}^2(Y, y_2) \right) \right] \right\|_2 \\
        & \leq \mathbb{E} \left[ \mathcal{L}_{\mx, \mh} \left( \mathbf{X} \right)  \| \mathbf{\Lambda}_{\mh}^{-1} \bm{\Phi}_{\mx}^{-1} (\mathbf{X}) \|_2 \left| d_{\mathbb{M}}^2(Y, y_1) - d_{\mathbb{M}}^2(Y, y_2) \right| \right] \\
        & \leq 2 \mathrm{diam} (\mathbb{M}) d_{\mathbb{M}}(y_1, y_2) \mathbb{E} \left[ \mathcal{L}_{\mx, \mh} \left( \mathbf{X} \right)  \| \mathbf{\Lambda}_{\mh}^{-1} \bm{\Phi}_{\mx}^{-1} (\mathbf{X}) \|_2  \right]
    \end{split} \end{align}
    for any $y_1, y_2 \in \mathbb{M}$. Combining \eqref{eq:4.80} and \eqref{nutildebound}, we get
    \begin{align*}
        & \left| \tilde{\sigma}_{\mh} (\mathbf{x}) \right| \cdot \left| \tilde{M}_{\mh, 1}(\mathbf{x}, y_1) - \tilde{M}_{\mh, 1}(\mathbf{x}, y_2) \right|  \\
        & \leq \left|  \tilde{\nu}_{\mh, 0}(\mathbf{x}, y_1) - \tilde{\nu}_{\mh, 0}(\mathbf{x}, y_2) \right| + \left| \bm{\tilde{\mu}}_{\mh, 1}(\mathbf{x})\tran\bm{\tilde{\mu}}_{\mh, 2}(\mathbf{x})^{-1} \left(\bm{\tilde{\nu}}_{\mh, 1}(\mathbf{x}, y_1) - \bm{\tilde{\nu}}_{\mh, 1}(\mathbf{x}, y_2) \right) \right| \\
        & = \left|  \tilde{\nu}_{\mh, 0}(\mathbf{x}, y_1) - \tilde{\nu}_{\mh, 0}(\mathbf{x}, y_2) \right| \\
        & \quad + \left| \left( \mathbf{\Lambda}_{\mh}^{-1} \bm{\tilde{\mu}}_{\mh, 1}(\mathbf{x}) \right) \tran \left (\mathbf{\Lambda}_{\mh}^{-1}\bm{\tilde{\mu}}_{\mh, 2}(\mathbf{x}) \mathbf{\Lambda}_{\mh}^{-1} \right)^{-1} \left( \mathbf{\Lambda}_{\mh}^{-1} \bm{\tilde{\nu}}_{\mh, 1}(\mathbf{x}, y_1) - \mathbf{\Lambda}_{\mh}^{-1} \bm{\tilde{\nu}}_{\mh, 1}(\mathbf{x}, y_2) \right)\right|  \\
        & \leq \left|  \tilde{\nu}_{\mh, 0}(\mathbf{x}, y_1) - \tilde{\nu}_{\mh, 0}(\mathbf{x}, y_2) \right| \\
        & \quad + \| \mathbf{\Lambda}_{\mh}^{-1} \bm{\tilde{\mu}}_{\mh, 1}(\mathbf{x}) \|_2 \cdot \| (\mathbf{\Lambda}_{\mh}^{-1}\bm{\tilde{\mu}}_{\mh, 2}(\mathbf{x}) \mathbf{\Lambda}_{\mh}^{-1})^{-1} \|_2 \cdot \| \mathbf{\Lambda}_{\mh}^{-1} \bm{\tilde{\nu}}_{\mh, 1}(\mathbf{x}, y_1) - \mathbf{\Lambda}_{\mh}^{-1} \bm{\tilde{\nu}}_{\mh, 1}(\mathbf{x}, y_2) \|_2  \\
        & \leq   2 \mathrm{diam} (\mathbb{M}) d_{\mathbb{M}}(y_1, y_2) \tilde{R}_{\mh}(\mx).
    \end{align*}
    Similarly, we get
    \begin{align} \begin{split} \label{nuhatbound}
        \left| \hat{\nu}_{\mh, 0}(\mathbf{x}, y_1) - \hat{\nu}_{\mh, 0}(\mathbf{x}, y_2) \right| & \leq 2 \mathrm{diam} (\mathbb{M}) \cdot d_{\mathbb{M}}(y_1, y_2) \cdot \hat{\mu}_{\mh, 0}(\mathbf{x}), \\
        \left\| \mathbf{\Lambda}_{\mh}^{-1} \left( \hat{\nu}_{\mh, 1}(\mathbf{x}, y_1) - \hat{\nu}_{\mh, 1}(\mathbf{x}, y_2) \right) \right\|_2 & \leq 2 \mathrm{diam} (\mathbb{M}) \cdot d_{\mathbb{M}}(y_1, y_2) \cdot n^{-1} \sum_{i=1}^n \mathcal{L}_{\mx, \mh} \left(\mathbf{X}^{(i)}\right)  \left\| \mathbf{\Lambda}_{\mh}^{-1} \bm{\theta}^{(i)} \right\|_2. 
    \end{split} \end{align}
    Also, \eqref{eq:4.80} and \eqref{nuhatbound} imply that 
    \begin{align*}
        \left| \hat{\sigma}_{\mh} (\mathbf{x}) \right| \cdot \left| \hat{M}_{\mh, 1}(\mathbf{x}, y_1) - \hat{M}_{\mh, 1}(\mathbf{x}, y_2) \right| \leq   2 \mathrm{diam} (\mathbb{M}) d_{\mathbb{M}}(y_1, y_2) \hat{R}_{\mh}(\mx).
    \end{align*}
    Similarly to \eqref{eq:4.110}, we get
    \begin{align} \begin{split} \label{Rhbound_step1}
         \left( \mathbb{E} \left[ \mathcal{L}_{\mx, \mh} \left( \mathbf{X} \right)  \left\| \mathbf{\Lambda}_{\mh}^{-1} \bm{\Phi}_{\mx}^{-1} (\mathbf{X}) \right\|_2  \right] \right)^2 & \leq \mathbb{E} \left[ \mathcal{L}_{\mx, \mh}^2 \left( \mathbf{X} \right)  \left\| \mathbf{\Lambda}_{\mh}^{-1} \bm{\Phi}_{\mx}^{-1} (\mathbf{X}) \right\|_2^2  \right] \\
         & \leq \sup_{\mz \in \mathbb{T}^{d}} f ( \mz ) \cdot \int_{[-\pi, \pi)^{d}}  \bm{L}_{\mh}^2 (\bm{\theta})  \left\| \mathbf{\Lambda}_{\mh}^{-1}  \bm{\theta} \right\|_2^2 \mathrm{d} \bm{\theta}. 
    \end{split} \end{align}
    Combining Lemma C.2 of Im et al. (2025), \eqref{mu1bound_step3} and \eqref{Rhbound_step1}, we get
    \begin{align} \label{Rhbound_step2}
    \begin{split}
        \mathbb{E} \left[ \mathcal{L}_{\mx, \mh} \left( \mathbf{X} \right)  \left\| \mathbf{\Lambda}_{\mh}^{-1} \bm{\Phi}_{\mx}^{-1} (\mathbf{X}) \right\|_2  \right] &= O\left(\rho ( \mh )\right), \\
        n^{-1} \sum_{i=1}^n \mathcal{L}_{\mx, \mh} \left(\mathbf{X}^{(i)}\right)  \left\| \mathbf{\Lambda}_{\mh}^{-1} \bm{\theta}^{(i)} \right\|_2 - \mathbb{E} \left[ \mathcal{L}_{\mx, \mh} \left( \mathbf{X} \right)  \left\| \mathbf{\Lambda}_{\mh}^{-1} \bm{\Phi}_{\mx}^{-1} (\mathbf{X}) \right\|_2  \right] &= O_{\mathbb{P}}\left(n^{-\frac{1}{2}} \rho^{\frac{1}{2}} \left( \mh \right)\right). 
    \end{split}
    \end{align}
    Combining \Cref{lemma:1.4}, \eqref{eq:4.51}, \eqref{eq:temp4.73} and \eqref{Rhbound_step2}, we also get
    \begin{align} \begin{split} \label{Rhbound_step3}
        \hat{\mu}_{\mh, 0}(\mathbf{x}) &= O\left(\rho ( \mh )\right) + O_{\mathbb{P}}\left(n^{-\frac{1}{2}} \rho^{\frac{1}{2}} \left( \mh \right)\right) = O_{\mathbb{P}}\left(\rho ( \mh )\right), \\
        n^{-1} \sum_{i=1}^n \mathcal{L}_{\mx, \mh} \left(\mathbf{X}^{(i)}\right) \left\| \mathbf{\Lambda}_{\mh}^{-1} \bm{\theta}^{(i)} \right\|_2 &= O\left(\rho ( \mh )\right) + O_{\mathbb{P}}\left(n^{-\frac{1}{2}} \rho^{\frac{1}{2}} \left( \mh \right)\right) = O_{\mathbb{P}}\left(\rho ( \mh )\right).
    \end{split} \end{align} 
    Combining \eqref{eq:new4.38}, \eqref{mu2tildeinverse}, \eqref{mu2hatinverse}, \eqref{mu1hat}, \eqref{sigmatildeinverse}, \eqref{sigmahatinverse}, \eqref{Rhbound_step2} and \eqref{Rhbound_step3}, we get
    \begin{align} \begin{split} \label{eq:new7.144}
        \bigg| \frac{1}{\tilde{\sigma}_{\mh} (\mathbf{x})} \bigg| \cdot \tilde{R}_{\mh}(\mx) & = O\left(\rho^{-1} \left( \mh \right)\right) \cdot \left( O\left(\rho ( \mh )\right)+O\left(\rho ( \mh )\right) \cdot O\left(\rho^{-1} \left( \mh \right)\right) \cdot O\left(\rho ( \mh )\right) \right) \\
        & = O(1) , \\
        \bigg| \frac{1}{ \hat{\sigma}_{\mh} (\mathbf{x})} \bigg| \cdot \hat{R}_{\mh}(\mx) & = O_{\mathbb{P}}\left(\rho^{-1} \left( \mh \right)\right) \cdot \left( O_{\mathbb{P}}\left(\rho ( \mh )\right)+O_{\mathbb{P}}\left(\rho ( \mh )\right) \cdot O_{\mathbb{P}}\left(\rho^{-1} \left( \mh \right)\right) \cdot O_{\mathbb{P}}\left(\rho ( \mh )\right) \right) \\
        & =  O_{\mathbb{P}}(1).
    \end{split}  \end{align}
    The rest of the proof proceeds as in the proof of Lemma C.8 of Im et al. (2025).
\end{proof}

\begin{lemma} \label{thm:4.3}
    Assume that the conditions \ref{con:L1}, \ref{con:B1}, \ref{con:D1}, \ref{con:D2} and \ref{con:M1} hold. Then, it holds that
    \begin{align*}
        d_{\mathbb{M}}(\hat{m}_{\mh, 1}(\mathbf{x}), \tilde{m}_{\mh, 1}(\mathbf{x})) = o_{\mathbb{P}}(1).
    \end{align*}
\end{lemma}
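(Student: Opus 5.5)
This is the standard argmin-continuity argument (as in Lemma B.9 / C.9 of Im et al. (2025)), applied to the pair $\hat M_{\mh,1}(\mx,\cdot)$ and $\tilde M_{\mh,1}(\mx,\cdot)$. The uniform closeness needed for it is already supplied by \Cref{lemma:4.10}.

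Fix $\epsilon>0$ and work on the event where $\hat\sigma_{\mh}(\mx)>0$ and $\bm{\hat\mu}_{\mh,2}(\mx)$ is invertible. By \Cref{lemma:temp4.6} and \Cref{lemma:temp4.7} this event has probability tending to one, so its complement is negligible. By condition \ref{con:M1}-(ii) for $s=1$, there exist $\eta_\epsilon>0$ and $N$ such that
\[
\inf_{y:\,d_{\mathbb{M}}(y,\tilde m_{\mh,1}(\mx))>\epsilon}\big[\tilde M_{\mh,1}(\mx,y)-\tilde M_{\mh,1}(\mx,\tilde m_{\mh,1}(\mx))\big]\ge \eta_\epsilon \quad\text{for } n\ge N.
\]

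Now suppose $d_{\mathbb{M}}(\hat m_{\mh,1}(\mx),\tilde m_{\mh,1}(\mx))>\epsilon$. Since $\hat m_{\mh,1}(\mx)$ minimizes $\hat M_{\mh,1}(\mx,\cdot)$,
\begin{align*}
\eta_\epsilon &\le \tilde M_{\mh,1}(\mx,\hat m_{\mh,1}(\mx))-\tilde M_{\mh,1}(\mx,\tilde m_{\mh,1}(\mx))\\
&\le \big[\tilde M_{\mh,1}-\hat M_{\mh,1}\big](\mx,\hat m_{\mh,1}(\mx))+\big[\hat M_{\mh,1}-\tilde M_{\mh,1}\big](\mx,\tilde m_{\mh,1}(\mx))\\
&\le 2\sup_{y\in\mathbb{M}}\big|\hat M_{\mh,1}(\mx,y)-\tilde M_{\mh,1}(\mx,y)\big|.
\end{align*}
Hence
\[
\mathbb{P}\big(d_{\mathbb{M}}(\hat m_{\mh,1}(\mx),\tilde m_{\mh,1}(\mx))>\epsilon\big)\le \mathbb{P}\Big(\sup_{y}|\hat M_{\mh,1}-\tilde M_{\mh,1}|\ge \eta_\epsilon/2\Big)+o(1).
\]
The first term on the right tends to zero by \Cref{lemma:4.10}, which gives the claim.

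The only delicate point is measurability of $\hat m_{\mh,1}(\mx)$ and of the supremum. I would handle this with outer probabilities, exactly as in Im et al. (2025), or by using the almost sure uniqueness from \ref{con:M1}-(i). No step beyond \Cref{lemma:4.10} is a real obstacle. Note that the weights $\hat W_{\mx,\mh}$ may be negative, but the argument above never uses their sign.
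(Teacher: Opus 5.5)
Your proposal is correct and follows essentially the paper's route. The paper proves this lemma by repeating the argument for \Cref{thm:3.3} (that is, Lemma B.9 of Im et al.\ (2025)), with \Cref{lemma:4.10} supplying the uniform closeness of $\hat{M}_{\mh,1}(\mx,\cdot)$ and $\tilde{M}_{\mh,1}(\mx,\cdot)$ and with the well-separation in condition \ref{con:M1}-(ii) for the $n$-dependent criterion. That is exactly your argument, including your restriction to the high-probability event where $\hat{\sigma}_{\mh}(\mx)>0$ and $\bm{\hat{\mu}}_{\mh,2}(\mx)$ is invertible.
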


\begin{proof} [Proof of \Cref{thm:4.3}]
    The lemma follows by arguing as in the proof of \Cref{thm:3.3} and using \Cref{lemma:4.10}.
\end{proof}

\subsection{Proof of Theorem \ref{thm:consistency} for $\hat{m}_{\mh, 1}(\mathbf{x})$}

The theorem follows from \Cref{thm:4.1} and \Cref{thm:4.3}.

\subsection{Lemmas for proof of \Cref{thm:main4.3}}

\begin{lemma} \label{lemma:4.4}
    Assume that the conditions \ref{con:L1} and \ref{con:D4} hold and that $\lim_{n\rightarrow\infty} \| \mh \|_2=0$. Then, it holds that
    \begin{align} \begin{split}\label{eq:4.38}
          \left\| \mathbf{\Lambda}_{\mh}^{-1} \bm{\tilde{\mu}}_{\mh, 1}(\mathbf{x}) \right\|_2 &= O\left( \rho ( \mh ) \| \mh \|_2  \right).
    \end{split} \end{align}
    If the condition \ref{con:D5} further holds, then it holds that
    \begin{align} \begin{split}\label{eq:4.39}
          \sup_{y \in \mathbb{M}} \left\| \mathbf{\Lambda}_{\mh}^{-1} \bm{\tilde{\tau}}_{\mh, 1}(\mathbf{x}, y) \right\|_2 &= O\left( \rho ( \mh ) \| \mh \|_2  \right).
    \end{split} \end{align}
\end{lemma}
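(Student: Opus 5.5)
We follow the scheme of \Cref{lemma:new4.4}, but replace the continuity argument with the first-order Taylor bound of \Cref{lemma:2.1}.

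By \Cref{lemma:1.2}, \eqref{eq:2.7} and the symmetry \eqref{symmetricproperty}, $\int_{[-\pi,\pi)^d}\bm{L}_{\mh}(\bm\theta)\bm\theta\,\mathrm{d}\bm\theta=\bm 0_d$. Hence we may subtract $f(\mx)$ from the integrand:
\begin{align*}
\mathbf{\Lambda}_{\mh}^{-1}\bm{\tilde{\mu}}_{\mh,1}(\mx)=\int_{[-\pi,\pi)^d}\bm{L}_{\mh}(\bm\theta)\,\mathbf{\Lambda}_{\mh}^{-1}\bm\theta\,\big(f(\bm{\Phi}_{\mx}(\bm\theta))-f(\mx)\big)\,\mathrm{d}\bm\theta .
\end{align*}
We do not try to extract the linear term. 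Keeping only the first-order bound \eqref{taylororder1} is enough here, since only $O(\rho(\mh)\|\mh\|_2)$ is claimed rather than a second-order rate. Condition \ref{con:D4} gives $\sup_{\mz\in\mathbb{T}^d}\|\nabla\bar f(\mz)\|_2<\infty$: $\bar f$ is $C^1$ near the compact set $\mathbb{T}^d$, or alternatively one can integrate the bounded Hessian along curves. Using $\|\mathbf{\Lambda}_{\mh}^{-1}\bm\theta\|_2\|\bm\theta\|_2\le \|\mathbf{\Lambda}_{\mh}^{-1}\bm\theta\|_2^2+\|\bm\theta\|_2^2$, or Cauchy–Schwarz, we get
\begin{align*}
\big\|\mathbf{\Lambda}_{\mh}^{-1}\bm{\tilde{\mu}}_{\mh,1}(\mx)\big\|_2\le \sup_{\mz}\|\nabla\bar f(\mz)\|_2\int_{[-\pi,\pi)^d}\bm{L}_{\mh}(\bm\theta)\,\|\mathbf{\Lambda}_{\mh}^{-1}\bm\theta\|_2\,\|\bm\theta\|_2\,\mathrm{d}\bm\theta .
\end{align*}

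To bound the last integral, apply Cauchy–Schwarz with respect to the measure $\bm{L}_{\mh}(\bm\theta)\mathrm{d}\bm\theta$. This gives the square root of the product of $\int\bm{L}_{\mh}\|\mathbf{\Lambda}_{\mh}^{-1}\bm\theta\|_2^2=O(\rho(\mh))$, by \eqref{mu1bound_step3}, and $\int\bm{L}_{\mh}\|\bm\theta\|_2^2=O(\rho(\mh)\|\mh\|_2^2)$, by \eqref{eq:2.11}. The result is $O(\rho(\mh)\|\mh\|_2)$, which proves \eqref{eq:4.38}. Both moment bounds rest on \Cref{lemma:1.4}, which holds without a bounded-ratio assumption on the bandwidths because each coordinate factorizes.

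For \eqref{eq:4.39} the argument is the same, with $f$ replaced by $f\cdot g_y$. Symmetry lets us subtract $(f\cdot g_y)(\mx)$, and the first bound of \Cref{lemma:3.1} gives the Taylor estimate. The only new point is uniformity in $y$. We need $\sup_{y}\sup_{\mz}\|\nabla(\bar f\bar g_y)(\mz)\|_2<\infty$, which follows from the product rule together with the uniform bounds in \ref{con:D5} on $g_y$ and $\nabla\bar g_y$ and the bounds on $f$ and $\nabla\bar f$ from \ref{con:D4}. The constant is then independent of $y$, so the supremum over $y$ costs nothing.

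The main obstacle is the minor point of deducing a uniform bound on $\|\nabla\bar f\|_2$ over $\mathbb{T}^d$ from \ref{con:D4}, which controls only the Hessian. We would handle it by noting that $\nabla\bar f$ is continuous on the compact set $\mathbb{T}^d$, since twice differentiable implies $C^1$. Alternatively, along the geodesic $\bm\Phi_{\mx}(t\bm\theta)$ the tangential gradient changes by at most $\sup\|\nabla^2\bar f\|_2\cdot\|\bm\theta\|_2$, and the radial component vanishes by \eqref{constraint}.
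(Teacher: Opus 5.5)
Your proposal is correct and is essentially the paper's proof: subtract $f(\mx)$ (resp.\ $(f\cdot g_y)(\mx)$) using the vanishing first kernel moment, apply the first-order bound of \Cref{lemma:2.1} (resp.\ \Cref{lemma:3.1}), and finish with Cauchy--Schwarz against $\bm{L}_{\mh}(\bm\theta)\,\mathrm{d}\bm\theta$ via \eqref{mu1bound_step3} and \eqref{eq:2.11}. Your justification of $\sup_{\mz\in\mathbb{T}^d}\|\nabla\bar f(\mz)\|_2<\infty$, which the paper uses without comment, is a welcome addition; you should drop the unweighted alternative $\|\mathbf{\Lambda}_{\mh}^{-1}\bm\theta\|_2\|\bm\theta\|_2\le\|\mathbf{\Lambda}_{\mh}^{-1}\bm\theta\|_2^2+\|\bm\theta\|_2^2$, since it gives only $O(\rho(\mh))$ and loses the factor $\|\mh\|_2$.
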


\begin{proof} [Proof of \Cref{lemma:4.4}]
    We first prove \eqref{eq:4.38}. Combining \Cref{lemma:2.1}, \eqref{eq:2.11}, \eqref{mu1bound_step1}, \eqref{mu1bound_step3} and the Cauchy-Schwarz inequality, we get
    \begin{align*}
        \left\| \mathbf{\Lambda}_{\mh}^{-1} \bm{\tilde{\mu}}_{\mh, 1}(\mathbf{x}) \right\|_{2}^{2} & \leq \left( \int_{[-\pi, \pi)^{d}}  \bm{L}_{\mh} (\bm{\theta}) \left\| \mathbf{\Lambda}_{\mh}^{-1} \bm{\theta} \right\|_2  \left|  f \left( \bm{\Phi}_{\mx} \left( \bm{\theta} \right) \right) - f(\mx)\right| \mathrm{d} \bm{\theta} \right)^2\\
        & \leq \left( \int_{[-\pi, \pi)^{d}}  \bm{L}_{\mh} (\bm{\theta}) \left\| \mathbf{\Lambda}_{\mh}^{-1} \bm{\theta} \right\|_2 \| \bm{\theta} \|_2 \left( \sup_{\mz \in \mathbb{T}^{d}} \left\| \nabla \bar{f}  \left( \mz \right) \right\|_2 \right) \mathrm{d} \bm{\theta} \right)^2 \\
        & \leq  \left( \sup_{\mz \in \mathbb{T}^{d}} \left\| \nabla \bar{f}  \left( \mz \right) \right\|_2 \right)^2 \cdot \left( \int_{[-\pi, \pi)^{d}} \bm{L}_{\mh} (\bm{\theta})   \| \bm{\theta} \|_2^2  \mathrm{d} \bm{\theta} \right) \\
        & \quad \cdot \left( \int_{[-\pi, \pi)^{d}} \bm{L}_{\mh} (\bm{\theta})   \left\| \mathbf{\Lambda}_{\mh}^{-1} \bm{\theta} \right\|_2^2  \mathrm{d} \bm{\theta} \right) \\
        & \leq O\left( \rho ( \mh ) \| \mh \|_2^2  \right) \cdot O\left( \rho ( \mh ) \right) \\
        & = O\left( \rho^2 ( \mh ) \| \mh \|_2^2  \right).
    \end{align*}
    Using \Cref{lemma:3.1}, we similarly get
    \begin{align*}
        \sup_{y \in \mathbb{M}} \left\| \mathbf{\Lambda}_{\mh}^{-1} \bm{\tilde{\tau}}_{\mh, 1}(\mathbf{x}, y) \right\|_2 & \leq \sup_{y \in \mathbb{M}} \int_{[-\pi, \pi)^{d}}  \bm{L}_{\mh} (\bm{\theta}) \left\| \mathbf{\Lambda}_{\mh}^{-1} \bm{\theta} \right\|_2  \left|  (f \cdot g_y) \left( \bm{\Phi}_{\mx} \left( \bm{\theta} \right) \right) - (f \cdot g_y)(\mx)\right| \mathrm{d} \bm{\theta} \\
        & \leq \sup_{y \in \mathbb{M}} \sup_{\mz \in \mathbb{T}^{d}} \left\| \nabla \left( \bar{f} \cdot \bar{g}_y \right)  \left( \mz \right) \right\|_2 \cdot \int_{[-\pi, \pi)^{d}}  \bm{L}_{\mh} (\bm{\theta}) \left\| \mathbf{\Lambda}_{\mh}^{-1} \bm{\theta} \right\|_2 \| \bm{\theta} \|_2 \mathrm{d} \bm{\theta} \\
        & = O\left( \rho ( \mh ) \| \mh \|_2  \right).
    \end{align*}
    This completes the proof.
\end{proof}

\begin{lemma} \label{thm:4.2}
    Assume that the conditions \ref{con:L1}, \ref{con:D1}, \ref{con:D4}, \ref{con:D5}, \ref{con:M1} and \ref{con:M2} hold and that $\lim_{n\rightarrow\infty} \| \mh \|_2=0$. Then, it holds that
    \begin{align*}
        d_{\mathbb{M}}(\tilde{m}_{\mh, 1}(\mathbf{x}), m_{\oplus}(\mathbf{x}))^{\beta_{\oplus} - 1} = O\left(\| \mh \|_{2}^{2}\right).
    \end{align*}
\end{lemma}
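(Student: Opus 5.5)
The plan mirrors the proof of \Cref{thm:3.2}, with the local constant weight replaced by the local linear weight. First I will write $\tilde M_{\mh,1}$ in integral form. By Fubini's theorem and the definition of $g_w$,
\[
\tilde M_{\mh,1}(\mx,y)=\int_{\mathbb{M}} d_{\mathbb{M}}^2(y,w)\,\kappa_{\mh}(\mx,w)\,\mathrm{d}P_Y(w),\qquad
\kappa_{\mh}(\mx,w):=\frac{\tilde\tau_{\mh,0}(\mx,w)-\bm{\tilde\mu}_{\mh,1}(\mx)\tran\bm{\tilde\mu}_{\mh,2}(\mx)^{-1}\bm{\tilde\tau}_{\mh,1}(\mx,w)}{\tilde\sigma_{\mh}(\mx)},
\]
which is the local linear analogue of \eqref{eq:3.21}.

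Next comes the chain of inequalities in \eqref{eq:3.27}. I will use $\tilde M_{\mh,1}(\mx,m_\oplus(\mx))\ge \tilde M_{\mh,1}(\mx,\tilde m_{\mh,1}(\mx))$ and $|d_{\mathbb{M}}^2(a,w)-d_{\mathbb{M}}^2(b,w)|\le 2\,\mathrm{diam}(\mathbb{M})\,d_{\mathbb{M}}(a,b)$. This gives
\[
M_\oplus(\mx,\tilde m_{\mh,1}(\mx))-M_\oplus(\mx,m_\oplus(\mx))\le 2\,\mathrm{diam}(\mathbb{M})\,d_{\mathbb{M}}(\tilde m_{\mh,1}(\mx),m_\oplus(\mx))\sup_{y\in\mathbb{M}}|\kappa_{\mh}(\mx,y)-g_y(\mx)|.
\]
One point needs care. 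Because $\kappa_{\mh}$ can be negative, I cannot bound $|\kappa_{\mh}-g|$ by going through total variation. I only need the pointwise sup bound, and the integration against $P_Y$ is harmless because $\int g_w(\mx)\,\mathrm dP_Y(w)=1$ is not needed; the sup times $P_Y(\mathbb M)=1$ suffices.

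The main step is to sharpen \Cref{lemma:new4.8} to
\[
\sup_{y}|\kappa_{\mh}(\mx,y)-g_y(\mx)|=O(\|\mh\|_2^2).
\]
I will start from the decomposition \eqref{eq:new7.124}:
\[
\tilde\sigma_{\mh}(\kappa_{\mh}-g_y(\mx))=\bigl(\tilde\tau_{\mh,0}-g_y(\mx)\tilde\mu_{\mh,0}\bigr)-\bigl(\mathbf\Lambda_{\mh}^{-1}\bm{\tilde\mu}_{\mh,1}\bigr)\tran\bigl(\mathbf\Lambda_{\mh}^{-1}\bm{\tilde\mu}_{\mh,2}\mathbf\Lambda_{\mh}^{-1}\bigr)^{-1}\bigl(\mathbf\Lambda_{\mh}^{-1}\bm{\tilde\tau}_{\mh,1}-g_y(\mx)\mathbf\Lambda_{\mh}^{-1}\bm{\tilde\mu}_{\mh,1}\bigr).
\]
The first term is $O(\rho(\mh)\|\mh\|_2^2)$ uniformly in $y$ by \eqref{tau0bound}, which follows from \Cref{lemma:2.2} and \Cref{lemma:3.2}. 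For the second term:
\begin{itemize}
\item \Cref{lemma:4.4} gives $\|\mathbf\Lambda_{\mh}^{-1}\bm{\tilde\mu}_{\mh,1}\|_2=O(\rho(\mh)\|\mh\|_2)$.
\item \Cref{lemma:4.4} also gives $\sup_y\|\mathbf\Lambda_{\mh}^{-1}\bm{\tilde\tau}_{\mh,1}\|_2=O(\rho(\mh)\|\mh\|_2)$.
\item \eqref{mu2tildeinverse} gives $\|(\mathbf\Lambda_{\mh}^{-1}\bm{\tilde\mu}_{\mh,2}\mathbf\Lambda_{\mh}^{-1})^{-1}\|_2=O(\rho^{-1}(\mh))$.
\end{itemize}
Together with $\sup_y g_y(\mx)<\infty$ from \ref{con:D5}, the second term is $O(\rho(\mh)\|\mh\|_2)\cdot O(\rho^{-1}(\mh))\cdot O(\rho(\mh)\|\mh\|_2)=O(\rho(\mh)\|\mh\|_2^2)$. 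Finally, \eqref{sigmatildeinverse} gives $1/\tilde\sigma_{\mh}(\mx)=O(\rho^{-1}(\mh))$, and dividing through yields the claimed $O(\|\mh\|_2^2)$. The only subtlety is that the linear correction is a product of two first-order quantities, each of order $\|\mh\|_2$, so together they reach second order.

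To conclude, \Cref{thm:4.1} ensures $d_{\mathbb{M}}(\tilde m_{\mh,1}(\mx),m_\oplus(\mx))<\eta_\oplus$ for large $n$. Condition \ref{con:M2} then bounds the left-hand side of the first display below by $C_\oplus d_{\mathbb{M}}(\tilde m_{\mh,1}(\mx),m_\oplus(\mx))^{\beta_\oplus}$. If the distance is zero there is nothing to prove; otherwise I divide both sides by $d_{\mathbb{M}}(\tilde m_{\mh,1}(\mx),m_\oplus(\mx))$. This gives $d_{\mathbb{M}}(\tilde m_{\mh,1}(\mx),m_\oplus(\mx))^{\beta_\oplus-1}=O(\|\mh\|_2^2)$.
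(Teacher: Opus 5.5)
Your proposal is correct and follows the paper's proof essentially step for step. The paper likewise bounds $M_{\oplus}(\mx,\tilde m_{\mh,1}(\mx))-M_{\oplus}(\mx,m_{\oplus}(\mx))$ by $2\,\mathrm{diam}(\mathbb{M})\,d_{\mathbb{M}}(\tilde m_{\mh,1}(\mx),m_{\oplus}(\mx))\sup_{y\in\mathbb{M}}|\kappa_{\mh}(\mx,y)-g_y(\mx)|$, citing the corresponding lemma of Im et al.\ (2025) instead of writing out the Fubini argument, bounds the numerator uniformly by $O(\rho(\mh)\|\mh\|_2^2)$ via \eqref{tau0bound}, \eqref{mu2tildeinverse} and \Cref{lemma:4.4}, divides by $\tilde\sigma_{\mh}(\mx)$ using \eqref{sigmatildeinverse}, and concludes with \Cref{thm:4.1} and condition~\ref{con:M2}; your decomposition with last factor $\mathbf{\Lambda}_{\mh}^{-1}\bm{\tilde\tau}_{\mh,1}-g_y(\mx)\mathbf{\Lambda}_{\mh}^{-1}\bm{\tilde\mu}_{\mh,1}$ is the correct form (the display \eqref{eq:new7.124} has these two terms swapped, though the paper's proof of this lemma uses the correct version).
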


\begin{proof} [Proof of \Cref{thm:4.2}]
    By arguing as in the proof of Lemma C.14 of Im et al. (2025), we get
    \begin{align} \begin{split} \label{Fubini_moplus}
         & M_{\oplus} (\mathbf{x}, \tilde{m}_{\mh, 1}(\mathbf{x})) - M_{\oplus} (\mathbf{x}, m_{\oplus}(\mathbf{x}))  \\
         & \leq 2 \mathrm{diam} (\mathbb{M}) \cdot d_{\mathbb{M}}(\tilde{m}_{\mh, 1}(\mathbf{x}), m_{\oplus}(\mathbf{x})) \\
         & \quad \cdot \sup_{y \in \mathbb{M}}
         \bigg|  \frac{\tilde{\tau}_{\mh, 0}(\mathbf{x}, y) - \bm{\tilde{\mu}}_{\mh, 1}(\mathbf{x})\tran\bm{\tilde{\mu}}_{\mh, 2}(\mathbf{x})^{-1} \bm{\tilde{\tau}}_{\mh, 1}(\mathbf{x}, y)}{\tilde{\sigma}_{\mh}(\mathbf{x})} - g_y(\mathbf{x}) \bigg|.
    \end{split} \end{align}
    Combining \eqref{tau0bound}, \eqref{mu2tildeinverse}, \eqref{eq:4.38} and \eqref{eq:4.39}, we get
    \begin{align} \begin{split}\label{uniform tau}
        & \sup_{y \in \mathbb{M}} \left| \tilde{\tau}_{\mh, 0}(\mathbf{x}, y) - \bm{\tilde{\mu}}_{\mh, 1}(\mathbf{x})\tran\bm{\tilde{\mu}}_{\mh, 2}(\mathbf{x})^{-1} \bm{\tilde{\tau}}_{\mh, 1}(\mathbf{x}, y) -  g_y(\mathbf{x}) \tilde{\sigma}_{\mh}(\mathbf{x}) \right| \\
        & \leq \sup_{y \in \mathbb{M}} \left| \tilde{\tau}_{\mh, 0}(\mathbf{x}, y) -  g_y(\mathbf{x}) \tilde{\mu}_{\mh, 0}(\mathbf{x})  \right| \\
        & \quad + \sup_{y \in \mathbb{M}} \left| \bm{\tilde{\mu}}_{\mh, 1}(\mathbf{x})\tran\bm{\tilde{\mu}}_{\mh, 2}(\mathbf{x})^{-1} \left( \bm{\tilde{\tau}}_{\mh, 1}(\mathbf{x}, y) - g_y(\mathbf{x}) \bm{\tilde{\mu}}_{\mh, 1}(\mathbf{x}) \right) \right| \\
        & = \sup_{y \in \mathbb{M}} \left| \tilde{\tau}_{\mh, 0}(\mathbf{x}, y) -  g_y(\mathbf{x}) \tilde{\mu}_{\mh, 0}(\mathbf{x})  \right| \\
        & \quad + \sup_{y \in \mathbb{M}} \left| \left( \mathbf{\Lambda}_{\mh}^{-1} \bm{\tilde{\mu}}_{\mh, 1}(\mathbf{x}) \right)\tran\left(\mathbf{\Lambda}_{\mh}^{-1}\bm{\tilde{\mu}}_{\mh, 2}(\mathbf{x}) \mathbf{\Lambda}_{\mh}^{-1}\right)^{-1} \left( \mathbf{\Lambda}_{\mh}^{-1} \bm{\tilde{\tau}}_{\mh, 1}(\mathbf{x}, y) - g_y(\mathbf{x}) \mathbf{\Lambda}_{\mh}^{-1} \bm{\tilde{\mu}}_{\mh, 1}(\mathbf{x}) \right) \right| \\
        & \leq \sup_{y \in \mathbb{M}} \left| \tilde{\tau}_{\mh, 0}(\mathbf{x}, y) -  g_y(\mathbf{x}) \tilde{\mu}_{\mh, 0}(\mathbf{x})  \right| \\
        & \quad + \sup_{y \in \mathbb{M}} g_y(\mathbf{x}) \cdot \left\| \mathbf{\Lambda}_{\mh}^{-1} \bm{\tilde{\mu}}_{\mh, 1}(\mathbf{x}) \right\|_2^2 \cdot \left\| \left(\mathbf{\Lambda}_{\mh}^{-1}\bm{\tilde{\mu}}_{\mh, 2}(\mathbf{x}) \mathbf{\Lambda}_{\mh}^{-1}\right)^{-1} \right\|_2  \\
        & \quad + \left\| \mathbf{\Lambda}_{\mh}^{-1} \bm{\tilde{\mu}}_{\mh, 1}(\mathbf{x}) \right\|_2 \cdot \left\| \left(\mathbf{\Lambda}_{\mh}^{-1}\bm{\tilde{\mu}}_{\mh, 2}(\mathbf{x}) \mathbf{\Lambda}_{\mh}^{-1}\right)^{-1} \right\|_2  \cdot \sup_{y \in \mathbb{M}} \left\| \mathbf{\Lambda}_{\mh}^{-1} \bm{\tilde{\tau}}_{\mh, 1}(\mathbf{x}, y) \right\|_2 \\
        & = O\left( \rho ( \mh ) \| \mh \|_2^2  \right) + O\left( \rho^2 ( \mh ) \| \mh \|_2^2  \right) \cdot O\left( \rho^{-1} \left( \mh \right) \right) \\
        & \quad + O\left( \rho ( \mh ) \| \mh \|_2  \right) \cdot O\left( \rho^{-1} \left( \mh \right) \right) \cdot O\left( \rho ( \mh ) \| \mh \|_2  \right) \\
        & = O\left( \rho ( \mh ) \| \mh \|_2^2  \right).
    \end{split} \end{align}
    Combining \eqref{sigmatildeinverse} and  \eqref{uniform tau}, we also get
    \begin{align} \label{eq:new7.147}
        \sup_{y \in \mathbb{M}}
         \bigg|  \frac{\tilde{\tau}_{\mh, 0}(\mathbf{x}, y) - \bm{\tilde{\mu}}_{\mh, 1}(\mathbf{x})\tran\bm{\tilde{\mu}}_{\mh, 2}(\mathbf{x})^{-1} \bm{\tilde{\tau}}_{\mh, 1}(\mathbf{x}, y)}{\tilde{\sigma}_{\mh}(\mathbf{x})} - g_y(\mathbf{x}) \bigg| = O\left( \| \mh \|_2^2 \right).
    \end{align}
    \Cref{thm:4.1} and the condition \ref{con:M2} imply that there exists $N\in\mathbb{N}$ such that
    \begin{align}
        C_{\oplus} \cdot d_{\mathbb{M}}(\tilde{m}_{\mh, 1}(\mathbf{x}), m_{\oplus}(\mathbf{x}))^{\beta_{\oplus}} \leq M_{\oplus} (\mathbf{x}, \tilde{m}_{\mh, 1}(\mathbf{x})) - M_{\oplus} (\mathbf{x}, m_{\oplus}(\mathbf{x})) \label{eq:new7.148}
    \end{align}
    whenever $n\geq N$, where $C_{\oplus} >0$ is the constant defined in the condition \ref{con:M2}. Combining \eqref{Fubini_moplus}, \eqref{eq:new7.147} and \eqref{eq:new7.148}, we get the desired result.
\end{proof}

\begin{lemma} \label{thm:4.4}
    Assume that the conditions \ref{con:L1}, \ref{con:B1}, \ref{con:D1}--\ref{con:D3} and \ref{con:M1}--\ref{con:M3} hold. Then, it holds that
    \begin{align*}
        d_{\mathbb{M}}(\hat{m}_{\mh, 1}(\mathbf{x}), \tilde{m}_{\mh, 1}(\mathbf{x})) ^ {\beta_{\oplus}-\alpha_{\mathbb{M}}} = O_{\mathbb{P}} \left( n^{-\frac{1}{2}} \rho^{-\frac{1}{2}} \left( \mh \right) \right).
    \end{align*}
\end{lemma}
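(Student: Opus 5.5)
The plan is to mirror the proof of \Cref{thm:3.4}, replacing the local constant weights by the local linear ones. Define $\hat{T}_{\mh,1}(\mx,y):=\hat{M}_{\mh,1}(\mx,y)-\tilde{M}_{\mh,1}(\mx,y)$ and $D_{\mh,1}(\mx,w,y):=d_{\mathbb{M}}^2(w,y)-d_{\mathbb{M}}^2(w,\tilde{m}_{\mh,1}(\mx))$. Using \eqref{eq:4.80}, we write
\[
\hat{T}_{\mh,1}(\mx,y)-\hat{T}_{\mh,1}(\mx,\tilde{m}_{\mh,1}(\mx))
\]
as the difference of
\[
\hat{\sigma}_{\mh}(\mx)^{-1}\bigl(\hat{\nu}^{D}_{0}-\bm{\hat\mu}_{\mh,1}\tran\bm{\hat\mu}_{\mh,2}^{-1}\bm{\hat\nu}^{D}_{1}\bigr)
\]
and the analogous population quantity. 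Here $\hat\nu^D_0$ and $\bm{\hat\nu}^D_1$ are the empirical averages of $\mathcal{L}_{\mx,\mh}(\mX^{(i)})D_{\mh,1}$ and $\mathcal{L}_{\mx,\mh}(\mX^{(i)})\bm{\theta}^{(i)}D_{\mh,1}$. We then telescope this difference as in \eqref{eq:new7.132} and \eqref{eq:4.112}. The result splits into two kinds of terms: (a) terms of the form $(\text{random coefficient error})\times(\text{population or empirical }D\text{-moment})$, and (b) terms of the form $(\text{deterministic or stochastically bounded coefficient})\times(\text{centered empirical process in }y)$.

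For the terms of type (a), $|D_{\mh,1}|\le 2\,\mathrm{diam}(\mathbb{M})\,d_{\mathbb{M}}(y,\tilde m_{\mh,1}(\mx))$. Hence on $B_{\mathbb{M}}(\tilde m_{\mh,1}(\mx),\delta)$ the $D$-moments are $O(\delta\rho(\mh))$ after the $\mathbf{\Lambda}_{\mh}^{-1}$ scaling, by \eqref{Rhbound_step2} and \eqref{Rhbound_step3}. The coefficient errors are controlled by \Cref{lemma:temp4.6}, \Cref{lemma:temp4.7}, \eqref{mu1hat}, \eqref{mu2tildeinverse}, \eqref{mu2hatinverse}, \eqref{sigmatildeinverse} and \eqref{sigmahatinverse}. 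Together these give a contribution of $O_{\mathbb{P}}(\delta\, n^{-1/2}\rho^{-1/2}(\mh))\le O_{\mathbb{P}}(\delta^{\alpha_{\mathbb{M}}}n^{-1/2}\rho^{-1/2}(\mh))$ for $\delta\le1$.

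For the terms of type (b), we use the two classes
\[
\{\mathcal{L}_{\mx,\mh}(\cdot)D_{\mh,1}(\mx,\cdot,y)\}\quad\text{and}\quad\{\mathcal{L}_{\mx,\mh}(\cdot)\mathbf{\Lambda}_{\mh}^{-1}\bm{\Phi}_{\mx}^{-1}(\cdot)D_{\mh,1}(\mx,\cdot,y)\}
\]
(the latter coordinatewise), indexed by $y\in B_{\mathbb{M}}(\tilde m_{\mh,1}(\mx),\delta)$. Their envelopes are $2\delta\,\mathrm{diam}(\mathbb{M})\mathcal{L}_{\mx,\mh}$ and $2\delta\,\mathrm{diam}(\mathbb{M})\mathcal{L}_{\mx,\mh}\|\mathbf{\Lambda}_{\mh}^{-1}\bm{\Phi}_{\mx}^{-1}\|_2$, with $L_2(P)$ norms $O(\delta\rho^{1/2}(\mh))$ by \eqref{eq:temp3.47} and \eqref{Rhbound_step1}. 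Both classes are Lipschitz in $y$ with respect to the envelope, as in \eqref{eq:temp4.48}-type bounds \eqref{eq:temp3.48}. Theorems 2.7.11 and 2.14.2 of van der Vaart and Wellner, combined with \ref{con:M3} (applicable since $\tilde m_{\mh,1}(\mx)\to m_\oplus(\mx)$ by \Cref{thm:4.1}), bound the expected suprema by $O(\delta^{\alpha_{\mathbb{M}}}n^{-1/2}\rho^{1/2}(\mh))$. These are multiplied by $\hat\sigma_{\mh}^{-1}=O_{\mathbb{P}}(\rho^{-1})$ and by $\|\mathbf{\Lambda}_{\mh}^{-1}\bm{\hat\mu}_{\mh,1}\|_2\|(\mathbf{\Lambda}_{\mh}^{-1}\bm{\hat\mu}_{\mh,2}\mathbf{\Lambda}_{\mh}^{-1})^{-1}\|_2=o_{\mathbb{P}}(1)$.

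The main obstacle is that these random prefactors are only $O_{\mathbb{P}}$ rather than deterministic, so Markov's inequality cannot be applied directly to the expected supremum as in \eqref{eq:3.48}. To handle this, we fix $\epsilon$ and intersect with a high-probability event $G_n$ on which $\rho(\mh)/\hat\sigma_{\mh}$, $\rho(\mh)\|(\mathbf{\Lambda}_{\mh}^{-1}\bm{\hat\mu}_{\mh,2}\mathbf{\Lambda}_{\mh}^{-1})^{-1}\|_2$, $\|\mathbf{\Lambda}_{\mh}^{-1}\bm{\hat\mu}_{\mh,1}\|_2/\rho(\mh)$, and the normalized coefficient errors are bounded by constants; the latter are bounded by $K n^{-1/2}\rho^{-1/2}$ after normalization. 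On $G_n$, the supremum over the ball is dominated by constants times the deterministic-coefficient empirical process suprema plus $K\delta n^{-1/2}\rho^{-1/2}$. This yields an analogue of \eqref{eq:3.44}, in the form $\mathbb{E}[\sup\cdots\,\mathds{1}_{G_n}]\le L_4\delta^{\alpha_{\mathbb{M}}}n^{-1/2}\rho^{-1/2}(\mh)$. Finally, the peeling argument of \Cref{thm:3.4} goes through verbatim, using \ref{con:M2} for $\tilde M_{\mh,1}$, \Cref{thm:4.3} to ensure $d_{\mathbb{M}}(\hat m_{\mh,1},\tilde m_{\mh,1})<\eta_\oplus$ with high probability, and adding $\mathbb{P}(G_n^c)<\epsilon$ to the bound.
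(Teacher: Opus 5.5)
Your proof is correct, but it uses a different decomposition from the paper's.

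\textbf{Your route.} You work with the moment representation \eqref{eq:4.80} and telescope as in \Cref{lemma:4.10}. This leaves two empirical processes, the scalar class $\{\mathcal{L}_{\mx,\mh}D_{\mh,1}\}$ and the vector class $\{\mathcal{L}_{\mx,\mh}\mathbf{\Lambda}_{\mh}^{-1}\bm{\Phi}_{\mx}^{-1}D_{\mh,1}\}$. They are multiplied by the random prefactors $\hat{\sigma}_{\mh}(\mx)^{-1}$ and $\hat{\sigma}_{\mh}(\mx)^{-1}\bm{\hat{\mu}}_{\mh,1}(\mx)\tran\bm{\hat{\mu}}_{\mh,2}(\mx)^{-1}$.

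\textbf{The paper's route.} The paper stays with the weight form. Since $\hat{M}_{\mh,1}(\mx,y)-\hat{M}_{\mh,1}(\mx,\tilde{m}_{\mh,1}(\mx))=n^{-1}\sum_{i}\hat{W}_{\mx,\mh}(\mathbf{X}^{(i)})D_{\mh,1}(\mx,Y^{(i)},y)$, it splits off $\hat{W}_{\mx,\mh}-\tilde{W}_{\mx,\mh}$ to obtain \eqref{T1hatbound}. That bound has two pieces:
\begin{itemize}
\item $2\,\mathrm{diam}(\mathbb{M})\,\delta$ times the single scalar $n^{-1}\sum_i|\hat{W}_{\mx,\mh}(\mathbf{X}^{(i)})-\tilde{W}_{\mx,\mh}(\mathbf{X}^{(i)})|=O_{\mathbb{P}}(n^{-1/2}\rho^{-1/2}(\mh))$, which is free of both $y$ and $\delta$;
\item one empirical process $\hat{S}_{\mh,1}$ over the deterministic class $\{D_{\mh,1}\tilde{W}_{\mx,\mh}\}$, with envelope $2\delta\,\mathrm{diam}(\mathbb{M})|\tilde{W}_{\mx,\mh}|$.
\end{itemize}

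\textbf{Comparison.} The coefficient-error estimates are essentially the same in both; compare \eqref{Whatbound_Step3}--\eqref{Whatbound_Step5} with your type (a) terms. The paper's packaging has the advantage that no random factor ever multiplies an empirical process, so its event $F_{n,\epsilon}$ only needs to control one scalar. Its cost is the separate bound $\mathbb{E}[\tilde{W}_{\mx,\mh}^2(\mathbf{X})]=O(\rho^{-1}(\mh))$ for the signed weight. Your packaging reuses bounds already established for \Cref{lemma:4.10} and keeps nonnegative kernel-type envelopes. Its cost is a larger event $G_n$ and a coordinatewise maximal inequality for the vector class.

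None of your prefactors depends on $y$ or $\delta$. So bounding them by constants on $G_n$ and then taking $\mathbb{E}[\,\cdot\,\mathds{1}_{G_n}]$ is legitimate, and your peeling step is then the same as the paper's.
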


\begin{proof} [Proof of Theorem \ref{thm:4.4}]
    We define functions $\hat{T}_{\mh, 1} : \mathbb{T}^{d} \times \mathbb{M} \to \mathbb{R}$, $U_{\mh, 1} : \mathbb{T}^{d} \times \mathbb{M} \times \mathbb{M} \to \mathbb{R}$ and $\hat{S}_{\mh,1} : \mathbb{T}^{d} \times \mathbb{M} \to \mathbb{R}$ as
    \begin{align*}
        \hat{T}_{\mh, 1}(\mx, y) &:= \hat{M}_{\mh, 1}(\mx, y) - \tilde{M}_{\mh, 1}(\mx, y), \\
        U_{\mh, 1}(\mz, w , y ) &:=  D_{\mh, 1}(\mx, w , y ) \tilde{W}_{\mathbf{x}, \mh} (\mz ), \\
        \hat{S}_{\mh,1}(\mx,y) &:= n^{-1} \sum_{i=1}^{n} U_{\mh, 1}(\mathbf{X}^{(i)}, Y^{(i)}, y) - \mathbb{E} \left[ U_{\mh, 1}(\mathbf{X}, Y, y) \right],
    \end{align*}
    where $D_{\mh, 1}(\mx, w , y ) := d_{\mathbb{M}}^2(w , y ) - d_{\mathbb{M}}^2(w ,  \tilde{m}_{\mh, 1}(\mx))$.
    By arguing as in \eqref{eq:new3.38}, we get
    \begin{align} \begin{split} \label{T1hatbound}
        &\left| \hat{T}_{\mh, 1}(\mathbf{x}, y) - \hat{T}_{\mh, 1}(\mathbf{x}, \tilde{m}_{\mh, 1}(\mathbf{x})) \right| \\
        & \leq 2 \mathrm{diam}(\mathbb{M}) \cdot d_{\mathbb{M}}(y, \tilde{m}_{\mh, 1}(\mathbf{x}))  \cdot n^{-1} \sum_{i=1}^{n} \left| \hat{W}_{\mathbf{x}, \mh} (\mathbf{X}^{(i)} ) - \tilde{W}_{\mathbf{x}, \mh} (\mathbf{X}^{(i)} ) \right| + \left| \hat{S}_{\mh,1}(\mx,y) \right|.
    \end{split} \end{align}
    
    Regarding the first term in \eqref{T1hatbound}, note that
    \begin{align*}
        & \tilde{\sigma}_{\mh} (\mathbf{x}) \hat{\sigma}_{\mh} (\mathbf{x}) \left(\hat{W}_{\mathbf{x}, \mh} (\mathbf{X}^{(i)} ) - \tilde{W}_{\mathbf{x}, \mh} (\mathbf{X}^{(i)} ) \right) \\
        & = \left( \tilde{\sigma}_{\mh} (\mathbf{x}) - \hat{\sigma}_{\mh} (\mathbf{x}) \right) \mathcal{L}_{\mx, \mh} \left( \mathbf{X}^{(i)} \right) \\
        & \quad - \left( \tilde{\sigma}_{\mh} (\mathbf{x}) \bm{\hat{\mu}}_{\mh, 1}(\mathbf{x})\tran \bm{\hat{\mu}}_{\mh, 2}(\mathbf{x})^{-1} - \hat{\sigma}_{\mh} (\mathbf{x}) \bm{\tilde{\mu}}_{\mh, 1}(\mathbf{x})\tran \bm{\tilde{\mu}}_{\mh, 2}(\mathbf{x})^{-1} \right) \bm{\theta}^{(i)} \mathcal{L}_{\mx, \mh} \left( \mathbf{X}^{(i)} \right) \\
        & = \left( \tilde{\sigma}_{\mh} (\mathbf{x}) - \hat{\sigma}_{\mh} (\mathbf{x}) \right) \mathcal{L}_{\mx, \mh} \left( \mathbf{X}^{(i)} \right) \\
        & \quad - \left( \tilde{\sigma}_{\mh} (\mathbf{x}) \bm{\hat{\mu}}_{\mh, 1}(\mathbf{x})\tran \bm{\hat{\mu}}_{\mh, 2}(\mathbf{x})^{-1} \mathbf{\Lambda}_{\mh}- \hat{\sigma}_{\mh} (\mathbf{x}) \bm{\tilde{\mu}}_{\mh, 1}(\mathbf{x})\tran \bm{\tilde{\mu}}_{\mh, 2}(\mathbf{x})^{-1} \mathbf{\Lambda}_{\mh} \right) \mathbf{\Lambda}_{\mh}^{-1} \bm{\theta}^{(i)} \mathcal{L}_{\mx, \mh} \left( \mathbf{X}^{(i)} \right).
    \end{align*}
    This implies that
    \begin{align} \begin{split} \label{Whatbound_Step2}
        & \left| \tilde{\sigma}_{\mh} (\mathbf{x}) \right| \cdot \left| \hat{\sigma}_{\mh} (\mathbf{x}) \right| \cdot n^{-1} \sum_{i=1}^{n} \left| \hat{W}_{\mathbf{x}, \mh} (\mathbf{X}^{(i)} ) - \tilde{W}_{\mathbf{x}, \mh} (\mathbf{X}^{(i)} )  \right|\\
        & \leq \left| \tilde{\sigma}_{\mh} (\mathbf{x}) - \hat{\sigma}_{\mh} (\mathbf{x}) \right| \cdot \hat{\mu}_{\mh, 0} (\mathbf{x}) \\
        & \quad + \left\| \tilde{\sigma}_{\mh} (\mathbf{x}) \bm{\hat{\mu}}_{\mh, 1}(\mathbf{x})\tran \bm{\hat{\mu}}_{\mh, 2}(\mathbf{x})^{-1} \mathbf{\Lambda}_{\mh}- \hat{\sigma}_{\mh} (\mathbf{x}) \bm{\tilde{\mu}}_{\mh, 1}(\mathbf{x})\tran \bm{\tilde{\mu}}_{\mh, 2}(\mathbf{x})^{-1} \mathbf{\Lambda}_{\mh} \right\|_2 \\
        & \quad \quad \cdot n^{-1} \sum_{i=1}^{n} \left\| \mathbf{\Lambda}_{\mh}^{-1} \bm{\theta}^{(i)} \right\|_2 \mathcal{L}_{\mx, \mh} \left( \mathbf{X}^{(i)} \right).
    \end{split} \end{align}
    Note that
    \begin{align} \begin{split} \label{Whatbound_Step3}
        & \tilde{\sigma}_{\mh} (\mathbf{x}) \bm{\hat{\mu}}_{\mh, 1}(\mathbf{x})\tran \bm{\hat{\mu}}_{\mh, 2}(\mathbf{x})^{-1} \mathbf{\Lambda}_{\mh} - \hat{\sigma}_{\mh} (\mathbf{x}) \bm{\tilde{\mu}}_{\mh, 1}(\mathbf{x})\tran \bm{\tilde{\mu}}_{\mh, 2}(\mathbf{x})^{-1} \mathbf{\Lambda}_{\mh} \\
        & = \tilde{\sigma}_{\mh} (\mathbf{x}) \left( \mathbf{\Lambda}_{\mh}^{-1} \bm{\hat{\mu}}_{\mh, 1}(\mathbf{x}) \right) \tran \left(\mathbf{\Lambda}_{\mh}^{-1}\bm{\hat{\mu}}_{\mh, 2}(\mathbf{x}) \mathbf{\Lambda}_{\mh}^{-1}\right)^{-1} - \hat{\sigma}_{\mh} (\mathbf{x}) \left( \mathbf{\Lambda}_{\mh}^{-1} \bm{\tilde{\mu}}_{\mh, 1}(\mathbf{x}) \right) \tran \left(\mathbf{\Lambda}_{\mh}^{-1}\bm{\tilde{\mu}}_{\mh, 2}(\mathbf{x}) \mathbf{\Lambda}_{\mh}^{-1}\right)^{-1} \\
        & = \tilde{\sigma}_{\mh} (\mathbf{x}) \left( \mathbf{\Lambda}_{\mh}^{-1} \bm{\hat{\mu}}_{\mh, 1}(\mathbf{x}) \right) \tran \left( \left(\mathbf{\Lambda}_{\mh}^{-1}\bm{\hat{\mu}}_{\mh, 2}(\mathbf{x}) \mathbf{\Lambda}_{\mh}^{-1}\right)^{-1} - \left(\mathbf{\Lambda}_{\mh}^{-1}\bm{\tilde{\mu}}_{\mh, 2}(\mathbf{x}) \mathbf{\Lambda}_{\mh}^{-1}\right)^{-1} \right) \\
        & \quad +  \tilde{\sigma}_{\mh} (\mathbf{x}) \left( \mathbf{\Lambda}_{\mh}^{-1} \bm{\hat{\mu}}_{\mh, 1}(\mathbf{x}) - \mathbf{\Lambda}_{\mh}^{-1} \bm{\tilde{\mu}}_{\mh, 1}(\mathbf{x}) \right) \tran \left(\mathbf{\Lambda}_{\mh}^{-1}\bm{\tilde{\mu}}_{\mh, 2}(\mathbf{x}) \mathbf{\Lambda}_{\mh}^{-1}\right)^{-1} \\
        & \quad + \left( \tilde{\sigma}_{\mh} (\mathbf{x})- \hat{\sigma}_{\mh} (\mathbf{x}) \right)  \left( \mathbf{\Lambda}_{\mh}^{-1} \bm{\tilde{\mu}}_{\mh, 1}(\mathbf{x}) \right) \tran \left(\mathbf{\Lambda}_{\mh}^{-1}\bm{\tilde{\mu}}_{\mh, 2}(\mathbf{x}) \mathbf{\Lambda}_{\mh}^{-1}\right)^{-1}.
    \end{split} \end{align}
    Combining \Cref{lemma:1.4}, \Cref{lemma:new4.4}, \Cref{lemma:temp4.6}, \Cref{lemma:temp4.7}, \eqref{mu2tildeinverse}, \eqref{mu1hat} and \eqref{Whatbound_Step3}, we get
    \begin{align} \begin{split} \label{Whatbound_Step4}
        & \left\| \tilde{\sigma}_{\mh} (\mathbf{x}) \bm{\hat{\mu}}_{\mh, 1}(\mathbf{x})\tran \bm{\hat{\mu}}_{\mh, 2}(\mathbf{x})^{-1} \mathbf{\Lambda}_{\mh} - \hat{\sigma}_{\mh} (\mathbf{x}) \bm{\tilde{\mu}}_{\mh, 1}(\mathbf{x})\tran \bm{\tilde{\mu}}_{\mh, 2}(\mathbf{x})^{-1} \mathbf{\Lambda}_{\mh} \right\|_2 \\
        & \leq \left| \tilde{\sigma}_{\mh} (\mathbf{x}) \right| \cdot \left\| \mathbf{\Lambda}_{\mh}^{-1} \bm{\hat{\mu}}_{\mh, 1}(\mathbf{x}) \right\|_2 \cdot \left\| \left( \mathbf{\Lambda}_{\mh}^{-1}\bm{\hat{\mu}}_{\mh, 2}(\mathbf{x}) \mathbf{\Lambda}_{\mh}^{-1}\right)^{-1} - \left(\mathbf{\Lambda}_{\mh}^{-1}\bm{\tilde{\mu}}_{\mh, 2}(\mathbf{x}) \mathbf{\Lambda}_{\mh}^{-1}\right)^{-1} \right\|_2 \\
        & \quad +  \left| \tilde{\sigma}_{\mh} (\mathbf{x}) \right| \cdot \left\| \mathbf{\Lambda}_{\mh}^{-1} \bm{\hat{\mu}}_{\mh, 1}(\mathbf{x}) - \mathbf{\Lambda}_{\mh}^{-1} \bm{\tilde{\mu}}_{\mh, 1}(\mathbf{x}) \right\|_2 \cdot \left\| \left(\mathbf{\Lambda}_{\mh}^{-1}\bm{\tilde{\mu}}_{\mh, 2}(\mathbf{x}) \mathbf{\Lambda}_{\mh}^{-1}\right)^{-1} \right\|_2 \\
        & \quad + \left| \tilde{\sigma}_{\mh} (\mathbf{x})- \hat{\sigma}_{\mh} (\mathbf{x}) \right| \cdot  \left\| \mathbf{\Lambda}_{\mh}^{-1} \bm{\tilde{\mu}}_{\mh, 1}(\mathbf{x}) \right\|_2 \cdot \left\| \left(\mathbf{\Lambda}_{\mh}^{-1}\bm{\tilde{\mu}}_{\mh, 2}(\mathbf{x}) \mathbf{\Lambda}_{\mh}^{-1}\right)^{-1} \right\|_2 \\
        & \leq O \left( \rho ( \mh ) \right) \cdot o_{\mathbb{P}} \left( \rho ( \mh ) \right) \cdot O_{\mathbb{P}} \left( n^{-\frac{1}{2}}\rho^{-\frac{3}{2}} \left( \mh \right) \right) + O \left( \rho ( \mh ) \right) \cdot O_{\mathbb{P}} \left( n^{-\frac{1}{2}}\rho^{\frac{1}{2}} \left( \mh \right) \right) \cdot O \left( \rho^{-1} \left( \mh \right) \right) \\
        & \quad + O_{\mathbb{P}} \left( n^{-\frac{1}{2}}\rho^{\frac{1}{2}} \left( \mh \right) \right) \cdot o \left( \rho ( \mh ) \right) \cdot O \left( \rho^{-1} \left( \mh \right) \right) \\
        & = O_{\mathbb{P}} \left( n^{-\frac{1}{2}}\rho^{\frac{1}{2}} \left( \mh \right) \right).
    \end{split} \end{align}
    Combining \Cref{lemma:1.4}, \Cref{lemma:temp4.7}, \eqref{sigmatildeinverse}, \eqref{sigmahatinverse}, \eqref{Rhbound_step3}, \eqref{Whatbound_Step2} and \eqref{Whatbound_Step4}, we also get
    \begin{align} \begin{split} \label{Whatbound_Step5}
        & n^{-1} \sum_{i=1}^{n} \left| \hat{W}_{\mathbf{x}, \mh} (\mathbf{X}^{(i)} ) - \tilde{W}_{\mathbf{x}, \mh} (\mathbf{X}^{(i)} )  \right| \\
        & = O \left( \rho^{-1} \left( \mh \right) \right) \cdot O_{\mathbb{P}} \left( \rho^{-1} \left( \mh \right) \right) \\
        & \quad \cdot \left( O_{\mathbb{P}} \left( n^{-\frac{1}{2}}\rho^{\frac{1}{2}} \left( \mh \right) \right) \cdot O_{\mathbb{P}} \left( \rho ( \mh ) \right) + O_{\mathbb{P}} \left( n^{-\frac{1}{2}}\rho^{\frac{1}{2}} \left( \mh \right) \right) \cdot O_{\mathbb{P}} \left( \rho ( \mh ) \right)\right)\\
        & = O_{\mathbb{P}} \left( n^{-\frac{1}{2}}\rho^{-\frac{1}{2}} \left( \mh \right) \right).
    \end{split} \end{align}
    
    Regarding the second term in \eqref{T1hatbound}, by arguing as in the proof of \eqref{eq:3.41}, we can show that there exist constants $L_2 >0$, $N_2\in\mathbb{N}$ and $\delta_{\mathbb{M}}\in(0,1)$ such that
    \begin{align*}
        \int_{0}^{1/2} \sqrt{1+\log N ( \delta \epsilon, B_{\mathbb{M}}\left( \tilde{m}_{\mh, 1} (\mathbf{x}), \delta \right), d_{\mathbb{M}} )} \, \mathrm{d} \epsilon \leq L_2  \delta ^{\alpha_{\mathbb{M}}- 1}
    \end{align*}
    whenever $\delta \in (0, \delta_{\mathbb{M}}]$ and $n \geq N_2$, where $r_{\mathbb{M}}>0$ and $\alpha_{\mathbb{M}} \in (0, 1]$ are constants defined in the condition \ref{con:M3}. The Cauchy-Schwarz inequality implies that
    \begin{align} \begin{split} \label{Wtildebound_step1}
        & \left| \tilde{\sigma}_{\mh} (\mathbf{x}) \right|^2 \cdot \mathbb{E}\left[ \left| \tilde{W}_{\mathbf{x}, h} \left( \mathbf{X} \right) \right|^2 \right] \\
        & = \mathbb{E}\left[ \mathcal{L}_{\mx, \mh}^2 \left( \mathbf{X} \right) \left( 1 - \bm{\tilde{\mu}}_{\mh, 1}(\mathbf{x})\tran \bm{\tilde{\mu}}_{\mh, 2}(\mathbf{x})^{-1} \bm{\Phi}_{\mx}^{-1} (\mX) \right)^2 \right] \\
        & \leq 2 \mathbb{E}\left[ \mathcal{L}_{\mx, \mh}^2 \left( \mathbf{X} \right) \right] + 2 \mathbb{E}\left[ \mathcal{L}_{\mx, \mh}^2 \left( \mathbf{X} \right) \left( \bm{\tilde{\mu}}_{\mh, 1}(\mathbf{x})\tran \bm{\tilde{\mu}}_{\mh, 2}(\mathbf{x})^{-1} \bm{\Phi}_{\mx}^{-1} (\mX) \right)^2 \right] \\
        & = 2 \mathbb{E}\left[ \mathcal{L}_{\mx, \mh}^2 \left( \mathbf{X} \right) \right] + 2 \mathbb{E}\left[ \mathcal{L}_{\mx, \mh}^2 \left( \mathbf{X} \right) \left( \left( \mathbf{\Lambda}_{\mh}^{-1} \bm{\tilde{\mu}}_{\mh, 1}(\mathbf{x}) \right) \tran \left(\mathbf{\Lambda}_{\mh}^{-1}\bm{\tilde{\mu}}_{\mh, 2}(\mathbf{x}) \mathbf{\Lambda}_{\mh}^{-1}\right)^{-1} \left( \mathbf{\Lambda}_{\mh}^{-1} \bm{\Phi}_{\mx}^{-1} (\mX) \right) \right)^2 \right] \\
        & \leq 2 \mathbb{E}\left[ \mathcal{L}_{\mx, \mh}^2 \left( \mathbf{X} \right) \right] \\
        & \quad + 2\left\| \mathbf{\Lambda}_{\mh}^{-1} \bm{\tilde{\mu}}_{\mh, 1}(\mathbf{x}) \right\|_2^2 \cdot \left\| \left(\mathbf{\Lambda}_{\mh}^{-1}\bm{\tilde{\mu}}_{\mh, 2}(\mathbf{x}) \mathbf{\Lambda}_{\mh}^{-1}\right)^{-1} \right\|_2^2 \cdot \mathbb{E}\left[ \mathcal{L}_{\mx, \mh}^2 \left( \mathbf{X} \right) \left\| \mathbf{\Lambda}_{\mh}^{-1} \bm{\Phi}_{\mx}^{-1} (\mX) \right\|_2^2 \right].
    \end{split} \end{align}
    Combining \Cref{lemma:1.4}, \Cref{lemma:new2.2}, \Cref{lemma:new4.4}, \eqref{mu1bound_step3}, \eqref{eq:temp4.58}, \eqref{mu2tildeinverse}, \eqref{sigmatildeinverse} and \eqref{Wtildebound_step1}, we get
    \begin{align*}
        \mathbb{E}\left[ \left| \tilde{W}_{\mathbf{x}, h} \left( \mathbf{X} \right) \right|^2 \right] &= O \left( \rho^{-1} \left( \mh \right) \right)^2 \cdot \left( O \left( \rho ( \mh ) \right) + o \left( \rho ( \mh ) \right)^2 \cdot O \left( \rho^{-1} \left( \mh \right) \right)^2 \cdot O \left( \rho ( \mh ) \right) \right) \\
        & = O \left( \rho^{-1} \left( \mh \right) \right),
    \end{align*}
    which implies that there exist constants $L_3 > 0$ and $N_3 \in \mathbb{N}$ such that
    \begin{align*}
        \mathbb{E}\left[ \left| \tilde{W}_{\mathbf{x}, h} \left( \mathbf{X} \right) \right|^2 \right] \leq  L_3 \rho^{-1} \left( \mh \right)
    \end{align*}
    whenever $n \geq N_3$. By arguing as in the proof of \eqref{s0hatbound_step3}, we get
    \begin{align} \begin{split} \label{s1hatbound_step1}
        \mathbb{E} \left[ \sup_{y \in B_{\mathbb{M}}\left(\tilde{m}_{\mh, 1} (\mathbf{x}), \delta\right)} \left| \hat{S}_{\mh,1}(\mx,y) \right| \right] & \leq   4C L_2 L_3^{\frac{1}{2}} \mathrm{diam} (\mathbb{M}) \delta^{\alpha_{\mathbb{M}}} n^{-\frac{1}{2}}\rho^{-\frac{1}{2}} \left( \mh \right)
    \end{split} \end{align}
    whenever $\delta \in (0, \delta_{\mathbb{M}}]$ and $n \geq \max \{N_2 , N_3\}$.

    Let $\epsilon >0$ be any given constant. By \eqref{Whatbound_Step5}, there exist constants $L_{1,\epsilon} > 0$ and $N_{1,\epsilon} \in \mathbb{N}$, depending on $\epsilon$, such that
    \begin{align*}
        \mathbb{P} \left( n^{-1} \sum_{i=1}^{n} \left| \hat{W}_{\mathbf{x}, \mh} (\mathbf{X}^{(i)} ) - \tilde{W}_{\mathbf{x}, \mh} (\mathbf{X}^{(i)} )  \right| > L_{1,\epsilon} n^{-\frac{1}{2}} \rho^{-\frac{1}{2}} \left( \mh \right) \right) < \frac{\epsilon}{4}
    \end{align*}
    whenever $n \geq N_{1,\epsilon}$. We define an event $F_{n, \epsilon}$ as
    \begin{gather*}
        F_{n, \epsilon} :=  \left\{ n^{-1} \sum_{i=1}^{n} \left| \hat{W}_{\mathbf{x}, \mh} (\mathbf{X}^{(i)} ) - \tilde{W}_{\mathbf{x}, \mh} (\mathbf{X}^{(i)} )  \right| \leq L_{1,\epsilon} n^{-\frac{1}{2}} \rho^{-\frac{1}{2}} \left( \mh \right) \right\}.
    \end{gather*}
    Let $N_{4, \epsilon} := \max \{ N_{1,\epsilon}, N_2, N_3 \}$ and $L_{4, \epsilon} := 2 \mathrm{diam} (\mathbb{M}) ( L_{1,\epsilon} + 2C L_2 L_3^{1/2} )$. Combining \eqref{T1hatbound} and \eqref{s1hatbound_step1}, we get
    \begin{align*}
        & \mathbb{E} \left[ \sup_{y \in B_{\mathbb{M}}\left(\tilde{m}_{\mh, 1} (\mathbf{x}), \delta\right)} \bigg| \hat{T}_{\mh, 1}(\mathbf{x}, y) - \hat{T}_{\mh, 1}(\mathbf{x}, \tilde{m}_{\mh, 1}(\mathbf{x})) \bigg| \cdot \mathds{1}_{F_{n, \epsilon}} \right] \\
        & \leq \left( 2 \mathrm{diam}(\mathbb{M}) \right) \cdot \delta \cdot \mathbb{E} \left[ \left( n^{-1} \sum_{i=1}^{n} \left| \hat{W}_{\mathbf{x}, \mh} (\mathbf{X}^{(i)} ) - \tilde{W}_{\mathbf{x}, \mh} (\mathbf{X}^{(i)} ) \right| \right) \cdot \mathds{1}_{F_{n, \epsilon}} \right] \\
        & \quad +\mathbb{E} \left[ \sup_{y \in B_{\mathbb{M}}\left(\tilde{m}_{\mh, 1} (\mathbf{x}), \delta\right)} \left| \hat{S}_{\mh,1}(\mx,y) \right| \right] \\
        & \leq L_{4, \epsilon} \delta^{\alpha_{\mathbb{M}}} n^{-\frac{1}{2}}\rho^{-\frac{1}{2}} \left( \mh \right)
    \end{align*}
    whenever $\delta \in (0, \delta_{\mathbb{M}}]$ and $n \geq N_{4, \epsilon}$, where $\mathds{1}_{F_{n, \epsilon}}$ is the indicator function of $F_{n, \epsilon}$. The rest of the proof proceeds as in the proof of \Cref{thm:3.4}.
\end{proof}

\subsection{Proof of Theorem \ref{thm:main4.3} for $\hat{m}_{\mh, 1}(\mathbf{x})$}

The theorem follows from \Cref{thm:4.2} and \Cref{thm:4.4}.

\bigskip
\noindent
{\large\textbf{References for Supplementary Material}}

\bigskip

\noindent Garc\'{\i}a-Portugu\'{e}s, E., Crujeiras, R. M. and Gonz\'{a}lez-Manteiga, W. (2013). Kernel density estimation for directional-linear data. {\it Journal of Multivariate Analysis}, \textbf{121}, 152-175.

\bigskip

\noindent Im, C. J., Jeon, J. M. and Park, B. U. (2025). Local Fr\'{e}chet regression with spherical predictors. {\it Electronic Journal of Statistics}, \textbf{74}, 751-762.

\bigskip

\noindent van der Vaart, A. and Wellner, J. A. (1996). {\it Weak Convergence and Empirical Processes: With Applications to Statistics}. Springer.

\end{appendix}


\begin{thebibliography}{9}
\bibitem[Afsari(2011)]{Afsari (2011)} Afsari, B. (2011). Riemannian $L^p$ center of mass: Existence, uniqueness, and convexity. {\it Proceedings of the American Mathematical Society}, \textbf{139}, 655-673.
\bibitem[Bai et al.(1988)]{Bai et al. (1988)} Bai, Z. D., Radhakrishna Rao, C. and Zhao, L. C. (1988). Kernel estimators of density function of directional data. {\it Journal of Multivariate Analysis}, \textbf{27}, 24-39.
\bibitem[Bhattacharjee and M\"{u}ller(2023)]{Bhattacharjee and Muller (2023)} Bhattacharjee, S. and M\"{u}ller, H.-G. (2023). Single index Fr{\'e}chet regression. {\it Annals of Statistics}, \textbf{51}, 1770-1798.
\bibitem[Biswas and Banerjee(2025)]{Biswas and Banerjee (2025)} Biswas, S. and Banerjee, B. (2025). A semi-parametric torus-to-torus regression model with geometric loss: Application to cyclone data. {\it arXiv:2506.17014}.
\bibitem[Charlier(2013)]{Charlier (2013)} Charlier, B. (2013). Necessary and sufficient condition for the existence of a Fr\'echet mean on the circle. {\it ESAIM: Probability and Statistics}, \textbf{17}, 635-649.
\bibitem[Di Marzio et al.(2009)]{Di Marzio et al. (2009)}
Di Marzio, M., Panzera, A. and Taylor, C. C. (2009). Local polynomial regression for circular predictors. {\it Statistics and Probability Letters}, \textbf{79}, 2066-2075.
\bibitem[Di Marzio et al.(2011)]{Di Marzio et al. (2011)}
Di Marzio, M., Panzera, A. and Taylor, C. C. (2011). Kernel density estimation on the torus. {\it Journal of Statistical Planning and Inference}, \textbf{141}, 2156-2173.
\bibitem[Di Marzio et al.(2014)]{Di Marzio et al. (2014)} Di Marzio, M, Panzera, A. and Taylor, C. C. (2014). Nonparametric regression for spherical data. {\it Journal of the American Statistical Association}, \textbf{109}, 748-763.
\bibitem[Garc\'{\i}a-Portugu\'{e}s et al.(2013)]{Garcia-Portugues et al. (2013)} Garc\'{\i}a-Portugu\'{e}s, E., Crujeiras, R. M. and Gonz\'{a}lez-Manteiga, W. (2013). Kernel density estimation for directional-linear data. {\it Journal of Multivariate Analysis}, \textbf{121}, 152-175.
\bibitem[Eltzner et al.(2018)]{Eltzner et al. (2018)} Eltzner, B., Huckemann, S. and Mardia, K. V. (2018). Torus principal component analysis with applications to RNA structure. {\it Annals of Applied Statistics}, \textbf{12}, 1332-1359.
\bibitem[Hall et al.(1987)]{Hall et al. (1987)} Hall, P., Watson, G. S. and Cabrera, J. (1987). Kernel density estimation with spherical data. {\it Biometrika}, \textbf{74}, 751-762.
\bibitem[Im et al.(2025)]{Im et al. (2025)} Im, C. J., Jeon, J. M. and Park, B. U. (2025). Local Fr\'{e}chet regression with spherical predictors. {\it Electronic Journal of Statistics}, \textbf{74}, 751-762.
\bibitem[Jeon et al.(2021)]{Jeon et al. (2021)} Jeon, J. M., Park, B. U. and Van Keilegom, I. (2021). Additive regression for non-Euclidean responses and predictors. {\it Annals of Statistics}, \textbf{49}, 2611-2641.
\bibitem[Jeon et al.(2022)]{Jeon et al. (2022)} Jeon, J. M., Park, B. U. and Van Keilegom, I. (2022). Nonparametric regression on Lie groups with measurement errors. {\it Annals of Statistics}, \textbf{50}, 2973-3008.
\bibitem[Jung et al.(2021)]{Jung et al. (2021)} Jung, S., Park, K. and Kim, B. (2021). Clustering on the torus by conformal prediction. {\it Annals of Applied Statistics}, \textbf{15}, 1583-1603.
\bibitem[Le and Kendall(1993)]{Le and Kendall (1993)} Le, H. and Kendall, D. G. (1993). The Riemannian structure of Euclidean shape spaces: A novel environment for statistics. {\it Annals of Statistics}, \textbf{21}, 1225-1271.
\bibitem[Lin and M{\"u}ller(2021)]{Lin and Muller (2021)} Lin, Z. and M\"uller, H.-G. (2021). Total variation regularized Fr{\'e}chet regression for metric-space valued data. {\it Annals of Statistics}, \textbf{49}, 3510-3533.
\bibitem[Mardia and Jupp(2000)]{Mardia and Jupp (2000)} Mardia, K. V. and Jupp, P. E. (2000). {\it Directional Statistics}. John Wiley \& Sons.
\bibitem[Panaretos and Zemel(2020)]{Panaretos and Zemel (2020)} Panaretos, V. M. and Zemel, Y. (2020). {\it An invitation to statistics in Wasserstein space}. Springer.
\bibitem[Pelletier(2006)]{Pelletier (2006)} Pelletier, B. (2006). Non-parametric regression estimation on closed Riemannian manifolds. {\it Journal of Nonparametric Statistics}, \textbf{18}, 57-67.
\bibitem[Petersen and M\"{u}ller(2019)]{Petersen and Muller (2019)} Petersen, A. and M\"{u}ller, H.-G. (2019). Fr\'{e}chet regression for random objects with Euclidean predictors. {\it Annals of Statistics}, \textbf{47}, 691-719.
\bibitem[Ruppert and Wand(1994)]{Ruppert and Wand (1994)} Ruppert, D. and Wand, M. P. (1994). Multivariate Locally Weighted Least Squares Regression. {\it The Annals of Statistics}, \textbf{22}, 1346-1370.
\bibitem[Sanborn et al(2024)]{Sanborn et al. (2024)} Sanborn, S., Mathe, J., Papillon, M., Buracas, D., Lillemark, H. J., Shewmake, C., Bertics, A., Pennec, X. and Miolane, N. (2024). Beyond Euclid: An illustrated guide to modern machine learning with geometric, topological, and algebraic structures. {\it arXiv:2407.09468v1}.
\bibitem[Steyer et al(2025)]{Steyer et al. (2025)} Steyer, L., St{\"o}cker, A., Greven, S. and Alzheimer’s Disease Neuroimaging Initiative. (2025). Model-based Fr\'echet regression in (quotient) metric spaces with a focus on elastic curves. {\it Journal of Multivariate Analysis}, \textbf{211}, 105515.
\bibitem[Tucker and Wu(2025)]{Tucker and Wu (2025)} Tucker, D. C. and Wu, Y. (2025). Partially-global Fr\'echet regression. {\it Statistica Sinica}, \textbf{35}, 713-736.
\bibitem[Tucker et al.(2023)]{Tucker et al. (2023)} Tucker, D. C., Wu, Y. and M{\"u}ller, H.-G. (2023). Variable selection for global Fr{\'e}chet regression. {\it Journal of the American Statistical Association}, \textbf{118}, 1023-1037.
\bibitem[Xu and Wang(2023)]{Xu and Wang (2023)} Xu, D. and Wang, Y. (2023). Density estimation for toroidal data using semiparametric mixtures. {\it Statistics and Computing}, \textbf{33}, 140.
\bibitem[Zhou and M\"{u}ller(2022)]{Zhou and Muller (2022)} Zhou, Y. and M\"{u}ller, H.-G. (2022). Network regression with graph Laplacians. {\it Journal of Machine Learning Research}, \textbf{23}, 1-41.
\end{thebibliography}
\end{document}